\newlength\tindent
\newcounter{assumption}%
\renewcommand{\theassumption}{\arabic{assumption}}
\newcommand*\R[0]{\mathbb{R}}
\newcommand*\E[1]{\mathbb{E}\left[#1\right]}
\newcommand*\bmat[1]{\begin{bmatrix} #1 \end{bmatrix}}
\renewcommand*{\P}{\mathbb{P}}
\newcommand{\g}{g}
\DeclareMathOperator{\supp}{\text{Supp}}
\newcommand{\vep}{\varepsilon}
\newcommand{\la}{\langle}
\newcommand{\ra}{\rangle}
\theoremstyle{definition}
\newtheorem{lemma}{Lemma}
\newtheorem{theorem}{Theorem}
\newtheorem{corollary}{Corollary}
\newtheorem{proposition}{Proposition}
\newtheorem{definition}{Definition}
\let\cite\citep
\setlist[itemize]{itemsep=1pt,topsep=0pt}
\newcommand{\simplex}{
  \mathchoice
    {\includegraphics[height=1.4ex, trim= 0pt 30pt 0pt 0pt]{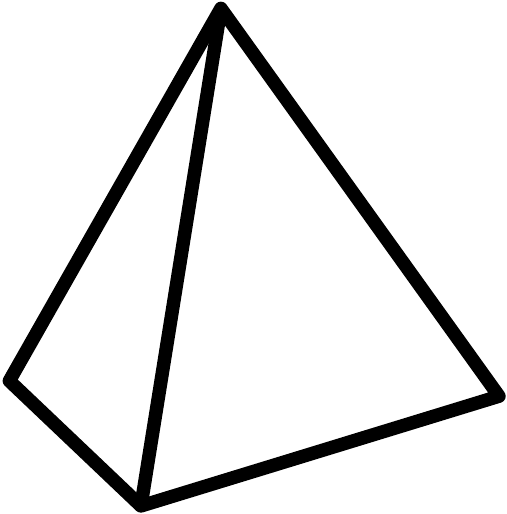}} % \displaystyle
    {\includegraphics[height=1.4ex, trim= 0pt 25pt 0pt 0pt]{simplex.pdf}} % \textstyle
    {\includegraphics[height=1.1ex, trim=0pt 30pt 0pt 0pt]{simplex.pdf}} % \scriptstyle
    {\includegraphics[height=.8ex, trim=0pt 30pt 0pt 0pt]{simplex.pdf}} % \scriptscriptstyle
}
\title{Instance-dependent Sample Complexity Bounds for Zero-sum Matrix Games}
\author{Arnab Maiti\thanks{University of Washington, WA. arnabm2@uw.edu } \and
Kevin Jamieson\thanks{University of Washington, WA. jamieson@cs.washington.edu }
\and
Lillian J. Ratliff\thanks{University of Washington, WA. ratliffl@uw.edu} 
}
\date{}
\begin{document}

\maketitle

\begin{abstract}
We study the sample complexity of identifying an approximate equilibrium for two-player zero-sum $n\times 2$ matrix games. 
That is, in a sequence of repeated game plays, how many rounds must the two players play before reaching an approximate equilibrium (e.g., Nash)?
We derive instance-dependent bounds that define an ordering over game matrices that captures the intuition that the dynamics of some games converge faster than others. 
Specifically, we consider a stochastic observation model such that when the two players choose actions $i$ and $j$, respectively, they both observe each other's played actions and a stochastic observation $X_{ij}$ such that $\E{ X_{ij}} = A_{ij}$. 
To our knowledge, our work is the first case of instance-dependent lower bounds on the number of rounds the players must play before reaching an approximate equilibrium in the sense that the number of rounds depends on the specific properties of the game matrix $A$ as well as the desired accuracy.
We also prove a converse statement: there exist player strategies that achieve this lower bound. 
\end{abstract}
\section{Introduction}
\label{sec:introduction}
%\ljr{Still working on this...}
In single player stochastic games like multi-armed bandits and reinforcement learning, instance dependent or ``gap dependent'' sample complexity bounds that characterize the number of interactions with the environment to identify a good policy are well-understood.
In contrast to  minimax or worst-case sample complexity guarantees, these bounds and the algorithms that obtain them adapt to the true difficulty of the problem and are provably better when the problem is easy.
However, very little progress has been made on multiplayer settings. 
Even the simplest of such settings---i.e. two--player normal form matrix games---have only been studied in a minimax, worst-case sense to our knowledge. 
Nonetheless many practical applications are such that  the outcome for a decision--maker depends not just on their own action, but on the actions of other decision--makers in the environment. Indeed, finite normal form games represent a reasonable abstraction for a multitude of different important problems from economic decisions to voting systems to auctions to military abstractions (see, e.g., \citet{nisan2007algorithmic,von2007theory,bacsar1998dynamic} and references therein).

To concretize ideas, consider a setting in which two firms produce bids for a sequence of arriving customers. Each firm has one of two ways of preparing the bid (e.g., use a higher quality product versus lower quality product but include a warranty). Customers are drawn iid from a population, and select a firm meaning that the selected firm ``wins" the bid, while the other firm ``loses" the bid. This setting can be abstracted as a repeated  two--player zero--sum game defined by a $2\times 2$ stochastic matrix with independent entries in $\{-1,0,1\}$ and 
with expectation $A \in [-1,1]^{2\times2}$. For instance, the entries of $A$ may be $A_{11}=A_{22}=0$, $A_{12}=5/6$, and $A_{21}=-2/3$. 
%\arn{Shouldn't $A_{21}$ be $-1$ ?}\ljr{yes, fixed.}
Such abstractions arise in many applications including online platforms and other digital marketplaces where firms are competing for the same consumer demand.

With this motivation in mind, in this paper we consider  two--player, zero--sum normal form matrix games possessing a unique Nash equilibrium which are defined by a stochastic matrix of dimension $n\times 2$ such that $2\leq n<\infty$.
For this class of games, we characterize the instance dependent sample complexity of identifying  a joint mixed strategy that approximately achieves the value of the game, and  a joint mixed strategy from which players have no incentive to deviate in an approximate sense. 
%belonging to one of two subclasses: $(i)$ each player has finite action space of dimension two, or $(ii)$ one player has finite action space of dimension $n$ and the other of dimension two.
That is, in a sequence of repeated game plays, we address the following questions:  how many rounds must the two players play before reaching a $(i)$   $\vep$--good solution, or $(ii)$  $\varepsilon$--Nash equilibrium, respectively? 

The repeated play proceeds as follows: for a fixed matrix $A \in \mathbb{R}^{n\times 2}$ with entries $A_{ij}$, at the start of each round $t$, the first and second player choose an $i \in \{1,\dots,n\}$ and $j \in \{1,2\}$ simultaneously, respectively, observe each others chosen actions, and then both simultaneously observe outcome $X_{ij}$ where $\mathbb{E}[X_{ij}]=A_{ij}$ and is $1$-sub-Gaussian (e.g., $X_{ij} \sim \text{Bernoulli}(A_{ij}) \in \{-1,1\}$ representing firm 1 winning the bid or not). 
Hence, the first and second player receive expected rewards of $A_{ij}$ and $-A_{ij}$, respectively. Throughout we refer to this zero-sum stochastic matrix game by simply referencing the matrix $A$ that induces the game.

Letting $\simplex_m$ denote the $m$--dimensional simplex,
we analyze the following two objectives: find a joint mixed strategy $(x,y)\in \simplex_n\times\simplex_2$ such that
\begin{enumerate}[label={\it (\roman*)},itemsep=2pt,topsep=2pt]
    \item $|V_A^\star-\la x,Ay\ra|\leq \vep$ where  \[V_A^\star:=\max_{x \in \simplex_n} \min_{y\in\simplex_2} x^\top A y\] is the value of the game, and
    \item both \[\la x,Ay\ra\geq \la x',Ay\ra-\vep\quad\text{and}\quad\la x,Ay'\ra \geq \la x,Ay\ra -\vep\] hold for all $(x',y')\in \simplex_n\times \simplex_2$.
\end{enumerate}
The former is precisely an \textbf{$\boldsymbol{\varepsilon}$--good solution}, and the latter an \textbf{$\boldsymbol{\varepsilon}$--Nash equilibrium}.
%It is straightforward to see that any $\vep$--Nash equilibrium is also an $\vep$--good solution.

We characterize the instance-dependent sample complexity of identifying $\varepsilon$-approximate solutions for the above problems in the sense that they scale with not just $\varepsilon$ and the number of actions, but the particular properties of the matrix $A$. 
Thus, our characterization defines an ordering over games capturing the intuition that the dynamics of some games converge must faster than others.
Specifically, we prove lower bounds on the number of rounds necessary for any two players to converge to an approximate Nash equilibrium.
Moreover, we propose strategies for the two players that achieve this sample complexity.
Our instance-dependent sample complexities introduce a number of quantities that characterize  notions of the sub-optimality ``gap,'' and we discuss why it is non-trivial to extend these definitions and our analysis to the general $n\times m$ dimensional matrix games. 

Before we state our main contributions, we state some easily proven facts to contextualize our results (see Appendix \ref{appendix:properties:matrix} for proof).
\begin{proposition}
For any zero-sum matrix game $A$, an $\varepsilon$--Nash equilibrium  is also an $\varepsilon$--good solution.
\end{proposition}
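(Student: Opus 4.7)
The plan is to argue directly from the two defining inequalities of an $\varepsilon$--Nash equilibrium, using von Neumann's minimax theorem (which applies since $A$ is a finite zero--sum matrix game, so $\max_{x\in\simplex_n}\min_{y\in\simplex_2} x^\top A y = \min_{y\in\simplex_2}\max_{x\in\simplex_n} x^\top A y = V_A^\star$). The proposition reduces to sandwiching $\langle x, Ay\rangle$ between $V_A^\star-\varepsilon$ and $V_A^\star+\varepsilon$, and each side follows from one of the two $\varepsilon$--Nash conditions.

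First I would establish the lower bound. Fix an $\varepsilon$--Nash equilibrium $(x,y)$. The second condition says $\langle x, Ay'\rangle \geq \langle x, Ay\rangle - \varepsilon$ for all $y'\in\simplex_2$, which, after taking the min over $y'$, rearranges to
\[
\langle x, Ay\rangle \;\leq\; \min_{y'\in\simplex_2}\langle x, Ay'\rangle + \varepsilon \;\leq\; \max_{x'\in\simplex_n}\min_{y'\in\simplex_2}\langle x', Ay'\rangle + \varepsilon \;=\; V_A^\star + \varepsilon.
\]
Symmetrically, the first condition $\langle x, Ay\rangle \geq \langle x', Ay\rangle - \varepsilon$ for all $x'\in\simplex_n$ gives
\[
\langle x, Ay\rangle \;\geq\; \max_{x'\in\simplex_n}\langle x', Ay\rangle - \varepsilon \;\geq\; \min_{y'\in\simplex_2}\max_{x'\in\simplex_n}\langle x', Ay'\rangle - \varepsilon \;=\; V_A^\star - \varepsilon,
\]
where the final equality invokes the minimax theorem. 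Combining the two bounds yields $|V_A^\star - \langle x, Ay\rangle|\leq \varepsilon$, which is exactly the definition of an $\varepsilon$--good solution.

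There is no real obstacle here: the only non-trivial ingredient is the minimax equality, and the rest is the standard observation that pointwise ``no profitable unilateral deviation'' inequalities, after taking min/max over the opponent's play, pin the bilinear value to the game value up to $\varepsilon$. One small sanity check I would include is that nothing in the argument relies on $m=2$, so the proposition in fact holds for arbitrary $n\times m$ zero--sum games; the restriction to $n\times 2$ matters only for the instance-dependent quantitative results later in the paper, not for this qualitative implication.
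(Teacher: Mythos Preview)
Your proof is correct and essentially identical in spirit to the paper's: the paper fixes a Nash equilibrium $(x^*,y^*)$ and sandwiches $\langle x,Ay\rangle$ via $\langle x^*,Ay^*\rangle \geq \langle x,Ay^*\rangle \geq \langle x,Ay\rangle - \varepsilon$ and $\langle x^*,Ay^*\rangle \leq \langle x^*,Ay\rangle \leq \langle x,Ay\rangle + \varepsilon$, which is just your max/min argument instantiated at the optimal strategies rather than phrased via the minimax theorem. Your observation that the argument works for arbitrary $n\times m$ is also correct.
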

This means any lower bound on identifying an $\varepsilon$--good solution is also a lower bound on identifying an $\varepsilon$--Nash equilibrium. 
Conversely, any algorithm that can identify an $\varepsilon$--Nash equilibrium can also identify an $\varepsilon$--good solution with the same sample complexity.
\begin{lemma}\label{lem:trivial:NE}
Fix any $\varepsilon > 0$ and $\delta \in (0,1)$, and  matrix $A \in \R^{n\times m}$. Suppose 
that $\bar{A}\in \R^{n\times m}$ has entries $\bar{A}_{ij}$ that are the empirical mean of $\frac{8 \log (2mn/\delta) }{\varepsilon^2}$ 1-sub-Gaussian observations resulting from players playing $(i,j)$, and such that $\bar{A}_{ij}$ has  expectation  $A_{ij}$. 
Let $(x,y)\in\simplex_n\times\simplex_m$ be the Nash equilibrium of the game defined by $\bar{A}$.
% There exists an event that occurs with probability at least $1-\delta$ such that on this event $(x,y)$ is an $\varepsilon$-Nash equilibrium of $A$.
With probability at least $1-\delta$, the mixed strategy $(x,y)$ is an $\varepsilon$--Nash equilibrium of $A$.
\end{lemma}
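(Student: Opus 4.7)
The plan is a straightforward concentration-plus-perturbation argument. The scheme is: (1) use sub-Gaussian concentration to show that $\bar{A}$ is entrywise close to $A$; (2) lift this to a uniform approximation of bilinear forms $x^\top A y$ over mixed strategies; (3) use the Nash property on $\bar{A}$ together with this uniform approximation to conclude that $(x,y)$ is an approximate Nash on $A$.

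First, for each fixed $(i,j)$, the entry $\bar{A}_{ij}$ is the average of $N = \frac{8\log(2mn/\delta)}{\varepsilon^2}$ independent $1$-sub-Gaussian observations with mean $A_{ij}$, hence is itself $\tfrac{1}{\sqrt{N}}$-sub-Gaussian around $A_{ij}$. Applying the standard sub-Gaussian tail bound
\[
\Prob\bigl(|\bar{A}_{ij} - A_{ij}| \geq \tfrac{\varepsilon}{2}\bigr) \leq 2\exp(-N\varepsilon^2/8) = \delta/(mn),
\]
and union-bounding over all $mn$ entries, we get that the event
\[
\mathcal{E} \;=\; \Bigl\{\,|\bar{A}_{ij}-A_{ij}|\leq \tfrac{\varepsilon}{2} \text{ for all } i,j\,\Bigr\}
\]
holds with probability at least $1-\delta$. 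From now on I work on $\mathcal{E}$.

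Next I lift this elementwise closeness to bilinear forms. For any $(x',y')\in\simplex_n\times\simplex_m$,
\[
|\langle x', (A-\bar{A}) y'\rangle| \leq \sum_{i,j} x'_i y'_j |A_{ij}-\bar{A}_{ij}| \leq \tfrac{\varepsilon}{2},
\]
since $x',y'$ are probability vectors. So every mixed-strategy payoff under $A$ and under $\bar{A}$ differ by at most $\varepsilon/2$.

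Finally, since $(x,y)$ is a Nash equilibrium of $\bar{A}$, we have $\langle x',\bar{A} y\rangle \leq \langle x,\bar{A} y\rangle$ and $\langle x,\bar{A} y'\rangle \geq \langle x,\bar{A} y\rangle$ for all $(x',y')\in\simplex_n\times\simplex_m$. Chaining the two $\varepsilon/2$ perturbations,
\[
\langle x', A y\rangle \leq \langle x',\bar{A} y\rangle + \tfrac{\varepsilon}{2} \leq \langle x,\bar{A} y\rangle + \tfrac{\varepsilon}{2} \leq \langle x, A y\rangle + \varepsilon,
\]
and symmetrically $\langle x, A y'\rangle \geq \langle x, A y\rangle - \varepsilon$. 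These two inequalities are exactly the $\varepsilon$-Nash conditions for $A$, completing the proof. There is no genuine obstacle here; the only thing to be careful about is tracking the $\varepsilon/2$ budget so that the combined perturbation on the Nash-deviation inequalities is exactly $\varepsilon$, which is why the concentration threshold in the sample size $N$ is tuned to $\varepsilon/2$ rather than $\varepsilon$.
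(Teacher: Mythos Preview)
Your proof is correct and follows essentially the same approach as the paper: a sub-Gaussian tail bound plus union bound to get entrywise $\varepsilon/2$-closeness, then two applications of this closeness sandwiched around the Nash inequalities for $\bar{A}$ to obtain the $\varepsilon$-Nash inequalities for $A$. The paper's argument is identical in structure and constants.
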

The above strategy is minimax optimal: there exists a worst-case game matrix $A$ such that identifying an $\varepsilon$--Nash equilibrium or $\varepsilon$--good solution with constant probability requires at least $1/\varepsilon^2$ samples.
This worst-case result suggests that the sample compelxity of identifying an $\varepsilon$--Nash equilibrium and $\varepsilon$--good solution are about the same, and that this sample complexity scales with $\varepsilon$.
Remarkably, we will show that both these conclusions are false: there is a provable separation between these two problems, and that the sample complexities for natural problems can be as small as $1/\varepsilon$.

%Specifically, we derive instance-dependent bounds that define an ordering over games capturing the intuition that the dynamics of some games converge must faster than others.  

%For games with $n=2$ dimensional finite action spaces we completely characterize the sample complexity to identify a $\varepsilon$--Nash equilibrium by proving information theoretic lower bounds that are achieved by algorithms we propose. 
%That is, any two players with reasonable strategies will eventually converge to a mixed strategy their finite action space from which neither of the players has an incentive to deviate, or else incur additional cost. This paper quantifies how fast these players reach that point.

\subsection{Contributions}
%Our main contributions are as follows:
% 
% \arn{This section is bit redundant as the we mainly discuss the same results without any proofs in the main body as of now.}\ljr{That is fair, but typically its good to have a contributions section with the actual sample complexities summarized. Otherwise the reader has to dig them out of the main. I definitely its good to have some clear summary up front.}
% \label{sec:contributions}
%In this paper, we mainly focus on the instance dependent sample complexity of finding $\varepsilon$-Nash equilibrium and $\varepsilon$-good solution in $2\times 2$ and $n\times 2$ matrix games with unique Nash equilibrium. We now informally state our main results.
%\kevin{Is this for all $\varepsilon$? If $\varepsilon = \Omega(1)$ must the sample complexity scale with $\max\{ 1/|D|, 1/\Delta_{\min}^2 \}$? Presumably you can also show a $1/\varepsilon^2$ lower as well to match?}
% \ljr{This needs to be edited... just a place holder now}
%\begin{itemize}
 %   \item 
    Consider a game defined by a fixed $2\times 2$ matrix $A$ which has a unique Nash equilibrium which is not a pure-strategy Nash equilibrium. 
In Theorems \ref{thm:lower1} and \ref{thm:lower2}, we show under some mild assumptions that to find an $\varepsilon$-good solution for the matrix game $A$ with probability at least $1-\delta$, we require at least \[\Omega \left(\min\left\{\frac{1}{\varepsilon^2},\max\left\{\frac{1}{\Delta_{\min}^2},\frac{1}{\varepsilon |D|}\right\}\right\}\log(1/\delta)\right)\] samples from the matrix $A$, where problem-dependent parameters $D$ and $\Delta_{\min}$ are functions of $A$ alone and defined in Section~\ref{sec:epsilon_good}.  
Complementing this result, we present an algorithm (Algorithm \ref{alg-ucb-1}) that, with probability $1-\delta$, identifies an $\varepsilon$-good solution using a number of samples  matching this lower bound up to logarithmic factors.
% takes at most $\tilde O\left(\min\left\{\frac{\log(1/\delta)}{\varepsilon^2},\max\left\{\frac{\log(1/\delta)}{\Delta_{\min}^2},\frac{\log(1/\delta)}{\varepsilon |D|}\right\}\right\}\right)$ from $A$ and outputs an $\varepsilon$-good solution.

%\item 
In the same setting,  we show (Theorem \ref{thm:lower3}) that identifying an $\varepsilon$--Nash equilibrium for the game defined by $A$ with probability at least $1-\delta$ requires at least \[\Omega \left(\frac{\Delta_{m_2}^2}{\varepsilon^2D^2}\log(1/\delta)\right)\] samples where $\Delta_{m_2}$, defined in Section~\ref{sec:epsilon_Nash}, is function of $A$ alone. 
Since a lower bound on $\vep$--good solution identification immediately implies a lower bound on identifying an $\varepsilon$--Nash equilibrium, as noted above, we conclude that the sample complexity of identifying an $\varepsilon$--Nash equilibrium with probability at least $1-\delta$ requires \[\Omega\left(\min\left\{\frac{1}{\varepsilon^2},\max\left\{\frac{1}{\Delta_{\min}^2},\frac{\Delta_{m_2}^2}{\varepsilon^2 D^2}\right\}\right\}\log(1/\delta)\right)\] samples.
In general, it is the case that \[\max\left\{\frac{1}{\Delta_{\min}^2},\frac{\Delta_{m_2}^2}{\varepsilon^2 D^2}\right\} \geq \max\left\{\frac{1}{\Delta_{\min}^2},\frac{1}{\varepsilon |D|}\right\}\] which demonstrates a separation in sample complexity between identifying an $\varepsilon$--good solution and $\varepsilon$--Nash equilibrium.
Again, we complement this lower bound result by designing an algorithm (Algorithm \ref{alg-ucb-2}) that, with probability $1-\delta$, identifies an $\varepsilon$--Nash equilibrium with a sample complexity matching this lower bound up to logarithmic factors. 
% takes at most $\tilde O\left(\min\left\{\frac{\log(1/\delta)}{\varepsilon^2},\max\left\{\frac{\log(1/\delta)}{\Delta_{\min}^2},\frac{\Delta_{m_2}^2\log(1/\delta)}{\varepsilon^2 D^2}\right\}\right\}\right)$ from $A$ and outputs an $\varepsilon$-Nash equilibrium.
%\item 

On the other hand, if the game does have a pure-strategy Nash equilibrium then we prove nearly optimal instance dependent upper bounds for identifying an $\varepsilon$-good solution or $\varepsilon$-Nash that are similar to multi-armed bandits. 
In summary, our results completely characterize the instance-dependent sample complexity of identifying an $\varepsilon$-good solution and $\varepsilon$-Nash in the $2 \times 2$ case. 

Now consider a game defined by a fixed $3\times 2$ matrix $A$ that has a unique Nash equilibrium which is not a pure-strategy Nash equilibrium. In Theorem \ref{thm:lower4}, we show under some mild assumptions that to find an $\varepsilon$-good solution for the matrix game defined by $A$ with probability at least $1-\delta$, we require at least $\Omega \left(\frac{1}{\Delta_g^2}\log(1/\delta)\right)$ samples. 
In fact, this number of samples, characterized by the problem-dependent constant $\Delta_g$, is required to just identify the support of the mixed strategy for player 1 in $\simplex_3$ which we show is necessary for $\varepsilon$-good identification.  
Now consider a game defined by an $n\times 2$ matrix $B$ for any $n \geq 3$ which has a unique Nash equilibrium $(x^*,y^*)$ that is not a pure-strategy equilibrium. 
We complement our lower bound result by designing an algorithm (Algorithm \ref{alg-ucb-3}) that, with probability at least $1-\delta$, samples each element of $B$ for \[O\left(\min\left\{\frac{1}{\varepsilon^2},\max\left\{\frac{1}{ \Delta_{\min}^2},\frac{1}{ \Delta_{g}^2}\right\}\right\}\log(1/\delta)\right)\] times (ignoring some logarithmic factors) and  either returns $\supp(x^*)$ and $\supp(y^*)$ or concludes that $\Delta_g$ is not sufficiently large compared to $\varepsilon$. If the support is successfully identified, the algorithms for the $2 \times 2$ cases can be applied. 
Otherwise, an $\varepsilon$--Nash equilibrium can be output after $O(\frac{n}{\varepsilon^2}\cdot \log(n/\delta))$ using the procedure of Lemma~\ref{lem:trivial:NE}. While these sample complexity results hint at necessary and sufficient conditions on the instance-dependent sample complexities of general $m \times n$ games, a full characterization of this setting is left for future work.
% Here $n_0= \tilde O\left(\min\left\{\frac{\log (\frac{n}{ \delta})}{\varepsilon^2},\max\left\{\frac{\log (\frac{n}{ \delta})}{ \Delta_{\min}^2},\frac{\log (\frac{n}{ \delta})}{ \Delta_{g}^2}\right\}\right\}\right)$.
%\end{itemize}

In their respective sections, we define these problem-dependent parameters and provide intuition for what they represent and how they arise, which itself provides some insight into the difficulty of the general $m\times n$ setting. 

Above we have highlighted our results in the special case of a unique Nash Equilibrium which is not a pure strategy. 
However, we address all other cases as well, they are simply more straightforward. 
For instance, if the Nash equilibrium is \emph{not} unique, then $\Delta_{\min}=0$ (and $\Delta_g=0$ for $n\times 2$ games) and therefore our upper and lower bounds match and correspond to a $1/\varepsilon^2$ rate. Moreover, our algorithms do not assume that the equilibrium is a pure or mixed strategy, or if it is unique or not. All cases are covered (see Theorems \ref{thm:alg1}, \ref{thm:alg2}, \ref{thm:alg3} and Lemma \ref{lem:trivial:NE}). Note that our lower bound results hold only for the mixed strategy case, and we do omit a lower bound for the pure strategy case. This was done because the mixed strategy case is the novel and challenging case while the pure strategy case is very similar to multi-armed bandit lower bounds (c.f., \citet{kaufmann2016complexity}).

\subsection{Related Work}
\label{sec:relatedwork}
%We limit our exposition of related work to complexity of  matrix games and instance dependent bounds in---both single and multiplayer---settings with finite action spaces.

\textbf{Complexity of Matrix Games.} Characterizing equilibrium behavior in normal form matrix games has been studied extensively in economics \citep{von1947theory,bohnenblust1950solutions}, as has learning as an abstraction for how players reach an equilibrium \cite{fudenberg1998theory}. The computational complexity of (exact) Nash equilibrium, especially in finite normal form games, is known to be PPAD-complete~\citep{daskalakis2009complexity,daskalakis2009ACM}. Given such hardness results, it is natural to reason about the computational complexity of \emph{approximate} equilibrium. For instance, it has been shown that $\vep$--approximate Nash can be computed in polynomial time~\cite{daskalakis2007progress,daskalakis2009note} where $\varepsilon$ is an absolute constant. These results primarily focus on settings of full information, and are concerned with computational complexity.  

Iteration complexity has been explored fairly extensively in partial information settings including settings with time-varying rewards and continuous action spaces; see, e.g., \cite{rakhlin2013optimization,cesa2006prediction,blum2007learning,syrgkanis2015fast,cardoso2019competing,bravo2018bandit,drusvyatskiy2021improved, daskalakis2011near} and references therein. Only recently has the focus shifted to characterizing statistical learnability---i.e., sample complexity---of equilibrium concepts, or other desiderata such as $\vep$--good solutions, in the presence of bandit feedback. For example, in the bandit feedback setting where players also observe the actions of their opponents,
\citet{donoghue2021matrix} show that  players adopting an optimism in the face of uncertainty principle when selecting actions experience sublinear regret---i.e., the short-fall in cumulative rewards relative to the value of the game---and further show that alternative strategies such as Thompson sampling cannot do not have a guarantee of sublinear regret.

\textbf{Instance Dependent Bounds for Games.}
To our knowledge, instance dependent sample complexity bounds remain under explored in games.  That being said, there are very recent results on special classes of games. For instance,  
\citet{dou2022gap} provide the first minimax bounds for the class of congestion games, which have the nice property of being equivalent to an optimization problem due to their potential game structure. Additionally, \citet{dou2022gap} provide sample complexity results for the centralized and decentralized problem settings under both semi-bandit and bandit feedback.
Similarly, \citet{cui2022learning} study the regret of the Nash Q-learning algorithm for two-player turn based Markov games, and introduce the first gap dependent logarithmic upper bounds, which match theoretical lower bounds up to log factors, in the episodic tabular setting.

\textbf{Instance Dependent Bounds in Stochastic Bandits.}
The sample complexity of stochastic bandits is well-understood: given $n$ actions each yielding a stochastic reward, to identify an action with a mean within $\varepsilon$ of the maximum with probability $1-\delta$, it is necessary and sufficient to take $\sum_{i=1}^n \min\{ \frac{1}{\varepsilon^2},\frac{1}{\Delta_i^2} \} \log(1/\delta)$ total samples, where $\Delta_i$ is the difference between the $i$th mean and the highest mean (up to $\log\log(1/\Delta_i)$ factors) \citep{mannor2004sample,kaufmann2016complexity,karnin2013almost}.
Stochastic bandits can be directly compared to our setting where $A$ is an $n\times 1$ matrix with the means of the arms on the rows.
\section{{Results for $\varepsilon$--Good Solutions of $2\times 2$ Matrix Games}}\label{sec:epsilon_good}
 
This section is devoted to instance-dependent sample complexity bounds for identifying an $\varepsilon$-good solution $(x,y)$ for a zero-sum game matrix $A$. Recall that $|V_A^*-\langle x, Ay\rangle|\leq \varepsilon$. 
In what follows, we will frequently assume that the mixed strategies of the unique Nash equilibrium have \emph{full support}: the mixed strategy $x=\in \simplex_m$ is said to have a full support if $\supp(x):= \{ i \in [m]: x_i > 0\}$ is equal to $[m]:=\{1,\dots,m\}$. Here $x_i$ is the $i$-th component of $x$.
If $m=2$ then the unique equilibrium is either a full support mixed strategy or is a pure strategy, but not both.
%\kevin{Still need to remove saddle point language!}
\begin{definition}[Pure Strategy Nash Equilibrium]
An element $(i^*,j^*)$ is a Pure Strategy Nash Equilibrium (PSNE) of the game induced by the matrix $A\in \mathbb{R}^{m\times n}$ if $A_{i^*j^*}=\max_{i\in[m]}A_{ij^*}$ and $A_{i^*j^*}=\min_{j\in[n]}A_{i^*j}$. Moreover, a Nash equilibrium $(x,y)\in \simplex_m\times \simplex_n$ where $\supp(x)=\{i\}$ and $\supp(y)=\{j\}$ corresponds to a PSNE $(i,j)$.
\end{definition}
%\ljr{mention this is commonly referred to as a pure Nash}

For a matrix $A = [a, b; c,d ]$ (elements of a row are separated by a comma and rows are separated by a semicolon) that has a unique Nash equilibrium which is not a PSNE, our bounds will be given in terms of instance-dependent quantities:
\begin{equation*}
    \begin{aligned}
    D&=a-b-c+d,\quad \Delta_{\min}&=\min\{|a-b|,|a-c|,|d-b|,|d-c|\}.
    \end{aligned}
\end{equation*}
%It may be the case that some of these quantities are identically zero, and in this case we adopt the convention that $1/0=\infty$.
The matrix $A = [a, b; c,d ]$ has a unique Nash equilibrium which is not a PSNE if and only if either of the following hold:
\begin{equation*}
    %\begin{aligned}
    a<b,\ a<c,\ d<b,\ d<c,\quad \text{or} \quad a>b, a>c,\ d>b,\ d>c.
    %\end{aligned}
\end{equation*} Hence $|D|\geq 2\Delta_{\min}>0$. The material of this section show that the sample complexity of identifying an $\varepsilon$-good solution behaves as $\min\big\{\frac{1}{\varepsilon^2},\max\big\{\frac{1}{\Delta_{\min}^2},\frac{1}{\varepsilon |D|}\big\}\big\} \log(1/\delta)$ up to log factors. 
To motivate this bound, for matrix $A=[1, 0; 0,1]$ we have that $\min\big\{\frac{1}{\varepsilon^2},\max\big\{\frac{1}{\Delta_{\min}^2},\frac{1}{\varepsilon |D|}\big\}\big\}\approx \frac{1}{\varepsilon}$ which is significantly better than the trivial bound of $\frac{1}{\varepsilon^2}$.

To provide some intuition about where these quantities come from, suppose we measured each entry of $A$ exactly $T$ times and compiled the empirical means into a matrix $\widehat{A}$. 
If we let $(x,y)$ and $(\widehat{x},\widehat{y})$ be the Nash equilibria for $A$ and $\widehat{A}$, respectively, then $x^\top A y = \frac{ad-bc}{D}$ and we show in Appendices~\ref{appendix:minimax} and \ref{appendix:thm3} that we roughly have $| x^\top A y - \widehat{x}^\top A \widehat{y} | \leq \min\big\{ \frac{1}{\sqrt{T}}, \frac{1}{ T|D|} \big\}$. Moreover, we require roughly $\frac{1}{\Delta_{\min}^2}$ samples to decide whether $a<b$ or $b>a$ (same for other pairs). Without this information, we cannot characterize whether the input matrix has a PSNE or not, and this affects the value $V_A^*$ (which in turn affects the performance of the algorithm).
Hence, we observe it suffices to take $T \approx \min\big\{\frac{1}{\varepsilon^2},\max\big\{\frac{1}{\Delta_{\min}^2},\frac{1}{\varepsilon |D|}\big\}\big\}$. 
In the remainder of this section we make this argument rigorous and show that no smarter algorithm can improve upon this simple strategy. 

The following definition defines the set of algorithms under consideration.
\begin{definition}[$(\varepsilon,\delta)$-PAC-good]
We say an algorithm is $(\varepsilon,\delta)$-PAC-good if for all matrices $A \in \mathbb{R}^{m \times n}$ the algorithm terminates at an almost-sure finite stopping time $\tau \in \mathbb{N}$ and outputs a pair of mixed strategies $(x,y)\in \simplex_m\times \simplex_n$ such that $|V_A^*-\langle x, Ay\rangle|\leq \varepsilon$ with probability at least $1-\delta$.
\end{definition}
Our lower bounds will use this class of algorithms, and our proposed algorithm falls within this class. 

%Now consider a scenario where we can control both the row player and the column player. Now we want to find a pair of mixed strategy $(x,y)$ such that with probability at least $1-\delta$, we have the following:
%$|V_A^*-\langle x, Ay\rangle|\leq \varepsilon$.

%Now we define Nash equilibrium Regret below:
%\begin{equation*}
%    \mathcal{NER}(A)=\left|\sum_{t=1}^T\langle x_t, Ay_t\rangle-T\cdot \max_{x\in\Delta^m}\min_{y\in \Delta^n}\langle x, Ay \rangle\right|
%\end{equation*}
%What are the instance dependent guarantees here?
\subsection{Lower bound with respect to $|D|$}
% In this section, we present an instance dependent lower bound for finding an $\varepsilon$-good solution for a matrix game $A = [a, b; c,d ]$ that has a unique Nash equilibrium which is not a PSNE. Recalling the definitions $D=a-b-c+d$ and $\Delta_{\min}=\min\{|a-b|,|a-c|,|d-b|,|d-c|\}$, we have the following theorem.

This subsection derives a lower bound for the case when $A$ has a unique Nash equilibrium which is not a PSNE.

\begin{theorem}\label{thm:lower1}
Fix any matrix $A = [a, b; c,d ]$ that has a unique Nash equilibrium which is not a PSNE, $\varepsilon\in (0, \frac{\Delta_{\min}^2}{3|D|})$ and $\delta \in (0,1)$. 
Any $(\varepsilon,\delta)$-PAC-good algorithm that returns a pair of mixed strategies $(x,y)\in \simplex_2 \times\simplex_2$ at stopping time $\tau$ satisfies $\mathbb{E}_A[\tau] \geq \frac{ \log(1/30 \delta) }{3\varepsilon |D|}$.
% . Let $\mathcal{E}_{\tau_0}$ be the event that $\tau\leq \tau_0$ and $(x,y)$ is an $\varepsilon$-good solution. If $\tau_0<\frac{\log (\frac{1}{6\delta})}{12\varepsilon |D|}$, then there exists $\Delta \in \{-\sqrt{3\varepsilon |D|},0,\sqrt{3\varepsilon |D|}\}$ such that $\mathbb{P}_{\Delta}(\mathcal{E}_{\tau_0}^c)>\delta$. %where $\mathbb{P}_{\Delta}$ is the probability distribution induced by the algorithm and the matrix $A_\Delta$.
\end{theorem}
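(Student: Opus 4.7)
The plan is to prove this lower bound via a change-of-measure argument based on the transportation lemma of \citet{kaufmann2016complexity}. I will construct a single alternative instance $A'$ by perturbing one entry of $A$ by $\eta=\sqrt{3\varepsilon|D|}$, and apply the lemma to the event $\mathcal{E}=\{(x,y)\text{ is $\varepsilon$-good for }A\}$. Since the observations are $1$-sub-Gaussian and only one entry is perturbed, the lemma specializes to $\mathbb{E}_A[\tau]\cdot \eta^2/2 \geq \mathrm{kl}(\mathbb{P}_A[\mathcal{E}],\mathbb{P}_{A'}[\mathcal{E}])$. PAC-correctness on $A$ gives $\mathbb{P}_A[\mathcal{E}]\geq 1-\delta$; the crux is to show $\mathbb{P}_{A'}[\mathcal{E}]\leq \delta$, after which $\mathrm{kl}(1-\delta,\delta)\geq\log(1/(2.4\delta))$ and substituting $\eta^2=3\varepsilon|D|$ delivers the bound (with the remaining numerical constants absorbed into the $30$ in $\log(1/(30\delta))$).

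The hypothesis $\varepsilon<\Delta_{\min}^2/(3|D|)$ is precisely what guarantees the construction is legal: it forces $\eta<\Delta_{\min}$, so a single-entry perturbation by $\pm\eta$ preserves all four strict inequalities characterizing the non-PSNE case (e.g.\ $a'<b'$, $a'<c'$, $d'<b'$, $d'<c'$), and $A'$ still admits a unique fully mixed Nash equilibrium. Using the closed-form identities $V_A^*=(ad-bc)/D$, $x_1^*=(d-c)/D$, $y_1^*=(d-b)/D$, and the key identity $\langle x,Ay\rangle-V_A^* = D(x_1-x_1^*)(y_1-y_1^*)$, the $\varepsilon$-good set for each instance can be written as a ``cross''-shaped region in the $(x_1,y_1)$-plane, centered at the respective Nash. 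Under the perturbation of entry $a$ by $\eta$, the Nash shifts by $(x_1^{*\prime}-x_1^*,y_1^{*\prime}-y_1^*)=-\frac{\eta}{D+\eta}(x_1^*,y_1^*)$, of order $\sqrt{\varepsilon/|D|}$.

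The main obstacle is the proof that $\mathbb{P}_{A'}[\mathcal{E}]\leq\delta$, i.e.\ that no mixed strategy $(x,y)$ is simultaneously $\varepsilon$-good for both $A$ and $A'$. A naive triangle inequality is insufficient: since $|D|\geq 2\Delta_{\min}$ in the non-PSNE regime, one has $|V_{A'}^*-V_A^*|=|\eta|(d-b)(d-c)/(|D||D+\eta|)\leq|\eta|/4$, so the inequality $|V_A^*-V_{A'}^*|\leq 2\varepsilon+|\langle x,(A-A')y\rangle|$ does not immediately preclude common good points. The plan is to exploit the cross-shaped description together with the explicit expansion
\[
\langle x,A'y\rangle - V_{A'}^* \;=\; D'\,st + \eta\bigl(y_1^* s + x_1^* t\bigr) + \frac{p^*\eta^2}{D+\eta},
\]
where $s=x_1-x_1^*$, $t=y_1-y_1^*$ and $p^*=x_1^* y_1^*$, combined with the constraint $|Dst|\leq\varepsilon$, to do a case analysis on which arm of the $A$-cross contains $(x,y)$ (namely $|s|\ll \sqrt{\varepsilon/|D|}$, $|t|\ll\sqrt{\varepsilon/|D|}$, or the central square $|s|,|t|\approx\sqrt{\varepsilon/|D|}$); choosing which of the four entries to perturb (and with which sign) based on the sign pattern of $A$ so that the linear cross-term $\eta(y_1^* s + x_1^* t)$ adds constructively to the residual $p^*\eta^2/(D+\eta)$ should force $|\langle x,A'y\rangle-V_{A'}^*|>\varepsilon$ for the given $\eta$ and the range $\varepsilon<\Delta_{\min}^2/(3|D|)$. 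Bookkeeping the constants through $\mathrm{kl}(1-\delta,\delta)\geq\log(1/(2.4\delta))$ then produces the stated $\log(1/(30\delta))/(3\varepsilon|D|)$.
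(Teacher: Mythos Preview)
Your single-alternative approach has a structural gap that cannot be closed by bookkeeping. Using precisely your identity $\langle x,Ay\rangle-V_A^*=D(x_1-x_1^*)(y_1-y_1^*)$, the $\varepsilon$-good set for any fully-mixed instance is the hyperbolic ``cross'' $\{(x_1,y_1):|D(x_1-x_1^*)(y_1-y_1^*)|\le\varepsilon\}$, which in particular contains the entire vertical line $x_1=x_1^*$ and the entire horizontal line $y_1=y_1^*$. The same description applies to $A'$ with center $(x_1^{*\prime},y_1^{*\prime})$. But then the point $(x_1,y_1)=(x_1^*,\,y_1^{*\prime})$ lies on the vertical arm of the $A$-cross and on the horizontal arm of the $A'$-cross simultaneously: it achieves \emph{exactly} $V_A^*$ under $A$ and \emph{exactly} $V_{A'}^*$ under $A'$, hence is $\varepsilon$-good for both, for every $\varepsilon>0$. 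So the event $\mathcal{E}=\{(x,y)\text{ is $\varepsilon$-good for }A\}$ can never satisfy $\mathbb{P}_{A'}[\mathcal{E}]\le\delta$, no matter which entry you perturb or by how much; your case analysis on the arms of the $A$-cross cannot rescue this because the troublesome point sits squarely on the arm $s=0$ where your linear cross-term $\eta(y_1^*s+x_1^*t)$ reduces to $\eta x_1^* t$, and setting $t=y_1^{*\prime}-y_1^*$ makes the whole expression vanish.

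This is exactly why the paper's proof uses \emph{three} alternatives $A_{-\Delta},A_0,A_\Delta$ (perturbing $a\mapsto a+\square$, $d\mapsto d-\square$ with $\Delta=\sqrt{3\varepsilon|D|}$) and proves in Lemma~\ref{low:lem1} that every $(x',y')$ fails on at least one of them. The three crosses are arranged so their pairwise intersections are mutually exclusive. The lower bound then follows from a three-hypothesis version of the change-of-measure argument: from $\mathbb{P}_{A_0}(A_0\in\phi)\ge 1-\delta$ one extracts $\mathbb{P}_{A_0}(\phi=\{A_0,A_{-\Delta}\})\ge(1-\delta)/2$ (or the symmetric case), pairs it with $\mathbb{P}_{A_\Delta}(\phi=\{A_0,A_{-\Delta}\})\le\delta$, and applies the transportation inequality. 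The paper remarks explicitly that fewer than three hypotheses do not suffice here, and your attempt illustrates why.
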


The lower bound considers a class of matrices $A_\square$ parameterized by $\square \in \R$ defined as follows:
\[
A_\square = \begin{bmatrix} 
   a+\square & b \\
   c & d-\square \\
    \end{bmatrix}.\]
Clearly $A_\square = A$ when $\square=0$. Observe that if $|\square|<\Delta_{\min}$, then the matrix game defined by $A_\square$ has a unique Nash equilibrium which is not a PSNE. The proof of the theorem, found in Appendix~\ref{appendix:thm1}, follows from change of measure arguments applied to the instances defined in the following lemma.
% Now we present the following lemma, where we show that any $(x,y)\in \Delta_2 \times \Delta_2$ cannot be an $\varepsilon$-good solution for all the matrices in the set $\{ A_\Delta : \Delta \in \{-\sqrt{3\varepsilon |D|},0,\sqrt{3\varepsilon |D|}\} \}$. 
\begin{lemma}\label{low:lem1}
Fix any $\varepsilon\in (0, \frac{\Delta_{\min}^2}{3|D|})$ and let $\Delta=\sqrt{3\varepsilon|D|}$. For any pair of mixed strategies $(x',y')\in \simplex_2 \times \simplex_2$ we have \[\displaystyle
    \max_{B\in\{A_{-\Delta},A_0,A_{\Delta}\}} |V_B^*-\langle x', By' \rangle|\geq \tfrac{3\varepsilon}{2}.\]
\end{lemma}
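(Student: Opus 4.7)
The plan is to exploit a three-point \emph{second-difference} argument in the perturbation parameter $\square$, using the fact that $V_{A_\square}^*$ is genuinely quadratic in $\square$ while any fixed bilinear form $\langle x', A_\square y'\rangle$ is only linear. A suitable symmetric second difference therefore cannot vanish, which forces at least one of the three values $|V_B^* - \langle x', B y'\rangle|$ to be large.

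The first step is to observe that the perturbation is engineered so that the mixed-Nash denominator is invariant: $(a+\square) - b - c + (d-\square) = D$ for all $\square$. The hypothesis $\varepsilon < \Delta_{\min}^2/(3|D|)$ yields $\Delta = \sqrt{3\varepsilon|D|} < \Delta_{\min}$, which is exactly what is needed to preserve the four strict inequalities defining the unique non-PSNE mixed Nash of $A$ (the case $a<b,a<c,d<b,d<c$, or its reverse) under both $\pm\Delta$ perturbations. Consequently the standard closed-form value applies to all three matrices, giving
\[
V_{A_\square}^{*} \;=\; \frac{(a+\square)(d-\square) - bc}{D} \;=\; V_{A_0}^{*} + \frac{\square(d-a) - \square^{2}}{D},
\]
a quadratic in $\square$. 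On the other hand, writing $A_\square = A_0 + \square\,\mathrm{diag}(1,-1)$ makes $\langle x', A_\square y'\rangle = \langle x', A_0 y'\rangle + \square(x'_1 y'_1 - x'_2 y'_2)$ linear in $\square$.

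Setting $f(\square) := V_{A_\square}^{*} - \langle x', A_\square y'\rangle$, the linear piece is annihilated by the symmetric second difference while the quadratic piece survives,
\[
f(\Delta) + f(-\Delta) - 2 f(0) \;=\; -\frac{2\Delta^{2}}{D},
\]
so $|f(\Delta) + f(-\Delta) - 2 f(0)| = 2\Delta^{2}/|D| = 6\varepsilon$ by the choice of $\Delta$. A one-line triangle inequality $6\varepsilon \leq |f(\Delta)| + |f(-\Delta)| + 2|f(0)| \leq 4\max\{|f(\Delta)|,|f(-\Delta)|,|f(0)|\}$ then yields the claimed bound of $3\varepsilon/2$. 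The only substantive observation is that the perturbation moves the two diagonal entries by equal and opposite amounts, simultaneously fixing $D$, making $V_{A_\square}^{*}$ quadratic in $\square$, and keeping $\langle x', A_\square y'\rangle$ linear in $\square$; the hypothesis on $\varepsilon$ is used precisely to keep $A_{\pm\Delta}$ inside the non-PSNE mixed regime so that the closed-form value formula remains applicable.
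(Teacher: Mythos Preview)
Your proof is correct and takes a genuinely different route from the paper's. The paper parameterizes $(x',y')$ explicitly by scalars $(\alpha,\beta)$, computes
\[
V_{A_\square}^{*} - \langle x', A_\square y'\rangle \;=\; -\frac{\square^{2} + (\alpha+\beta)\square + \alpha\beta}{D},
\]
and then runs a three-way case split on the size of $|\alpha\beta|$ and the sign of $\alpha+\beta$ to pin down, for each $(x',y')$, which specific matrix in $\{A_{-\Delta},A_0,A_\Delta\}$ incurs error at least $\Delta^{2}/(2|D|) = 3\varepsilon/2$. Your argument bypasses the parameterization and case analysis entirely: you observe directly that $\square \mapsto V_{A_\square}^{*}$ is quadratic with second-order coefficient $-1/D$, whereas $\square \mapsto \langle x', A_\square y'\rangle$ is affine for every fixed $(x',y')$; the symmetric second difference $f(\Delta)+f(-\Delta)-2f(0)$ then isolates exactly $-2\Delta^{2}/D$, and a single triangle inequality with weights $1,1,2$ finishes. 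This is cleaner and highlights the structural reason three alternatives (rather than two) are needed: a quadratic cannot be matched by an affine function at three equispaced points. The paper's version does buy one thing yours does not: it identifies constructively which of the three instances fails for a given $(x',y')$, though this extra information is not used downstream (the change-of-measure step in Theorem~\ref{thm:lower1} only needs existence).
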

\noindent Unlike many lower bounds for multi-armed bandits that rely on a number of binary hypothesis tests being decided correctly (c.f., \cite{kaufmann2016complexity}), to prove the lower bound of this setting it is not possible to find a satisfying hypothesis test with fewer than three hypotheses due to the peculiar min-max behavior of the objective.

\subsection{Lower bound with respect to $\Delta_{\min}$ and $\varepsilon$}
Consider a matrix $A = [a, b; c,d ]$ that has a unique Nash equilibrium which is not a PSNE. Without loss of generality assume that $D>0$ and $\Delta_{\min}=a-b$. Let us also assume that $a-c\geq d-b$.
%\kevin{Is this WLOG if we permute the matrix? Or is it a sub-class of matrices?}\arn{$D>0$, $\Delta_{\min}=a-b$ is w.l.o.g, $a-c\geq d-b$ is an assumption.}
This subsection derives a lower bound with respect to $\Delta_{\min}$ and $\varepsilon$.

%\kevin{If we always assume $D>0$ we never account for the $D=0$ case, which should be covered by this section, no? }\arn{if matrix has unique NE with full support, then $D>0$ [mentioned the proof in appendix].}
%\kevin{Also, what about a lower bound for when saddle point?}\arn{Currently I don't have a lower bound for saddle point. Even the upper bound for saddle point isn't tight. It can be made more tight but I didn't do it to keep the algorithm simple as our main focus was on unique NE with full support.}
\begin{theorem}\label{thm:lower2}
Consider the matrix $A$ and fix any $\varepsilon>0$ and $\delta \in (0,1)$. 
Any $(\varepsilon,\delta)$-PAC-good algorithm that returns a pair of mixed strategies $(x,y)\in \simplex_2 \times\simplex_2$ at stopping time $\tau$ satisfies $\mathbb{E}_A[\tau] \geq \min\left\{\frac{\log(1/30 \delta) }{36\varepsilon^2 },\frac{\log(1/30 \delta) }{36\Delta_{\min}^2 }\right\}$.
% . Let $\mathcal{E}_{\tau_0}$ be the event that $\tau\leq \tau_0$ and $(x,y)$ is an $\varepsilon$-good solution. If $\tau_0<\frac{\log (\frac{1}{6\delta})}{12\varepsilon |D|}$, then there exists $\Delta \in \{-\sqrt{3\varepsilon |D|},0,\sqrt{3\varepsilon |D|}\}$ such that $\mathbb{P}_{\Delta}(\mathcal{E}_{\tau_0}^c)>\delta$. %where $\mathbb{P}_{\Delta}$ is the probability distribution induced by the algorithm and the matrix $A_\Delta$.
\end{theorem}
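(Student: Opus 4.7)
The plan is to follow the change-of-measure framework of Theorem~\ref{thm:lower1} (via Lemma~\ref{low:lem1}), but with the perturbation tuned to give a lower bound of the form $\log(1/\delta)/\Delta^2$ with $\Delta \asymp \max(\varepsilon,\Delta_{\min})$, so that $1/\Delta^2 = \min\{1/\varepsilon^2,\,1/\Delta_{\min}^2\}$ up to constants. The overall structure is: (i) build a small family of alternative instances differing from $A$ by a perturbation of size $\Delta$, (ii) show that no $(x',y')$ can be simultaneously $\varepsilon$-good for $A$ and all alternatives, and (iii) apply a KL-based transportation inequality to conclude.

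The natural candidates in step (i) are either a three-hypothesis $A_\square$ family in the spirit of Lemma~\ref{low:lem1}, with $\square$ at the scale $\max(\varepsilon,\Delta_{\min})$, or a single-entry perturbation $A'$ large enough to flip the dominance relation $a>b$ and force $A'$ to have a pure-strategy Nash of value $\approx a-\Delta$. For the three-hypothesis variant, the same telescoping identity used to prove Lemma~\ref{low:lem1},
\[
\bigl(V_{A_\square}^* - \langle x',A_\square y'\rangle\bigr)+\bigl(V_{A_{-\square}}^* - \langle x',A_{-\square} y'\rangle\bigr) - 2\bigl(V_A^* - \langle x',Ay'\rangle\bigr) = -\frac{2\square^2}{D},
\]
continues to hold for every $(x',y')\in\simplex_2\times\simplex_2$; this forces $\max_B |V_B^* - \langle x',By'\rangle| \ge 3\varepsilon/2$ whenever $\square^2/|D|$ is a constant multiple of $\varepsilon$ and $|\square|\le \Delta_{\min}$ (so that the three matrices remain non-PSNE). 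For the single-entry alternative, one instead uses that $|V_A^* - V_{A'}^*|$ is of order $\Delta$ and performs a direct geometric check of disjointness of the $\varepsilon$-good neighborhoods of $V_A^*$ and $V_{A'}^*$.

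Step (iii) is then the Kaufmann--Capp\'e--Garivier transportation inequality: since the alternative differs from $A$ in at most two entries and the per-sample KL divergence at each perturbed entry is at most $\Delta^2/2$ (the observations being $1$-sub-Gaussian), one obtains
\[
\mathbb{E}_A[\tau]\cdot \frac{\Delta^2}{2} \;\gtrsim\; \log\bigl(1/(2.4\delta)\bigr),
\]
and substituting $\Delta \asymp \max(\varepsilon,\Delta_{\min})$ followed by the usual numerical-constant bookkeeping (which absorbs the factor coming from the three-hypothesis union bound into $\log(1/(30\delta))$) recovers the stated bound.

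The main obstacle is the separation step. A single-entry perturbation alone is insufficient in general: the $\varepsilon$-good set of a $2\times 2$ matrix with non-PSNE Nash is a neighborhood of the entire ``Nash cross'' $\{x_1=x_1^*\}\cup\{y_1=y_1^*\}$, and by the envelope theorem the pair $(x^*,y^{*'})$ is approximately optimal for both $A$ and $A'$ to second order in the perturbation, leaving a non-empty overlap for small $\Delta$. Eliminating this overlap requires either a cancellation of the first-order term via the three-hypothesis $A_\square$ construction (which imposes the constraint $|\square|\le \Delta_{\min}$ for structure preservation), or a perturbation $\Delta \gtrsim \Delta_{\min}$ large enough to flip the structural sign $a-b>0$ and force $A'$ to a PSNE; the case analysis interpolating these two constructions across the regimes $\varepsilon \lessgtr \Delta_{\min}$ is the key technical content of the proof.
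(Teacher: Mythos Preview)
Your high-level plan matches the paper's, but the concrete construction you propose has a gap. The diagonal family $A_\square=[a+\square,b;c,d-\square]$ from Lemma~\ref{low:lem1} gives only \emph{second-order} separation: the telescoping identity you wrote yields $\max_B|V_B^*-\langle x',By'\rangle|\gtrsim\square^2/|D|$, so to obtain separation $\gtrsim\varepsilon$ one needs $\square\gtrsim\sqrt{\varepsilon|D|}$, and the resulting lower bound $\log(1/\delta)/\square^2\lesssim\log(1/\delta)/(\varepsilon|D|)$ is exactly Theorem~\ref{thm:lower1}, not Theorem~\ref{thm:lower2}. Taking $\square\asymp\max\{\varepsilon,\Delta_{\min}\}$ in that family does not help either: once $|\square|$ exceeds the relevant gap the matrices $A_{\pm\square}$ acquire PSNEs, the closed form $V_{A_\square}^*=\tfrac{(a+\square)(d-\square)-bc}{D}$ breaks down, and the telescoping identity is no longer valid. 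In particular, in the regime $\Delta_{\min}^2/|D|\ll\varepsilon\le\Delta_{\min}$ neither branch of your ``interpolation'' delivers the $1/\Delta_{\min}^2$ bound, and you have not shown how to bridge them.

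The paper avoids any interpolation by using a single \emph{column-wise} family
\[
A_\square=\begin{bmatrix}a+\square & b-\square\\ c+\square & d-\square\end{bmatrix},\qquad \Delta=6\max\{\varepsilon,\Delta_{\min}\}.
\]
Since $\Delta>a-b=\Delta_{\min}$, the alternative $A_{-\Delta}$ is \emph{designed} to have a PSNE at $(1,1)$ with value $a-\Delta$ (and $A_\Delta$ may or may not, depending on whether $d-c\le2\Delta$). The separation lemma (Lemma~\ref{low2:lem1}) then proceeds by a direct case analysis on the sign of $f(\Delta)=\tfrac{(d-b)-(a-c)}{D}\Delta+2\Delta\beta$ rather than any second-order cancellation, and in each case the error at one of $A_0,A_\Delta,A_{-\Delta}$ is shown to be $\gtrsim\Delta$---first order in $\Delta$---which is precisely what yields $\mathbb{E}_A[\tau]\gtrsim1/\Delta^2=\min\{1/(36\varepsilon^2),1/(36\Delta_{\min}^2)\}$. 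Your PSNE-flipping intuition is the right one, but making it work inside a three-hypothesis test requires this column-wise perturbation and a new first-order analysis rather than the Lemma~\ref{low:lem1} machinery.
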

The lower bound considers a class of matrices $A_\square$ parameterized by $\square \in \R$ defined as follows:
\[
A_\square = \begin{bmatrix} 
   a+\square & b-\square \\
   c+\square & d-\square \\
    \end{bmatrix}.\]
Clearly $A_\square=A$ when $\square=0$. The proof of the theorem, found in Appendix~\ref{appendix:thm2}, %follows similar arguments to the proof of Theorem~\ref{thm:lower1}. 
follows from change of measure arguments applied to the instances defined in the following lemma.

\begin{lemma}\label{low2:lem1}
Fix any $\varepsilon>0$. Let $\Delta=6\max\{\varepsilon,\Delta_{\min}\}$. For any pair of mixed strategies $(x',y')\in \simplex_2 \times \simplex_2$ we have $\displaystyle
    \max_{B\in\{A_{-\Delta},A_0,A_{\Delta}\}} |V_B^*-\langle x', By' \rangle| > \varepsilon.$
    % there exists a matrix $B\in\{A_{-\Delta},A_{0},A_{\Delta}\}$ such that the following holds:
% \begin{equation*}
    % |V_B^*-\langle x', By' \rangle|>\varepsilon.
% \end{equation*}
\end{lemma}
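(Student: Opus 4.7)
The plan is to assume for contradiction that $|V_B^*-\langle x', By' \rangle| \leq \varepsilon$ for all three $B \in \{A_{-\Delta}, A_0, A_\Delta\}$ and derive a contradiction. A key structural observation is that $A_\square - A_0 = \square\begin{bmatrix} 1 & -1\\ 1 & -1\end{bmatrix}$, so for any mixed strategies $(x',y')$, $\langle x', A_\square y'\rangle = g_0 + \square\sigma$ where $g_0 := \langle x', A_0 y'\rangle$ and $\sigma := y'_1 - y'_2$; thus the perturbation $\square$ enters linearly in the bilinear form, while $\square \mapsto V_{A_\square}^*$ is piecewise linear.

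First I would characterize $V_{A_\square}^*$ explicitly. Using the WLOG conventions $\Delta_{\min} = a-b$ and $a-c \geq d-b$, set $p := a-b$, $q := d-b$, $r := a-c$, $s := d-c$, so $p + s = q + r = D$ and $p \leq q \leq r \leq s$. Checking PSNE conditions entrywise, $V_{A_\square}^* = a + \square$ on $\square \leq -p/2$, $V_{A_\square}^* = V_{A_0}^* + (E/D)\square$ (with $E := c+d-a-b$) on $-p/2 \leq \square \leq s/2$, and $V_{A_\square}^* = d - \square$ on $\square \geq s/2$. A useful identity is $E/D = y^*_1 - y^*_2 =: \sigma^*$, coming from $y^* = ((d-b)/D, (a-c)/D)$. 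Since $\Delta \geq 6\Delta_{\min} > p/2$, the matrix $A_{-\Delta}$ always lies in the first PSNE regime and $V_{A_{-\Delta}}^* = a - \Delta$. The proof then splits on whether $\Delta \geq s/2$ (Case 1) or $\Delta < s/2$ (Case 2).

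In Case 1, both $A_{\pm\Delta}$ are in a PSNE regime with $V_{A_\Delta}^* = d - \Delta$. Averaging the constraints for $A_{\pm\Delta}$ gives $|(a+d)/2 - g_0 - \Delta| \leq \varepsilon$, and combining with $|V_{A_0}^* - g_0| \leq \varepsilon$ yields $|\Delta - [(a+d)/2 - V_{A_0}^*]| \leq 2\varepsilon$. A direct computation gives $(a+d)/2 - V_{A_0}^* = (pr + qs)/(2D) = (p+q)/2 - pq/D$; using $q \leq D/2$ (from $q \leq r$ and $q+r=D$) this is bounded by $(3p + s)/4 \leq (3\Delta_{\min} + 2\Delta)/4$ in this case. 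Hence $\Delta - [(a+d)/2 - V_{A_0}^*] \geq (2\Delta - 3\Delta_{\min})/4 \geq \tfrac{9}{4}\max\{\varepsilon, \Delta_{\min}\} > 2\varepsilon$, a contradiction. In Case 2, differencing the constraints at $\square=0$ and $\square=\Delta$ gives $|\Delta(\sigma^* - \sigma)| \leq 2\varepsilon$, while differencing at $\square=0$ and $\square=-\Delta$ gives $|(a - V_{A_0}^*) - \Delta(1 - \sigma)| \leq 2\varepsilon$. Substituting $1 - \sigma = 2r/D + (\sigma^* - \sigma)$ (using $1 - \sigma^* = 2r/D$) and $a - V_{A_0}^* = pr/D$, the triangle inequality produces $r(2\Delta - p)/D \leq 4\varepsilon$. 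Combined with $r/D \geq 1/2$ (from $r \geq q$, $r+q=D$) and $2\Delta - p \geq 11\max\{\varepsilon, \Delta_{\min}\}$, this gives $\tfrac{11}{2}\max\{\varepsilon, \Delta_{\min}\} \leq 4\varepsilon$, again a contradiction.

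The main obstacle is handling the piecewise-linear nature of $\square \mapsto V_{A_\square}^*$, which forces the case split on whether $A_\Delta$ has a mixed or pure Nash equilibrium, and carefully tracking the algebraic identities $a - V_{A_0}^* = (a-b)(a-c)/D$ and $(a+d)/2 - V_{A_0}^* = [(a-b)(a-c) + (d-b)(d-c)]/(2D)$. The inequality $r/D \geq 1/2$ is precisely where the WLOG hypothesis $a - c \geq d - b$ enters crucially.
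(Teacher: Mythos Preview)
Your proof is correct and takes a somewhat different route from the paper's. Both arguments split on whether $\Delta \geq (d-c)/2$ (i.e., whether $A_\Delta$ has a PSNE), and both rely on the same structural fact that the bilinear form is affine in $\square$ (the paper writes this as $\langle x',A_\square y'\rangle=\frac{ad-bc}{D}+\frac{(d-b)-(a-c)}{D}\square+2\square\beta+D\alpha\beta$ after parameterizing $(x',y')$ around the Nash equilibrium; your $g_0+\square\sigma$ is the same thing). The difference is in the logical organization. The paper argues directly: within each regime it further case-splits on $|D\alpha\beta|$ versus $\varepsilon$ and then on the size of $f(\Delta)=\frac{(d-b)-(a-c)}{D}\Delta+2\Delta\beta$ relative to $\Delta/2$, and in each subcase exhibits a specific $B$ with error exceeding $\varepsilon$. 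You instead assume all three errors are at most $\varepsilon$ and combine the constraints by averaging or differencing, which isolates $\Delta-[(a+d)/2-V_{A_0}^*]$ in Case~1 and $r(2\Delta-p)/D$ in Case~2, and then bound these below using the clean identities $a-V_{A_0}^*=pr/D$ and $(a+d)/2-V_{A_0}^*=(pr+qs)/(2D)$. Your approach has fewer subcases and makes it especially transparent where the WLOG assumption $a-c\geq d-b$ enters (precisely as $r/D\geq 1/2$); the paper's direct case analysis, on the other hand, names at each step which of $A_{-\Delta},A_0,A_\Delta$ is the witness, which is slightly more constructive.
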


We can now combine Theorems~\ref{thm:lower1} and \ref{thm:lower2} to obtain the claimed result at the beginning of this section. 
Indeed, if $\varepsilon \in (0, \tfrac{\Delta_{\min}^2}{3|D|})$  then $\frac{ \log(1/30 \delta) }{3\varepsilon |D|} > \frac{\log(1/30 \delta) }{\Delta_{\min}^2 }$ and so by Theorem~\ref{thm:lower1} we have 
%$\mathbb{E}_A[\tau] \geq \frac{ \log(1/30 \delta) }{3\varepsilon |D|}
%     = \max\left\{ \frac{ \log(1/30 \delta) }{3\varepsilon |D|}, \frac{\log(1/30 \delta) }{\Delta_{\min}^2 } \right\}
%    \geq \min\left\{ \frac{\log(1/30 \delta) }{36\varepsilon^2 }, \max\left\{ \frac{ \log(1/30 \delta) }{3\varepsilon |D|}, \frac{\log(1/30 \delta) }{\Delta_{\min}^2 } \right\} \right\}.$
\begin{align*}
    \mathbb{E}_A[\tau] &\geq \tfrac{ \log(\tfrac{1}{30\delta}) }{3\varepsilon |D|}\\
    & = \max\left\{ \tfrac{ \log(\tfrac{1}{30\delta}) }{3\varepsilon |D|}, \tfrac{\log(\tfrac{1}{30\delta}) }{\Delta_{\min}^2 } \right\}\\
    &\geq \min\left\{ \tfrac{\log(\frac{1}{30\delta}) }{36\varepsilon^2 }, \max\left\{ \tfrac{ \log(\frac{1}{30\delta}) }{3\varepsilon |D|}, \tfrac{\log(\tfrac{1}{30\delta}) }{\Delta_{\min}^2 } \right\} \right\}.
\end{align*}

On the other hand, if $\varepsilon \geq \tfrac{\Delta_{\min}^2}{3|D|}$ then  $\frac{ \log(1/30 \delta) }{3\varepsilon |D|} \leq \frac{\log(1/30 \delta) }{\Delta_{\min}^2 }$ and so by Theorem~\ref{thm:lower2} we have 
%$\mathbb{E}_A[\tau] \geq \min\left\{\frac{\log(1/30 \delta) }{36\varepsilon^2 },\frac{\log(1/30 \delta) }{36\Delta_{\min}^2 }\right\}= \frac{1}{36}  \min\left\{\frac{\log(1/30 \delta) }{\varepsilon^2 }, \max\left\{ \frac{ \log(1/30 \delta) }{3\varepsilon |D|}, \frac{\log(1/30 \delta) }{\Delta_{\min}^2 }\right\} \right\}.$
\begin{align*}
    \mathbb{E}_A[\tau] &\geq \min\left\{\tfrac{\log(\tfrac{1}{30\delta}) }{36\varepsilon^2 },\tfrac{\log(\frac{1}{30\delta}) }{36\Delta_{\min}^2 }\right\}\\
    &= \tfrac{1}{36}  \min\left\{\tfrac{\log(\tfrac{1}{30\delta}) }{\varepsilon^2 }, \max\left\{ \tfrac{ \log(\tfrac{1}{30\delta}) }{3\varepsilon |D|}, \tfrac{\log(\tfrac{1}{30\delta}) }{\Delta_{\min}^2 }\right\} \right\}.
\end{align*}
Thus, for all $\varepsilon > 0$ we have \[\mathbb{E}_A[\tau] \geq  \min\left\{\tfrac{1 }{\varepsilon^2 }, \max\left\{ \tfrac{1 }{3\varepsilon |D|}, \tfrac{1 }{\Delta_{\min}^2 }\right\} \right\} \tfrac{\log(1/30 \delta)}{36}.\]

\subsection{Lower bound for games with multiple Nash Equilibria}
Consider a matrix $A = [a, a; c,d ]$ such that $a>c$, $a<d$.  Let us also assume that $a-c\geq d-a$.
%\kevin{Is this WLOG if we permute the matrix? Or is it a sub-class of matrices?}\arn{$D>0$, $\Delta_{\min}=a-b$ is w.l.o.g, $a-c\geq d-b$ is an assumption.}
Observe that the matrix game on $A$ has multiple Nash Equilibria and this game is a degenerate version of a matrix game with unique Nash Equilibrium which is not a PSNE (as $\Delta_{\min}=0$). This subsection derives a lower bound for the matrix game on $A$ with respect to $\varepsilon$.

%\kevin{If we always assume $D>0$ we never account for the $D=0$ case, which should be covered by this section, no? }\arn{if matrix has unique NE with full support, then $D>0$ [mentioned the proof in appendix].}
%\kevin{Also, what about a lower bound for when saddle point?}\arn{Currently I don't have a lower bound for saddle point. Even the upper bound for saddle point isn't tight. It can be made more tight but I didn't do it to keep the algorithm simple as our main focus was on unique NE with full support.}
\begin{theorem}\label{thm:multiple}
Consider the matrix $A$ and fix any $\varepsilon>0$ and $\delta \in (0,1)$. 
Any $(\varepsilon,\delta)$-PAC-good algorithm that returns a pair of mixed strategies $(x,y)\in \simplex_2 \times\simplex_2$ at stopping time $\tau$ satisfies $\mathbb{E}_A[\tau] \geq \frac{\log(1/30 \delta) }{36\varepsilon^2 }$.
% . Let $\mathcal{E}_{\tau_0}$ be the event that $\tau\leq \tau_0$ and $(x,y)$ is an $\varepsilon$-good solution. If $\tau_0<\frac{\log (\frac{1}{6\delta})}{12\varepsilon |D|}$, then there exists $\Delta \in \{-\sqrt{3\varepsilon |D|},0,\sqrt{3\varepsilon |D|}\}$ such that $\mathbb{P}_{\Delta}(\mathcal{E}_{\tau_0}^c)>\delta$. %where $\mathbb{P}_{\Delta}$ is the probability distribution induced by the algorithm and the matrix $A_\Delta$.
\end{theorem}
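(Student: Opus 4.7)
The plan is to mimic the two-step strategy used in the proof of Theorem~\ref{thm:lower2}: first, a three-matrix separation lemma (analogue of Lemma~\ref{low2:lem1}) shows that no single pair of mixed strategies can be $\varepsilon$-good for three carefully chosen perturbations simultaneously; second, the standard transportation argument of Kaufmann et al.~\cite{kaufmann2016complexity} converts this into a sample-complexity lower bound. Although $\Delta_{\min}=0$ here and the game admits multiple Nash equilibria, the value function of the natural symmetric perturbation family has a ``kink'' at $A$, and the size of this kink is exactly what gives the required $\Omega(\varepsilon^{-2})$ bound.

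Define the perturbation family $A_\square := [a+\square,\,a-\square;\,c+\square,\,d-\square]$, so $A_0=A$. A direct case analysis of Nash equilibria gives: for $\square\in\bigl(0,\tfrac{d-c}{2}\bigr)$ the matrix $A_\square$ has a unique mixed NE with $V_{A_\square}^{\star} = a - \mu\square$, where $\mu:=(2a-c-d)/(d-c)\in[0,1)$ (using $a-c\geq d-a$ and $a<d$), while for $\square<0$ the entry $(1,1)$ becomes a pure-strategy NE with $V_{A_\square}^{\star}=a+\square$. Hence $V_{A_\square}^{\star}$ is continuous and piecewise linear in $\square$, with slope $+1$ to the left of $\square=0$ and slope $-\mu$ to the right; the slope gap $1+\mu\geq 1$ at the kink is the key quantity. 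A direct computation shows that $\langle x',A_\square y'\rangle = \langle x',A_0 y'\rangle + \square(y'_1-y'_2)$, i.e.\ $\langle x',A_\square y'\rangle$ is \emph{exactly} linear in $\square$ with slope $s:=y'_1-y'_2\in[-1,1]$.

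The analogue of Lemma~\ref{low2:lem1} is then: for $\Delta := 6\varepsilon$ and any $(x',y')\in\simplex_2\times\simplex_2$, at least one of $B\in\{A_{-\Delta},A_0,A_\Delta\}$ satisfies $|V_B^{\star}-\langle x',By'\rangle|>\varepsilon$. The proof is by contradiction: if all three inequalities held, subtracting the $\square=0$ inequality from the $\square=\pm\Delta$ inequalities yields $|\mu+s|\leq 2\varepsilon/\Delta$ and $|1-s|\leq 2\varepsilon/\Delta$; by the triangle inequality,
\[
1\;\leq\;1+\mu\;\leq\;|1-s|+|\mu+s|\;\leq\;4\varepsilon/\Delta,
\]
which contradicts $\Delta=6\varepsilon$.

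Equipped with this separation, the change-of-measure step proceeds exactly as in the proof of Theorem~\ref{thm:lower2} in Appendix~\ref{appendix:thm2}. An $(\varepsilon,\delta)$-PAC-good algorithm is $\varepsilon$-good for $A_0$ with probability at least $1-\delta$, so the separation lemma forces the output to fail $\varepsilon$-goodness for at least one of $A_{\pm\Delta}$; a pigeon-hole then provides a pair of matrices between which the algorithm must statistically distinguish. Since the two matrices differ entrywise by $\Delta=6\varepsilon$, the transportation inequality with per-entry sub-Gaussian KL bounded by $\Delta^2/2$ yields $\tfrac{\Delta^2}{2}\,\mathbb{E}_A[\tau]\geq d\bigl(\tfrac{1-\delta}{2},\delta\bigr)\gtrsim \tfrac{1}{2}\log(1/30\delta)$, giving $\mathbb{E}_A[\tau]\geq \log(1/30\delta)/(36\varepsilon^2)$. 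The main obstacle is the case analysis on both sides of the kink at $\square=0$: unlike Theorem~\ref{thm:lower2}, where $|\square|<\Delta_{\min}$ keeps the perturbation strictly inside the mixed-NE regime, here any infinitesimal negative $\square$ immediately drops us into a pure-strategy regime, and it is precisely this asymmetry that produces the $1+\mu$ slope gap powering the contradiction.
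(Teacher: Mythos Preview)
Your high-level structure---the same perturbation family $A_\square=[a+\square,\,a-\square;\,c+\square,\,d-\square]$, a three-alternative separation lemma, then the Kaufmann--Capp\'e--Garivier transportation inequality---matches the paper's proof exactly, and your change-of-measure step is correct as written. Where you depart from the paper is in the proof of the separation lemma itself: the paper parameterizes $(x',y')$ by $(\alpha,\beta)$, computes $\langle x',A_\square y'\rangle=a+\tfrac{(d-a)-(a-c)}{D}\square+2\square\beta+D\alpha\beta$, and does a case split on $|D\alpha\beta|$ and on the sign of an auxiliary function $f(\Delta)$. Your slope argument is cleaner and more geometric: observing that $\square\mapsto\langle x',A_\square y'\rangle$ is exactly linear while $\square\mapsto V_{A_\square}^{\star}$ has a kink of size $1+\mu\geq 1$ at the origin, and then deriving a contradiction via the triangle inequality, isolates precisely the structural reason the lower bound holds.

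There is, however, a genuine gap. Your formula $V_{A_\Delta}^{\star}=a-\mu\Delta$ is valid only when $\Delta<(d-c)/2$; since $\Delta=6\varepsilon$ and the theorem is claimed for \emph{all} $\varepsilon>0$, you must also treat the regime $\Delta\geq(d-c)/2$. In that regime $A_\Delta$ acquires a PSNE at $(2,2)$ with value $V_{A_\Delta}^{\star}=d-\Delta$, not $a-\mu\Delta$, and the paper handles this as a separate case. Fortunately your slope argument patches easily: from $|1-s|\leq 2\varepsilon/\Delta=1/3$ you get $s\geq 2/3$, hence $1+s\geq 5/3$; combined with $0<d-a\leq(d-c)/2\leq\Delta$ (which follows from the standing assumption $a-c\geq d-a$) this gives
\[
\Delta(1+s)-(d-a)\;\geq\;\tfrac{5}{3}\Delta-\Delta\;=\;\tfrac{2}{3}\Delta\;=\;4\varepsilon\;>\;2\varepsilon,
\]
contradicting the assumed $\varepsilon$-goodness on $A_\Delta$. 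You should add this second case to complete the proof.
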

The lower bound considers a class of matrices $A_\square$ parameterized by $\square \in \R$ defined as follows:
\[
A_\square = \begin{bmatrix} 
   a+\square & a-\square \\
   c+\square & d-\square \\
    \end{bmatrix}.\]
Clearly $A_\square=A$ when $\square=0$. The proof of the theorem, found in Appendix~\ref{appendix:thm:multiple}, follows from change of measure arguments applied to the instances defined in the following lemma.

\begin{lemma}\label{lem1:multiple}
Fix any $\varepsilon>0$. Let $\Delta=6\varepsilon$. For any pair of mixed strategies $(x',y')\in \simplex_2 \times \simplex_2$ we have $\displaystyle
    \max_{B\in\{A_{-\Delta},A_0,A_{\Delta}\}} |V_B^*-\langle x', By' \rangle| > \varepsilon.$
    % there exists a matrix $B\in\{A_{-\Delta},A_{0},A_{\Delta}\}$ such that the following holds:
% \begin{equation*}
    % |V_B^*-\langle x', By' \rangle|>\varepsilon.
% \end{equation*}
\end{lemma}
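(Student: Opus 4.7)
The plan is to (1) compute the three game values $V_{A_0}^*, V_{A_\Delta}^*, V_{A_{-\Delta}}^*$, (2) write $\langle x', A_\square y'\rangle$ in a form where the dependence on the perturbation $\square$ factors cleanly, and (3) derive a contradiction by assuming all three errors are at most $\varepsilon$ and summing two triangle inequalities.

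First I would compute the game values. For $A_0 = A$, the minimax calculation yields $V_{A_0}^* = a$: the pure strategy $x = (1,0)$ guarantees payoff $a$ for player 1, and no mixed $x$ can do better since for $p < 1$ the minimum over $y$ is attained at column 1 with value $pa + (1-p)c < a$. For $A_{-\Delta}$, column 1 strictly dominates column 2 entrywise (since $a - \Delta < a + \Delta$ and $c - \Delta < d + \Delta$), so player 2 plays column 1, player 1 best-responds with row 1, and $V_{A_{-\Delta}}^* = a - \Delta = a - 6\varepsilon$. For $A_\Delta$ two subcases arise: if $\Delta \leq (d-c)/2$ the standard mixed-NE formula gives $V_{A_\Delta}^* = a - \Delta \cdot (2a - c - d)/(d - c)$, and if $\Delta > (d - c)/2$ then column 2 dominates and $V_{A_\Delta}^* = d - \Delta$. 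The hypothesis $a - c \geq d - a$ (equivalently $2a - c - d \geq 0$ and $d - a \leq (d - c)/2$) implies $V_{A_\Delta}^* \leq a$ in both subcases, so $a - V_{A_\Delta}^* \geq 0$. Parametrizing $x' = (p, 1 - p)$ and $y' = (q, 1 - q)$, a direct expansion gives $\langle x', A_\square y' \rangle = f(p, q) + \square(2q - 1)$ where $f(p, q) := pa + (1 - p)[qc + (1 - q)d]$ is independent of $\square$.

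For the final step, I would suppose for contradiction that all three errors are at most $\varepsilon$. Applying the triangle inequality to the $A_0$ and $A_\Delta$ error bounds yields $|V_{A_0}^* - V_{A_\Delta}^* + \Delta(2q - 1)| \leq 2\varepsilon$, and to the $A_0$ and $A_{-\Delta}$ bounds yields $|V_{A_0}^* - V_{A_{-\Delta}}^* - \Delta(2q - 1)| \leq 2\varepsilon$. Adding these and applying the triangle inequality once more (the $\Delta(2q - 1)$ terms cancel) gives $|2V_{A_0}^* - V_{A_\Delta}^* - V_{A_{-\Delta}}^*| \leq 4\varepsilon$. Substituting the values computed above yields $2V_{A_0}^* - V_{A_\Delta}^* - V_{A_{-\Delta}}^* = (a - V_{A_\Delta}^*) + 6\varepsilon \geq 6\varepsilon$, contradicting the $4\varepsilon$ bound.

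The hard part will be the case analysis for $V_{A_\Delta}^*$: the formula depends on whether $\Delta$ exceeds $(d - c)/2$, and the hypothesis $a - c \geq d - a$ is essential to conclude $V_{A_\Delta}^* \leq a$ in both subcases. Without this hypothesis, $V_{A_\Delta}^*$ could exceed $a$, the quantity $(a - V_{A_\Delta}^*) + 6\varepsilon$ could be smaller than $4\varepsilon$, and the final contradiction would collapse. Beyond this one subtlety, the argument is a clean two-step application of the triangle inequality exploiting the fact that the $\square$-dependence of $\langle x', A_\square y' \rangle$ is affine in $\square$ and antisymmetric between $A_\Delta$ and $A_{-\Delta}$.
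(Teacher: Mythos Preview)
Your proof is correct and takes a genuinely different route from the paper's. The paper parameterizes $(x',y')$ around the Nash equilibrium of $A_0$ as $x'=(1+\alpha,-\alpha)$, $y'=(\tfrac{d-a}{D}+\beta,\tfrac{a-c}{D}-\beta)$ with $D=d-c$, expands $\langle x',A_\square y'\rangle = a + \tfrac{(d-a)-(a-c)}{D}\square + 2\square\beta + D\alpha\beta$, and then runs a case analysis: first on whether $d-c>2\Delta$ (which determines the form of $V^*_{A_\Delta}$), and within each branch on the sign and magnitude of $D\alpha\beta$ and of the auxiliary quantity $f(\Delta)=\tfrac{(d-a)-(a-c)}{D}\Delta + 2\Delta\beta$, exhibiting in each subcase a specific $B\in\{A_{-\Delta},A_0,A_\Delta\}$ with error exceeding $\varepsilon$.

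Your approach instead exploits that $\langle x',A_\square y'\rangle$ is affine in $\square$, so the $\square$-dependent piece cancels when one combines the three error bounds via the triangle inequality, reducing the whole lemma to the single scalar inequality $|2V^*_{A_0}-V^*_{A_\Delta}-V^*_{A_{-\Delta}}|\le 4\varepsilon$. The only remaining work is to check $V^*_{A_\Delta}\le a$, which is exactly where the hypothesis $a-c\ge d-a$ enters; after that the contradiction $6\varepsilon\le 4\varepsilon$ is immediate. This is cleaner than the paper's argument---no parametrization around the NE, no $(\alpha,\beta)$ case split---though both approaches still require the dichotomy on $\Delta$ versus $(d-c)/2$ to pin down $V^*_{A_\Delta}$. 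One minor remark: at the boundary $\Delta=(d-c)/2$ the matrix $A_\Delta$ acquires a PSNE, but your mixed-NE formula and the PSNE value $d-\Delta$ coincide there (both equal $(c+d)/2$), so the conclusion $V^*_{A_\Delta}\le a$ is unaffected.
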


\subsection{Upper bound for $\varepsilon$--good solution}

As discussed above, Lemma~\ref{thm:lower1} describes a minimax optimal strategy. This subsection is dedicated to Algorithm~\ref{alg-ucb-1} that achieves an instance-dependent sample complexity for the special case of $m=n=2$.
Algorithm~\ref{alg-ucb-1} first samples the elements of $A$ until we can conclude whether $A$ has a PSNE or not. 
If $A$ has a PSNE, then we return it. 
If $A$ does not have a PSNE, we further sample each element of $A$ for $\tilde O(\frac{\log(\frac{1}{\delta})}{\varepsilon \tilde D})$ times and return the Nash equilibrium of the empirical matrix $\bar A$. 
Here $\tilde D$ is an empirical estimate of $|D|$. 
If no prior condition is met, the algorithm terminates in the worst case at iteration $t=T=\frac{8 \log(16/\delta)}{\varepsilon^2}$ and outputs an $\varepsilon$--good solution with high probability.
The full sample complexity guarantees of the algorithm are described in the following theorem whose proof is in Appendix \ref{appendix:thm3}.
% For more details, we refer the reader to Algorithm \ref{alg-ucb-1}.

% In this section, we formally establish the sample complexity of finding an $\varepsilon$-good solution in a matrix game $A\in \mathbb{R}^{2\times 2}$ by proving the following theorem.
\begin{theorem}\label{thm:alg1}
Fix any $\varepsilon > 0$ and $\delta \in (0,1)$. 
% Let $T=\frac{8 \log (16/\delta) }{\varepsilon^2}$. 
With probability at least $1-\delta$, Algorithm~\ref{alg-ucb-1} returns an $\varepsilon$--good solution after at most $n_0$ samples such that %\kevin{Don't use Big-O notation here}\arn{Can I just use absolute constant variables $c_1,c_2,...$ instead?}
\begin{itemize}[itemsep=1pt,topsep=0pt]
    \item $n_0= c_1\cdot\min\big\{\frac{\log(1/\delta)}{\varepsilon^2},\max\big\{\frac{\log(\frac{1}{\varepsilon\delta})}{\Delta_{\min}^2},\frac{\log(\frac{1}{\varepsilon\delta})}{\varepsilon |D|}\big\}\big\}$ if the matrix game induced by $A$ has a unique Nash equilibrium which is not a PSNE, and
    \item $n_0=c_2\cdot\min\big\{\frac{\log(1/\delta)}{\varepsilon^2},\frac{\log(1/(\varepsilon\delta))}{\Delta_{\min}^2}\big\}$ if the matrix game induced by $A$ has a PSNE,
\end{itemize}
where $c_1,c_2$ are absolute constants.
\end{theorem}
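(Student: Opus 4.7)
The plan is to couple a single high-probability concentration event with a deterministic analysis of how Algorithm~\ref{alg-ucb-1} behaves on that event. Concretely, for an anytime-valid confidence width $C_t \asymp \sqrt{\log(t/\delta)/t}$, define the event $\mathcal{E}$ that $|\bar{A}_{ij}^{(t)} - A_{ij}| \leq C_t$ for all $i,j \in \{1,2\}$ and all $t \leq T$ simultaneously; by sub-Gaussian tails together with a union bound, $\mathbb{P}(\mathcal{E}) \geq 1-\delta$. The remainder of the argument is deterministic conditional on $\mathcal{E}$.

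For correctness, I would consider three termination modes. If the algorithm declares a PSNE $(i^*,j^*)$ early, the stopping rule is designed so that on $\mathcal{E}$ the confidence intervals already resolve the signs of the relevant pairwise differences such as $A_{i^*j^*} - A_{i j^*}$ and $A_{i^*j^*} - A_{i^* j}$, so $(i^*,j^*)$ is the true PSNE and the returned pair is in fact an exact (hence $\varepsilon$-good) solution. If the algorithm enters the ``no PSNE'' branch and returns the Nash equilibrium $(\widehat{x},\widehat{y})$ of the empirical matrix $\widehat{A}$ after $t$ samples per entry, with empirical determinant $\widetilde{D}$, I would invoke the perturbation estimate sketched in the paragraph preceding Theorem~\ref{thm:alg1}: using the closed form $V_A^* = (ad-bc)/D$, one shows $|V_A^* - \langle \widehat{x}, A \widehat{y}\rangle| \lesssim \min\{1/\sqrt{t}, 1/(t |D|)\}$ whenever the PSNE test has failed (which on $\mathcal{E}$ forces $|D| \asymp |\widetilde{D}|$ up to constants). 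Choosing the stopping criterion so that this quantity falls below $\varepsilon$ yields $\varepsilon$-goodness. Finally, if the algorithm hits the worst-case cap $T = 8\log(16/\delta)/\varepsilon^2$, Lemma~\ref{lem:trivial:NE} with $m=n=2$ directly delivers an $\varepsilon$-Nash, and therefore an $\varepsilon$-good, solution.

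For the sample complexity, I would argue on $\mathcal{E}$ as follows. In the PSNE case, as soon as $t \gtrsim \log(1/(\varepsilon\delta))/\Delta_{\min}^2$ the confidence intervals separate every pair that appears in the PSNE test, so the algorithm certifies the PSNE and halts; combined with the cap $T$ this gives the $c_2$ bound. In the non-PSNE case, the same $\log(1/(\varepsilon\delta))/\Delta_{\min}^2$ threshold is required merely to conclude ``no PSNE'' (distinguishing the signs of $a-b$, $a-c$, $d-b$, $d-c$ to resolution $\Delta_{\min}$). Beyond that milestone, the algorithm draws an additional $O(\log(1/(\varepsilon\delta))/(\varepsilon |D|))$ samples per entry so that the perturbation bound above falls below $\varepsilon$, which is legitimate because on $\mathcal{E}$ the empirical $\widetilde{D}$ is within a constant factor of $D$. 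Summing and intersecting with the trivial cap $T$ yields the $c_1$ bound.

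The main obstacle I anticipate is establishing the refined perturbation bound $|V_A^* - \langle \widehat{x}, A \widehat{y}\rangle| \lesssim \min\{1/\sqrt{t}, 1/(t|D|)\}$. Naively propagating entrywise errors of size $1/\sqrt{t}$ through $(ad-bc)/D$ and through the Nash weights (which themselves scale like $1/D$) gives only a $1/\sqrt{t}$ rate, not the sharper $1/(t|D|)$ rate that ultimately drives the $1/(\varepsilon|D|)$ sample complexity. To get the latter, one has to exploit that $(x,y)$ is a stationary point of the min--max objective, so the linear-order contributions in the perturbation $\widehat{A}-A$ cancel in the relevant direction and only the quadratic contributions survive; these are of size $1/t$, rescaled by the $1/|D|$ sensitivity of the Nash mixing weights. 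The remainder of the proof is essentially careful bookkeeping: maintaining the concentration event uniformly over stopping times, justifying the lower bound on $|\widetilde{D}|$ that the algorithm needs before trusting the perturbation estimate, and stitching the three termination modes together into a single clean bound.
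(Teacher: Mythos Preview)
Your proposal is correct and follows essentially the same route as the paper: a uniform concentration event (the paper splits it into two events $G$ and $E$ with union bound), a case split over the algorithm's termination branches, and, crucially, the second-order perturbation bound $|V_A^*-\langle\widehat x,A\widehat y\rangle|\lesssim \Delta^2/|D|$ in the non-PSNE branch. Your diagnosis of the obstacle is exactly the mechanism the paper uses (Lemma~\ref{prop:st1} and Lemma~\ref{2matrix-1}): writing $V_{\widehat A}^*=\langle x^*,\widehat A y^*\rangle+\sum_j y_j^*\sum_i(\widehat x_i-x_i^*)\,\Delta_{ij}$ makes the linear-in-$\Delta$ term vanish and leaves a product of $|\widehat x_i-x_i^*|=O(\Delta/|D|)$ with $|\Delta_{ij}|\leq\Delta$, so the paper's computation is precisely the ``stationary point cancellation'' you describe, carried out via the explicit $2\times2$ Nash formula rather than abstract first-order optimality.
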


\begin{algorithm}[t!]
\caption{Find an $\varepsilon$--good solution}
\begin{algorithmic}[1]
\STATE $T\gets\frac{8 \log (16/\delta) }{\varepsilon^2}$ 
\FOR{time step $t=1,2,\ldots,T$} 
\STATE Sample each element $(i,j)$ once and update the empirical means $\bar A_{ij}$.
\STATE $\Delta \gets \sqrt{2\log(\frac{16T}{\delta}) / t}$
\STATE $\tilde\Delta_{\min}\gets \min\{|\bar A_{11}-\bar A_{12}|,|\bar A_{21}-\bar A_{22}|\}|\bar A_{11}-\bar A_{21}|,|\bar A_{12}-\bar A_{22}|\}$
\STATE $\tilde D\gets |\bar A_{11}-\bar A_{12}-\bar A_{21}+\bar A_{22}|$
\IF{$1\leq \frac{\tilde \Delta_{\min}+2\Delta}{\tilde \Delta_{\min}-2\Delta}\leq \frac{3}{2}$ and the matrix game $\bar A$ has a PSNE }\label{alg1:con1}
\STATE Return  the PSNE of  $\bar A$.
\ELSIF{$1\leq \frac{\tilde \Delta_{\min}+2\Delta}{\tilde \Delta_{\min}-2\Delta}\leq \frac{3}{2}$ and $\tilde D< 10\varepsilon$}\label{alg1:con2} 
\STATE Sample each element $(i,j)$ for $T-t$ times.
\STATE Return the Nash equilibrium of $\bar A$.
\ELSIF{$1\leq \frac{\tilde \Delta_{\min}+2\Delta}{\tilde \Delta_{\min}-2\Delta}\leq \frac{3}{2}$ and $\tilde D\geq 10\varepsilon$}\label{alg1:con3} 
\STATE $N\gets \frac{80 \log (\frac{16T}{\delta}) }{\varepsilon \tilde D }$
% \STATE Sample each element $(i,j)$ for $\min\{N, T-t\}$ times.
\IF{$N>T-t$}\label{alg1:con4}
\STATE Sample each element $(i,j)$ for $T-t$ times
\STATE Return the Nash equilibrium of $\bar A$.
\ENDIF
\STATE Sample each element $(i,j)$ for $N$ times.
%\STATE Compute a matrix $B$ which belongs to $\bar A \left(\frac{\sqrt{\varepsilon \tilde D}}{16}\right)$ and has a NE with full support.
\STATE Return the Nash equilibrium of $\bar A$.
\ENDIF
\ENDFOR
\STATE Return the Nash equilibrium of $\bar A$. \label{alg1:con5}
\end{algorithmic}
\label{alg-ucb-1}
\end{algorithm}

\section{{Results for $\varepsilon$--Nash Equilibrium of $2\times 2$ Matrix Games}}\label{sec:epsilon_Nash}
This section is devoted to instance-dependent sample complexity bounds for identifying an $\varepsilon$--Nash equilibrium  $(x,y)$. Recall that both $\la x,Ay\ra\geq \la x',Ay\ra-\vep$ and $\la x,Ay'\ra \geq \la x,Ay\ra -\vep$ hold for all $(x',y')\in \simplex_2\times \simplex_2$.
For a matrix $A = [a, b; c,d ]$, our bounds will be given in terms of the following instance-dependent quantities:
\begin{equation*}
    \begin{aligned}
    D&=a-b-c+d, \quad \Delta_{m_2}&=\max\{\min\{|a-b|, |d-c|\},\min\{|a-c|,|d-b|\}\}.
    \end{aligned}
\end{equation*}
%It may be the case that some of these quantities are identically zero, and in this case we adopt the convention that $1/0=\infty$.
The sample complexity of identifying an $\varepsilon$--Nash equilibrium is $\min\big\{\frac{1}{\varepsilon^2},\max\big\{\frac{1}{\Delta_{\min}^2},\frac{\Delta_{m_2}^2}{\varepsilon^2 D^2}\big\}\big\} \log(1/\delta)$ up to log factors. To motivate this bound, consider the matrix $A=[1+\varepsilon^{0.5}, 1; 0,1+\varepsilon^{0.5}]$ where $0<\varepsilon<1$. Then $\min\big\{\frac{1}{\varepsilon^2},\max\big\{\frac{1}{\Delta_{\min}^2},\frac{\Delta_{m_2}^2}{\varepsilon^2 D^2}\big\}\big\}\approx \frac{1}{\varepsilon}$ which is significantly better than the trivial bound of $\frac{1}{\varepsilon^2}$. 
On the other hand, for the matrix $B=[1,0; 0,1]$ we have that  $\min\big\{\frac{1}{\varepsilon^2},\max\big\{\frac{1}{\Delta_{\min}^2},\frac{\Delta_{m_2}^2}{\varepsilon^2 D^2}\big\}\big\}\approx \frac{1}{\varepsilon^2}$ which is significantly worse than the bound of $\frac{1}{\varepsilon}$ that we achieved for identifying an $\varepsilon$--good solution before. This shows that finding an $\varepsilon$--Nash equilibrium can require many more samples than finding an $\varepsilon$--good solution. This is not unexpected as every $\varepsilon$--Nash equilibrium is also an $\varepsilon$--good solution.
%\kevin{Please comment on how this sample complexity relates to the $\varepsilon$-good (it should be strictly larger right?)}
%\kevin{Please give an example where the equilibrium is full support and  $\frac{\Delta_{m_2}^2}{D^2} \ll 1$ to motivate this complexity. Also, provide some calculation to demonstrate why this pops up, as above} 

To provide some intuition about where these quantities come from, suppose we measured each entry of $A$ exactly $T$ times and compiled the empirical means into a matrix $\widehat{A}$. 
If we let $(\widehat{x},\widehat{y})$ be the Nash equilibrium for $\widehat{A}$, respectively, then we show (cf.~Appendix~\ref{appendix:thm5}) that we roughly have 
\begin{equation*}
    \begin{aligned}
    \max_{x'\in\simplex_2}x'^\top A \widehat{y} - \widehat{x}^\top A \widehat{y}\leq \tfrac{\Delta_{m_2}}{\sqrt{T}|D|}\quad\text{and}\quad
    \widehat{x}^\top A \widehat{y}-\min_{y'\in\simplex_2}\widehat{x}^\top A y'  \leq \tfrac{\Delta_{m_2}}{\sqrt{T}|D|}.
    \end{aligned}
\end{equation*} Moreover, in the previous section we showed that roughly $1/\Delta_{\min}^2$ samples are required to distinguish between various alternatives. 
Hence, we observe it suffices to take $T \approx \min\big\{\frac{1}{\varepsilon^2},\max\big\{\frac{1}{\Delta_{\min}^2},\frac{\Delta_{m_2}^2}{\varepsilon^2 D^2}\big\}\big\}$. 
In the remainder of this section we make this argument rigorous and show that no smarter algorithm can improve upon this simple strategy. 

The following  is the definition of the set of algorithms under consideration.
\begin{definition}[$(\varepsilon,\delta)$-PAC-Nash]
We say an algorithm is $(\varepsilon,\delta)$-PAC-Nash if for all induced by matrices $A \in \mathbb{R}^{m \times n}$, the algorithm terminates at an almost--sure finite stopping time $\tau \in \mathbb{N}$ and outputs a pair of mixed strategies $(x,y)\in \simplex_m\times \simplex_n$ such that $\la x,Ay\ra\geq \la x',Ay\ra-\vep$ and $\la x,Ay'\ra \geq \la x,Ay\ra -\vep$ hold for all $(x',y')\in \simplex_m\times \simplex_n$ with probability at least $1-\delta$.
\end{definition}
Our lower bounds will use this class of algorithms, and the proposed algorithm falls within this class. 

\subsection{Lower bound for finding $\varepsilon$--Nash equilibrium}
This subsection derives a lower bound for the case when $A$ has a unique Nash equilibrium which is not a PSNE.

\begin{theorem}\label{thm:lower3}
Fix any matrix $A = [a, b; c,d ]$ that has a unique Nash equilibrium which is not a PSNE, $\varepsilon>0$ and $\delta \in (0,1)$. 
Any $(\varepsilon,\delta)$-PAC-Nash algorithm that returns a pair of mixed strategies $(x,y)\in \simplex_2 \times\simplex_2$ at stopping time $\tau$ satisfies $\mathbb{E}_A[\tau] \geq \frac{\Delta_{m_2}^2 \log(1/30 \delta) }{9\varepsilon^2D^2}$.
\end{theorem}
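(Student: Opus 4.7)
The plan is to follow the three-hypothesis change-of-measure template used in the proofs of Theorems \ref{thm:lower1} and \ref{thm:lower2}. By exchanging rows and columns I may WLOG assume $D > 0$ with $a > b, a > c, d > b, d > c$ (the mixed-NE sign pattern), and that one of the two inner minima defining $\Delta_{m_2}$ is attained. I will treat the case $\Delta_{m_2} = \min\{a-b, d-c\} = a-b$ in detail; the symmetric case $\Delta_{m_2} = \min\{a-c,d-b\}$ is handled by a transposed perturbation with the players' roles swapped. I will introduce the parameterized family
\[
A_\square = \begin{bmatrix} a+\square & b \\ c & d-\square \end{bmatrix},
\]
which has the crucial property that $D$ is invariant in $\square$ and, for $|\square| < \Delta_{\min}$, still has a unique non-PSNE mixed Nash equilibrium. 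Closed-form computation gives $x^*_\square = ((d-\square-c)/D, (a+\square-b)/D)$ and $y^*_\square = ((d-\square-b)/D, (a+\square-c)/D)$, and substituting $(x^*_0, y^*_0)$ into $A_\square$ produces player-1 and player-2 exploitability gaps of $\square (a-b)/D = \square \Delta_{m_2}/D$ and $\square (d-b)/D$ respectively, consistent with the informal $\Delta_{m_2}/(\sqrt{T}|D|)$ scaling sketched before the theorem.

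The technical heart is a lemma analogous to Lemma \ref{low:lem1}: for $\Delta = 3\varepsilon D/\Delta_{m_2}$, every $(x', y') \in \simplex_2 \times \simplex_2$ fails to be an $\varepsilon$-Nash equilibrium of at least one of $A_{-\Delta}, A_0, A_\Delta$. My strategy is by contradiction. Parameterizing $y' = y^*_0 + (\eta, -\eta)$, a short calculation shows that the player-1 exploitability of $(x', y')$ on $A_\sigma$ has a term $(\eta - \sigma/D)(a-b) = (\eta - \sigma/D)\Delta_{m_2}$ independent of $x'$, so the $\varepsilon$-Nash condition on $A_\sigma$ forces $|\eta - \sigma/D| \leq \varepsilon/\Delta_{m_2}$. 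Taking $\sigma = -\Delta$ and $\sigma = +\Delta$ then forces $\eta$ to lie in two intervals of radius $\varepsilon/\Delta_{m_2}$ whose centers differ by $2\Delta/D = 6\varepsilon/\Delta_{m_2}$, a contradiction. An analogous argument on the $x'$ side (with deviation of order $\varepsilon/(d-b)$ and center-shift of order $\Delta/D$) handles the remaining cases; the explicit computations parallel Lemma \ref{low:lem1} but track one-sided exploitability gaps in place of the scalar $|V_B^* - \langle x', B y'\rangle|$.

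With the lemma, I apply the three-hypothesis transportation argument of \citet{kaufmann2016complexity} exactly as in Appendix \ref{appendix:thm1}. The $\varepsilon$-Nash-valid output sets under $A_{-\Delta}, A_0, A_\Delta$ are pairwise disjoint by the lemma, so the $(\varepsilon,\delta)$-PAC-Nash guarantee together with the per-sample sub-Gaussian KL bound $\Delta^2/2$ on the two perturbed cells yields
\[
\mathbb{E}_{A_0}[\tau] \cdot \tfrac{\Delta^2}{2} \gtrsim \log\bigl(1/(30\delta)\bigr),
\]
which becomes $\mathbb{E}_{A_0}[\tau] \geq \Delta_{m_2}^2 \log(1/(30\delta))/(9\varepsilon^2 D^2)$ after substituting $\Delta = 3\varepsilon D/\Delta_{m_2}$, matching the claim.

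The main obstacle I anticipate is the contradiction lemma. Unlike the $\varepsilon$-good setting where the objective collapses to a single scalar $V_B^*$, the $\varepsilon$-Nash condition on $B$ consists of four one-sided gap inequalities (two per player), and I must verify that for whichever hypothesis is true, the shift in Nash strategies from $A_0$ to $A_{\pm\Delta}$ activates the \emph{particular} gap that scales with $\Delta_{m_2}/D$ (rather than a larger constant such as $(d-b)/D$) above the $\varepsilon$ threshold, uniformly over all $(x', y')\in\simplex_2 \times\simplex_2$. Confirming that the constant $3$ in $\Delta = 3\varepsilon D/\Delta_{m_2}$ suffices for both WLOG cases of $\Delta_{m_2}$ will require the same kind of explicit case analysis as in Lemmas \ref{low:lem1} and \ref{low2:lem1}, with the transposed perturbation $A_\square = \bigl[a+\square, b-\square; c, d\bigr]$ used in the second case.
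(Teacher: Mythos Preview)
Your overall template (three alternatives plus the Kaufmann--Garivier transportation inequality) matches the paper, and your choice of $\Delta = 3\varepsilon D/\Delta_{m_2}$ is correct. However, the perturbation family and the argument for the contradiction lemma are both different from the paper's, and your sketch of the lemma contains a real gap.

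The paper perturbs \emph{row-wise}:
\[
A_\square = \begin{bmatrix} a+\square & b+\square \\ c-\square & d-\square \end{bmatrix},
\]
not diagonally. This choice is deliberate: it keeps the row differences $a-b$ and $d-c$ (hence $\Delta_{m_2}$) and the row-player equilibrium $x^*$ invariant in $\square$, while shifting only $y^*$. As a consequence, the row-$i$ exploitability on $A_\square$ takes the clean form $\pm\frac{2(a-b)}{D}\square + (\text{terms controlled by the }A_0\text{ constraints})$, and the paper's proof of Lemma~\ref{low3:lem1} proceeds by first extracting from the $A_0$ constraints that $|D\alpha\beta|\le\varepsilon$ and $-2\varepsilon/(d-c)\le\beta\le 2\varepsilon/(a-b)$, then case-splitting on the sign of $\Delta\alpha$ to force a row-exploitability $>\varepsilon$ on $A_\Delta$ or $A_{-\Delta}$.

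Your diagonal perturbation shifts \emph{both} $x^*$ and $y^*$ and changes the row differences $a-b,\,d-c$, so the row exploitabilities do not isolate the factor $\Delta_{m_2}$ in the same way. More concretely, your central claim---that the $\varepsilon$-Nash condition on $A_\sigma$ forces $|\eta-\sigma/D|\le\varepsilon/\Delta_{m_2}$ ``independent of $x'$''---is not correct as stated. With $x'=(p,1-p)$ one computes $R_1-R_2=\sigma+D\eta$ for your family, and the player-1 exploitability equals $(1-p)(\sigma+D\eta)$ or $p\,|\sigma+D\eta|$ depending on sign. If $x'$ happens to be close to the pure best response, this product is $\le\varepsilon$ regardless of how large $|\eta+\sigma/D|$ is, so no bound on $\eta$ follows from player~1 alone. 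One must bring in the player-2 constraints (which involve the deviation $\theta$ of $x'$) and interleave them across the three hypotheses, exactly the kind of case analysis the paper carries out; your sketch does not do this, and with the diagonal family the bookkeeping is strictly messier because the row gaps $a-b+\square$ and $d-c-\square$ now vary with $\square$. Switching to the paper's row-wise perturbation removes this difficulty entirely.

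Finally, a minor sign slip: for your family $y^*_\sigma = y^*_0 + (-\sigma/D,\sigma/D)$, so the deviation of $y'$ from $y^*_\sigma$ is $\eta+\sigma/D$, not $\eta-\sigma/D$.
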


Without loss of generality assume that $D>0$ and $\Delta_{m_2}=a-b$. The lower bound considers a class of matrices $A_\square$ parameterized by $\square \in \R$ defined as follows:
\[
A_\square = \begin{bmatrix} 
   a+\square & b+\square \\
   c-\square & d-\square \\
    \end{bmatrix}.\]
Clearly $A_\square=A$ when $\square=0$. The proof of the theorem, found in Appendix~\ref{appendix:thm4}, follows from change of measure arguments applied to the instances defined in the following lemma.
\begin{lemma}\label{low3:lem1}
Fix any $\varepsilon>0$. For any pair of mixed strategies $(x',y')\in \simplex_2 \times \simplex_2$, there exists a matrix $B\in\{A_0,A_{\Delta},A_{-\Delta}\}$ such that $(x',y')$ is not an $\varepsilon$-Nash equilibrium of $B$ where $\Delta:= \frac{3\varepsilon D}{\Delta_{m_2}}$.
\end{lemma}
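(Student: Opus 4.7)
The plan is to parameterize $x' = (p, 1-p)$ and $y' = (q, 1-q)$ and argue by contradiction, assuming $(x', y')$ is an $\varepsilon$-Nash equilibrium of each of $A_0, A_\Delta, A_{-\Delta}$ with $\Delta = 3\varepsilon D/\Delta_{m_2}$.

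First I would simplify the $\varepsilon$-Nash conditions in closed form. A direct computation shows $V_{A_\square}^* = V_A^* - \square$ and identifies the Nash equilibrium of $A_\square$ as $(x^*, y^\square)$, with $x^* = ((d-c)/D,\, (a-b)/D)$ independent of $\square$ and $y_1^\square = y_1^* - 2\square/D$ where $y_1^* = (d-b)/D$. Reading off the two $\varepsilon$-Nash inequalities in terms of $(p,q)$, for $p \in (0,1)$ player 1's condition simplifies to $q \in [y_1^\square - \varepsilon/(pD),\ y_1^\square + \varepsilon/((1-p)D)]$, while player 2's condition becomes a $\square$-independent relation tying $q$ to $p$ through $x^*_1 = (d-c)/D$---the $\square$-contributions cancel against the value shift. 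Next I would intersect player 1's interval across the three $\square$-values: non-emptiness forces $4\Delta\,p(1-p) \leq \varepsilon$, which after substituting $\Delta = 3\varepsilon D/\Delta_{m_2}$ and using the WLOG fact $\Delta_{m_2} = a - b \leq d - c$ (so $D \geq 2\Delta_{m_2}$) gives $\min(p, 1-p) \leq \Delta_{m_2}/(6D)$, i.e., $p$ is pinned to an extremum.

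In the sub-case $1 - p \leq \Delta_{m_2}/(6D)$, I would note $p - x^*_1 \geq (5/6)\Delta_{m_2}/D > 0$, so player 2's condition forces $q \leq 6\varepsilon/(5\Delta_{m_2})$. On the other hand, the lower endpoint of the intersection of player 1's intervals gives $q \geq y_1^* + 2\Delta/D - \varepsilon/(pD) \geq y_1^* + 5\varepsilon/\Delta_{m_2}$ (using $p \geq 1/2$ and $D \geq 2\Delta_{m_2}$). Combining these forces $y_1^* = (d-b)/D$ to be negative, contradicting $d > b$. The symmetric sub-case $p \leq \Delta_{m_2}/(6D)$ yields, by the parallel argument applied to player 2 and the upper endpoint of the intersection, $a - c < 0$, contradicting $a > c$. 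The boundary cases $p \in \{0, 1\}$ reduce to the same inequalities with one of player 1's one-sided slacks being vacuous, but player 2's condition still pins $q$ to give the same contradiction.

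The conceptual content is a pigeonhole: the three shifted targets $y_1^{-\Delta}, y_1^0, y_1^\Delta$ span an interval of length $4\Delta/D$ that only player 1's slack (controlled by $p(1-p)$) can absorb, forcing $p$ to an extremum; but the $\square$-independent player 2 condition then pins $q$ on the wrong side of $y_1^*$. The hard part will be the bookkeeping: tracking signs consistently through the WLOG reduction, verifying that both symmetric sub-cases close, and calibrating the constants (the choice $\Delta = 3\varepsilon D/\Delta_{m_2}$ is tuned exactly to make the two bounds in Step 3 quantitatively incompatible).
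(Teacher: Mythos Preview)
Your argument is correct. The core steps check out: player 1's $\varepsilon$-Nash constraint for $A_\square$ indeed confines $q$ to the interval $[y_1^\square - \varepsilon/(pD),\, y_1^\square + \varepsilon/((1-p)D)]$ with $y_1^\square = y_1^* - 2\square/D$; player 2's constraint is $\square$-independent (it reduces to $qD(p-x^*_1) \leq \varepsilon$ and $(1-q)D(x^*_1-p) \leq \varepsilon$); intersecting the three player-1 intervals forces $p(1-p) \leq \Delta_{m_2}/(12D)$; and each sub-case then closes numerically as you describe. Two small corrections that do not affect the argument: the side claim $V_{A_\square}^* = V_A^* - \square$ is false (the actual shift is $\tfrac{(d-c)-(a-b)}{D}\,\square$), but you never use it; and the cancellation making player 2's condition $\square$-independent is between the $(2p-1)\square$ terms appearing in $\langle x', A_\square y'\rangle$ and in each $\langle x', (A_\square)^c_j\rangle$, not against any ``value shift.'' Also note that $a - b \leq d - c$ is not an additional WLOG but a consequence of $\Delta_{m_2} = a - b$ under the standing assumptions; it is worth stating this deduction explicitly.

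The paper organizes the same ingredients differently. It parameterizes $(x',y')$ relative to the Nash of $A_0$ as $x' = x^* + (\alpha,-\alpha)$, $y' = y^* + (\beta,-\beta)$, assumes $(x',y')$ is $\varepsilon$-Nash for $A_0$ alone to extract $|D\alpha\beta|\leq\varepsilon$ and two-sided bounds on $\beta$, and then case-splits on whether $\Delta\alpha < \varepsilon$ or $\Delta\alpha \geq \varepsilon$ to show directly that the row-player deviation exceeds $\varepsilon$ for $A_\Delta$ or $A_{-\Delta}$. Your route is more geometric---the pigeonhole on three shifted interval centers makes the role of $\Delta$ transparent and the forcing of $p$ to an extremum is a clean intermediate conclusion---while the paper's is a shorter algebraic chain that never needs to argue the three-way intersection is small. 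Both implicitly use $a-b\leq d-c$ and are of comparable length.
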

Recall that any lower bound for an $\varepsilon$--good solution also holds for $\varepsilon$--Nash equilibrium. Hence,  combining Theorems~\ref{thm:lower2} and \ref{thm:lower3} we obtain the claimed result at the beginning of this section. Note that the lower bound of Theorem \ref{thm:lower1} is redundant as $\frac{1}{\varepsilon|D|}\leq \frac{\Delta_{m_2}^2}{\varepsilon^2|D|^2}$ when $\varepsilon\in(0,\frac{\Delta_{\min}^2}{3|D|})$.

\subsection{Upper bound for $\varepsilon$-Nash equilibrium}

\begin{algorithm}[h]
\caption{Find an $\varepsilon$--Nash equilibrium}
\begin{algorithmic}[1]
\STATE $T\gets\frac{8 \log (16/\delta) }{\varepsilon^2}$ 
\FOR{round $t=1,2,\ldots,T$} 
\STATE Sample each element $(i,j)$ once and update the empirical means $\bar A_{ij}$.
\STATE $\Delta \gets \sqrt{2\log(\frac{16T}{\delta}) / (t)}$.
\STATE $\tilde\Delta_{\min}\gets \min\{|\bar A_{11}-\bar A_{12}|,|\bar A_{21}-\bar A_{22}|\}|\bar A_{11}-\bar A_{21}|,|\bar A_{12}-\bar A_{22}|\}$
\STATE $\tilde \Delta_{m_2}\gets \max\{\min\{|\bar A_{11}-\bar A_{12}|,|\bar A_{21}-\bar A_{22}|\},\min\{|\bar A_{11}-\bar A_{21}|,|\bar A_{12}-\bar A_{22}|\}\}$
\STATE $\tilde D\gets |\bar A_{11}-\bar A_{12}-\bar A_{21}+\bar A_{22}|$
\IF{$1\leq \frac{\tilde \Delta_{\min}+2\Delta}{\tilde \Delta_{\min}-2\Delta}\leq \frac{3}{2}$ and the game induced by $\bar A$ has a pure strategy Nash}\label{alg2:con1}
\STATE Return the Nash equilibrium associated with the PSNE of $\bar A$.
\ELSIF{$1\leq \frac{\tilde \Delta_{\min}+2\Delta}{\tilde \Delta_{\min}-2\Delta}\leq \frac{3}{2}$ and $\tilde \Delta_{m_2}\geq \tilde D/8$} \label{alg2:con2}
\STATE Sample each element $(i,j)$ for $T-t$ times.
\STATE Return the Nash equilibrium of $\bar A$.
\ELSIF{$1\leq \frac{\tilde \Delta_{\min}+2\Delta}{\tilde \Delta_{\min}-2\Delta}\leq \frac{3}{2}$ and $\tilde \Delta_{m_2}< \tilde D/8$} \label{alg2:con3}
\STATE $N\gets \frac{200\tilde\Delta^2_{m_2}\log (\frac{16T}{\delta})}{\varepsilon^2 \tilde D^2}$
\IF{$N>T-t$}\label{alg2:con4}
\STATE Sample each element $(i,j)$ for $T-t$ times.
\STATE Return the Nash equilibrium of $\bar A$.
\ENDIF
\STATE $\Delta_1 \gets \sqrt{2\log(\frac{16T}{\delta}) /(N+t)}$
\STATE Sample each element $(i,j)$ for $N$ times.
\STATE $i_1\gets \arg\min_i |\bar A_{i1}-\bar A_{i2}|$ and $i_2 \gets \{1,2\}\setminus \{i_1\}$
\STATE $j_1\gets \arg\min_i |\bar A_{1j}-\bar A_{2j}|$ and $j_2 \gets \{1,2\}\setminus \{j_1\}$
\STATE $B_{i_1j_1}\gets \bar A_{i_1j_1}$, $B_{i_2j_2}\gets \bar A_{i_2j_2}$, $B_{i_1j_2}\gets \bar A_{i_1j_2}-2\Delta_1$, $B_{i_2j_1}\gets \bar A_{i_2j_1}+2\Delta_1$
\STATE Return the Nash equilibrium of $B$.
\ENDIF
\ENDFOR
\STATE Return the Nash equilibrium of $\bar A$.  \label{alg2:con5}
\end{algorithmic}
\label{alg-ucb-2}
\end{algorithm}

%In this section we present and analyze 

Next, we characterize the instance-dependent sample complexity of Algorithm~\ref{alg-ucb-2} for the special case of $2\times 2$ matrix.
Algorithm~\ref{alg-ucb-2} first samples the elements of $A$ until we can conclude whether the game induced by $A$ has a PSNE or not. If the game does, then we return it. If the matrix game induced by $A$ does not have a PSNE, we further sample each element of $A$ for $\tilde O(\frac{\tilde\Delta^2_{m_2}\log (\frac{1}{\delta})}{\varepsilon^2 \tilde D^2})$ times and return the Nash equilibrium of a matrix $B$ that we get by slightly modifying the empirical matrix $\bar A$. Here $\tilde D$ and $\tilde\Delta_{m_2}$ are the empirical estimates of $|D|$ and $\Delta_{m_2}$, respectively. If no prior condition is met, the algorithm terminates in the worst case at iteration $t=T=\frac{8 \log(16/\delta)}{\varepsilon^2}$ and outputs an $\varepsilon$--Nash equilibrium with high probability by Lemma~\ref{thm:lower1}. The full sample complexity guarantees of the algorithm are described in the following theorem the proof of which is in Appendix~\ref{appendix:thm5}.

\begin{theorem}\label{thm:alg2}
 Fix any $\varepsilon > 0$ and $\delta \in (0,1)$. 
%  Let $T=\frac{8 \log (16/\delta) }{\varepsilon^2}$.
With probability at least $1-\delta$ Algorithm~\ref{alg-ucb-2} returns an $\varepsilon$--Nash equilibrium after at most $n_0$ samples where 
\begin{itemize} %[itemsep=0pt,topsep=0pt]
    \item $n_0= c_1\cdot\min\big\{\frac{\log(\tfrac{1}{\varepsilon \delta})}{\varepsilon^2},\max\big\{\frac{\log(\tfrac{1}{\varepsilon \delta})}{\Delta_{\min}^2},\frac{\Delta_{m_2}^2\log(\tfrac{1}{\varepsilon \delta})}{\varepsilon^2 D^2}\big\}\big\}$ if the matrix game induced by $A$ has a unique Nash equilibrium which is not a PSNE, and
    \item $n_0=c_2\cdot\min\big\{\frac{\log(1/\delta)}{\varepsilon^2},\frac{\log(1/\varepsilon\delta)}{\Delta_{\min}^2}\big\}$ otherwise,
\end{itemize}
where $c_1,c_2$ are absolute constants.
\end{theorem}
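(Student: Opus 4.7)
The plan is to run the argument on a single high-probability concentration event and then show branch by branch that Algorithm~\ref{alg-ucb-2} halts within the claimed sample budget and returns an $\varepsilon$--Nash equilibrium. First, I would define $\mathcal{E}$ to be the event that $|\bar A_{ij}(t) - A_{ij}| \le \Delta(t) = \sqrt{2\log(16T/\delta)/t}$ holds for every entry $(i,j)$ and every round $t \le T = 8\log(16/\delta)/\varepsilon^2$. Hoeffding together with a union bound over the four entries and $T$ rounds gives $\mathbb{P}(\mathcal{E}) \ge 1-\delta$, so it suffices to prove the conclusion deterministically on $\mathcal{E}$.

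Next I would verify that the gate $1 \le (\tilde\Delta_{\min} + 2\Delta)/(\tilde\Delta_{\min} - 2\Delta) \le 3/2$ is faithful. The right inequality is equivalent to $\tilde\Delta_{\min} \ge 10\Delta$, and on $\mathcal{E}$ each of the four gaps defining $\tilde\Delta_{\min}$ differs from its population counterpart by at most $2\Delta$; hence every population gap has magnitude at least $8\Delta$ and the same sign as its empirical version. From the sign pattern alone one can decide whether $A$ has a PSNE, so the PSNE case of Theorem~\ref{thm:alg2} is handled by line~\ref{alg2:con1} with stopping time $O(\log(T/\delta)/\Delta_{\min}^2)$. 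The same sign preservation also yields $\tilde D/|D| \in [1/2, 2]$ and $\tilde\Delta_{m_2}/\Delta_{m_2} \in [1/2, 2]$ whenever the gate has fired.

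In the remaining non-PSNE branches the stopping time adds to the $O(\log(T/\delta)/\Delta_{\min}^2)$ gate-firing time a further cost that depends on whether $\tilde\Delta_{m_2} \gtrsim \tilde D$ or not. If $\tilde\Delta_{m_2} \ge \tilde D/8$ (line~\ref{alg2:con2}) then $\Delta_{m_2} = \Theta(|D|)$, so the target $\Delta_{m_2}^2/(\varepsilon^2 D^2)$ collapses to $1/\varepsilon^2$ and continuing all the way to $T$ meets the budget; the empirical Nash of $\bar A$ is an $\varepsilon$--Nash of $A$ by the two-sided sensitivity bound sketched before the theorem. If $\tilde\Delta_{m_2} < \tilde D/8$ (line~\ref{alg2:con3}) the algorithm samples for another $N = O(\tilde\Delta_{m_2}^2 \log(T/\delta)/(\varepsilon^2 \tilde D^2))$ rounds and returns the Nash $(\hat x, \hat y)$ of the perturbed matrix $B$. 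The choice of $i_1, j_1$ singles out the row and column with the smallest empirical gap, and the perturbation of $\bar A$ at $(i_1, j_2)$ and $(i_2, j_1)$ by $\pm 2\Delta_1$ is arranged so that on $\mathcal{E}$ the difference $B - A$ has the sign pattern needed to reduce both $\max_{x'} x'^\top A \hat y - \hat x^\top A \hat y$ and $\hat x^\top A \hat y - \min_{y'} \hat x^\top A y'$ to a controllable remainder. Substituting the closed-form $2\times 2$ Nash of $B$ into this remainder, with $\tilde\Delta_{m_2}/\tilde D$ playing the role of the scale of the relevant coordinates of $\hat x, \hat y$, each one-sided gap is bounded by a constant times $\tilde\Delta_{m_2}/(\sqrt{N+t}\,\tilde D)$, which is at most $\varepsilon$ by the choice of $N$.

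The main obstacle will be this last step: the $\varepsilon$--Nash certificate is a \emph{two-sided} requirement, one inequality per player, strictly stronger than the scalar bound $|V^*_A - \langle x, A y\rangle| \le \varepsilon$ that sufficed in Theorem~\ref{thm:alg1}. The $\pm 2\Delta_1$ perturbation is engineered precisely to absorb the worst-case deviation in both directions simultaneously, and verifying this requires a case analysis on the sign pattern of $A$ (encoded in the values $i_1, j_1 \in \{1,2\}$) together with the fact that a $2\times 2$ no-PSNE game has a unique interior full-support Nash, so that $\hat x, \hat y$ are well defined and the formulas for $\hat x^\top B \hat y$ and the corresponding best responses against $A$ admit a direct algebraic comparison. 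The default branch (line~\ref{alg2:con5}) and the fallback sub-branches triggered by $N > T - t$ (line~\ref{alg2:con4}) simply revert to the $T$-sample worst case, yielding the $1/\varepsilon^2$ term, and combining all cases produces the piecewise minimum in $n_0$.
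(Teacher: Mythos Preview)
Your proposal is correct and takes essentially the same approach as the paper: a single concentration event, the gate condition yielding sign preservation (hence PSNE detection) and multiplicative control of $\tilde D,\tilde\Delta_{m_2}$, followed by a branch-by-branch analysis whose crux is that the $\pm 2\Delta_1$ perturbation forces an ordering on the entries of $A-B$ so that the Nash of $B$ is an $\varepsilon$--Nash of $A$. The one point your sketch leaves implicit, which the paper verifies explicitly before invoking its technical lemma, is that the algorithm's choice $(i_1,j_1)$ must coincide with the dominant coordinate $(\arg\max_i\hat x_i,\arg\max_j\hat y_j)$ of the Nash of $B$; this is established by showing $\hat x_{i_2},\hat y_{j_2}\le 2\Delta_{m_2}/|D|<3/8$, after which the required ordering $\Delta_{i_1j_1}\ge\Delta_{i_2j_1}$ and $\Delta_{i_1j_1}\le\Delta_{i_1j_2}$ follows from $|\bar A_{ij}-A_{ij}|\le\Delta_1$ and the construction of $B$.
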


%\MakeUppercase
\section{{Results for $n\times 2$ Matrix Games}}
This section is devoted to instance-dependent sample complexity bounds for identifying an $\varepsilon$--good solution and an $\varepsilon$--Nash equilibrium in a $n\times 2$ matrix game that has a unique Nash equilibrium. For a matrix $A\in \mathbb{R}^{n\times 2}$, the bounds will be given in terms of instance-dependent quantities $\Delta_{\min}$ and $\Delta_g$.  
For $\Delta_{\min}$, the natural extension from  the $2 \times 2$ to the $n\times 2$ case is 
\begin{equation*}
    \begin{aligned}
    \Delta_{\min}= \min\{\min_i\{| A_{i1}- A_{i2}|\},\min\limits_{j,k:j\neq k}\{| A_{j1}- A_{k1}|\},\min\limits_{j,k:j\neq k}\{| A_{j2}-A_{k2}|\}\}.
    \end{aligned}
\end{equation*}
To define $\Delta_g$, observe that if the matrix game induced by $A$ has a unique Nash equilibrium $(x^*,y^*)$, then $|\supp(x^*)|=|\supp(y^*)|$ \cite{bohnenblust1950solutions}. 
Suppose that $A$ has a unique Nash equilibrium $(x^*,y^*)$ such that $\supp(x^*)=\{i_1,i_2\}$. Let  $A_{i_11}>A_{i_12}$ and $A_{i_21}<A_{i_22}$ and define
%\begin{equation*}
    \[\Delta_g:= \min\limits_{i\in[n]\setminus\{i_1,i_2\}} r_i\cdot(V^*_{A}-\langle y^*,( A_{i1}, A_{i2})\rangle),\]
%\end{equation*}
where $r_i=\tfrac{| A_{i_11}- A_{i_12}|+| A_{i_21}- A_{i_22}|}{| A_{i_11}- A_{i_12}|+| A_{i_21}- A_{i_22}|+| A_{i1}- A_{i2}|}$.
It is not hard to see that $\min_{i\in[n]\setminus\{i_1,i_2\}}V^*_{A}-\langle y^*,( A_{i1}, A_{i2})\rangle>0$ which implies $\Delta_{g}>0$. 

To provide some intuition for the origin of $\Delta_g$, consider a class of matrices $A_\square=[a,b;c-\square,d-\square;e+\square,f+\square]$ parameterized by $\square \in \R$. In Appendix \ref{apendix:low4:lem1}, we show that for $\Delta=c_0\cdot \Delta_g$ where $c_0$ is an absolute constant, the matrices $A_0$, $A_\Delta$ and $A_{2\Delta}$ have different supports for their respective Nash equilibrium. This implies that we require roughly $1/\Delta_{\g}^2$ samples to determine the support of the Nash equilibrium of $A$. Without this information, we cannot determine an $\varepsilon$--good solution with high probability as the support of the Nash equilibrium affects the value $V_A^*$. The same holds true for finding an $\varepsilon$--Nash equilibirum as every $\varepsilon$--Nash equilibrium is also an $\varepsilon$--good solution. Moreover, to obtain any meaningful upper bound, we require an empirical estimate of $\Delta_g$ to be close to $\Delta_g$ and this is possible when we re-scale the gaps $V^*_{A}-\langle y^*,( A_{i1}, A_{i2})\rangle$ by a factor of $r_i$. In the remainder of this section we make this argument rigorous and further show that roughly $1/\Delta_{\g}^2$ samples suffices to find the support of the Nash equilibrium of $A$. Once the support is identified, we can use the algorithms derived for the $2\times2$ case in the previous two sections.

%\kevin{we should emphasize that the set of $\varepsilon$-Nash equiliria are a subset of $\varepsilon$-good points, thus this is also a lower bound on identifying an $\varepsilon$-Nash equlibrium.  And hten we should say "once the support is identified, we can use the algorithms derived for the $2\times2$ cases above.''}

%\kevin{We're going to need some intuition about how these quantities arise. Also a summary of waht this section shows at a high level in terms of sample complexities}

\subsection{Lower bound with respect to $\Delta_g$}
Consider any matrix game \[A = \bmat{a & b\\ c & d\\ e & f }\] that has a unique Nash equilibrium $(x^*,y^*)$ which is not a PSNE. Without loss of generality assume that $\supp(x^*)=\supp(y^*)=\{1,2\}$. Let us also assume that $a>b,a>c,a>e,d>b,d>c,f>e,f>b$. Observe that in this case we have the following:
\begin{equation*}
     \Delta_g= \frac{(|a-b|+|c-d|)(V^*_{A}-\langle y^*,( e, f)\rangle)}{|a-b|+|c-d|+|e-f|}
\end{equation*}
Let $D_1:=a-b-c+d$ and $D_2:=a-b-e+f$. Let $\Delta:=\frac{(d-b)D_2-(f-b)D_1}{D_1+D_2}$ and $\lambda:=\min\{\frac{(a-b)\Delta}{D_1},\frac{(a-b)\Delta}{D_2}\}$. 
%\kevin{$m$ is a terrible variable name for a real number, especially since we're already using it elsewhere for the size of the matrix. pick another consatnt, perhaps a greek letter.}
This subsection derives a lower bound for $\varepsilon$--good solution for the matrix game $A$.

\begin{theorem}\label{thm:lower4}
Consider the matrix $A$ and fix any $\varepsilon\in (0, \frac{\lambda}{4})$ and $\delta \in (0,1)$. 
Any $(\varepsilon,\delta)$-PAC-good algorithm that returns a pair of mixed strategies $(x,y)\in \simplex_3 \times\simplex_2$ at stopping time $\tau$ satisfies $\mathbb{E}_A[\tau] \geq \frac{ \log(1/30 \delta) }{4\Delta_{g}^2}$.
% . Let $\mathcal{E}_{\tau_0}$ be the event that $\tau\leq \tau_0$ and $(x,y)$ is an $\varepsilon$-good solution. If $\tau_0<\frac{\log (\frac{1}{6\delta})}{12\varepsilon |D|}$, then there exists $\Delta \in \{-\sqrt{3\varepsilon |D|},0,\sqrt{3\varepsilon |D|}\}$ such that $\mathbb{P}_{\Delta}(\mathcal{E}_{\tau_0}^c)>\delta$. %where $\mathbb{P}_{\Delta}$ is the probability distribution induced by the algorithm and the matrix $A_\Delta$.
\end{theorem}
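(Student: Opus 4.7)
My approach follows the same change-of-measure blueprint used for Theorems~\ref{thm:lower1}--\ref{thm:lower3}. The alternatives are already designated in the paragraph preceding the theorem: $A_\square = [a,b;\, c-\square,d-\square;\, e+\square,f+\square]$, and the paper announces that for $\Delta = c_0 \Delta_g$ with an absolute constant $c_0 > 0$, the three matrices $A_0, A_\Delta, A_{2\Delta}$ have unique (non-PSNE) Nash equilibria on three \emph{different} two-element supports. The first step is to promote this remark to a separation lemma in the spirit of Lemma~\ref{low:lem1}: for every candidate pair $(x',y') \in \simplex_3 \times \simplex_2$ and every $\varepsilon \in (0,\lambda/4)$,
\begin{equation*}
\max_{B \in \{A_0,\, A_\Delta,\, A_{2\Delta}\}} \bigl| V_B^* - \langle x', B y' \rangle \bigr| \;>\; \varepsilon.
\end{equation*}

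To prove the separation lemma, I would track how $V^*_{A_\square}$ evolves as $\square$ varies. At $\square = 0$ the Nash equilibrium lives on rows $\{1,2\}$; as $\square$ grows, row~2 shifts down while row~3 shifts up, so there is a critical $\square^\star > 0$ at which the support jumps away from $\{1,2\}$ (to $\{1,3\}$ or $\{2,3\}$). The weights $r_i$ in the definition of $\Delta_g$ are calibrated precisely so that the effective ``gap'' faced by the third row after perturbation vanishes when $\square$ crosses the critical value; this is why $\square^\star$ is a constant multiple of $\Delta_g$. Once the three Nash equilibria sit on three different supports, the three values $V^*_{A_\square}$ end up spread apart by more than $4\varepsilon$ under the hypothesis $\varepsilon < \lambda/4$, so the triangle inequality forces any single $(x',y')$ to miss the $\varepsilon$-good window around at least one of $V^*_{A_0}, V^*_{A_\Delta}, V^*_{A_{2\Delta}}$.

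Granted the separation lemma, the lower bound itself is a standard transportation argument. Fix an $(\varepsilon,\delta)$-PAC-good algorithm with stopping time $\tau$ and let $T_{ij}(\tau)$ denote the number of pulls of entry $(i,j)$. For each $A' \in \{A_\Delta, A_{2\Delta}\}$ let $E_{A'}$ be the event that the returned mixed strategy is $\varepsilon$-good for $A'$; the separation lemma makes $E_{A_0}, E_{A_\Delta}, E_{A_{2\Delta}}$ pairwise disjoint, so $\Prob_{A_0}(E_{A'}) \leq \delta$ while $\Prob_{A'}(E_{A'}) \geq 1-\delta$. Taking the observations to be unit-variance Gaussian (the worst case within 1-sub-Gaussian), Lemma~1 of \citet{kaufmann2016complexity} yields
\begin{equation*}
\sum_{(i,j):\, A_{ij} \neq A'_{ij}} \Exs_{A_0}\!\bigl[T_{ij}(\tau)\bigr] \cdot \tfrac{1}{2}\bigl(A_{ij} - A'_{ij}\bigr)^2 \;\geq\; \mathrm{kl}\!\bigl(\delta, 1-\delta\bigr) \;\geq\; \log\!\tfrac{1}{30\delta}.
\end{equation*}
Only the four entries in rows~$2,3$ differ between $A_0$ and $A'$, each by at most $2\Delta = 2c_0\Delta_g$; bounding $\sum_{ij}\Exs_{A_0}[T_{ij}(\tau)] \leq \Exs_{A_0}[\tau]$ and solving for $\Exs_{A_0}[\tau]$ yields the claimed $\log(1/30\delta)/(4\Delta_g^2)$ once the absolute constants are absorbed into the choice of $c_0$.

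The main obstacle is the separation lemma, specifically establishing (i) that the support of the Nash equilibrium of $A_\square$ changes exactly when $|\square|$ reaches a constant multiple of $\Delta_g$ (which is where the odd-looking re-scaling by $r_i$ in the definition of $\Delta_g$ earns its keep), and (ii) that this support change drives a value change exceeding $4\varepsilon$ over the range $[0, 2\Delta]$. The hypothesis $\varepsilon < \lambda/4$ is what guarantees both that each $A_\square$ remains a unique-non-PSNE instance and that the value function moves fast enough on each side of the critical $\square^\star$ to make the three values mutually distinguishable. Once those two quantitative claims are in place, the transportation step is mechanical.
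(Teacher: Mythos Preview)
Your high-level structure (separation lemma followed by a transportation inequality) matches the paper, but the separation-lemma sketch has a genuine gap.

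\textbf{The values are not ``spread apart''.} The picture you describe—three different supports, three values separated by more than $4\varepsilon$, then a triangle-inequality argument—does not match what actually happens. In the paper's construction $\Delta$ is the \emph{specific} instance-dependent number $\Delta=\frac{(d-b)D_2-(f-b)D_1}{D_1+D_2}$, chosen so that $A_\Delta$ sits exactly at the critical transition (the $\{1,2\}$- and $\{1,3\}$-submatrices share the same optimal $y^*$). One then gets $V_{A_0}^*=V^*+\tfrac{(a-b)\Delta}{D_1}$, $V_{A_\Delta}^*=V^*$, and $V_{A_{2\Delta}}^*\geq V^*+\tfrac{(a-b)\Delta}{D_2}$: the value function $\square\mapsto V_{A_\square}^*$ is \emph{V-shaped} with a minimum at $\square=\Delta$, so $V_{A_0}^*$ and $V_{A_{2\Delta}}^*$ can be arbitrarily close (equal when $D_1=D_2$). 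There is no pairwise $4\varepsilon$-separation to feed into a triangle inequality. Moreover, even if the values were separated, $\langle x',A_\square y'\rangle$ is a different number for each $\square$, so value separation alone cannot force a single $(x',y')$ to fail.

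\textbf{What the paper actually uses.} The key observation is that $\square\mapsto\langle x',A_\square y'\rangle$ is \emph{affine} in $\square$ (slope $\beta-\alpha$ where $x'=(1-\alpha-\beta,\alpha,\beta)$), whereas $\square\mapsto V_{A_\square}^*$ is V-shaped with depth $\geq\lambda$. An affine function cannot stay within $\varepsilon<\lambda/4$ of a V at all three nodes $\{0,\Delta,2\Delta\}$; the paper executes this as a short case split on $|k|$ (the residual at the middle node) and the sign of $\alpha-\beta$. Your proposal never isolates this affine-vs.-convex tension, and the support-change narrative by itself does not deliver it.

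Two smaller issues: (i) the separation lemma only rules out \emph{all three} $\varepsilon$-good events holding at once, not pairwise disjointness, so you still need the paper's $\phi$-removal trick and the bound $d(\tfrac{1-\delta}{2},\delta)\geq\tfrac12\log(1/30\delta)$ rather than $\mathrm{kl}(\delta,1-\delta)$; (ii) the final constant $4$ comes from the KL between $A_0$ and $A_{2\Delta}$ (shift $2\Delta$) together with $\Delta<\Delta_g$, not from absorbing an absolute $c_0$.
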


The lower bound considers a class of matrices $A_\square$ parameterized by $\square \in \R$ defined as follows:
\[
A_\square = \begin{bmatrix} 
   a & b \\
   c-\square & d-\square \\
   e+\square & f+\square \\
    \end{bmatrix}.\]
Clearly $A_\square=A$ when $\square=0$. The proof of the theorem, found in  Appendix~\ref{appendix:thm6}, follows from change of measure arguments applied to the instances defined in the following lemma.

\begin{lemma}\label{low4:lem1}
Fix any $\varepsilon\in(0,\frac{\lambda}{4})$. For any pair of mixed strategies $(x',y')\in \simplex_3 \times \simplex_2$,
we have \[\displaystyle
    \max_{B\in\{A_{0},A_{\Delta},A_{2\Delta}\}} |V_B^*-\langle x', By' \rangle|> \varepsilon.\]
% there exists a matrix $B\in\{A_{0},A_{\Delta},A_{2\Delta}\}$ such that $|V_B^*-\langle x', By' \rangle|> \varepsilon$.
%following holds:
%\begin{equation*}
%    
%\end{equation*}
\end{lemma}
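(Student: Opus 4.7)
The plan is to exploit the fact that among the three matrices $A_0,A_\Delta,A_{2\Delta}$, the middle one has the \emph{smallest} game value while $\langle x', A_\square y'\rangle$ is affine in $\square$; the discrepancy between the convex midpoint behaviour of the affine interpolant and the strictly lower middle value $V^*_{A_\Delta}$ will be too large to be accommodated by any $(x',y')$ that is simultaneously $\varepsilon$-good at all three instances. This is the $3\times 2$ analogue of the three-hypothesis structure used in Lemmas~\ref{low:lem1}, \ref{low2:lem1}, \ref{lem1:multiple}, but with the "third instance" coming from a change in the Nash support rather than from an additive perturbation of the value.

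First I would compute the three values. By the standing assumption, $A_0$ has its unique non-PSNE Nash on rows $\{1,2\}$, so $V^*_{A_0}=(ad-bc)/D_1$. The chosen $\Delta=((d-b)D_2-(f-b)D_1)/(D_1+D_2)$ is precisely the value of $\square$ at which the $2\times 2$ subgame on rows $\{1,2\}$ and the $2\times 2$ subgame on rows $\{1,3\}$ of $A_\square$ achieve a common value $V=(ad-bc)/D_1-\Delta(a-b)/D_1=(af-be)/D_2+\Delta(a-b)/D_2$; this is a short algebraic identity after clearing denominators. Using $V^*_{A_\Delta}=\min_y\max_i(A_\Delta y)_i$ and the column Nash $y^*$ of the $\{1,2\}$ subgame at $\square=\Delta$, a direct computation of $\langle y^*,(e+\Delta,f+\Delta)\rangle$ confirms that row $3$ also attains value $V$, so $V^*_{A_\Delta}=V$. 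The mirror argument at $\square=2\Delta$ (row $2$ is now suboptimal under the $\{1,3\}$ Nash for all $\square>\Delta$) gives $V^*_{A_{2\Delta}}=(af-be)/D_2+2\Delta(a-b)/D_2$. Consequently
\[
V^*_{A_0}-V^*_{A_\Delta}=\tfrac{(a-b)\Delta}{D_1},\qquad V^*_{A_{2\Delta}}-V^*_{A_\Delta}=\tfrac{(a-b)\Delta}{D_2},
\]
so by the definition of $\lambda$, $\min\{V^*_{A_0},V^*_{A_{2\Delta}}\}\geq V^*_{A_\Delta}+\lambda$.

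Next I would exploit linearity. Writing $A_\square-A_0=\square M$ where $M$ has rows $(0,0),(-1,-1),(1,1)$, I get $u_\square:=\langle x',A_\square y'\rangle=\langle x',A_0 y'\rangle+\square(x'_3-x'_2)$, which is affine in $\square$, so $u_\Delta=\tfrac{1}{2}(u_0+u_{2\Delta})$ for every fixed $(x',y')\in\simplex_3\times\simplex_2$. Suppose for contradiction that $|u_\square-V^*_{A_\square}|\leq\varepsilon$ for all three $\square\in\{0,\Delta,2\Delta\}$. Then $u_0\geq V^*_{A_0}-\varepsilon\geq V^*_{A_\Delta}+\lambda-\varepsilon$ and likewise $u_{2\Delta}\geq V^*_{A_\Delta}+\lambda-\varepsilon$, so $u_\Delta\geq V^*_{A_\Delta}+\lambda-\varepsilon$. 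Combined with $u_\Delta\leq V^*_{A_\Delta}+\varepsilon$, this forces $\lambda\leq 2\varepsilon$, contradicting $\varepsilon<\lambda/4$.

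The main obstacle will be the first step: the algebraic verification that $\Delta$ is exactly the transition point of the Nash support and, more subtly, that $V^*_{A_\Delta}$ equals the common $2\times 2$ value rather than something strictly larger coming from a genuinely $3$-row optimal strategy for player~$1$. This requires both the explicit identity for $\Delta$ (showing the three row payoffs under the relevant $y^*$ meet at a single point) and the fact that in any $3\times 2$ game player~$1$ admits a Nash strategy of support at most $\min(3,2)=2$, which together upgrade the obvious lower bound $V^*_{A_\Delta}\geq V$ to equality. Once this identification is in hand, the affine-interpolation contradiction is entirely mechanical.
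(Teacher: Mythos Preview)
Your approach is correct and aligns with the paper's: both compute the three game values to show $V^*_{A_0}-V^*_{A_\Delta}=\tfrac{(a-b)\Delta}{D_1}$ and $V^*_{A_{2\Delta}}-V^*_{A_\Delta}\geq\tfrac{(a-b)\Delta}{D_2}$, then exploit that $\langle x',A_\square y'\rangle$ is affine in $\square$. The paper parameterizes $x'=(1-\alpha-\beta,\alpha,\beta)$, writes $\langle x',A_\Delta y'\rangle=V^*_{A_\Delta}+k$ and $\langle x',A_{\square} y'\rangle=V^*_{A_\Delta}+k\pm(\alpha-\beta)\Delta$ for $\square\in\{0,2\Delta\}$, and then does a three-way case split on $|k|$ and the sign of $(\alpha-\beta)\Delta$. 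Your midpoint identity $u_\Delta=\tfrac12(u_0+u_{2\Delta})$ is exactly the same affinity observation, but packages the contradiction more cleanly (no parameterization, no case split); the paper's cases are really just this midpoint bound unrolled.

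One small caveat: your sentence ``the mirror argument at $\square=2\Delta$ \dots gives $V^*_{A_{2\Delta}}=(af-be)/D_2+2\Delta(a-b)/D_2$'' overstates what is true. The paper only establishes $V^*_{A_{2\Delta}}\geq V^*_{A_\Delta}+\tfrac{(a-b)\Delta}{D_2}$, because when $a-e-\Delta\leq\Delta$ the $\{1,3\}$ subgame of $A_{2\Delta}$ acquires a PSNE at entry $(3,1)$ and the value is $e+2\Delta$, which the paper checks separately is at least $V^*_{A_\Delta}+\tfrac{(a-b)\Delta}{D_2}$. This does not affect your contradiction, which only uses the inequality $V^*_{A_{2\Delta}}\geq V^*_{A_\Delta}+\lambda$, but you should state it as an inequality and handle (or at least flag) the PSNE case.
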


\subsection{Finding the support in $n\times 2$ matrix games}
\begin{algorithm}[h]
\caption{Find the equilibrium support for a $n\times 2$ matrix}
\begin{algorithmic}[1]
\STATE $T\gets\frac{8 \log (8n/\delta) }{\varepsilon^2}$ 
\FOR{round $t=1,2,\ldots,T$} 
\STATE Sample each element $(i,j)$ once and update the empirical means $\bar A_{ij}$.
\STATE $\Delta \gets \sqrt{2\log(\frac{8nT}{\delta}) / (t)}$.
\STATE $\tilde\Delta_{\min}$ $\gets$ $\min$ $\{\min_i\{|\bar A_{i1}-\bar A_{i2}|\},$ $\min\limits_{j,k:j\neq k}\{|\bar A_{j1}-\bar A_{k1}|\},$ $\min\limits_{j,k:j\neq k}\{|\bar A_{j2}-\bar A_{k2}|\}\}$
\IF{$1\leq \frac{\tilde \Delta_{\min}+2\Delta}{\tilde \Delta_{\min}-2\Delta}\leq \frac{3}{2}$ and the matrix game $\bar A$ has a PSNE }\label{alg3:con1}
\STATE Return the PSNE of $\bar A$.
%\IF{$1\leq \frac{\tilde \Delta_{\min}+2\Delta}{\tilde \Delta_{\min}-2\Delta}\leq \frac{3}{2}$ and the game induced by $\bar A$ has a pure strategy Nash}\label{alg3:con1}
%\STATE Return the pure strategy Nash associated to  $\bar A$.
\ELSIF{$1\leq \frac{\tilde \Delta_{\min}+2\Delta}{\tilde \Delta_{\min}-2\Delta}\leq \frac{3}{2}$ and $\bar A$ does not have a pure strategy Nash} \label{alg3:con2}
\STATE $\forall i\in[n]$, remove the row $i$ of $\bar A$ if there is a row $j$ such that $\bar A_{j1}>\bar A_{i1}$ and $\bar A_{j2}>\bar A_{i2}$.
\FOR{round $t'=t+1,t+2,\ldots,T$}
\STATE Sample each element $(i,j)$ once and update the empirical means $\bar A_{ij}$.
\STATE $\Delta' \gets \sqrt{2\log(\frac{8nT}{\delta}) / (t')}$
\STATE $(x',y')\gets$ Nash equilibrium of $\bar A$
%\IF{$t'=T$} \label{alg3:con3}
\STATE If $t'=T$, return the Nash equilibrium of $\bar A$. \label{alg3:con3}
\IF{$|\supp(x')|= 2$}\label{alg3:con4}
\STATE $\{i_1,i_2\}\gets \supp(x')$
\STATE For all rows $i$,\\ $\tilde{r_i}\gets \frac{| \bar A_{i_11}- \bar A_{i_12}|+| \bar A_{i_21}- \bar A_{i_22}|}{|\bar A_{i_11}-\bar A_{i_12}|+|\bar A_{i_21}- \bar A_{i_22}|+|\bar A_{i1}-\bar A_{i2}|}$
\STATE $\tilde \Delta_g\gets \min\limits_{i:i\notin \{i_1,i_2\}}\tilde{r_i}\cdot (V^*_{\bar A}-\langle y',(\bar A_{i1},\bar A_{i2})\rangle)$
%\IF{$\tilde \Delta_g\geq 4\Delta'$}\label{alg3:con5}
%\STATE Return $\{\{i_1,i_2\},\{1,2\}\}$ as the support of the Nash equilibrium.%\kevin{The other cases return a mixed strategy, yet this returns the support?}\arn{ See the remark after the theorem}
%\ENDIF
\STATE If $\tilde \Delta_g\geq 4\Delta'$, then Return $\{\{i_1,i_2\},\{1,2\}\}$ as the support of the Nash equilibrium.\label{alg3:con5}
\ENDIF
\ENDFOR
\ENDIF
\ENDFOR
\STATE Return the Nash equilibrium of $\bar A$.  \label{alg3:con6}
\end{algorithmic}
\label{alg-ucb-3}
\end{algorithm}

%As the previous sections describe algorithms to obtain solutions given $2 \times 2$ matrices, here we show that  
Next, we characterize the instance-dependent sample complexity of 
Algorithm~\ref{alg-ucb-3} which  finds the support of the unique Nash equilibrium, having cardinality at most two, for the matrix game $A\in \mathbb{R}^{n\times 2}$.
Algorithm~\ref{alg-ucb-3} first samples the elements of $A$ until we can conclude whether the game induced by $A$ has a PSNE or not. If the game has a PSNE, then we return it. If the game induced by $A$ does not have a PSNE, we further sample the elements of $A$ until $\tilde \Delta_g$ is sufficiently large and return the support of the Nash equilibrium of the empirical matrix $\bar A$. Here $\tilde \Delta_g$ is an empirical estimate of $\Delta_g$. If no prior condition is met, the algorithm terminates in the worst case at iteration $t=T=\frac{8 \log(8n/\delta)}{\varepsilon^2}$ and outputs an $\varepsilon$--Nash equilibrium with high probability by Lemma~\ref{lem:trivial:NE}.
The full sample complexity guarantees of the algorithm are described in the following theorem the proof of which is in Appendix~\ref{appendix:thm7}.

\begin{theorem}\label{thm:alg3}
Fix any $\varepsilon > 0$ and $\delta \in (0,1)$. Let $T=\frac{8 \log (8n/\delta) }{\varepsilon^2}$. Consider a game defined by the matrix $A\in \mathbb{R}^{n\times 2}$ with a unique Nash equilibrium $(x^*,y^*)\in\simplex_n\times\simplex_2$. The following hold, where $c_1,c_2,c_3,c_4$ are absolute constants.
\begin{itemize}
    \item If $A$ has a PSNE and $\frac{800\log (\frac{8nT}{ \delta})}{ \Delta_{\min}^2}\leq T$, then  with probability at least $1-\delta$, Algorithm~\ref{alg-ucb-3} samples each element of $A$ for $n_0$ times and returns a PSNE where $n_0= c_1\cdot\frac{\log (\frac{n}{\varepsilon \delta})}{ \Delta_{\min}^2}$.
    \item If $A$ has a PSNE and $\frac{800\log (\frac{8nT}{ \delta})}{ \Delta_{\min}^2}> T$, then  with probability at least $1-\delta$, Algorithm~\ref{alg-ucb-3} samples each element of $A$ for $n_0$ times and either returns a PSNE or an $\varepsilon$--Nash equilibrium where $n_0= c_2\cdot\frac{\log(n/\delta)}{\varepsilon^2}$.
    \item If $A$ does not have a PSNE and $\max\big\{\frac{800\log (\frac{8nT}{ \delta})}{ \Delta_{\min}^2},\frac{722\log (\frac{8nT}{ \delta})}{ \Delta_{g}^2}\big\}<T$, then  with probability at least $1-\delta$, Algorithm~\ref{alg-ucb-3} samples each element of $A$ for $n_0$ times and returns $\supp(x^*)$ and $\supp(y^*)$ where $n_0= c_3\cdot\max\big\{\frac{\log (\frac{n}{\varepsilon \delta})}{ \Delta_{\min}^2},\frac{\log (\frac{nT}{ \delta})}{ \Delta_{g}^2}\big\}$.
    \item If $A$ does not have a PSNE and $\max\big\{\frac{800\log (\frac{8nT}{ \delta})}{ \Delta_{\min}^2},\frac{722\log (\frac{8nT}{ \delta})}{ \Delta_{g}^2}\big\}\geq T$, then  with probability at least $1-\delta$, Algorithm~\ref{alg-ucb-3} samples each element of $A$ for $n_0$ times and either returns $\supp(x^*)$ and $\supp(y^*)$ or an $\varepsilon$--Nash equilibrium where $n_0= c_4\cdot \frac{\log(n/\delta)}{\varepsilon^2}$.
\end{itemize}
\end{theorem}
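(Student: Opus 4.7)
\textbf{Proof proposal for Theorem~\ref{thm:alg3}.}

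The plan is to work on a single high-probability \emph{clean event} and then do a four-way case analysis that mirrors the four bullets of the theorem. Let $\mathcal{E}$ be the event that for every round $t \le T$ and every entry $(i,j)$, $|\bar A_{ij}(t) - A_{ij}| \le \Delta(t)=\sqrt{2\log(8nT/\delta)/t}$. A Hoeffding/sub-Gaussian tail bound together with a union bound over $2n$ entries and $T$ rounds shows $\Pr(\mathcal{E}) \ge 1 - \delta$. On $\mathcal{E}$ we have $|\tilde\Delta_{\min} - \Delta_{\min}| \le 2\Delta(t)$ because each candidate gap defining $\Delta_{\min}$ is a difference of two empirical entries. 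Rearranging the trigger condition $1 \le \frac{\tilde\Delta_{\min}+2\Delta}{\tilde\Delta_{\min}-2\Delta}\le 3/2$ gives $\tilde\Delta_{\min}\ge 10\Delta$, and on $\mathcal{E}$ this in turn implies $\Delta_{\min} \ge 8\Delta$, i.e.\ every sign comparison used to decide (a) whether $A$ has a PSNE and (b) which rows are dominated agrees between $A$ and $\bar A$. In particular, the trigger fires at a time $t_1 \lesssim \log(nT/\delta)/\Delta_{\min}^2$ whenever this quantity is $\le T$.

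The first two cases (PSNE exists) follow immediately: if $\frac{800\log(8nT/\delta)}{\Delta_{\min}^2}\le T$, the trigger fires by $t_1$, at which point $\bar A$ has the same PSNE as $A$ (Line~\ref{alg3:con1}), giving the bound in the first bullet; otherwise $t_1 > T$ and the algorithm exhausts its budget at Line~\ref{alg3:con6}, returning the Nash of $\bar A$ with $T = \Theta(\log(n/\delta)/\varepsilon^2)$ samples per entry, which is an $\varepsilon$-Nash by Lemma~\ref{lem:trivial:NE} (after adjusting constants in the union bound), covering the second bullet.

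For the third case (no PSNE, both $\Delta_{\min}$ and $\Delta_g$ conditions satisfied), I would first invoke the argument above to conclude that by round $t_1$ the algorithm correctly enters the branch at Line~\ref{alg3:con2}, prunes exactly the dominated rows of $A$, and starts the inner loop. The heart of the argument is then a perturbation lemma for the empirical quantity $\tilde\Delta_g$: I will show that on $\mathcal{E}$, once $t' \gtrsim \log(nT/\delta)/\Delta_g^2$, the Nash equilibrium $(x',y')$ of $\bar A$ has $|\supp(x')| = 2$ with the correct support $\{i_1,i_2\}$, and $|\tilde\Delta_g - \Delta_g| \le 2\Delta'$ for a constant that makes the trigger $\tilde\Delta_g \ge 4\Delta'$ fire (Line~\ref{alg3:con5}). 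This uses (i) standard stability of the $2\times 2$ Nash value under entrywise perturbation of size $\Delta'$, which controls $V^*_{\bar A}-V^*_A$ and $\langle y',(\bar A_{i1},\bar A_{i2})\rangle - \langle y^*,(A_{i1},A_{i2})\rangle$, and (ii) Lipschitzness of the ratios $\tilde r_i$ as functions of the entries. The particular normalization $r_i$ in the definition of $\Delta_g$ is exactly what makes this Lipschitz constant of $\tilde r_i$ bounded, so that errors of order $\Delta'$ in the entries translate into errors of order $\Delta'$ in $\tilde\Delta_g$ rather than something blown up by the scale of $A$. The main obstacle is carrying out this perturbation bound cleanly; I expect to do it by writing the Nash equilibrium of a $2\times 2$ game in closed form and differentiating.

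The fourth case is handled identically to the second: if neither budget fits inside $T$, the algorithm runs to round $T$ and returns at Line~\ref{alg3:con3} or Line~\ref{alg3:con6}, and Lemma~\ref{lem:trivial:NE} with $T = 8\log(8n/\delta)/\varepsilon^2$ samples per entry (matching the confidence level baked into $\mathcal{E}$) shows that the returned Nash is an $\varepsilon$-Nash. Combining all four cases on $\mathcal{E}$ and bounding the total sample count as $2n$ times the stopping round in each case gives the claimed bounds $n_0$ with explicit constants $c_1,\ldots,c_4$. Throughout, the only probabilistic content is the single concentration bound defining $\mathcal{E}$; the rest is a deterministic consequence of the trigger conditions in the algorithm combined with the perturbation lemma for $\tilde\Delta_g$.
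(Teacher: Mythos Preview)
Your scaffolding---a single clean concentration event, then a four-way deterministic case split driven by the trigger $1\le\frac{\tilde\Delta_{\min}+2\Delta}{\tilde\Delta_{\min}-2\Delta}\le\frac32$---is exactly the paper's approach, and your treatment of bullets 1, 2 and 4 is correct.

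There is, however, a genuine gap in bullet 3. What you sketch is only the \emph{timing} direction: by round $t'\asymp\log(nT/\delta)/\Delta_g^2$ the empirical Nash of $\bar A$ has the correct support and the test $\tilde\Delta_g\ge 4\Delta'$ fires. You do not address \emph{correctness}: the algorithm returns as soon as $\tilde\Delta_g\ge 4\Delta'$ fires, which could happen at an earlier round where the empirical support $\{i_1,i_2\}$ is wrong. Your perturbation claim $|\tilde\Delta_g-\Delta_g|\le 2\Delta'$ cannot help here, because $\tilde\Delta_g$ is computed with the rescaling $\tilde r_i$ and the value $V^*_{\bar A}$ anchored at the \emph{empirical} support $\{i_1,i_2\}$, while $\Delta_g$ is anchored at the \emph{true} support $\supp(x^*)$; when these differ the two quantities are simply not comparable, so no Lipschitz bound on $\tilde r_i$ or on the $2\times2$ Nash value yields the conclusion.

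The missing ingredient is a structural fact about suboptimal supports (Lemma~\ref{lem:n*2:suboptimal} in the paper): if $\{i_1,i_2\}\neq\supp(x^*)$, then letting $B$ be the $2\times2$ submatrix of $A$ on rows $i_1,i_2$ with Nash $(x_B,y_B)$, there exists $i\in\supp(x^*)\setminus\{i_1,i_2\}$ with $V_B^*-\langle y_B,(A_{i1},A_{i2})\rangle<0$. This is not a perturbation statement and is not implied by the sign consistency you get from the $\Delta_{\min}$ trigger. Once you have it, a one-sided version of your perturbation lemma (the paper's Lemma~\ref{lem:n*2:deviation}, which gives $\tilde\Delta_g\le\Delta_{A_1}+4\Delta'$ when $\Delta_{A_1}\le0$, with constants $15\Delta'$ and $4\Delta'$ rather than your $2\Delta'$) shows that $\tilde\Delta_g<4\Delta'$ whenever the empirical support is wrong, contradicting the trigger. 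You will need to prove Lemma~\ref{lem:n*2:suboptimal}; it is the main non-perturbative step in the argument.
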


If Algorithm $\ref{alg-ucb-3}$ returns $\supp(x^*)$ and $\supp(y^*)$ such that $|\supp(x^*)|=|\supp(y^*)|=2$, then  Algorithms $\ref{alg-ucb-1}$ and $\ref{alg-ucb-2}$ can be run on the $2\times 2$ sub-matrix formed by $\supp(x^*)$ and $\supp(y^*)$ and return, with high probability, an $\varepsilon$--good solution and $\varepsilon$--Nash equilibrium, respectively.

\section{{Conclusion and Open Questions}}
To the best of our knowledge, this work provides the first instance-dependent sample complexity results for zero-sum normal form games. We have completely characterized the instance dependent sample complexity of finding Nash Equilibrium in $2\times 2$ matrix games. In addition, we have shed some light on the case of $n\times 2$ matrix games.
These results shed light on the properties of a game that make its dynamics converge quickly or slowly. 
The implications of this line of results could be new algorithms designed to take advantage of easy games, where previous minimax optimal algorithms may not.
This more nuanced understanding of instance-dependent sample complexity may also influence mechanism design since our results describe specifically how one could speed up convergence of players to a Nash equilibrium. 

However, our work leaves many questions unresolved as well as revealing new ones. 
The most obvious direction--extending our results to general $(n \times m) \in \mathbb{N} \times \mathbb{N}$--is also one of the most challenging. 
First, unlike our $n\times 2$ case in which the size of the support is trivially at most $k=2$, it is unclear how to identify the true size $k$ of the support of the Nash equilibrium in general, and then how to identify the $k \times k$ sub-matrix within the game matrix.
Second, there does not exist a closed-form expression for the Nash equilibrium of general $(n\times m) \in \mathbb{N} \times \mathbb{N}$ matrix games.
We exploit the existence of the closed-form solution in our $2\times 2$ analysis in many ways, including deriving alternative instances for lower bounds, and also understanding the right notions of gap by considering a perturbation of the optimal solution.  Due to Cramer's rule there exists a closed-form expression for the Nash equilibrium of general $(n\times n) $ matrix game, however, we would have to analyze how determinants of a matrix behave under minor perturbations in order to establish meaningful upper bounds. 

Besides larger game matrices, there are other very natural questions to pursue. 
Given the instance-dependent quantities we introduced in this work, how do these generalize to general-sum games and can our lower bound strategies be extended? 
What is the sample complexity of identifying other kinds of equilibria, such as an $\varepsilon$ (coarse) correlated equilibrium?
Finally, can we derive instance-dependent regret bounds for computationally efficient strategies?

\section*{ACKNOWLEDGEMENTS}
The authors are supported by NSF RI Award 
\# 1907907.
\bibliographystyle{plainnat}
\bibliography{refs.bib}

\newpage
\onecolumn
\appendix
\section{Properties of Matrix games}\label{appendix:properties:matrix}
Let $e_j^k$ denote a $k$-dimensional vector such that its $j$-th component is $1$ and the rest of the components are $0$.
We now state some well known properties of Nash Equilibrium (in short NE) of Matrix games.
\begin{enumerate}
    \item (\citet{karlin2017game}) If $(x^*,y^*)$ is a Nash equilibrium of a matrix game $A\in\mathbb{R}^{m\times n}$, then $\langle x^*, Ay^*\rangle=V^*_A$ .
    %\item Consider a matrix game $A\in\mathbb{R}^{m\times n}$. Let $X=\{x\in \Delta_m:V_A^*=\min_{y\in \Delta_n}\langle x, Ay \rangle\}$ and $Y=\{y\in \Delta_n:V_A^*=\max_{x\in \Delta_m}\langle x, Ay\rangle\}$. Then $(x^*,y^*)$ is a Nash equilibrium of the matrix game $A$ if and only if $x^*\in X$ and $y^*\in Y$.
    %\item Let $(x^*,y^*)$ be a Nash equilibrium of a matrix game $A\in\mathbb{R}^{m\times n}$. For any $i\in \supp(x^*)$, $\langle e_i^m, Ay^*\rangle =V^*_A$. Similarly, for any $j\in \supp(y^*), \langle x^*, Ae_j^n\rangle=V^*_A$.
    %\item Let $(x^*,y^*)$ be a Nash equilibrium of a matrix game $A\in\mathbb{R}^{m\times n}$. For any $i\notin \supp(x^*)$, $\langle e_i^m, Ay^*\rangle \leq V^*_A$. Similarly, for any $j\notin \supp(y^*), \langle x^*, Ae_j^n\rangle\geq V^*_A$.
    \item (\citet{karlin2017game}) Let $(x^*,y^*)$ be a Nash equilibrium of a matrix game $A\in\mathbb{R}^{m\times n}$. Then the following holds true:
    \begin{itemize}
        \item For any $i\in \supp(x^*)$, $\langle e_i^m, Ay^*\rangle =V^*_A$. Similarly, for any $j\in \supp(y^*), \langle x^*, Ae_j^n\rangle=V^*_A$.
        \item  For any $i\notin \supp(x^*)$, $\langle e_i^m, Ay^*\rangle \leq V^*_A$. Similarly, for any $j\notin \supp(y^*), \langle x^*, Ae_j^n\rangle\geq V^*_A$.
    \end{itemize}
    \item (\citet{bohnenblust1950solutions}) Consider a matrix game on $A\in\mathbb{R}^{m\times n}$ that has a unique Nash equilibrium $(x^*,y^*)$. Then the following holds true:
    \begin{itemize}
        \item $|\supp(x^*)|=|\supp(y^*)|$
        \item For any $i\in \supp(x^*)$, $\langle e_i^m, Ay^*\rangle =V^*_A$. Similarly, for any $j\in \supp(y^*), \langle x^*, Ae_j^n\rangle=V^*_A$.
        \item  For any $i\notin \supp(x^*)$, $\langle e_i^m, Ay^*\rangle < V^*_A$. Similarly, for any $j\notin \supp(y^*), \langle x^*, Ae_j^n\rangle> V^*_A$.
    \end{itemize}
    %\item Consider a matrix game $A\in\mathbb{R}^{m\times n}$. Let $X=\{x\in \Delta_m:V_A^*=\min_{y\in \Delta_n}\langle x, Ay \rangle\}$ and $Y=\{y\in \Delta_n:V_A^*=\max_{x\in \Delta_m}\langle x, Ay\rangle\}$. Let $I_x=\{j: V_A^*=\langle x, A e_j^n\rangle\}$ and $I_y=\{i: V_A^*=\langle e_i^m, A y\rangle\}$. Then we have the following:
    %\begin{itemize}
    %    \item $\bigcup_{x\in X}Supp(x)=\bigcap_{y\in Y}I_y$
    %    \item $\bigcup_{y\in Y}Supp(y)=\bigcap_{x\in X}I_x$
    %\end{itemize}
    \end{enumerate}
Next we present some useful properties of Matrix games.

\begin{proposition}\label{prop:matrix:dmin}
If $\Delta_{\min}=0$, then the matrix game on $A=[a,b;c,d]$ has a PSNE.
\end{proposition}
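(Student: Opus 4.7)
My plan is to prove the contrapositive by constructing an explicit PSNE in every case where $\Delta_{\min}=0$. Since $\Delta_{\min}=\min\{|a-b|,|a-c|,|d-b|,|d-c|\}$, the assumption $\Delta_{\min}=0$ forces at least one of the equalities $a=b$, $a=c$, $b=d$, $c=d$ to hold. I will first observe that the matrix game on $A=[a,b;c,d]$ is invariant (in terms of existence of a PSNE) under swapping rows (which interchanges the pairs $(a,b)$ and $(c,d)$) and under swapping columns (which interchanges $(a,c)$ and $(b,d)$), because such swaps merely relabel the two players' actions. These two symmetries together act transitively on the four equalities above, so it suffices to handle the single case $a=b$, and the other three cases follow by applying the appropriate swap.

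Under $a=b$, I would carry out a short case split on the remaining two entries $c$ and $d$. Recall that $(i^*,j^*)$ is a PSNE iff $A_{i^*j^*}=\max_i A_{ij^*}$ and $A_{i^*j^*}=\min_j A_{i^*j}$. Since row $1$ is constant when $a=b$, the column-minimization condition is automatic at $(1,1)$ and $(1,2)$, so $(1,1)$ is a PSNE whenever $a\geq c$ and $(1,2)$ is a PSNE whenever $a\geq d$. If neither holds, then $a<c$ and $a<d$, and I would then verify directly that $(2,1)$ is a PSNE in the subcase $c\leq d$ (because $c\geq a$ and $c\leq d$) and that $(2,2)$ is a PSNE in the subcase $c\geq d$ (because $d\geq a$ and $d\leq c$). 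Every branch of the case split produces an explicit PSNE, so the proof closes.

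There is no genuine obstacle here: the claim is elementary and the argument is essentially bookkeeping. The only point requiring a small amount of care is the symmetry reduction to $a=b$, which saves writing out three more redundant copies of the same verification. This also makes the statement consistent with the characterization given earlier in the excerpt for when $A$ has a unique non-PSNE Nash equilibrium (which required all four of $|a-b|$, $|a-c|$, $|d-b|$, $|d-c|$ to be strictly positive).
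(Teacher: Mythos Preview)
Your approach is the same as the paper's (reduce to $a=b$, then a short case split), and your case analysis under $a=b$ is correct and matches the paper's four branches exactly. There is, however, a small error in your symmetry reduction: row and column swaps do \emph{not} act transitively on the four equalities $\{a=b,\ a=c,\ d=b,\ d=c\}$. A row swap exchanges $a=b\leftrightarrow d=c$ while fixing the other two; a column swap exchanges $a=c\leftrightarrow d=b$ while fixing the other two. The group they generate therefore has two orbits, $\{a=b,\,d=c\}$ and $\{a=c,\,d=b\}$, so you cannot reach the case $a=c$ from $a=b$ using only these moves.

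The fix is immediate: either (i) also invoke the player-swap symmetry $A\mapsto -A^{\top}$, which preserves existence of a PSNE (if $(i,j)$ is a PSNE of $A$ then $(j,i)$ is a PSNE of $-A^{\top}$) and interchanges the two orbits; or (ii) simply run the analogous argument for $a=c$ directly (column~1 is constant, so the row-maximization condition is automatic at $(1,1)$ and $(2,1)$, and a symmetric split on $b,d$ produces a PSNE in every branch). The paper's own proof writes only ``w.l.o.g.\ $a=b$'' and does the identical case split you give, so once patched your argument coincides with it.
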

\begin{proof}
W.l.o.g let us assume that $a=b$. If $a\geq c$, then $(1,1)$ is a PSNE of $A$. If $a\geq d$, then $(1,2)$ is a PSNE of $A$. If $a\leq c\leq d$, then $(2,1)$ is a PSNE of $A$. If $a\leq d\leq c$, then $(2,2)$ is a PSNE of $A$. 
\end{proof}
\begin{proposition}\label{prop:matrix:saddle}
If the matrix game on $A=[a,b;c,d]$ has a Nash equilibrium $(x^*,y^*)$ such that $\min\{|\supp(x^*)|,|\supp(y^*)|\}=1$, then the matrix game on $A$ has a PSNE.
\end{proposition}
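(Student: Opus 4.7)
The plan is to reduce to a single case by symmetry and then split on whether the ``pure'' player's opponent is also pure. The row/column symmetry of matrix games (passing from $A$ to $-A^{\top}$ swaps the two players while preserving both the NE structure and the PSNE structure) lets me assume without loss of generality that $|\supp(x^*)| = 1$, so $x^* = e_i^2$ for some $i \in \{1,2\}$. The NE condition that $y^*$ is a best response to $e_i$ then gives $\langle e_i, A y^*\rangle = \min_{y\in\simplex_2}\langle e_i, Ay\rangle = \min(A_{i1},A_{i2})$, which forces $\supp(y^*)\subseteq \arg\min_l A_{il}$.

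Case 1: $y^*$ is also pure, i.e.\ $y^* = e_j^2$. Then I would verify directly that $(i,j)$ is a PSNE. The inequality $A_{ij} = \min_l A_{il}$ holds by the previous paragraph, and $A_{ij} = \max_k A_{kj}$ holds because the NE condition makes $x^* = e_i$ a best response to the pure strategy $y^* = e_j$.

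Case 2: $y^*$ has full support. Then $\arg\min_l A_{il} = \{1,2\}$, so $A_{i1} = A_{i2} =: v$. I claim that at least one of $(i,1)$ and $(i,2)$ is a PSNE. Both satisfy the ``min'' condition trivially since $A_{i1} = A_{i2} = v$. Suppose for contradiction that neither satisfies the ``max'' condition; letting $k$ denote the other row, this yields $A_{k1} > v$ and $A_{k2} > v$, hence $\langle e_k, A y^*\rangle > v = \langle e_i, A y^*\rangle$, contradicting that $x^* = e_i$ is a best response to $y^*$. So one of $(i,1)$, $(i,2)$ is a PSNE.

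The statement is elementary, so there is no real obstacle; the only subtle point is Case 2, where one must combine two facts: a fully mixed best response $y^*$ must leave its opponent indifferent across the support (pinning down $A_{i1} = A_{i2}$), and the row-optimality of $e_i$ against $y^*$ then rules out the possibility that the other row strictly dominates row $i$ coordinatewise, guaranteeing a PSNE among the two candidate entries in row $i$.
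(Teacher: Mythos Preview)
Your proof is correct and follows essentially the same decomposition as the paper: reduce by symmetry to $|\supp(x^*)|=1$, handle the both-pure case directly, and in the mixed case use the best-response condition to get $A_{i1}=A_{i2}$. The only minor difference is that in Case~2 the paper simply invokes Proposition~\ref{prop:matrix:dmin} (since $A_{i1}=A_{i2}$ forces $\Delta_{\min}=0$), whereas you give a self-contained argument that additionally pins the PSNE down to row $i$ by exploiting that $e_i$ is a best response to $y^*$.
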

\begin{proof}
If $\supp(x^*)=\{i\}$ and $\supp(y^*)=\{j\}$, then $(i,j)$ is a PSNE of $A$. Now w.l.o.g let us assume that $\supp(x^*)=\{1\}$ and $\supp(y^*)=\{1,2\}$. This implies that $a=b$. Hence due to Proposition \ref{prop:matrix:dmin}, matrix game on $A$ has a PSNE.
\end{proof}
\begin{proposition}\label{prop:matrix:nosaddle}
If the matrix game on $A=[a,b;c,d]$ does not have a PSNE, then either $a<b$, $a<c$, $d<b$ and $d<c$ or $a>b$, $a>c$, $d>b$ and $d>c$.
\end{proposition}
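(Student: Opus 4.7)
}
I will argue the contrapositive: if neither listed sign pattern holds, then the game on $A$ has a PSNE. Dispose first of the degenerate case $\Delta_{\min}=0$ using Proposition~\ref{prop:matrix:dmin}, which directly exhibits a PSNE. Hence we may assume all four differences $a-b$, $c-d$, $a-c$, $b-d$ are nonzero, so all comparisons below are strict. Recall from the definition that an entry $A_{ij}$ is a PSNE iff it is simultaneously the (strict) maximum of its column and the (strict) minimum of its row.

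The key combinatorial step is to encode these extrema. For $i\in\{1,2\}$, let $r_i\in\{1,2\}$ be the column in which row $i$ attains its minimum; for $j\in\{1,2\}$, let $c_j\in\{1,2\}$ be the row in which column $j$ attains its maximum. Then a PSNE exists iff there is a pair $(i,j)$ with $r_i=j$ \emph{and} $c_j=i$.

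I would then split on whether $r_1=r_2$. If $r_1=r_2=j_0$, the entry $(c_{j_0},j_0)$ is a column-max by definition of $c_{j_0}$ and a row-min because $r_{c_{j_0}}=j_0$; this gives a PSNE and there is nothing more to check. If instead $r_1\neq r_2$, up to relabeling assume $(r_1,r_2)=(1,2)$, which translates to $a<b$ and $d<c$. A PSNE exists unless both $c_1\neq 1$ and $c_2\neq 2$, i.e. $c_1=2$ and $c_2=1$, which translates to $a<c$ and $d<b$. Combining, the no-PSNE sub-case under $(r_1,r_2)=(1,2)$ yields exactly the first pattern $a<b,\ a<c,\ d<b,\ d<c$. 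The symmetric sub-case $(r_1,r_2)=(2,1)$ yields the second pattern $a>b,\ a>c,\ d>b,\ d>c$ by the identical argument with all inequalities reversed.

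The only obstacle is bookkeeping; setting up the row-min/column-max encoding makes the case analysis essentially mechanical, and no deeper ingredient is required beyond Proposition~\ref{prop:matrix:dmin} for the equality fallback.
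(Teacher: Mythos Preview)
Your proposal is correct and follows essentially the same approach as the paper: invoke Proposition~\ref{prop:matrix:dmin} to ensure all comparisons are strict, then perform a case analysis on the row/column orderings to either exhibit a PSNE or force one of the two sign patterns. Your row-min/column-max encoding $(r_i,c_j)$ is a slightly more systematic packaging of the paper's ad hoc checks (the paper simply cases on $a>b$ versus $a<b$ and names the offending PSNE directly in each sub-case), but the underlying logic is identical.
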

\begin{proof}
Due to Proposition \ref{prop:matrix:dmin}, we have $\Delta_{\min}>0$. Let us first assume that $a>b$. Then $d>c$ otherwise $A$ has a PSNE. Similarly we have $a>c$ otherwise $(2,1)$ would be a PSNE of $A$. Also we have $b<d$ otherwise $(1,2)$ would be a PSNE of $A$. Similarly we can show that if $a<b$, then $a<c$, $d<b$ and $d<c$.
\end{proof}
\begin{proposition}\label{prop:matrix:main}
\begin{enumerate}
    \item The matrix game on $A=[a,b;c,d]$ has a unique Nash equilibrium which is not a PSNE if and only if one of the following condition holds true:
    \begin{itemize}
        \item $a<b$, $a<c$, $d<b$ and $d<c$
        \item $a>b$, $a>c$, $d>b$ and $d>c$
    \end{itemize}
    \item Consider a matrix game on $A=[a,b;c,d]$ that has a unique Nash equilibrium $(x^*,y^*)$ which is not a PSNE. Then $x^*=\left(\frac{d-c}{D},\frac{a-b}{D}\right)$, $y^*=\left(\frac{d-b}{D},\frac{a-c}{D}\right)$ and $V_A^*=\frac{ad-bc}{D}$ where $D=a-b-c+d$.
\end{enumerate}
\end{proposition}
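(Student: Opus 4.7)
The plan is to treat the two parts of the proposition in sequence, drawing on the structural facts already established in Propositions~\ref{prop:matrix:dmin}--\ref{prop:matrix:nosaddle}.

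For part (1), the forward ($\Rightarrow$) direction is essentially immediate: if $A$ has a unique Nash equilibrium which is not a PSNE, then $A$ can have no PSNE at all, since any PSNE would give a distinct Nash equilibrium and contradict uniqueness. Proposition~\ref{prop:matrix:nosaddle} then yields exactly one of the two strict-inequality conditions. For the reverse ($\Leftarrow$) direction, I would assume without loss of generality that $a>b,\ a>c,\ d>b,\ d>c$ (the other case is symmetric under negation of $A$). The plan is to verify directly, by checking all four candidates $(i,j)\in\{1,2\}^2$, that the two PSNE conditions $A_{ij}=\max_{i'}A_{i'j}$ and $A_{ij}=\min_{j'}A_{ij'}$ cannot simultaneously hold: in each case one of the two equalities fails because of a strict inequality in the hypothesis. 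By the minimax theorem at least one Nash equilibrium exists, and since there is no PSNE it must be mixed. Uniqueness is deferred to part (2), where the explicit computation will show the mixing probabilities are forced.

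For part (2), the plan is the standard indifference argument. Under either hypothesis of part (1), $D=(a-b)+(d-c)\neq 0$, since the two summands share a nonzero sign. Because $A$ has no PSNE, Proposition~\ref{prop:matrix:saddle} forces every Nash equilibrium $(x^*,y^*)$ to be fully supported on both sides. I would then write $x^*=(p,1-p)$ and solve the indifference condition $pa+(1-p)c = pb+(1-p)d$ to obtain $p=(d-c)/D$, yielding the claimed $x^*$; a symmetric computation on $y^*=(q,1-q)$ using $qa+(1-q)b=qc+(1-q)d$ gives $y^*$. A short sign check in each of the two cases of part (1) confirms that all four components lie in $(0,1)$ and that the vectors sum to $1$. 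The value is then computed as $V_A^*=\langle x^*,Ay^*\rangle = \tfrac{d-b}{D}\,a+\tfrac{a-c}{D}\,b=\tfrac{ad-bc}{D}$ by evaluating the payoff player 1 receives against $y^*$ along either row (both give the same value by indifference). Since the indifference system has a unique solution, this simultaneously completes the uniqueness claim of part (1).

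The only potential sticking point is careful bookkeeping of signs across the two cases in part (1) to verify that the stated rational expressions are genuine probability vectors; everything else reduces to a direct enumeration of four cases and the solution of a pair of linear equations.
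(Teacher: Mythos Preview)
Your proposal is correct and follows essentially the same approach as the paper: both directions of part (1) are handled via Propositions~\ref{prop:matrix:nosaddle} and~\ref{prop:matrix:saddle}, and part (2) is obtained by the indifference equations, whose unique solution simultaneously establishes the uniqueness claim in part (1). The only cosmetic differences are that the paper asserts non-existence of a PSNE directly rather than enumerating the four candidates, and computes $V_A^*$ via $x^*$ applied to a column instead of $y^*$ applied to a row.
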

\begin{proof}
Let us assume that $a<b$, $a<c$, $d<b$ and $d<c$. Then the matrix game on $A=[a,b;c,d]$ does not have a PSNE. Hence due to Proposition \ref{prop:matrix:saddle}, any Nash equilibrium $(x^*,y^*)$ of the matrix game on $A$ has $\supp(x^*)=\supp(y^*)=\{1,2\}$. Let $((x,1-x),(y,1-y))$ be a Nash equilibrium of matrix game on $A$. Then it must satisfy the following equations:
\begin{align*}
    ax+(1-x)c&=bx+(1-x)d\\
    ay+(1-y)b&=cy+(1-y)d
\end{align*}
The above equations have a unique solution which is $x=\frac{d-c}{a-b-c+d}$ and $y=\frac{d-b}{a-b-c+d}$. In a similar way we can show that if $a>b$, $a>c$, $d>b$ and $d>c$, then the matrix game on $A$ has unique Nash equilibrium $((x,1-x),(y,1-y))$ where $x=\frac{d-c}{a-b-c+d}$ and $y=\frac{d-b}{a-b-c+d}$. We also have $V_A^*=\frac{d-c}{D}\cdot a+\frac{a-b}{D}\cdot c=\frac{ad-bc}{D}$ where $D=a-b-c+d$.

Next let us assume that the matrix game on $A=[a,b;c,d]$ has a unique Nash equilibrium which is not a PSNE. Then it does not have a PSNE. Due to Proposition \ref{prop:matrix:nosaddle}, either $a<b$, $a<c$, $d<b$ and $d<c$ or $a>b$, $a>c$, $d>b$ and $d>c$.
\end{proof}

\begin{proposition}
Consider a matrix $A=[a,b;c,d]$. If $\Delta_{\min}>0$, then the input matrix $A$ has a unique Nash Equilibrium. 
\end{proposition}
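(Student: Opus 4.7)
The plan is to combine the structural facts proved earlier in this appendix with a short case analysis on the support sizes of any Nash equilibrium. The key observation is that, because $\Delta_{\min}>0$ forces every off-diagonal difference between entries of $A$ to be strictly nonzero, Proposition~\ref{prop:matrix:saddle} rules out any equilibrium $(x^*,y^*)$ with $\min\{|\supp(x^*)|,|\supp(y^*)|\}=1$ that is not already a pure-strategy equilibrium. More precisely, the proof of Proposition~\ref{prop:matrix:saddle} shows that such an ``asymmetric-support'' equilibrium forces two entries of $A$ to be equal, which would contradict $\Delta_{\min}>0$. Therefore every NE of $A$ has either full support on both sides or is a PSNE; these two cases will be handled separately.

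In the case where $A$ has no PSNE, Proposition~\ref{prop:matrix:nosaddle} gives either $a<b,a<c,d<b,d<c$ or $a>b,a>c,d>b,d>c$. Proposition~\ref{prop:matrix:main} then yields the explicit closed-form Nash equilibrium $x^*=((d-c)/D,(a-b)/D)$, $y^*=((d-b)/D,(a-c)/D)$ with $D=a-b-c+d\neq 0$, and Proposition~\ref{prop:matrix:main}(1) furthermore asserts uniqueness in this case. So the ``no PSNE'' branch is immediate.

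The substantive case is when $A$ admits a PSNE. By a row/column relabeling I will assume without loss of generality that $(1,1)$ is a PSNE, so $a\ge c$ and $a\le b$; combined with $\Delta_{\min}>0$ these are strict: $a>c$ and $a<b$. I would first show uniqueness of the PSNE itself by ruling out the other three corners: $(1,2)$ would require $b\le a$, contradicting $a<b$; $(2,1)$ would require $c\ge a$, contradicting $a>c$; and $(2,2)$ would require $d\ge b$ and $d\le c$, which together with $b>a>c$ yields $d\ge b>c\ge d$, a contradiction. Next I would rule out the existence of a fully mixed NE. If $(x,y)$ were fully mixed then the standard indifference conditions read $(a-c)y_1=(d-b)y_2$ and $(a-b)x_1=(d-c)x_2$ with $x_i,y_i>0$, so $(a-c)$ and $(d-b)$ must have the same sign and $(a-b)$ and $(d-c)$ must have the same sign. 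The first condition, combined with $a-c>0$, would force $d>b$; the second, combined with $a-b<0$, would force $d<c$. But $d>b>a>c$ forces $d>c$, a contradiction. Hence no fully mixed NE exists, and uniqueness of the PSNE completes the case.

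The only mildly delicate step is the fully mixed subcase just above, since one must chain the inequalities $a>c$, $a<b$, $d>b$, $d<c$ carefully to produce the contradiction; the remaining bookkeeping is essentially a symmetric enumeration. Putting the two cases together yields that for every matrix $A=[a,b;c,d]$ with $\Delta_{\min}>0$ the Nash equilibrium is unique, which is exactly the proposition.
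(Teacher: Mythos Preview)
Your proof is correct. The no-PSNE branch is identical to the paper's: both invoke Propositions~\ref{prop:matrix:nosaddle} and~\ref{prop:matrix:main} to get the unique fully mixed equilibrium. The difference lies entirely in the PSNE branch. The paper argues by strict dominance: with $(1,1)$ a PSNE and $\Delta_{\min}>0$, either $d<b$ (so row~1 strictly dominates row~2, since also $a>c$) or $d>c$ (so column~1 strictly dominates column~2, since also $a<b$), and in either case the dominated strategy cannot appear in any equilibrium, forcing $(1,1)$ to be the unique NE; the remaining case $d>b$ and $d<c$ is dispatched by the same chain $d>b>a>c>d$ you use. Your route instead decomposes by support type---first eliminating the other three corners as PSNEs, then eliminating any fully mixed equilibrium via the indifference equations $(a-c)y_1=(d-b)y_2$ and $(a-b)x_1=(d-c)x_2$. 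The dominance argument is shorter because one observation kills all alternatives at once, whereas your approach makes the support structure explicit; yours is arguably closer to how one would attack the $n\times m$ case, but for $2\times 2$ the paper's version is the more economical one-liner.
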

\begin{proof}
If $A$ does not have a PSNE, then due to Propositions \ref{prop:matrix:nosaddle} and \ref{prop:matrix:main}, we get that the matrix $A$ has a unique Nash Equilibrium which is not a PSNE.

Let $\Delta_{\min}>0$. Let us assume that $A$ has a PSNE. W.l.o.g. let the element $(1,1)$ be a PSNE. Due to the definition of PSNE, we have $a<b$ and $a>c$. If $d<b$, then $(1,1)$ is the unique Nash Equilibrium of $A$ as $a<b$ and strategy $1$ strictly dominates strategy $2$ for the row player. Similarly if $d>c$, then $(1,1)$ is the unique Nash Equilibrium of $A$ as $a>c$ and strategy $1$ strictly dominates strategy $2$ for the column player. The final case $d<c$ and $d>b$ is not possible otherwise we would have $d>b>a>c$ which is contradictory. Hence $A$ has a unique Nash Equilibrium.
\end{proof}

\begin{proposition}
The matrix game on $A=[a,b;c,d]$ has a unique Nash equilibrium which is not a PSNE if and only if the matrix game on $A$ does not have a PSNE.
\end{proposition}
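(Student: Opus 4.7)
The plan is to prove the biconditional by handling the two directions separately, using the propositions already established above.

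For the forward direction, I would argue by contradiction. Suppose $A$ has a unique Nash equilibrium $(x^*, y^*)$ which is not a PSNE, but that $A$ also has a PSNE $(i,j)$. Then $(e_i^2, e_j^2)$ is also a Nash equilibrium of $A$, contradicting the uniqueness assumption (since the unique equilibrium is not a PSNE, it cannot equal $(e_i^2, e_j^2)$). This direction is essentially immediate from the definitions.

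For the backward direction, suppose $A$ does not have a PSNE. I would first invoke Proposition \ref{prop:matrix:nosaddle}, which gives that either $a<b,\ a<c,\ d<b,\ d<c$ or $a>b,\ a>c,\ d>b,\ d>c$ must hold. Then I would apply Proposition \ref{prop:matrix:main} (part 1), which states that under exactly these conditions, $A$ has a unique Nash equilibrium which is not a PSNE. Chaining these two propositions yields the backward direction directly.

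There is no substantial obstacle here: the two directions follow almost immediately from the earlier propositions. The forward direction is a one-line uniqueness argument, and the backward direction is just a concatenation of Propositions \ref{prop:matrix:nosaddle} and \ref{prop:matrix:main}. The only mild care needed is to note, for the forward direction, that any PSNE of $A$ corresponds to a Nash equilibrium in mixed strategy form (via Dirac distributions on the simplex), so uniqueness of the NE rules out existence of a PSNE whenever the unique NE lies in the interior.
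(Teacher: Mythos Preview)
Your proposal is correct. The backward direction is exactly the paper's argument: chain Proposition~\ref{prop:matrix:nosaddle} with Proposition~\ref{prop:matrix:main}. For the forward direction, the paper instead invokes Proposition~\ref{prop:matrix:main} to conclude that one of the two strict-inequality patterns holds, and then observes that under either pattern no entry can be a PSNE. Your uniqueness argument (a PSNE would yield a second, distinct Nash equilibrium) is more direct and avoids passing through the structural characterization entirely; it buys simplicity and uses nothing beyond the definition that a PSNE induces a mixed-strategy NE. The paper's route, by contrast, keeps everything phrased in terms of the explicit inequality conditions, which is mildly more uniform with how the surrounding propositions are stated but is otherwise no stronger.
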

\begin{proof}
If the matrix game on $A=[a,b;c,d]$ has a unique Nash equilibrium which is not a PSNE, then due to Proposition \ref{prop:matrix:main} either $a<b$, $a<c$, $d<b$ and $d<c$ or $a>b$, $a>c$, $d>b$ and $d>c$. This implies that $A$ does not have a PSNE.

If $A$ does not have a PSNE, then due to Proposition \ref{prop:matrix:nosaddle} either $a<b$, $a<c$, $d<b$ and $d<c$ or $a>b$, $a>c$, $d>b$ and $d>c$. This along with Proposition \ref{prop:matrix:main} implies that the matrix game on $A$ has a unique Nash equilibrium which is not a PSNE.
\end{proof}
\begin{proposition}
Any $\varepsilon$-Nash equilibrium of a matrix game $A$ is also an $\varepsilon$-good solution of the matrix game $A$.
\end{proposition}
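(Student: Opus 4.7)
The plan is to unpack the definition of an $\varepsilon$-Nash equilibrium and extract two bounds on $\langle x, Ay \rangle$: an upper bound in terms of $\min_{y'}\langle x, Ay'\rangle$ and a lower bound in terms of $\max_{x'}\langle x', Ay\rangle$. Then I will use the fact that, for a matrix game, $\max_{x'\in \simplex_n}\langle x', Ay\rangle \geq V_A^\star$ for every $y \in \simplex_2$ and $\min_{y'\in \simplex_2}\langle x, Ay'\rangle \leq V_A^\star$ for every $x \in \simplex_n$, which is an immediate consequence of the definition $V_A^\star = \max_{x}\min_{y}\langle x, Ay\rangle = \min_{y}\max_{x}\langle x, Ay\rangle$ (von Neumann's minimax theorem).

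Concretely, suppose $(x,y)$ is an $\varepsilon$-Nash equilibrium. The first defining inequality $\langle x,Ay\rangle \geq \langle x',Ay\rangle - \varepsilon$ holding for every $x' \in \simplex_n$ can be rewritten as $\langle x, Ay\rangle \geq \max_{x'\in \simplex_n}\langle x', Ay\rangle - \varepsilon$, and then bounded below by $V_A^\star - \varepsilon$ using $\max_{x'}\langle x', Ay\rangle \geq \min_{y'}\max_{x'}\langle x',Ay'\rangle = V_A^\star$. Symmetrically, the second defining inequality $\langle x,Ay'\rangle \geq \langle x,Ay\rangle - \varepsilon$ for every $y' \in \simplex_2$ rearranges to $\langle x,Ay\rangle \leq \min_{y'\in \simplex_2}\langle x, Ay'\rangle + \varepsilon$, and then is bounded above by $V_A^\star + \varepsilon$ since $\min_{y'}\langle x, Ay'\rangle \leq \max_{x}\min_{y'}\langle x, Ay'\rangle = V_A^\star$. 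Combining the two gives $|V_A^\star - \langle x,Ay\rangle| \leq \varepsilon$, which is exactly the $\varepsilon$-good condition.

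There is essentially no obstacle here: the statement follows in a couple of lines once one invokes the minimax identity. The only thing worth being careful about is the direction of the inequalities and the quantifiers, ensuring that the outer max/min over the deviating strategy $x'$ or $y'$ are taken on the correct side and that the bound $V_A^\star = \max_x \min_y \langle x, Ay\rangle = \min_y\max_x\langle x, Ay\rangle$ is applied in the correct direction each time. No additional structural properties of $A$ (uniqueness of equilibrium, PSNE vs.\ mixed, $2\times 2$ vs.\ $n\times m$) are needed, so the argument applies verbatim to any $A \in \mathbb{R}^{m\times n}$.
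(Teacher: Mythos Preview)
Your proof is correct and follows essentially the same approach as the paper. The only cosmetic difference is that the paper instantiates the deviating strategies with a specific Nash equilibrium $(x^*,y^*)$ and uses its saddle-point property, whereas you take $\max_{x'}$ and $\min_{y'}$ directly and invoke the minimax identity; these are equivalent formulations of the same two-line argument.
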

\begin{proof}
Let $(x^*,y^*)$ be a Nash equilibrium of $A$ and $(x,y)$ be an $\varepsilon$-Nash equilibrium of $A$. Recall that $V_A^*=\langle x^*, Ay^*\rangle$. Now we have the following:
\begin{align*}
    \langle x^*, Ay^*\rangle &\geq \langle x, Ay^*\rangle \tag{as $(x^*,y^*)$ is a Nash equilibrium }\\
    & \geq \langle x,Ay\rangle-\varepsilon \tag{as $(x,y)$ is an $\varepsilon$-Nash equilibrium }
\end{align*}

Similarly we have the following:
\begin{align*}
    \langle x^*, Ay^*\rangle &\leq \langle x^*, Ay\rangle \tag{as $(x^*,y^*)$ is a Nash equilibrium }\\
    & \leq \langle x,Ay\rangle+\varepsilon \tag{as $(x,y)$ is an $\varepsilon$-Nash equilibrium }
\end{align*}
Hence $(x,y)$ is also an $\varepsilon$-good solution. 
\end{proof}

\section{Minimax sample complexity, Proof of Lemma~\ref{lem:trivial:NE}}\label{appendix:minimax}
% In this section, we present an upper bound of $O\left(\frac{\log (2mn/\delta) }{\varepsilon^2}\right)$ for finding an $\varepsilon$-Nash equilibrium in a $m\times n$ matrix game defined by $A\in \mathbb{R}^{m\times n}$. 

\begin{proof}
First note that
\begin{align*}
    \P\left( \bigcup_{i=1}^m \bigcup_{j=1}^n  \{ |\bar{A}_{ij} - A_{ij}| \geq \varepsilon/2 \} \right) \leq \sum_{i=1}^m \sum_{j=1}^n \P( |\bar{A}_{ij} - A_{ij}| \geq \varepsilon/2 ) \leq \sum_{i=1}^m \sum_{j=1}^n \frac{\delta}{ mn} = \delta
\end{align*}
where the last inequality follows from a sub-Gaussian tail bound on our $1$-sub-Gaussian observations. The sub-Gaussian tail bound, also known as Hoeffding bound, is as follows.
\begin{lemma}[sub-Gaussian tail bound]
Let $X_1,X_2,\ldots,X_n$ be i.i.d samples  from a $1$-sub-Gaussian distribution with mean $\mu$. Then we have the following:
\begin{equation*}
    \mathbb{P}\left[\left|\frac{1}{n}\cdot\sum_{i=1}^nX_i-\mu\right|\geq \sqrt{\frac{2\log(2/\delta)}{n}}\right]\leq \delta
\end{equation*}
\end{lemma}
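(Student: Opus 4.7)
The target statement is the standard sub-Gaussian concentration inequality for empirical means, so my plan is to derive it from the moment-generating function definition of sub-Gaussianity via the Chernoff method.

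First I would recall the definition I take as the working definition: a random variable $X$ with mean $\mu$ is $1$-sub-Gaussian if $\mathbb{E}[e^{\lambda(X-\mu)}] \leq e^{\lambda^2/2}$ for every $\lambda \in \mathbb{R}$. From this, the key algebraic step is that sub-Gaussianity tensorizes under independence: if $X_1,\dots,X_n$ are i.i.d.\ and $1$-sub-Gaussian, then by independence
\[
\mathbb{E}\!\left[\exp\!\Big(\lambda \sum_{i=1}^n (X_i-\mu)\Big)\right] = \prod_{i=1}^n \mathbb{E}[e^{\lambda(X_i-\mu)}] \leq e^{n\lambda^2/2}.
\]
So $\sum_{i=1}^n (X_i-\mu)$ is $n$-sub-Gaussian.

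Next I would apply the Chernoff bound to the centered empirical mean $\bar{X}-\mu = \frac{1}{n}\sum_i(X_i-\mu)$: for any $t>0$ and $\lambda>0$,
\[
\mathbb{P}\!\left[\bar{X}-\mu \geq t\right] = \mathbb{P}\!\left[\sum_{i=1}^n (X_i-\mu) \geq nt\right] \leq e^{-\lambda n t}\cdot e^{n\lambda^2/2}.
\]
Minimizing the right-hand side in $\lambda$ (the optimum is $\lambda = t$) yields the one-sided bound $\mathbb{P}[\bar{X}-\mu \geq t] \leq e^{-nt^2/2}$. The same argument applied to $-X_i$ (which are also $1$-sub-Gaussian with mean $-\mu$) gives the matching lower-tail bound $\mathbb{P}[\bar{X}-\mu \leq -t] \leq e^{-nt^2/2}$.

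Finally I would combine the two tails by a union bound to get $\mathbb{P}[|\bar{X}-\mu|\geq t] \leq 2e^{-nt^2/2}$, and then invert this by choosing $t = \sqrt{2\log(2/\delta)/n}$, which makes the right-hand side equal to $\delta$ and produces the claimed inequality. There is no real obstacle here; the only thing worth being careful about is stating the correct direction of the optimization (and checking the sign conventions for both tails) so that the union bound produces the factor of $2$ inside the logarithm, matching $\sqrt{2\log(2/\delta)/n}$ exactly.
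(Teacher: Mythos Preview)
Your proof is correct and complete; it is the standard Chernoff argument for sub-Gaussian concentration. The paper itself does not prove this lemma---it simply states it as the well-known Hoeffding bound and uses it as a black box---so there is nothing to compare against, and your derivation supplies exactly the missing details.
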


Thus, in what follows assume $|\bar{A}_{ij} - A_{ij}| \leq \varepsilon/2$ for all $(i,j) \in [m]\times [n]$.
For any $x'\in \simplex_m$ we have
\begin{align*}
\langle x, Ay \rangle &= \langle x, \bar{A} y \rangle + \sum_{i,j} (A_{ij}-\bar{A}_{i,j}) x_i y_j \\
&\geq \langle x,\bar Ay \rangle -\frac{\varepsilon}{2}  \\
& \geq  \langle x',\bar Ay \rangle-\frac{\varepsilon}{2} \tag{as $(x,y)$ is a NE of $\bar A$}\\
& \geq  \langle x', Ay \rangle - \varepsilon 
\end{align*}
Similarly, for any $y'\in \simplex_n$ we have
\begin{align*}
\langle x, Ay \rangle &\leq \langle x,\bar Ay \rangle +\frac{\varepsilon}{2}\\
& \leq  \langle x,\bar Ay' \rangle+\frac{\varepsilon}{2} \tag{as $(x,y)$ is a NE of $\bar A$}\\
& \leq  \langle x, Ay' \rangle + \varepsilon 
\end{align*}
which completes the proof.
\end{proof}

\section{Proof of $\varepsilon$-good solution Upper Bound}\label{appendix:thm3}
We establish the sample complexity and the correctness of the Algorithm \ref{alg-ucb-1} by proving the Theorem \ref{thm:alg1}.

\begin{proof}[Proof of Theorem \ref{thm:alg1}]
Let $\bar A_{ij,t}$ denote the empirical mean of $A_{ij}$ at time step $t$. Let us begin by defining two events:
\begin{align*}
    G:=&\bigcap_{t=1}^T\bigcap_{i=1}^2 \bigcap_{j=1}^2 \{ |A_{ij}-\bar A_{ij,t}|\leq \sqrt{\tfrac{2\log({16T}/{\delta})}{t}} \} \\
    E:=&\bigcap_{i=1}^2 \bigcap_{j=1}^2 \{ |A_{ij}-\bar A_{ij,T}|\leq \sqrt{\tfrac{2\log({16}/{\delta})}{T}} \} 
\end{align*}
A union bound and sub-Gaussian-tail bound demonstrates that $\P( G^c \cup E^c ) \leq \P(G^c) + \P(E^c) \leq \delta$.
Consequently, events $E$ and $G$ hold simultaneously with probability at least $1-\delta$, so in what follows, assume they hold. 

If $A$ has a PSNE and if the condition of line \ref{alg1:con1} of  Algorithm~\ref{alg-ucb-1} is satisfied, then we identify an $\varepsilon$-good solution in $\frac{800\log (\frac{16T}{ \delta})}{ \Delta_{\min}^2}$ time steps due to Lemma \ref{lem:alg1:tmin} and Corollary \ref{cor:alg1:saddle}. On the other hand, if $A$ has a PSNE but the for loop completes after $t=T$ iterations,  then we identify an $\varepsilon$-good solution in $T=\frac{8 \log (16/\delta) }{\varepsilon^2}$ time steps due to Lemma \ref{lem:trivial:NE}. Note that in this case, $T<\frac{800\log (\frac{16T}{ \delta})}{ \Delta_{\min}^2}$ due to Lemma \ref{lem:alg1:tmin}. Hence, if $A$ has a PSNE, we identify an $\varepsilon$-good solution in $O\left(\min\left\{\frac{\log(1/\delta)}{\varepsilon^2},\frac{\log(T/\delta)}{\Delta_{\min}^2}\right\}\right)$ time steps.

Let us now assume for the rest of the proof that $A$ has a unique NE which is not a PSNE. 
If the condition of line \ref{alg1:con2} of Algorithm~\ref{alg-ucb-1} is satisfied, then we identify an $\varepsilon$-good solution in $T=\frac{8 \log (16/\delta) }{\varepsilon^2}$ time steps due to Lemma \ref{lem:trivial:NE}. 
Now observe that in this case $T=O\left(\min\left\{\frac{\log(1/\delta)}{\varepsilon^2},\frac{\log(T/\delta)}{\varepsilon|D|}\right\}\right)$ due to Lemma \ref{lem:alg1:D<eps}. On the other hand, if the for loop completes after $t=T$ iterations,  then we identify an $\varepsilon$-good solution in $T=\frac{8 \log (16/\delta) }{\varepsilon^2}$ time steps due to Lemma \ref{lem:trivial:NE}. Note that in this case, $T<\frac{800\log (\frac{16T}{ \delta})}{ \Delta_{\min}^2}$ due to Lemma \ref{lem:alg1:tmin}.

Now let us  assume for the rest of the proof that the condition in the line \ref{alg1:con3} is satisfied. Then due to Lemma \ref{lem:alg1:D:ratio}, we have $\frac{64 \log (\frac{16}{\delta}) }{\varepsilon |D| } \leq N \leq \frac{96 \log (\frac{16}{\delta}) }{\varepsilon |D| } $. If the condition in the line \ref{alg1:con4} is satisfied, then we identify an $\varepsilon$-good solution in $T=\frac{8 \log (16/\delta) }{\varepsilon^2}$ time steps due to Lemma \ref{lem:trivial:NE}. Now observe that in this case $T=O\left(\min\left\{\frac{\log(1/\delta)}{\varepsilon^2},\max\left\{\frac{\log(T/\delta)}{\Delta_{\min}^2},\frac{\log(T/\delta)}{\varepsilon |D|}\right\}\right\}\right)$ as $T< \frac{800\log (\frac{16T}{ \delta})}{ \Delta_{\min}^2}+N$. If the condition in the line \ref{alg1:con4} is not satisfied, then we identify an $\varepsilon$-good solution due to Lemma \ref{lem:alg1:goodsol}. In this case, let the number of times we are required to sample each element be $n_0$. Then $n_0\leq T$ and $n_0\leq \frac{800\log (\frac{16T}{ \delta})}{ \Delta_{\min}^2}+ \frac{96 \log (\frac{16}{\delta}) }{\varepsilon |D| } $. Hence, $n_0=O\left(\min\left\{\frac{\log(1/\delta)}{\varepsilon^2},\max\left\{\frac{\log(T/\delta)}{\Delta_{\min}^2},\frac{\log(T/\delta)}{\varepsilon |D|}\right\}\right\}\right)$.
\end{proof}

\subsection{Consequential lemmas of Algorithm~\ref{alg-ucb-1}'s conditional statements}
Recall the definitions of events $E$ and $G$.
We first present a few lemmas that deal with empirical estimates and instance dependent parameters like $\tilde \Delta_{\min},\tilde D, \Delta_{\min}$ and $|D|$. Whenever we fix a time step $t\leq T$ and discuss the parameters like $\tilde  \Delta_{\min}, \tilde D$ and $\Delta$, we consider those values that have been assigned to these parameters during the time step $t$.
We begin with upper bounding $ |\Delta_{\min}-\tilde \Delta_{\min}|$ in the following lemma.
\begin{lemma}
Fix a time step $t\leq T$. If the event $G$ holds, then we have the following:
\begin{equation*}
    |\Delta_{\min}-\tilde \Delta_{\min}|\leq 2\Delta
\end{equation*}
\end{lemma}
\begin{proof}
Let us assume that the event $G$ holds. Then for every element $(i,j)$, we have  $|A_{ij}-\bar A_{ij}|\leq \Delta$. Then we have $\left||A_{ij}-A_{i'j'}|-|\bar A_{ij}-\bar A_{i'j'}|\right|\leq 2\Delta$ for any $i,j,i',j'$. By repeatedly applying the Lemma \ref{lem:deviation}, we get $|\Delta_{\min}-\tilde \Delta_{\min}|\leq 2\Delta$.
\end{proof}

The following lemma upper bounds the number of time steps required to satisfy the condition $1\leq \frac{\tilde \Delta_{\min}+2\Delta}{\tilde \Delta_{\min}-2\Delta}\leq \frac{3}{2}$.
\begin{lemma}\label{lem:alg1:tmin}
Let $t$ be the time step when the condition $1\leq \frac{\tilde \Delta_{\min}+2\Delta}{\tilde \Delta_{\min}-2\Delta}\leq \frac{3}{2}$ holds true for the first time. If the event $G$ holds, then $t\leq \frac{800\log (\frac{16T}{ \delta})}{ \Delta_{\min}^2}$.
\end{lemma}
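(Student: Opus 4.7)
The plan is to reduce the compound inequality in the algorithmic condition to a single clean inequality, then use the deviation bound from the immediately preceding lemma to convert this into a deterministic condition on $t$ that is forced by the definition $\Delta = \sqrt{2\log(16T/\delta)/t}$.

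First I would simplify the condition $1\leq \frac{\tilde \Delta_{\min}+2\Delta}{\tilde \Delta_{\min}-2\Delta}\leq \frac{3}{2}$. The left inequality is well-defined only when $\tilde\Delta_{\min}-2\Delta>0$, and in that regime it is automatic since $(\tilde\Delta_{\min}+2\Delta)-(\tilde\Delta_{\min}-2\Delta)=4\Delta\geq 0$. Cross-multiplying in the right inequality (legal because $\tilde\Delta_{\min}-2\Delta>0$) gives $2(\tilde\Delta_{\min}+2\Delta)\leq 3(\tilde\Delta_{\min}-2\Delta)$, which simplifies to $\tilde\Delta_{\min}\geq 10\Delta$. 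So the two-sided condition collapses to a single clean requirement $\tilde\Delta_{\min}\geq 10\Delta$.

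Next I would invoke the previous lemma. Under the event $G$ we have $|\Delta_{\min}-\tilde\Delta_{\min}|\leq 2\Delta$, hence $\tilde\Delta_{\min}\geq \Delta_{\min}-2\Delta$. Therefore a sufficient condition for $\tilde\Delta_{\min}\geq 10\Delta$ is $\Delta_{\min}-2\Delta\geq 10\Delta$, i.e.\ $\Delta\leq \Delta_{\min}/12$.

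Finally I would substitute $\Delta=\sqrt{2\log(16T/\delta)/t}$: the requirement $\Delta\leq \Delta_{\min}/12$ is equivalent to $t\geq 288\log(16T/\delta)/\Delta_{\min}^{2}$, and in particular is satisfied as soon as $t\geq 800\log(16T/\delta)/\Delta_{\min}^{2}$. Since the algorithm increments $t$ by one each iteration and $t$ is the first time step at which the condition holds, this gives $t\leq 800\log(16T/\delta)/\Delta_{\min}^{2}$.

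There is no real obstacle here beyond bookkeeping: the only delicate step is verifying that cross-multiplying in the fractional inequality is legitimate, which is handled by noting that the condition forces the denominator to be positive. The stated constant $800$ is comfortably larger than the tight value $288$, so one need not be careful about sharp constants.
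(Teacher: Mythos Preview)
Your proposal is correct and follows essentially the same approach as the paper: both arguments use the deviation bound $|\Delta_{\min}-\tilde\Delta_{\min}|\leq 2\Delta$ under $G$ and verify that the algorithmic condition is met once $\Delta$ is a small enough fraction of $\Delta_{\min}$. The paper plugs in $t=800\log(16T/\delta)/\Delta_{\min}^{2}$ directly (yielding $\Delta=\Delta_{\min}/20$) and bounds numerator and denominator separately, whereas you first reduce the two-sided condition to the equivalent $\tilde\Delta_{\min}\geq 10\Delta$ and then derive the threshold; the substance is identical.
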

\begin{proof}
Consider the time step $t= \frac{800\log (\frac{16T}{ \delta})}{ \Delta_{\min}^2}$. Let us assume that the event $G$ holds. Then for every element $(i,j)$, we have  $|A_{ij}-\bar A_{ij}|\leq \Delta=\sqrt{\frac{2\log(\frac{16T}{\delta})}{t}}=\frac{\Delta_{\min}}{20}$. Now observe that $\tilde \Delta_{\min}+2\Delta\leq \Delta_{\min}+4\Delta = \frac{6\Delta_{\min}}{5}$. Similarly, we have $\tilde \Delta_{\min}-2\Delta\geq \Delta_{\min}-4\Delta \geq \frac{4\Delta_{\min}}{5}$. Hence, we have $1\leq \frac{\tilde \Delta_{\min}+2\Delta}{\tilde \Delta_{\min}-2\Delta}\leq \frac{3}{2}$. 

%Also observe that $\tilde D\leq |D|+4\Delta \leq \frac{7|D|}{6}$. Similarly, we have $\tilde D\geq |D|-4\Delta \geq \frac{5|D|}{6}$. Hence, we have $\frac{5}{6}\leq \frac{\tilde D}{|D|} \leq \frac{7}{6}$. 
\end{proof}

The following lemma bounds the ratio $\frac{\tilde \Delta_{\min}}{\Delta_{\min}}$.
\begin{lemma}\label{lem:alg1:dmin:ratio}
Let $t$ be the time step when the condition $1\leq \frac{\tilde \Delta_{\min}+2\Delta}{\tilde \Delta_{\min}-2\Delta}\leq \frac{3}{2}$ holds true for the first time. If the event $G$ holds, then $\frac{5}{6}\leq \frac{\tilde \Delta_{\min}}{\Delta_{\min}} \leq \frac{5}{4}$ at the time step $t$.
\end{lemma}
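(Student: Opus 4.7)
The plan is to combine the previous lemma's bound $|\Delta_{\min} - \tilde\Delta_{\min}| \leq 2\Delta$ with an algebraic consequence of the stopping condition $1 \leq \frac{\tilde\Delta_{\min} + 2\Delta}{\tilde\Delta_{\min} - 2\Delta} \leq \frac{3}{2}$ to pin down how large $\Delta$ can be relative to $\tilde\Delta_{\min}$, and then propagate this into a multiplicative bound on the ratio.

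First, I would unpack the upper inequality of the condition. Rearranging $\frac{\tilde\Delta_{\min} + 2\Delta}{\tilde\Delta_{\min} - 2\Delta} \leq \frac{3}{2}$ (noting that the denominator is positive since the ratio is at least $1$) gives $2\tilde\Delta_{\min} + 4\Delta \leq 3\tilde\Delta_{\min} - 6\Delta$, which simplifies to $\tilde\Delta_{\min} \geq 10\Delta$, equivalently $2\Delta \leq \tilde\Delta_{\min}/5$.

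Next, under event $G$ the preceding lemma gives $|\Delta_{\min} - \tilde\Delta_{\min}| \leq 2\Delta$, so combining with the bound just derived I get $|\Delta_{\min} - \tilde\Delta_{\min}| \leq \tilde\Delta_{\min}/5$. Rewriting this two-sided inequality yields
\begin{equation*}
\tfrac{4}{5}\tilde\Delta_{\min} \leq \Delta_{\min} \leq \tfrac{6}{5}\tilde\Delta_{\min}.
\end{equation*}
Taking reciprocals (all quantities are positive, since $\tilde\Delta_{\min} \geq 10\Delta > 0$ and hence $\Delta_{\min} \geq \tilde\Delta_{\min} - 2\Delta > 0$ as well) and multiplying by $\tilde\Delta_{\min}$ gives $\tfrac{5}{6} \leq \tilde\Delta_{\min}/\Delta_{\min} \leq \tfrac{5}{4}$, which is precisely the claim.

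No step looks to be a real obstacle: the argument is essentially a one-line algebraic manipulation of the condition to obtain $2\Delta \leq \tilde\Delta_{\min}/5$, followed by substitution into the concentration bound already established. The only mild subtlety is keeping track of signs to justify reciprocating, but positivity of $\Delta_{\min}$ and $\tilde\Delta_{\min}$ both follow immediately from $\tilde\Delta_{\min} \geq 10\Delta$.
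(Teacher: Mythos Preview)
Your proof is correct and follows essentially the same approach as the paper: both extract $\Delta \leq \tilde\Delta_{\min}/10$ from the stopping condition, combine with the concentration bound $|\Delta_{\min}-\tilde\Delta_{\min}|\leq 2\Delta$, and then convert the resulting additive control into the multiplicative bound $\tfrac{5}{6}\leq \tilde\Delta_{\min}/\Delta_{\min}\leq \tfrac{5}{4}$. The only cosmetic difference is that the paper bounds the ratio directly via $\tilde\Delta_{\min}/(\tilde\Delta_{\min}\pm 2\Delta)$ rather than first sandwiching $\Delta_{\min}$ and then reciprocating.
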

\begin{proof}
Let us assume that the event $G$ holds. Then for every element $(i,j)$, we have  $|A_{ij}-\bar A_{ij}|\leq \Delta$. As $\frac{\tilde \Delta_{\min}+2\Delta}{\tilde \Delta_{\min}-2\Delta}\leq \frac{3}{2}$, we have $\Delta\leq \frac{\tilde \Delta_{\min}}{10}$.
Now observe that $\frac{\tilde \Delta_{\min}}{\Delta_{\min}}\leq \frac{\tilde\Delta_{\min}}{\tilde\Delta_{\min}-2\Delta}\leq \frac{\tilde\Delta_{\min}}{4\tilde\Delta_{\min}/5}= \frac{5}{4}$. Next observe that $\frac{\tilde \Delta_{\min}}{\Delta_{\min}}\geq \frac{\tilde \Delta_{\min}}{\tilde \Delta_{\min}+2\Delta}\geq\frac{\tilde \Delta_{\min}}{6\tilde \Delta_{\min}/5}= \frac{5}{6}$.
\end{proof}

\noindent
The following lemma and the subsequent corollary relates the empirical matrix $\bar A$ to the input matrix $A$.
\begin{lemma}\label{lem:alg1:gap}
Let $t$ be the time step when the condition $1\leq \frac{\tilde \Delta_{\min}+2\Delta}{\tilde \Delta_{\min}-2\Delta}\leq \frac{3}{2}$ holds true for the first time. If the event $G$ holds, then at any time step $t_0$ such that  $t\leq t_0\leq T$, we have the following:
\begin{itemize}
    \item If $A_{ij_1}>A_{ij_2}$, then $\bar A_{ij_1}>\bar A_{ij_2}$ 
    \item If $A_{i_1j}>A_{i_2j}$, then $\bar A_{i_1j}>\bar A_{i_2j}$
    \item If $\bar A_{ij_1}>\bar A_{ij_2}$, then $ A_{ij_1}> A_{ij_2}$ 
    \item If $\bar A_{i_1j}>\bar A_{i_2j}$, then $ A_{i_1j}> A_{i_2j}$
\end{itemize}
\end{lemma}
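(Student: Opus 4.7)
The plan is to leverage the two previous lemmas to pin down how small the confidence width $\Delta$ is relative to $\Delta_{\min}$, and then argue that any row-wise or column-wise strict ordering among entries of $A$ exceeds twice this width, so empirical estimates cannot flip the ordering. A key observation is that although the value of $\Delta$ in the statement is the one computed at time $t$, what we actually need to compare against is the confidence radius at the later step $t_0\geq t$, namely $\sqrt{2\log(16T/\delta)/t_0}$, which is monotone decreasing in $t_0$ and hence upper bounded by the value $\Delta$ at time $t$.

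First I would combine Lemma~\ref{lem:alg1:dmin:ratio} with the defining inequality $\tilde\Delta_{\min}+2\Delta \leq \tfrac{3}{2}(\tilde\Delta_{\min}-2\Delta)$, which gives $\Delta \leq \tilde\Delta_{\min}/10$. Together with $\tilde\Delta_{\min}\leq \tfrac{5}{4}\Delta_{\min}$ from Lemma~\ref{lem:alg1:dmin:ratio}, this yields $\Delta \leq \Delta_{\min}/8$. In particular $\Delta_{\min}>0$, so all row- and column-wise pairs of entries of $A$ are strictly ordered. Next, for any time step $t_0\in[t,T]$, event $G$ gives $|A_{ij}-\bar A_{ij,t_0}|\leq \sqrt{2\log(16T/\delta)/t_0}\leq \Delta$ for every $(i,j)$.

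For the forward implications, suppose $A_{ij_1}>A_{ij_2}$ (the column case is identical). Then $A_{ij_1}-A_{ij_2}\geq \Delta_{\min}\geq 8\Delta$, so
\[
\bar A_{ij_1,t_0}-\bar A_{ij_2,t_0} \;\geq\; (A_{ij_1}-\Delta)-(A_{ij_2}+\Delta) \;=\; A_{ij_1}-A_{ij_2}-2\Delta \;\geq\; 6\Delta \;>\;0,
\]
which gives $\bar A_{ij_1,t_0}>\bar A_{ij_2,t_0}$. For the reverse implications, I would argue by contrapositive: if $A_{ij_1}\leq A_{ij_2}$, then since $\Delta_{\min}>0$ implies that the entries are actually distinct, we have $A_{ij_2}>A_{ij_1}$, and the forward direction just established gives $\bar A_{ij_2,t_0}>\bar A_{ij_1,t_0}$, contradicting $\bar A_{ij_1,t_0}>\bar A_{ij_2,t_0}$. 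The column analogues are verbatim.

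There is no real obstacle here; the only thing that requires care is to remember that we should compare $A-\bar A$ at time $t_0$, not at time $t$, and use monotonicity of the confidence radius to reduce the bound to the single number $\Delta$. Once that is observed, the rest is a one-line triangle-inequality argument for each of the four bullet points.
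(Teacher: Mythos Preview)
Your proof is correct and essentially mirrors the paper's: derive $\Delta\leq\tilde\Delta_{\min}/10\leq\Delta_{\min}/8$ from the ratio condition and Lemma~\ref{lem:alg1:dmin:ratio}, bound $|A_{ij}-\bar A_{ij,t_0}|\leq\Delta$ via event $G$ and monotonicity of the confidence radius in $t_0$, and conclude with a gap argument. The only cosmetic difference is that the paper proves the reverse implications directly using $|\bar A_{ij_1}-\bar A_{ij_2}|\geq\tilde\Delta_{\min}$ and $\Delta\leq\tilde\Delta_{\min}/10$, whereas you argue by contrapositive after noting $\Delta_{\min}>0$; both are fine.
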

\begin{proof}
As $\frac{\tilde \Delta_{\min}+2\Delta}{\tilde \Delta_{\min}-2\Delta}\leq \frac{3}{2}$, we have $\Delta\leq \frac{\tilde \Delta_{\min}}{10}$. Due to Lemma \ref{lem:alg1:dmin:ratio}, we have $\Delta\leq \frac{\Delta_{\min}}{8}$. As event $G$ holds, for any element $(i,j)$, we have $|A_{ij}-\bar A_{ij}|\leq \sqrt{\frac{2\log(\frac{16T}{\delta})}{t_0}}\leq \Delta $.

If $A_{ij_1}>A_{ij_2}$, we have the following:
\begin{align*}
    \bar A_{ij_1} & \geq A_{ij_1}-\Delta \\
    & \geq A_{ij_2}+\Delta_{\min}-\Delta \tag{as $A_{ij_1}- A_{ij_2}\geq \Delta_{\min}$}\\
    & > A_{ij_2} + \Delta \tag{as $\Delta\leq \frac{ \Delta_{\min}}{8}$}\\
    & \geq \bar A_{ij_2} \tag{as event $G$ holds}\\
\end{align*}

If $A_{i_1j}>A_{i_2j}$, we have the following:
\begin{align*}
    \bar A_{i_1j} & \geq A_{i_1j}-\Delta \\
    & \geq A_{i_2j}+\Delta_{\min}-\Delta \tag{as $A_{i_1j}- A_{i_2j}\geq \Delta_{\min}$}\\
    & > A_{i_2j} + \Delta \tag{as $\Delta\leq \frac{ \Delta_{\min}}{8}$}\\
    & \geq \bar A_{i_2j} \tag{as event $G$ holds}\\
\end{align*}

If $\bar A_{ij_1}>\bar A_{ij_2}$, we have the following:
\begin{align*}
    A_{ij_1} & \geq \bar A_{ij_1}-\Delta \\
    & \geq \bar A_{ij_2}+\tilde\Delta_{\min}-\Delta \tag{as $\bar A_{ij_1}-\bar A_{ij_2}\geq \tilde\Delta_{\min}$}\\
    & >\bar A_{ij_2} + \Delta \tag{as $\Delta\leq \frac{\tilde \Delta_{\min}}{10}$}\\
    & \geq A_{ij_2} \tag{as event $G$ holds}\\
\end{align*}

If $\bar A_{i_1j}>\bar A_{i_1j}$, we have the following:
\begin{align*}
    A_{i_1j} & \geq \bar A_{i_1j}-\Delta \\
    & \geq \bar A_{i_2j}+\tilde\Delta_{\min}-\Delta \tag{as $\bar A_{i_1j}-\bar A_{i_2j}\geq \tilde\Delta_{\min}$}\\
    & >\bar A_{i_2j} + \Delta \tag{as $\Delta\leq \frac{\tilde \Delta_{\min}}{10}$}\\
    & \geq A_{i_2j} \tag{as event $G$ holds}\\
\end{align*}
\end{proof}
\begin{corollary}\label{cor:alg1:saddle}
Let $t$ be the time step when the condition $1\leq \frac{\tilde \Delta_{\min}+2\Delta}{\tilde \Delta_{\min}-2\Delta}\leq \frac{3}{2}$ holds true for the first time. If the event $G$ holds, then at any time step $t_0$ such that  $t\leq t_0\leq T$, we have the following:
\begin{itemize}
    \item $(i,j)$ is PSNE of $A$ if and only if $(i,j)$ is a PSNE of $\bar A$.    
    \item $A$ does not have a PSNE if and only if $\bar A$ does not have a PSNE.
\end{itemize}
\end{corollary}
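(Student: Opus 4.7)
\textbf{Proof plan for Corollary~\ref{cor:alg1:saddle}.}
The plan is to reduce the statement to Lemma~\ref{lem:alg1:gap} by rewriting the PSNE condition purely in terms of strict inequalities between pairs of entries in the same row or column. Recall that for a $2\times 2$ matrix, $(i,j)$ is a PSNE if and only if $A_{ij}\geq A_{i'j}$ for the other row index $i'$ and $A_{ij}\leq A_{ij'}$ for the other column index $j'$. Since the hypothesis $1\leq \tfrac{\tilde \Delta_{\min}+2\Delta}{\tilde \Delta_{\min}-2\Delta}\leq \tfrac{3}{2}$ forces $\tilde\Delta_{\min}>0$, and Lemma~\ref{lem:alg1:dmin:ratio} then gives $\Delta_{\min}\geq \tfrac{4}{5}\tilde\Delta_{\min}>0$, all pairwise differences of entries in the same row or same column of $A$ (and of $\bar A$, at any time $t_0\geq t$) are strictly nonzero. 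Hence the PSNE condition can be rewritten with strict inequalities: $(i,j)$ is a PSNE of $A$ iff $A_{ij}>A_{i'j}$ and $A_{ij'}>A_{ij}$, and analogously for $\bar A$.

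For the first bullet, I would argue both directions symmetrically. Suppose $(i,j)$ is a PSNE of $A$, so $A_{ij}>A_{i'j}$ and $A_{ij'}>A_{ij}$. By the first two bullets of Lemma~\ref{lem:alg1:gap}, these strict inequalities transfer to the empirical matrix, yielding $\bar A_{ij}>\bar A_{i'j}$ and $\bar A_{ij'}>\bar A_{ij}$, which is exactly the definition of $(i,j)$ being a PSNE of $\bar A$. Conversely, if $(i,j)$ is a PSNE of $\bar A$, the last two bullets of Lemma~\ref{lem:alg1:gap} transfer the strict inequalities back to $A$, giving that $(i,j)$ is a PSNE of $A$.

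For the second bullet, I would simply apply the first bullet contrapositively over all four candidate entries $(i,j)\in\{1,2\}\times\{1,2\}$: $A$ has no PSNE iff for every $(i,j)$ it is not a PSNE of $A$ iff (by bullet one) it is not a PSNE of $\bar A$, iff $\bar A$ has no PSNE.

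There is essentially no obstacle here; the whole content sits inside Lemma~\ref{lem:alg1:gap}, and the only thing to be careful about is noting that the strict/non-strict version of the PSNE inequalities coincide under the assumption $\Delta_{\min}>0$, so that Lemma~\ref{lem:alg1:gap}'s strict-inequality conclusions suffice.
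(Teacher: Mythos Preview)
Your proposal is correct and is precisely the argument the paper has in mind: the corollary is stated immediately after Lemma~\ref{lem:alg1:gap} with no separate proof, because it follows exactly as you describe by rewriting the PSNE condition via strict inequalities (legitimate since $\Delta_{\min}>0$ and the empirical row/column gaps are also nonzero) and then invoking the four order-preservation bullets of Lemma~\ref{lem:alg1:gap} in both directions.
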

The following lemma bounds the ratio $\frac{\tilde D}{|D|}$.
\begin{lemma}\label{lem:alg1:D:ratio}
Let $t$ be the time step when the condition $1\leq \frac{\tilde \Delta_{\min}+2\Delta}{\tilde \Delta_{\min}-2\Delta}\leq \frac{3}{2}$ holds true for the first time. If the event $G$ holds and $A$ has a unique Equilibrium which is not a PSNE, then $\frac{5}{6}\leq \frac{\tilde D}{|D|} \leq \frac{5}{4}$ at the time step $t$.
\end{lemma}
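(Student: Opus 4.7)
The plan is to mimic the argument of Lemma~\ref{lem:alg1:dmin:ratio} applied to the pair $(\tilde D, |D|)$ instead of $(\tilde\Delta_{\min}, \Delta_{\min})$. Under the event $G$ at time step $t$, we have $|A_{ij} - \bar A_{ij}| \leq \Delta$ for every $(i,j)$. Applying the triangle inequality to the four-term sum and then the reverse triangle inequality to the outer absolute values gives $\bigl| \tilde D - |D| \bigr| \leq 4\Delta$. From this, exactly as in Lemma~\ref{lem:alg1:dmin:ratio}, one obtains
\[
\frac{\tilde D}{|D|} \;\leq\; \frac{\tilde D}{\tilde D - 4\Delta}, \qquad \frac{\tilde D}{|D|} \;\geq\; \frac{\tilde D}{\tilde D + 4\Delta},
\]
so both target inequalities will follow once I establish $4\Delta \leq \tilde D/5$, equivalently $\Delta \leq \tilde D/20$.

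To get this last inequality, I will combine two facts. First, the triggering condition $\frac{\tilde\Delta_{\min}+2\Delta}{\tilde\Delta_{\min}-2\Delta} \leq \frac{3}{2}$ already yields $\Delta \leq \tilde\Delta_{\min}/10$, exactly as derived in the proof of Lemma~\ref{lem:alg1:dmin:ratio}. Second, I claim $\tilde D \geq 2\tilde\Delta_{\min}$. Since $A$ has a unique equilibrium that is not a PSNE, Proposition~\ref{prop:matrix:nosaddle} puts $A$ in the all-$<$ or all-$>$ dichotomy, and Corollary~\ref{cor:alg1:saddle} (whose hypotheses are in force here) transports this property to $\bar A$, so $\bar A$ also falls into the same dichotomy. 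In either case the signs of $\bar A_{11}-\bar A_{12}$ and $\bar A_{22}-\bar A_{21}$ agree, hence
\[
\tilde D = \bigl|\bar A_{11}-\bar A_{12}\bigr| + \bigl|\bar A_{22}-\bar A_{21}\bigr| \;\geq\; 2\tilde\Delta_{\min},
\]
which is the needed bound. Chaining, $\Delta \leq \tilde\Delta_{\min}/10 \leq \tilde D/20$.

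Plugging $4\Delta \leq \tilde D/5$ into the two ratio bounds above gives $\tilde D - 4\Delta \geq 4\tilde D/5$ and $\tilde D + 4\Delta \leq 6\tilde D/5$, so that $\tilde D/|D| \leq 5/4$ and $\tilde D/|D| \geq 5/6$, respectively, which is the desired two-sided estimate.

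The only nontrivial step is the inequality $\tilde D \geq 2\tilde\Delta_{\min}$: it is precisely the place where the no-PSNE structure of $A$ must be transported to $\bar A$, and this transportation relies on the triggering condition through $\Delta \leq \Delta_{\min}/8$ (via Lemma~\ref{lem:alg1:dmin:ratio}) so that Corollary~\ref{cor:alg1:saddle} applies. Once this structural step is in hand, the rest is the same routine $4\Delta$-slack manipulation used in Lemma~\ref{lem:alg1:dmin:ratio}.
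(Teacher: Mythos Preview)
Your proof is correct and follows essentially the same route as the paper: both derive $\Delta \le \tilde\Delta_{\min}/10$ from the triggering condition, combine it with $\tilde D \ge 2\tilde\Delta_{\min}$ to obtain $4\Delta \le \tilde D/5$, and then finish with the same two-sided ratio estimate via $|D| \in [\tilde D - 4\Delta,\ \tilde D + 4\Delta]$. The only difference is cosmetic: the paper simply asserts $2\tilde\Delta_{\min}\le \tilde D$, whereas you spell out the justification (transporting the no-PSNE structure of $A$ to $\bar A$ so that $\tilde D = |\bar A_{11}-\bar A_{12}| + |\bar A_{22}-\bar A_{21}|$); note that Lemma~\ref{lem:alg1:gap} would give you the sign agreement on $\bar A$ directly, slightly more cleanly than going through Corollary~\ref{cor:alg1:saddle} and then reapplying Proposition~\ref{prop:matrix:nosaddle} to $\bar A$.
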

\begin{proof}
Let us assume that the event $G$ holds. Then for every element $(i,j)$, we have  $|A_{ij}-\bar A_{ij}|\leq \Delta$. As $\frac{\tilde \Delta_{\min}+2\Delta}{\tilde \Delta_{\min}-2\Delta}\leq \frac{3}{2}$ and $2\tilde\Delta_{\min}\leq \tilde D$, we have $\Delta\leq \frac{\tilde\Delta_{\min}}{10}\leq \frac{\tilde D}{20}$.
Now observe that $\frac{\tilde D}{|D|}\leq \frac{\tilde D}{\tilde D-4\Delta}\leq \frac{\tilde D}{4\tilde D/5}= \frac{5}{4}$. Next observe that $\frac{\tilde D}{|D|}\geq \frac{\tilde D}{\tilde D+4\Delta}\geq\frac{\tilde D}{6\tilde D/5}= \frac{5}{6}$.
\end{proof}

The following two lemmas bound $|D|$ when certain conditions in the algorithm \ref{alg-ucb-1} hold true.
\begin{lemma}\label{lem:alg1:D<eps}
If the condition in the line \ref{alg1:con2} of the algorithm \ref{alg-ucb-1} holds true and event $G$ holds, then $|D|<12 \varepsilon$
\end{lemma}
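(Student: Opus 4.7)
The plan is to reduce this to a direct application of Lemma~\ref{lem:alg1:D:ratio} combined with the triggering condition $\tilde D < 10\varepsilon$. The only genuine subtlety is verifying that the hypotheses of Lemma~\ref{lem:alg1:D:ratio} are in force, namely that $A$ has a unique Nash equilibrium which is not a PSNE; once that is established, the rest is a one-line arithmetic calculation.

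First, I would argue that at the time step $t$ where line~\ref{alg1:con2} fires, the matrix $A$ necessarily has a unique Nash equilibrium which is not a PSNE. The branching in the algorithm is such that line~\ref{alg1:con2} is only executed when the ratio condition $1 \leq \frac{\tilde\Delta_{\min}+2\Delta}{\tilde\Delta_{\min}-2\Delta} \leq \frac{3}{2}$ holds, yet line~\ref{alg1:con1} did not fire, i.e.\ $\bar A$ does not have a PSNE. Under event $G$, Corollary~\ref{cor:alg1:saddle} then transfers this to $A$: $A$ does not have a PSNE. Combining Proposition~\ref{prop:matrix:dmin} (which gives $\Delta_{\min} > 0$ whenever no PSNE exists) with Proposition~\ref{prop:matrix:main}, we conclude $A$ has a unique Nash equilibrium which is not a PSNE.

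Second, I would note that $t$ is in fact the first time step at which the ratio condition holds, simply because the algorithm exits the \textbf{for} loop the first time any of the three branches fires. Thus Lemma~\ref{lem:alg1:D:ratio} is directly applicable and gives $\tilde D \geq \tfrac{5}{6}|D|$, i.e.\ $|D| \leq \tfrac{6}{5}\tilde D$.

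Finally, combining this with the triggering hypothesis $\tilde D < 10\varepsilon$ yields
\[
|D| \;\leq\; \tfrac{6}{5}\tilde D \;<\; \tfrac{6}{5}\cdot 10\varepsilon \;=\; 12\varepsilon,
\]
as desired. The main (and really only) obstacle is the bookkeeping that ensures Lemma~\ref{lem:alg1:D:ratio}'s structural hypothesis on $A$ is satisfied when line~\ref{alg1:con2} fires; everything else is immediate from the previously established lemmas.
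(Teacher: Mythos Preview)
Your proposal is correct and follows essentially the same approach as the paper: apply Lemma~\ref{lem:alg1:D:ratio} to obtain $|D|\le \tfrac{6}{5}\tilde D$, then combine with $\tilde D<10\varepsilon$. Your version is slightly more careful in explicitly verifying the structural hypothesis of Lemma~\ref{lem:alg1:D:ratio} (that $A$ has a unique non-PSNE equilibrium) from the branching logic and Corollary~\ref{cor:alg1:saddle}, whereas the paper's one-line proof leaves this implicit.
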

\begin{proof}
Due to Lemma \ref{lem:alg1:D:ratio}, we have $|D|\leq \frac{6\tilde D}{5}<12\varepsilon$.
\end{proof}

\begin{lemma}\label{lem:alg1:D>eps}
If the condition in the line \ref{alg1:con3} of the algorithm \ref{alg-ucb-1} holds true and event $G$ holds, then $|D|\geq 8 \varepsilon$
\end{lemma}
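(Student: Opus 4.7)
The plan is to chain together the two hypotheses of the lemma through the already-established ratio bound for $\tilde D/|D|$. Specifically, the condition in line \ref{alg1:con3} asserts two things simultaneously: the stabilization bound $1 \leq \tfrac{\tilde\Delta_{\min}+2\Delta}{\tilde\Delta_{\min}-2\Delta} \leq \tfrac{3}{2}$ (which triggers Lemma \ref{lem:alg1:D:ratio}) and the quantitative lower bound $\tilde D \geq 10\varepsilon$. These are exactly the two ingredients needed.

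First I would invoke Lemma \ref{lem:alg1:D:ratio}. Note a subtlety: that lemma is stated at the first time step $t$ when the stabilization condition holds, and additionally presumes $A$ has a unique Nash equilibrium which is not a PSNE. Under event $G$, Corollary \ref{cor:alg1:saddle} guarantees that $A$ has a PSNE iff $\bar A$ does, so the fact that the algorithm reached line \ref{alg1:con3} (which is only executed when $\bar A$ has no PSNE in the branching structure, paralleling line \ref{alg1:con1}) confirms that $A$ has no PSNE either, hence $A$ has a unique NE which is not a PSNE. Thus Lemma \ref{lem:alg1:D:ratio} applies and yields $|D| \geq \tfrac{4}{5}\tilde D$.

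Combining with $\tilde D \geq 10\varepsilon$ gives
\begin{equation*}
|D| \;\geq\; \tfrac{4}{5}\tilde D \;\geq\; \tfrac{4}{5}\cdot 10\varepsilon \;=\; 8\varepsilon,
\end{equation*}
which is the desired conclusion. There is no real obstacle here; the only care needed is the bookkeeping check that Lemma \ref{lem:alg1:D:ratio}'s precondition (no PSNE for $A$) is met when line \ref{alg1:con3} is executed, which follows from Corollary \ref{cor:alg1:saddle} under event $G$.
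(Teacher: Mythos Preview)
The proposal is correct and follows the same approach as the paper: invoke Lemma~\ref{lem:alg1:D:ratio} to get $|D|\geq \tfrac{4}{5}\tilde D$ and combine with the line-\ref{alg1:con3} hypothesis $\tilde D\geq 10\varepsilon$. Your added remark verifying the no-PSNE precondition of Lemma~\ref{lem:alg1:D:ratio} via Corollary~\ref{cor:alg1:saddle} is a valid elaboration that the paper leaves implicit (in the surrounding proof of Theorem~\ref{thm:alg1} this hypothesis is simply assumed at that point).
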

\begin{proof}
Due to Lemma \ref{lem:alg1:D:ratio}, we have $|D|\geq \frac{4\tilde D}{5}\geq 8\varepsilon$.
\end{proof}
For any $A\in \mathbb{R}^{m\times n}$, let $A(\Delta):=\{B\in \mathbb{R}^{n\times n}:\max_{i,j}|(B-A)_{ij}|\leq \Delta\}$.
We now present the main lemma that establishes the correctness of the algorithm \ref{alg-ucb-1} when the input matrix $A$ does not have a PSNE.
\begin{lemma}\label{lem:alg1:goodsol}
If the condition in the line \ref{alg1:con3} of the algorithm \ref{alg-ucb-1} holds true and event $G$ holds, then Nash equilibrium of the empirical matrix $\bar A$ is also an $\varepsilon$-good solution of the input matrix $A$.
\end{lemma}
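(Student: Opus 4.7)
The plan is to reduce the claim to a quadratic-in-error cancellation using the unique mixed Nash equilibrium structure, rather than a naive entry-wise perturbation bound. Since line~\ref{alg1:con3} is triggered, the algorithm's condition implies $\bar A$ has no PSNE, and combining with event $G$ and Corollary~\ref{cor:alg1:saddle} gives that $A$ also has no PSNE. Hence by Proposition~\ref{prop:matrix:main}, both $A$ and $\bar A$ have a unique mixed Nash equilibrium, given by the explicit formulas in terms of the entries and the determinants $D$ and $\bar D$.

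The crucial identity I would prove is
\[\langle \bar x, A\bar y\rangle - V_A^* = D \cdot (\bar x_1 - x_1^*)(\bar y_1 - y_1^*).\]
This follows in two steps. First, by the full-support NE property recorded in Appendix~\ref{appendix:properties:matrix}, $A y^* = V_A^* \mathbf{1}$ and $(x^*)^\top A = V_A^* \mathbf{1}^\top$; expanding $\langle \bar x, A \bar y\rangle = \langle x^* + (\bar x - x^*), A(y^* + (\bar y - y^*))\rangle$ kills the two cross terms, leaving $\langle \bar x, A \bar y\rangle - V_A^* = \langle \bar x - x^*, A(\bar y - y^*)\rangle$. Second, since $\bar x - x^*$ and $\bar y - y^*$ each sum to zero on $\simplex_2$, the remaining bilinear form collapses to $D \cdot (\bar x_1 - x_1^*)(\bar y_1 - y_1^*)$. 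This cancellation is the heart of the argument: a direct bound would only yield an $O(\Delta_0)$ error, but this identity exposes a $\Delta_0^2/|\bar D|$ scaling.

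Next I would bound $|\bar x_1 - x_1^*|$ and $|\bar y_1 - y_1^*|$. Using $x_1^* = (d-c)/D$ and $\bar x_1 = (\bar d - \bar c)/\bar D$, algebraic expansion with $\delta_a = \bar a - a$ etc.\ yields $\bar x_1 - x_1^* = \frac{(a-b)(\delta_d - \delta_c) - (d-c)(\delta_a - \delta_b)}{D \bar D}$. Since $A$ has no PSNE, Proposition~\ref{prop:matrix:nosaddle} gives that $a-b$ and $d-c$ have the same sign, so $|a-b|+|d-c|=|D|$. Letting $\Delta_0$ denote the entry-wise concentration radius under $G$ after $N + t$ total samples, this yields $|\bar x_1 - x_1^*| \leq 2\Delta_0/|\bar D|$ and analogously for $\bar y_1 - y_1^*$. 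Plugging in gives $|\langle \bar x, A\bar y\rangle - V_A^*| \leq \frac{4 \Delta_0^2 |D|}{|\bar D|^2}$.

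Finally I would substitute the quantitative inputs: by Lemma~\ref{lem:alg1:D:ratio}, $|D|/|\bar D| \leq 6/5$, so the bound simplifies to at most $\frac{24 \Delta_0^2}{5 \tilde D}$. The choice $N = 80 \log(16T/\delta)/(\varepsilon \tilde D)$ in the algorithm, combined with event $G$, gives $\Delta_0^2 \leq 2\log(16T/\delta)/N = \varepsilon \tilde D / 40$, so the overall bound is $\tfrac{24}{200}\varepsilon < \varepsilon$. The main obstacle is simply spotting the cancellation in the second paragraph; without it one is stuck at an $O(\Delta_0)$ rate that cannot close the gap when $|D| \ll 1$, and the sample-complexity improvement over the minimax strategy would be lost.
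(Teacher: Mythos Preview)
Your approach is correct and follows the same essential idea as the paper---both exploit the quadratic-in-perturbation cancellation coming from the full-support mixed NE structure---but your packaging is cleaner. The paper routes through the general $n\times n$ identity of Lemma~\ref{prop:st1}, namely $V_B^* - \langle x^*, By^*\rangle = \sum_j y_j^*\sum_i \theta_i \Delta_{ij}$, and then bounds $|\theta_i|$ via a first-order expansion of the NE formula (Lemma~\ref{2matrix-1}), obtaining $|V_A^* - \langle \bar x, A\bar y\rangle| \leq 16\Delta^2/|D_{\bar A}|$. Your $2\times 2$-specific identity $\langle \bar x, A\bar y\rangle - V_A^* = D(\bar x_1 - x_1^*)(\bar y_1 - y_1^*)$ is more direct and, combined with the exact expression for $\bar x_1 - x_1^*$, yields the sharper $4\Delta_0^2|D|/|\bar D|^2$. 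The paper's route has the advantage of generalizing to $n\times n$; yours has the advantage of transparency and tighter constants.

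One small slip to fix: in the last paragraph you invoke Lemma~\ref{lem:alg1:D:ratio} for ``$|D|/|\bar D|\leq 6/5$'', but that lemma controls $\tilde D/|D|$ at the time step $t$ when the condition is first triggered, whereas here $\bar D$ is the empirical determinant after $t+N$ samples. These are different quantities. The patch is easy: from $\Delta_0^2 \leq \varepsilon\tilde D/40 \leq \varepsilon|D|/32$ (using $\tilde D \leq 5|D|/4$ from Lemma~\ref{lem:alg1:D:ratio}) and $|D|\geq 8\varepsilon$ (Lemma~\ref{lem:alg1:D>eps}) you get $\Delta_0 \leq |D|/16$, hence $|\bar D| \geq |D| - 4\Delta_0 \geq 3|D|/4$. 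Plugging in gives $\tfrac{4\Delta_0^2|D|}{(3|D|/4)^2} \leq \tfrac{64}{9}\cdot\tfrac{\varepsilon}{32} = \tfrac{2\varepsilon}{9} < \varepsilon$, and the argument closes. The paper handles the analogous issue by directly verifying the hypothesis $\Delta \leq |D_{\bar A}|/12$ of Lemma~\ref{2matrix-1}.
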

\begin{proof}
$\bar A\in A\left(\sqrt{\frac{\varepsilon \tilde D}{40}}\right)$ as event $G$ holds true. Due to Lemma \ref{lem:alg1:D:ratio}, we have $\tilde D\leq \frac{5|D|}{4}$. Hence, $\bar A\in A\left(\sqrt{\frac{\varepsilon |D|}{32}}\right)$ which in turn implies that $A\in \bar A\left(\sqrt{\frac{\varepsilon |D|}{32}}\right)$. Now we show that $\frac{\sqrt{\varepsilon |D|}}{4\sqrt 2}< \frac{|\bar A_{11}-\bar A_{12}-\bar A_{21}+\bar A_{22}|}{12}$.
\begin{align*}
     \frac{|\bar A_{11}-\bar A_{12}-\bar A_{21}+\bar A_{22}|}{12}&\geq \frac{| A_{11}- A_{12}- A_{21}+ A_{22}|-4\cdot \sqrt{\frac{\varepsilon |D|}{32}}}{12} \tag{as $\bar A\in A\left(\sqrt{\frac{\varepsilon |D|}{32}}\right)$}\\
     &= \frac{|D|-\sqrt{\varepsilon|D|/2}}{12}\\
     &\geq \frac{\sqrt{8\varepsilon|D|}-\sqrt{\varepsilon|D|/2}}{12} \tag{as $|D|\geq 8\varepsilon$}\\
     & >\frac{\sqrt{\varepsilon |D|}}{4\sqrt 2}
\end{align*}
Let $(x,y)$ be the Nash equilibrium of $\bar A$. Now by applying Lemma \ref{2matrix-1} we have that $|V_{A}^*-\langle x, A y\rangle|\leq \varepsilon$. We can apply lemma \ref{2matrix-1} as $A\in \bar A\left(\frac{\sqrt{\varepsilon |D|}}{4\sqrt 2}\right)$, $\frac{\sqrt{\varepsilon |D|}}{4\sqrt 2}< \frac{|\bar A_{11}-\bar A_{12}-\bar A_{21}+\bar A_{22}|}{12}$ and $\bar A$ has a unique Nash equilibrium which is not a PSNE (due to Corollary \ref{cor:alg1:saddle}).
\end{proof}

\subsection{Technical Lemmas for Upper Bound}
In this section, we present few technical lemmas that are used to establish the upper bound on the sample complexity of finding $\varepsilon$-good solution.

Let $A\in \mathbb{R}^{n\times n}$. Recall that $A(\Delta):=\{B\in \mathbb{R}^{n\times n}:\max_{i,j}|(B-A)_{ij}|\leq \Delta\}$ and for any $n$-dimensional vector $v$, $v(i)$ denotes its $i$-th component. Now we present the following lemma, where we relate $V_A^*$, the Nash equilibrium of $A$, to $V_{B}^*$ where $B\in A(\Delta)$.
\begin{lemma}\label{prop:st1}
Consider a matrix $A\in \mathbb{R}^{n\times n}$ with unique Nash equilibrium $(x^*,y^*)$ which is not a PSNE. Then for any $B\in A(\Delta)$ that has a unique Nash equilibrium $(x,y)$ which is not a PSNE, we have the following:
\begin{equation*}
   V_B^*=\langle x^*, By^* \rangle+\sum_{j=1}^n y^*(j)\sum_{i=1}^n\theta_{i}\Delta_{ij} 
\end{equation*}
where $\Delta_{ij}:=B_{ij}-A_{ij}$ and $\theta_i=x(i)-x^*(i)$.
\end{lemma}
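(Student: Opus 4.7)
The plan is to start from the identity $V_B^* = \langle x, By\rangle$ (since $(x,y)$ is the Nash equilibrium of $B$) and to reduce the difference $V_B^* - \langle x^*, By^*\rangle$ to a bilinear expression in $\theta = x - x^*$ and $B - A$ by applying the indifference property twice. The key observation is that for a unique Nash equilibrium which is not a PSNE (in the relevant $2\times 2$ setting the lemma is applied to), both mixed strategies have full support, so the indifference conditions hold against every pure strategy rather than only the ones inside the support.

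First I would use the indifference property for $(x,y)$ in $B$: since this equilibrium is unique and not a PSNE, $\langle x, Be_j\rangle = V_B^*$ for every $j \in [n]$. Averaging against $y^*$ gives
\begin{equation*}
\langle x, By^*\rangle \;=\; \sum_{j=1}^n y^*(j)\,\langle x, Be_j\rangle \;=\; V_B^*.
\end{equation*}
Therefore
\begin{equation*}
V_B^* - \langle x^*, By^*\rangle \;=\; \langle x, By^*\rangle - \langle x^*, By^*\rangle \;=\; \langle \theta, By^*\rangle.
\end{equation*}

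Next I would apply the analogous indifference property for $(x^*, y^*)$ in $A$: since this equilibrium is unique and not a PSNE, $\langle e_i, Ay^*\rangle = V_A^*$ for every $i \in [n]$. Hence $\langle x, Ay^*\rangle = \langle x^*, Ay^*\rangle = V_A^*$, and consequently
\begin{equation*}
\langle \theta, Ay^*\rangle \;=\; \langle x, Ay^*\rangle - \langle x^*, Ay^*\rangle \;=\; 0.
\end{equation*}

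Subtracting this zero from the previous display and expanding yields
\begin{equation*}
V_B^* - \langle x^*, By^*\rangle \;=\; \langle \theta, By^*\rangle - \langle \theta, Ay^*\rangle \;=\; \langle \theta, (B-A)y^*\rangle \;=\; \sum_{j=1}^n y^*(j)\sum_{i=1}^n \theta_i \,\Delta_{ij},
\end{equation*}
which is exactly the stated identity. The main step worth double-checking is the justification that a unique non-PSNE equilibrium forces the indifference condition against \emph{every} pure strategy (i.e., full support); this is exactly the content of Proposition~\ref{prop:matrix:saddle} in the $2\times 2$ regime where this lemma is used, so no additional machinery is needed and the entire argument reduces to two applications of indifference followed by rearrangement.
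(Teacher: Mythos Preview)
Your proof is correct and follows essentially the same approach as the paper: both rely on the full-support indifference conditions for the two equilibria to obtain $V_B^* = \langle x, By^*\rangle$ and $\langle \theta, Ay^*\rangle = 0$, then combine. Your presentation is a bit cleaner---the paper expands $\langle x, Be_j\rangle$ column by column into four sums and then averages over $y^*$, whereas you work directly at the level of bilinear forms---but the ingredients and the implicit full-support assumption (justified only in the $2\times 2$ case via Proposition~\ref{prop:matrix:saddle}) are identical.
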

\begin{proof}
For all $i\in[n]$, let $x_i=x(i)$, $y_i=y(i)$, $x^*_i=x^*(i)$ and $y^*_i=y^*(i)$. 

First observe that $\langle x^*, By^* \rangle=V_A^*+\sum_{i,j}x^*_iy^*_j\Delta_{ij}$. Also observe that $\sum_{i=1}^n\theta_i=0$. Let $B_j$ denote the $j$-th column of $B$. Let $V_j:=\langle x, B_j \rangle$. Now we have the following:
\begin{align*}
    \langle x, B_j \rangle&= \sum_{i=1}^n [x_i^*A_{ij}+x_i^*\Delta_{ij}+\theta_iA_{ij}+\theta_{i}\Delta_{ij}]\\
    &=V_A^*+\sum_{i=1}^nx_i^*\Delta_{ij}+\sum_{i=1}^n\theta_iA_{ij}+\sum_{i=1}^n\theta_{i}\Delta_{ij} \tag{ as $\sum_{i=1}^n x_i^*A_{ij}=V_A^*$}
\end{align*}

Let $V=(V_1,\ldots,V_n)$. Since $\supp(x)=\supp(y)=[n]$, therefore we have for all $j\in[n]$, $V_j=V_B^*$. Now we have the following:
\begin{align}
    V_B^*&=\langle V, y^*\rangle \nonumber \\
    &= \sum_{j=1}^n y_j^*V_A^*+\sum_{j=1}^n y_j^*\sum_{i=1}^nx_i^*\Delta_{ij}+\sum_{j=1}^n y_j^*\sum_{i=1}^n\theta_iA_{ij}+\sum_{j=1}^n y_j^*\sum_{i=1}^n\theta_{i}\Delta_{ij}\nonumber \\
    &= V_A^*+ \sum_{i=1}^n\sum_{j=1}^nx_i^*y_j^*\Delta_{ij}+\sum_{i=1}^n\theta_i\sum_{j=1}^n y_j^*A_{ij}+\sum_{j=1}^n y_j^*\sum_{i=1}^n\theta_{i}\Delta_{ij}\nonumber \\
    &= \langle x^*, By^* \rangle+V_A^*\sum_{i=1}^n\theta_i+\sum_{j=1}^n y_j^*\sum_{i=1}^n\theta_{i}\Delta_{ij}\tag{ as $\sum_{j=1}^n y_j^*A_{ij}=V_A^*$}\\
    &= \langle x^*, By^* \rangle+\sum_{j=1}^n y_j^*\sum_{i=1}^n\theta_{i}\Delta_{ij}\nonumber
\end{align}
\end{proof}

Let us define two matrices $A_1$ and $A_2$ as follows:
\[
A_1 = \begin{bmatrix} 
   a & b \\
   c & d \\
    \end{bmatrix}\]
    \[
A_2 = \begin{bmatrix} 
   a+\Delta_{11} & b+\Delta_{12} \\
   c+\Delta_{21} & d+\Delta_{22} \\
    \end{bmatrix}\]
Let $\Delta=\max_{i,j}|\Delta_{ij}|$. Now we present the following lemma, where we upper bound $|V_{A_2}^*-\langle x^*, A_2 y^*\rangle|$ where $(x^*,y^*)$ is the Nash equilibrium of $A_1$.
\begin{lemma}\label{2matrix-1}
Let $A_1$ and $A_2$ have a unique NE which is not a PSNE. Let $(x^*,y^*)$ be the NE of the matrix game $A_1$. Let $\Delta\leq |a-b-c+d|/12$. Then we have the following:
\begin{equation*}
    |V_{A_2}^*-\langle x^*, A_2 y^*\rangle|\leq \frac{16\Delta^2}{|D|}\leq \frac{32\Delta^2}{|D'|} 
\end{equation*}
where $D:=a-b-c+d$, $D':=a-b-c+d+\Delta_{11}-\Delta_{12}+\Delta_{22}-\Delta_{21}$.
\end{lemma}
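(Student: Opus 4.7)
\textbf{Proof proposal for Lemma \ref{2matrix-1}.}

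The plan is to apply Lemma \ref{prop:st1} to reduce the problem to a tight bound on the perturbation $\theta_1 = x(1) - x^*(1)$ of the first coordinate of the Nash strategy of the row player. Since both $A_1$ and $A_2$ have a unique non-PSNE equilibrium, Proposition \ref{prop:matrix:main} gives closed forms for the equilibria, namely $x^* = \big(\tfrac{d-c}{D},\tfrac{a-b}{D}\big)$ and $x = \big(\tfrac{d+\Delta_{22}-c-\Delta_{21}}{D'},\tfrac{a+\Delta_{11}-b-\Delta_{12}}{D'}\big)$. Because $\theta_1 + \theta_2 = 0$, Lemma \ref{prop:st1} yields
\[
V_{A_2}^* - \langle x^*, A_2 y^*\rangle \;=\; \theta_1 \sum_{j=1}^2 y_j^*(\Delta_{1j}-\Delta_{2j}),
\]
so the whole question is: how large can $|\theta_1|$ be?

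First I would compute $\theta_1$ by putting the two closed forms over the common denominator $DD'$ and using the identity $D - (d-c) = a-b$, which collapses the cross-terms and gives
\[
\theta_1 \;=\; \frac{(a-b)(\Delta_{22}-\Delta_{21}) - (d-c)(\Delta_{11}-\Delta_{12})}{DD'}.
\]
Next, using the equilibrium mass bounds $|a-b|\le |D|$ and $|d-c|\le |D|$ (each ratio is a probability in $x^*$), together with $|\Delta_{ij}|\le \Delta$, the numerator is bounded by $4\Delta |D|$, hence $|\theta_1|\le 4\Delta/|D'|$. Then I use the hypothesis $\Delta \le |D|/12$ to show $|D'| \ge |D| - 4\Delta \ge \tfrac{2}{3}|D|$, which upgrades this to $|\theta_1|\le 6\Delta/|D|$.

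Plugging back into the displayed identity and using the trivial bound $|\sum_j y_j^*(\Delta_{1j}-\Delta_{2j})|\le 2\Delta$ gives
\[
|V_{A_2}^* - \langle x^*, A_2 y^*\rangle| \;\le\; 2\Delta \cdot \frac{6\Delta}{|D|} \;=\; \frac{12\Delta^2}{|D|} \;\le\; \frac{16\Delta^2}{|D|}.
\]
The second inequality in the lemma then follows from $|D'|\le |D|+4\Delta \le \tfrac{4}{3}|D|\le 2|D|$. I do not expect any deep obstacle here: the only subtlety is obtaining the clean factored form of $\theta_1$ above, since naively plugging in the closed forms produces a messier numerator that does not immediately exhibit the factor $|D|$ needed to cancel one copy of $|D|$ in the denominator.
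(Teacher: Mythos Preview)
Your proof is correct and follows the same strategy as the paper: apply Lemma~\ref{prop:st1} to reduce to bounding $|\theta_1|$, then use the closed forms from Proposition~\ref{prop:matrix:main} for the equilibria of $A_1$ and $A_2$. Your exact factorization of $\theta_1$ over the common denominator $DD'$ is a bit cleaner than the paper's route (which bounds $x'$ above and below via $(1-z)^{-1}\le 1+\tfrac{3z}{2}$ and $(1+z)^{-1}\ge 1-z$) and yields the slightly tighter intermediate bound $|\theta_1|\le 6\Delta/|D|$, but the overall argument is the same.
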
    
\begin{proof}
Let $(x^*,y^*)=((x,1-x),(y,1-y))$ be the NE of the matrix game $A_1$ where $x=\frac{d-c}{a-b-c+d}$ and $y=\frac{d-b}{a-b-c+d}$. Let $((x',1-x'),(y',1-y'))$ be the NE of the matrix game $A_2$ where $x'=\frac{d-c+\Delta_{22}-\Delta_{21}}{a-b-c+d+\Delta_{11}-\Delta_{12}+\Delta_{22}-\Delta_{21}}$ and $y'=\frac{d-b+\Delta_{22}-\Delta_{12}}{a-b-c+d+\Delta_{11}-\Delta_{12}+\Delta_{22}-\Delta_{21}}$ For convenience, let $N:=d-c$ and $D:=a-b-c+d$. Hence, we have $x=\frac{|N|}{|D|}$ and $x'=\frac{N+\Delta_{22}-\Delta_{21}}{D+\Delta_{11}-\Delta_{12}+\Delta_{22}-\Delta_{21}}$. Now we will upper bound $x'$ as follows:
\begin{align*}
x'&\leq \frac{|N|+2\Delta}{|D|-4\Delta}\\
&=\left(\frac{|N|}{|D|}+\frac{2\Delta}{|D|}\right)\left(1-\frac{4\Delta}{|D|}\right)^{-1}\\
& \leq \left(\frac{|N|}{|D|}+\frac{2\Delta}{|D|}\right)\left(1+\frac{6\Delta}{|D|}\right)\tag{as $\frac{1}{1-z}\leq1+\frac{3z}{2}$ when $0\leq z\leq\frac{1}{3}$}\\
& = \frac{|N|}{|D|}+\frac{2\Delta}{|D|} + \frac{|N|}{|D|}\cdot\frac{6\Delta}{|D|} + \frac{12\Delta^2}{|D|^2}\\
& \leq \frac{|N|}{|D|}+\frac{9\Delta}{D}\tag{as $\frac{|N|}{|D|}\leq 1$ and $\frac{\Delta}{|D|}\leq \frac{1}{12}$}
\end{align*}
Next we will lower $x'$ as follows:
\begin{align*}
x'&\geq \frac{|N|-2\Delta}{|D|+4\Delta}\\
&=\left(\frac{|N|}{|D|}-\frac{2\Delta}{|D|}\right)\left(1+\frac{4\Delta}{|D|}\right)^{-1}\\
& \geq \left(\frac{|N|}{|D|}-\frac{2\Delta}{|D|}\right)\left(1-\frac{4\Delta}{|D|}\right)\tag{as $\frac{1}{1+z}>1-z$ when $z>0$}\\
& = \frac{|N|}{|D|}-\frac{2\Delta}{|D|} - \frac{|N|}{|D|}\cdot\frac{4\Delta}{|D|} + \frac{8\Delta^2}{|D|^2}\\
& \geq \frac{|N|}{|D|}-\frac{6\Delta}{|D|} \tag{as $\frac{|N|}{|D|}\leq 1$}
\end{align*}
Hence we have $|x'-x|\leq \frac{8\Delta}{D}$. Due to Lemma \ref{prop:st1}, we have the following:
\begin{align*}
    |V_B^*-\langle x^*, By^* \rangle|&\leq \sum_{j=1}^2 y^*(j)\sum_{i=1}^2|\theta_{i}\Delta_{ij}|\\
    &\leq \sum_{j=1}^2 y^*(j)\sum_{i=1}^2\frac{8\Delta^2}{|D|}\\
    &= \frac{18\Delta^2}{|D|}\sum_{j=1}^2 y^*(j)\\
    &= \frac{18\Delta^2}{|D|}\\
    & \leq \frac{32\Delta^2}{|D'|} \tag{as $|D'|\geq |D|-4\Delta\geq2|D|/3$}
\end{align*}
\end{proof}
%\begin{lemma}
%Let $A_1$ have a unique Nash equilibrium which is not a PSNE. Then we can find a matrix $B\in A_2(\Delta)$ such that $B\in A_1(2\Delta)$ and $B$ has a unique Nash equilibrium which is not a PSNE.
%\end{lemma}
%\begin{proof}
%Let $A_2^+(\Delta):=\{B\in \mathbb{R}^{2\times 2}:|(B-A_2)_{ij}|=\Delta\text{ for all elements $(i,j)$}\}$. W.l.o.g let us assume that $(A_1)_{11}-(A_1)_{12}-(A_1)_{21}+(A_1)_{22}>0$. Then there is a matrix $B\in A_2^+(\Delta)$ such that $B_{11}\geq(A_1)_{11},B_{12}\leq(A_1)_{12},B_{21}\leq(A_1)_{21}$ and $B_{22}\geq(A_1)_{22}$ which implies that $B$ also has a full support NE. Now by enumerating all the matrices in $A_2^+(\Delta)$ we get a matrix $B\in A_2(\Delta)$ such that $B\in A_1(2\Delta)$ and $B$ has a unique Nash equilibrium which is not a PSNE. 
%\end{proof}

The next lemma states some basic inequalities that will be used frequently in the analysis that follows. 
\begin{lemma}\label{lem:deviation}
Let $a,\bar a,b,\bar b,\Delta'$ be positive real numbers. Let $|a-\bar a|\leq \Delta'$ and $|b-\bar b|\leq \Delta'$. Then we have the following:
\begin{itemize}
    \item $|a+b-(\bar a+\bar b)|\leq 2\Delta'$
    \item $|\min\{a,b\}-\min\{\bar a,\bar b\}|\leq \Delta'$
    \item $|\max\{a,b\}-\max\{\bar a,\bar b\}|\leq \Delta'$
\end{itemize}
\begin{proof}
First observe that $|a-\bar a+b-\bar b|\leq |a-\bar a|+|b-\bar b|\leq 2\Delta'$. 

Next w.l.o.g let us assume that $\min\{a,b\}=a$. If $\min\{\bar a,\bar b\}=\bar a$, then we have $|\min\{a,b\}-\min\{\bar a,\bar b\}|=|a-\bar a|\leq \Delta'$. If $\min\{\bar a,\bar b\}=\bar b$, then $\bar b\leq \bar a\leq a+\Delta'$ and $\bar b\geq b-\Delta'\geq a-\Delta'$. Hence, in this case also we have $|\min\{a,b\}-\min\{\bar a,\bar b\}|=|a-\bar b|\leq \Delta'$.

Finally w.l.o.g let us assume that $\max\{a,b\}=b$. If $\max\{\bar a,\bar b\}=\bar b$, then we have $|\max\{a,b\}-\max\{\bar a,\bar b\}|=|b-\bar b|\leq \Delta'$. If $\max\{\bar a,\bar b\}=\bar a$, then $\bar a\geq \bar b\geq b-\Delta'$ and $\bar a\leq a+\Delta'\leq b+\Delta'$. Hence, in this case also we have $|\max\{a,b\}-\max\{\bar a,\bar b\}|=|b-\bar a|\leq \Delta'$.
\end{proof}
\end{lemma}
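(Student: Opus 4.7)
The plan is to handle the three claimed inequalities separately, since they are elementary and independent. The first is immediate, and the remaining two reduce to the observation that $\min$ and $\max$ are $1$-Lipschitz in the $\ell^\infty$ norm on $\mathbb{R}^2$.

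For the first bullet, I would just apply the triangle inequality: write $a+b-(\bar a+\bar b) = (a-\bar a)+(b-\bar b)$ and bound the absolute value by $|a-\bar a|+|b-\bar b|\leq 2\Delta'$. Nothing else is needed here.

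For the min inequality I would do a short case analysis. Without loss of generality assume $\min\{a,b\}=a$ (otherwise swap the labels). There are then two subcases. If $\min\{\bar a,\bar b\}=\bar a$, the difference is just $|a-\bar a|\leq \Delta'$. If instead $\min\{\bar a,\bar b\}=\bar b$, then I need to control $|a-\bar b|$, which I would sandwich on both sides: on the one hand $\bar b\leq \bar a\leq a+\Delta'$, giving $\bar b - a \leq \Delta'$; on the other hand $\bar b\geq b-\Delta'\geq a-\Delta'$ (using $b\geq a$), giving $a - \bar b \leq \Delta'$. Together these yield $|a-\bar b|\leq \Delta'$. The max inequality is entirely symmetric: either repeat the argument with the inequalities reversed, or simply invoke $\max\{a,b\}=-\min\{-a,-b\}$ and apply the min case to $(-a,-\bar a,-b,-\bar b)$ (the $1$-sub-Gaussian-style hypothesis $|a-\bar a|,|b-\bar b|\leq \Delta'$ is preserved under negation).

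There is essentially no obstacle here; the only thing to be careful about is doing the case split cleanly so that in the ``crossed'' case (where $\min\{a,b\}$ and $\min\{\bar a,\bar b\}$ are attained by different indices) one gets the tighter bound $\Delta'$ and not $2\Delta'$, which is exactly what the two-sided sandwich above provides. The WLOG reduction (symmetry in $(a,b)$) lets me write each of the last two bullets with a single case rather than four.
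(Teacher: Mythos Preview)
Your proposal is correct and matches the paper's proof essentially line for line: the same triangle-inequality argument for the sum, and the same WLOG-plus-two-subcase sandwich for $\min$ (and, symmetrically, for $\max$). One tiny caveat on your alternative shortcut for $\max$ via $\max\{a,b\}=-\min\{-a,-b\}$: the lemma as stated assumes positive reals, so the negated quantities would not satisfy that hypothesis verbatim; since positivity is never used in the argument this is harmless, but your first option (repeat the case analysis directly) is exactly what the paper does and avoids the nit.
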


\section{Proof of $\varepsilon$-good solution lower bound with respect to $D$}\label{appendix:thm1}
Before finishing the proof of the theorem \ref{thm:lower1}, we begin with the proof of Lemma~\ref{low:lem1}
\begin{proof}
W.l.o.g let us assume that $D>0$. 
For $\square \in \{-\Delta,0,\Delta\}$ it can be shown that 
\begin{align*}
V_{A_\square}^* &=\frac{ad-bc}{D}+\frac{d-a}{D}\square-\frac{\square^2}{D}\\
\langle x',A_\square y'\rangle &=\frac{ad-bc}{D}+\frac{d-a}{D}\square+\frac{\alpha \beta+(\alpha+\beta)\square}{D}
\end{align*}
where $x'=(\frac{d-c+\alpha}{D},\frac{a-b-\alpha}{D})\in \simplex_2$ and $y'=(\frac{d-b+\beta}{D},\frac{a-c-\beta}{D})\in \simplex_2$ for any $\alpha \in [c-d, a-b]$ and $\beta \in [b-d ,a-c]$. 
Note that this parameterization ensures the range of $x',y'$ is equal to $\simplex_2$. We refer the reader to the Appendix \ref{apendix:low:lem1} for the detailed calculations.

We will now show that regardless of what values $(\alpha,\beta)$ take (equivalently, regardless of what values $(x',y')$ take), there is at least one of the three alternative matrices has error of at least $3\varepsilon/2$.   
If $|\alpha \beta| \geq \Delta^2/2$, then $|V_{A_0}^*-\langle x', A_0y' \rangle|=\frac{|\alpha \beta|}{D}\geq \frac{\Delta^2}{2D}$. 
If $|\alpha \beta| < \Delta^2/2$ and $\alpha+\beta\geq 0$, then $\langle x',A_\Delta y'\rangle-V_{A_\Delta}^*=\frac{\Delta^2+\alpha \beta+(\alpha+\beta)\Delta}{D}\geq \frac{\Delta^2}{2D}$. Similarly, if $|\alpha \beta| < \Delta^2/2$ and if $\alpha + \beta< 0$, then $\langle x',A_{-\Delta} y'\rangle-V_{A_{-\Delta}}^*=\frac{\Delta^2+\alpha \beta-(\alpha+\beta)\Delta}{D}\geq \frac{\Delta^2}{2D}$. Hence, we proved that for any $(x',y') \in \simplex_2 \times \simplex_2$, there exists a matrix $B\in\{A_{-\Delta},A_0,A_{\Delta}\}$ such that
\begin{equation*}
    |V_B^*-\langle x', By' \rangle|\geq \frac{\Delta^2}{2D}=\frac{3\varepsilon}{2}
\end{equation*}

% It can be shown that $\langle x',A_0y'\rangle=\frac{ad-bc}{D}+\frac{xy}{D}$. 
% $V_{A_0}^*=\frac{ad-bc}{D}$, $V_{A_\Delta}^*=\frac{ad-bc}{D}+\frac{d-a}{D}\Delta-\frac{\Delta^2}{D}$ and $V_{A_{-\Delta}}^*=\frac{ad-bc}{D}-\frac{d-a}{D}\Delta-\frac{\Delta^2}{D}$. 

% Let $x'=(\frac{d-c+x}{D},\frac{a-b-x}{D})\in \Delta_2$ and $y'=(\frac{d-b+y}{D},\frac{a-c-y}{D})\in \Delta_2$. It can be shown that $\langle x',A_0y'\rangle=\frac{ad-bc}{D}+\frac{xy}{D}$. 
% % \kevin{What are $x$ and $y$? Very confusing since $x,y$ appear to be scalars but $x',y'$ are vectors?}
% It can also be shown that $\langle x',A_\Delta y'\rangle=\frac{ad-bc}{D}+\frac{d-a}{D}\Delta+\frac{xy+(x+y)\Delta}{D}$ and $\langle x',A_{-\Delta} y'\rangle=\frac{ad-bc}{D}-\frac{d-a}{D}\Delta+\frac{xy-(x+y)\Delta}{D}$. We refer the reader to the Appendix \ref{apendix:low:lem1} for the detailed calculations.

% If $|xy| \geq \Delta^2/2$, then $|V_{A_0}^*-\langle x', A_0y' \rangle|=\frac{|xy|}{D}\geq \frac{\Delta^2}{2D}$. If $|xy| < \Delta^2/2$ and $x+y\geq 0$, then $\langle x',A_\Delta y'\rangle-V_{A_\Delta}^*=\frac{\Delta^2+xy+(x+y)\Delta}{D}\geq \frac{\Delta^2}{2D}$. Similarly, if $|xy| < \Delta^2/2$ and if $x+y< 0$, then $\langle x',A_{-\Delta} y'\rangle-V_{A_{-\Delta}}^*=\frac{\Delta^2+xy-(x+y)\Delta}{D}\geq \frac{\Delta^2}{2D}$. Hence, we proved that there exists a matrix $B\in\{A_{-\Delta},A_0,A_{\Delta}\}$ such that the following holds:
% \begin{equation*}
%     |V_B^*-\langle x', By' \rangle|\geq \frac{\Delta^2}{2D}=\frac{3\varepsilon}{2}
% \end{equation*}
\end{proof}

\subsection{Proof of Theorem \ref{thm:lower1}}
Let $\nu_{i,j}^A = \mathcal{N}(A_{ij},1)$ be the distribution of an observation when playing pair $(i,j)$ with matrix $A$. 
Let $\P_A$ denote the probability law of the internal randomness of the algorithm and random observations.
If an algorithm is $(\varepsilon,\delta)$-PAC-good and outputs a solution $(\widehat{x},\widehat{y})$ then $\min_A \P_A(|V_A^*-\langle \widehat{x}, A \widehat{y} \rangle|\leq \varepsilon) \geq 1-\delta$.
We will show that if an algorithm is $(\varepsilon,\delta)$-PAC-good then it can also accomplish a particular hypothesis.
We will conclude by noting that any procedure that can accomplish the hypothesis test must take the claimed sample complexity. 

For any pair of mixed strategies $(\widehat{x},\widehat{y})$ output by the procedure at the stopping time $\tau$, define 
\begin{align*}
    \phi = \{A_{-\Delta},A_{0},A_{\Delta}\} \setminus \arg\max_{B \in \{A_{-\Delta},A_{0},A_{\Delta}\}} |V_{B}^*-\langle \widehat{x}, B \widehat{y} \rangle|,
\end{align*}
breaking ties arbitrarily in the maximum so that $\phi \in  \{A_{-\Delta},A_{0}\}\cup \{A_{-\Delta},A_{\Delta}\}\cup \{A_{0},A_{\Delta}\}$.
Note that
\begin{align}\label{eqn:correctness_delta}
    \P_{A_0}( A_0 \in \phi ) \geq \P_{A_0}( A_0 \in \phi,  |V_{A_0}^*-\langle \widehat{x},A_0 \widehat{y} \rangle| \leq \varepsilon) = \P_{A_0}(   |V_{A_0}^*-\langle \widehat{x}, A_0 \widehat{y} \rangle| \leq \varepsilon) \geq 1-\delta
\end{align}
where the equality follows from the Lemma~\ref{low:lem1}: at least one of the three matrices must have a loss of at least $3 \varepsilon /2$, but $A_0$ has a loss of at most $\varepsilon$, thus $A_0 \in \phi$.
Now because
\begin{align*}
    2 \max\{ \P_{A_0}( \phi = \{A_0,A_{-\Delta}\} ) , \P_{A_0}( \phi = \{A_0,A_{\Delta}\} ) \} &\geq \P_{A_0}( \phi = \{A_0,A_{-\Delta}\} ) + \P_{A_0}( \phi = \{A_0,A_{\Delta}\} ) \\
    &= \P_{A_0}( A_0 \in \phi ) \geq 1-\delta
\end{align*}
we have that $\P_{A_0}( \phi = \{A_0,A_{-\Delta}\} ) \geq \frac{1-\delta}{2}$ or $\P_{A_0}( \phi = \{A_0,A_{\Delta} \}) \geq \frac{1-\delta}{2}$.
Let's assume the former (the latter case is handled identically).
By the same argument as \eqref{eqn:correctness_delta} we have that $\P_{A_\Delta}( \phi = \{A_0,A_{-\Delta}\} ) \leq \delta$.

For a stopping time $\tau$, let $N_{i,j}(\tau)$ denote the number of times $(i,j)$ is sampled. Recalling that $\nu_{i,j}^A = \mathcal{N}(A_{ij},1)$, we have by Lemma 1 of \citet{kaufmann2016complexity} that
\begin{align*}
\mathbb{E}_{A_0}[ N_{1,1}(\tau) ] KL( \nu_{1,1}^{A_0}, \nu_{1,1}^{A_\Delta} ) + \mathbb{E}_{A_0}[ N_{2,2}(\tau) ] KL( \nu_{2,2}^{A_0}, \nu_{2,2}^{A_\Delta} ) \geq d( \P_{A_0}( \phi = \{A_0,A_{-\Delta}\} ), \P_{A_\Delta}( \phi = \{A_0,A_{-\Delta}\} ) )
\end{align*}
where $KL( \nu_{1,1}^{A_0}, \nu_{1,1}^{A_\Delta} ) = KL( \nu_{2,2}^{A_0}, \nu_{2,2}^{A_\Delta} ) = \Delta^2/2$ and $d(p,q) = p \log(\frac{p}{q}) + (1-p) \log(\frac{1-p}{1-q})$.
Since 
\begin{align*}
    d( \P_{A_0}( \phi = \{A_0,A_{-\Delta}\} ), \P_{A_\Delta}( \phi = \{A_0,A_{-\Delta}\} ) ) &\geq d( \tfrac{1-\delta}{2},\delta) \\
    &= \tfrac{1-\delta}{2} \log( \tfrac{1-\delta}{2 \delta} ) + \tfrac{1+\delta}{2} \log( \tfrac{1+\delta}{2(1-\delta)} ) \\
    &= \tfrac{1}{2} \log( \tfrac{1+\delta}{4 \delta} ) -  \tfrac{\delta}{2} \log( \tfrac{(1-\delta)^2}{ \delta (1+\delta)} ) \\
    &\geq \tfrac{1}{2} \log( \tfrac{1+\delta}{4 \delta} ) -  1/8 > \tfrac{1}{2} \log(1/30\delta)
\end{align*} 
and $\tau = N_{1,1}(\tau)+N_{1,2}(\tau)+N_{2,1}(\tau)+N_{2,2}(\tau)$ we conclude that
\begin{align*}
    \mathbb{E}_{A_0}[ \tau ] \geq \frac{ \log(1/30 \delta) }{\Delta^2} = \frac{ \log(1/30 \delta) }{3\varepsilon |D|}
\end{align*}
as claimed.
\subsection{Calculations for Lemma \ref{low:lem1}}\label{apendix:low:lem1}

\begin{proposition}\label{low1:prop1}
For $\square\in\{-\Delta,0,\Delta\}$, $V_{A_\square}^*=\frac{ad-bc}{D}+\frac{d-a}{D}\square-\frac{\square^2}{D}$
\end{proposition}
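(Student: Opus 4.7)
The plan is to apply the closed-form value formula from Proposition~\ref{prop:matrix:main} to the matrix $A_\square$ and then perform a direct algebraic simplification. The key observation is that the perturbation parameter $\square$ appears only on the diagonal with opposite signs, so the quantity $D_\square := (a+\square) - b - c + (d-\square)$ equals $D$ regardless of $\square$.

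First I would verify the hypothesis of Proposition~\ref{prop:matrix:main}: that $A_\square$ has a unique Nash equilibrium which is not a PSNE for each $\square \in \{-\Delta, 0, \Delta\}$. This follows from the remark in the paper that $A_\square$ has this property whenever $|\square| < \Delta_{\min}$, combined with the assumption $\varepsilon \in (0, \Delta_{\min}^2/(3|D|))$, which gives $\Delta = \sqrt{3\varepsilon|D|} < \Delta_{\min}$. Hence the entries of $A_\square$ still satisfy the strict sign pattern characterizing a unique mixed NE.

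Then, writing $A_\square = [a+\square, b;\, c, d-\square]$, I would apply the formula $V^*_M = (M_{11}M_{22} - M_{12}M_{21})/(M_{11} - M_{12} - M_{21} + M_{22})$ from Proposition~\ref{prop:matrix:main} to obtain
\[
V^*_{A_\square} = \frac{(a+\square)(d-\square) - bc}{D},
\]
since the denominator simplifies to $D$. Expanding the numerator yields $ad - a\square + d\square - \square^2 - bc$, which rearranges to $ad - bc + (d-a)\square - \square^2$, giving the claimed identity.

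There is essentially no obstacle here beyond routine algebra; the only subtlety is confirming that $A_\square$ lies in the regime where the closed-form formula applies, which is handled by the bound on $\varepsilon$ in the theorem statement.
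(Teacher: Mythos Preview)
Your proposal is correct and follows essentially the same route as the paper: apply the closed-form value formula from Proposition~\ref{prop:matrix:main}, note that the denominator $(a+\square)-b-c+(d-\square)$ collapses to $D$, and expand the numerator $(a+\square)(d-\square)-bc$. The paper's proof is the one-line version of exactly this computation; your additional verification that $|\square|<\Delta_{\min}$ (so the formula applies) is a welcome bit of care that the paper handles in the surrounding text rather than inside the proof itself.
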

\begin{proof}
$V_{A_\square}^*=\frac{(a+\square)(d-\square)-bc}{a+\square-b-c+d-\square}=\frac{ad-bc}{D}+\frac{d-a}{D}\square-\frac{\square^2}{D}$
\end{proof}

%We get $V_{A_0}$ and $V_{A_{-\Delta}}$ by replacing $\Delta$ with $0$ and $-\Delta$ respectively in the proposition \ref{low1:prop1}.

Recall that $x'=(\frac{d-c+\alpha}{D},\frac{a-b-\alpha}{D})$ and $y'=(\frac{d-b+\beta}{D},\frac{a-c-\beta}{D})$. Now we present the following proposition.
\begin{proposition}\label{low1:prop2}
For $\square\in\{-\Delta,0,\Delta\}$, $\langle x',A_\square y'\rangle=\frac{ad-bc}{D}+\frac{d-a}{D}\square+\frac{\alpha\beta+(\alpha+\beta)\square}{D}$
\end{proposition}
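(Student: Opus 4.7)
The plan is to avoid brute expansion and instead exploit the affine structure of both $A_\square$ in $\square$ and $(x', y')$ around the Nash equilibrium $(x^*, y^*)$ of $A_0$. Write
\begin{equation*}
A_\square = A_0 + \square\, E, \qquad E = \begin{bmatrix} 1 & 0 \\ 0 & -1 \end{bmatrix},
\end{equation*}
so that $\langle x', A_\square y' \rangle = \langle x', A_0 y' \rangle + \square \langle x', E y' \rangle$. It suffices to evaluate the two inner products separately.

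For the perturbation term $\langle x', E y' \rangle = x'_1 y'_1 - x'_2 y'_2$, I substitute the explicit formulas for $x', y'$ and expand. The cross terms in $\alpha\beta$ cancel, and the coefficients of $\alpha$ and of $\beta$ are both of the form $(d-c)+(a-b) = D$ and $(d-b)+(a-c) = D$, respectively. The constant piece $(d-c)(d-b) - (a-b)(a-c)$ factors as $(d-a)D$ after grouping $(d-a)(d+a) - (d-a)(b+c)$. Dividing by $D^2$ gives $\langle x', E y' \rangle = (d-a+\alpha+\beta)/D$.

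For the unperturbed term, I write $x' = x^* + (\alpha/D)(1,-1)$ and $y' = y^* + (\beta/D)(1,-1)$, using that by Proposition~\ref{prop:matrix:main}, $x^* = ((d-c)/D, (a-b)/D)$ and $y^* = ((d-b)/D, (a-c)/D)$ is the unique Nash equilibrium of $A_0$ with value $(ad-bc)/D$. Bilinearity gives four terms. The cross terms $\langle x^*, A_0 (1,-1)^\top\rangle$ and $\langle (1,-1), A_0 y^*\rangle$ both vanish: the first equals $x^*_1(a-b) - x^*_2(d-c)$, which is zero by direct substitution, and the second equals $(a-c)y^*_1 - (d-b)y^*_2$, again zero. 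The quadratic term yields $(\alpha\beta/D^2)\langle(1,-1), A_0(1,-1)^\top\rangle = (\alpha\beta/D^2) \cdot D$. Hence $\langle x', A_0 y'\rangle = (ad-bc)/D + \alpha\beta/D$.

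Combining the two computations gives the claimed identity. The main routine obstacle is the algebraic verification that the cross terms vanish and that $(d-c)(d-b)-(a-b)(a-c)=(d-a)D$; both are straightforward but must be done carefully. No result beyond Proposition~\ref{prop:matrix:main} (the closed form of the Nash equilibrium of $A_0$) is needed.
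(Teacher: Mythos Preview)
Your proof is correct and takes a genuinely different route from the paper's. The paper computes the two column inner products $V_1=\langle x',(a+\square,c)\rangle$ and $V_2=\langle x',(b,d-\square)\rangle$ explicitly, then forms $\langle y',(V_1,V_2)\rangle$ and simplifies the resulting expression by brute algebra; the factorization $(d-b)(d-c)-(a-c)(a-b)=(d-a)D$ appears there too, but only at the end of a longer expansion. Your approach instead splits $A_\square=A_0+\square E$ and exploits the equalization property of the Nash equilibrium of $A_0$: writing $x'=x^*+(\alpha/D)(1,-1)$ and $y'=y^*+(\beta/D)(1,-1)$, the cross terms $\langle x^*,A_0(1,-1)^\top\rangle$ and $\langle(1,-1),A_0 y^*\rangle$ vanish precisely because $x^*$ makes the column player indifferent and $y^*$ makes the row player indifferent. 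This buys you a cleaner derivation and an explanation for \emph{why} the final formula has the shape it does---the $\alpha\beta/D$ term is the ``curvature'' $\langle(1,-1),A_0(1,-1)^\top\rangle/D^2=D/D^2$, and the $(\alpha+\beta)\square/D$ term is the interaction of the perturbation $E$ with the deviations from equilibrium. The paper's direct computation is more self-contained (it does not invoke Proposition~\ref{prop:matrix:main}) but is also more opaque.
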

\begin{proof}
Let $V_1=\langle x',(a+\square,c) \rangle$ and $V_2=\langle x',(b,d-\square) \rangle$. First we have the following.
\begin{align*}
    V_1 &= \frac{d-c+\alpha}{D}\cdot (a+\square)+\frac{a-b-\alpha}{D}\cdot c\\
    &= \frac{d-c}{D}\cdot a+\frac{a-b}{D}\cdot c+\frac{d-c+\alpha}{D}\square+\frac{a-c}{D}\cdot \alpha\\
    &= \frac{ad-bc}{D}+\frac{d-c+\alpha}{D}\square+\frac{a-c}{D}\cdot \alpha
\end{align*}
Similarly, we have the following.
\begin{align*}
    V_2 &= \frac{d-c+\alpha}{D}\cdot b +\frac{a-b-\alpha}{D}\cdot (d-\square)\\
    &= \frac{d-c}{D}\cdot b+\frac{a-b}{D}\cdot d-\frac{a-b-\alpha}{D}\square+\frac{b-d}{D}\cdot \alpha\\
    &= \frac{ad-bc}{D}-\frac{a-b-\alpha}{D}\square+\frac{b-d}{D}\cdot \alpha
\end{align*}
Now observe that $\langle x',A_\square y'\rangle=\langle y',(V_1,V_2)\rangle$. Now we have the following:
\begin{align}
    \langle y',(V_1,V_2)\rangle&=\left\langle y', \left(\frac{ad-bc}{D},\frac{ad-bc}{D}\right)\right\rangle+\frac{d-b+\beta}{D}\cdot\frac{d-c+\alpha}{D}\cdot\square-\frac{a-c-\beta}{D}\cdot\frac{a-b-\alpha}{D}\cdot\square \nonumber\\
    &\quad+\frac{d-b+\beta}{D}\cdot\frac{a-c}{D}\cdot \alpha+\frac{a-c-\beta}{D}\cdot\frac{b-d}{D}\cdot \alpha\nonumber\\
    &=\frac{ad-bc}{D}+\frac{(d-b)(d-c)-(a-c)(a-b)}{D^2}\square+\frac{(d-c+\alpha)\beta+(a-b-\alpha)\beta}{D^2}\cdot \square \nonumber \\
    &\quad+\frac{(d-b)\alpha+(a-c)\alpha}{D^2}\cdot \square +\frac{(d-b)(a-c)\alpha-(a-c)(d-b)\alpha}{D^2}+\frac{a-b-c+d}{D^2}\cdot \alpha\beta \nonumber\\
    &= \frac{ad-bc}{D}+\frac{d-a}{D}\square+\frac{(\alpha+\beta)\square+\alpha\beta}{D}\label{prop:low1:st1}
\end{align}
We get (\ref{prop:low1:st1}) as $D=a-b-c+d$ and $(d-b)(d-c)-(a-c)(a-b)=(d-a)(a-b-c+d)$.
\end{proof}

\section{Proof of $\varepsilon$-good lower bound with respect to $\varepsilon$, $\Delta_{\min}$}\label{appendix:thm2}
Before finishing the proof of the Theorem \ref{thm:lower2}, we begin with the proof of Lemma~\ref{low2:lem1}
\begin{proof}
Let us first consider the case when $d-c> 2\Delta$. Observe that $V_{A_0}^*=\frac{ad-bc}{D}$, $V_{A_\Delta}^*=\frac{ad-bc}{D}+\frac{(d-b)-(a-c)}{D}\cdot\Delta$ and $V_{A_{-\Delta}}^*=a-\Delta$. For any $\alpha\in [\frac{c-d}{D},\frac{a-b}{D}]$ and $\beta\in[\frac{b-d}{D},\frac{a-c}{D}]$, let $x'=(\frac{d-c}{D}+\alpha,\frac{a-b}{D}-\alpha)$ and $y'=(\frac{d-b}{D}+\beta,\frac{a-c}{D}-\beta)$. Note that this parameterization ensures the range of $x',y'$ is equal to $\simplex_2$. It can be shown that $\langle x',A_\square y'\rangle=\frac{ad-bc}{D}+\frac{(d-b)-(a-c)}{D}\square+2\square \beta + D\alpha\beta$. We refer the reader to the Appendix \ref{apendix:low2:lem1} for the detailed calculations.

We will now show that regardless of what values $(\alpha,\beta)$ take (equivalently, regardless of what values $(x',y')$ take), there is at least one of the three alternative matrices has error of more than $\varepsilon$. If $|D\alpha\beta| > \varepsilon$, then $|V_{A_{0}}^*-\langle x', A_{0}y' \rangle|=|D\alpha\beta|> \varepsilon$. Let $f(\Delta)=\frac{(d-b)-(a-c)}{D}\Delta+2\Delta \beta$. If $|D\alpha\beta| \leq \varepsilon$ and $f(\Delta)\geq \frac{\Delta}{2}$, then $\langle x',A_{\Delta} y'\rangle-V_{A_{\Delta}}^*=D\alpha\beta+f(\Delta)-\frac{(d-b)-(a-c)}{D}\cdot\Delta\geq-\varepsilon+\frac{\Delta}{2}>\varepsilon $. Similarly, if $|D\alpha\beta| \leq \varepsilon$ and $f(\Delta)< \frac{\Delta}{2}$, then $V_{A_{-\Delta}}^*-\langle x',A_{-\Delta} y'\rangle=\frac{(a-b)(a-c)}{D}-\Delta-D\alpha\beta+f(\Delta)<\Delta_{\min}-\Delta+\varepsilon+\frac{\Delta}{2}\leq -\varepsilon$. Hence, we proved that for any $(x',y') \in \simplex_2 \times \simplex_2$, there exists a matrix $B\in\{A_{0},A_{\Delta},A_{2\Delta}\}$ such that the following holds:
\begin{equation*}
    |V_B^*-\langle x', By' \rangle|>\varepsilon
\end{equation*}

Next we consider the case when $d-c\leq 2\Delta$. Observe that $V_{A_0}^*=\frac{ad-bc}{D}$, $V_{A_\Delta}^*=d-\Delta$ and $V_{A_{-\Delta}}^*=a-\Delta$. For any $\alpha\in [\frac{c-d}{D},\frac{a-b}{D}]$ and $\beta\in[\frac{b-d}{D},\frac{a-c}{D}]$, let $x'=(\frac{d-c}{D}+\alpha,\frac{a-b}{D}-\alpha)$ and $y'=(\frac{d-b}{D}+\beta,\frac{a-c}{D}-\beta)$. Note that this parameterization ensures the range of $x',y'$ is equal to $\simplex_2$. Recall that $\langle x',A_\square y'\rangle=\frac{ad-bc}{D}+\frac{(d-b)-(a-c)}{D}\square+2\square \beta + D\alpha\beta$.

We will now show that regardless of what values $(\alpha,\beta)$ take (equivalently, regardless of what values $(x',y')$ take), there is at least one of the three alternative matrices has error of more than $\varepsilon$. If $|D\alpha\beta| > \varepsilon$, then $|V_{A_{0}}^*-\langle x', A_{0}y' \rangle|=|D\alpha\beta|> \varepsilon$. Let $f(\Delta)=\frac{(d-b)-(a-c)}{D}\Delta+2\Delta \beta$. If $|D\alpha\beta| \leq \varepsilon$ and $f(\Delta)\geq \frac{\Delta}{2}$, then we have the following:
\begin{align*}
    \langle x',A_{\Delta} y'\rangle-V_{A_{\Delta}}^*&=D\alpha\beta+f(\Delta)-\frac{(d-b)(d-c)}{D}+\Delta\\
    &\geq-\varepsilon+\frac{\Delta}{2}-\frac{d-c}{2}+\Delta \tag{as $d-b\leq D/2$}\\
    &\geq -\varepsilon+\frac{\Delta}{2}\tag{as $d-c\leq 2\Delta$}\\
    &>\varepsilon 
\end{align*}
Similarly, if $|D\alpha\beta| \leq \varepsilon$ and $f(\Delta)< \frac{\Delta}{2}$, then $V_{A_{-\Delta}}^*-\langle x',A_{-\Delta} y'\rangle=\frac{(a-b)(a-c)}{D}-\Delta-D\alpha\beta+f(\Delta)<\Delta_{\min}-\Delta+\varepsilon+\frac{\Delta}{2}\leq -\varepsilon$. Hence, we proved that for any $(x',y') \in \simplex_2 \times \simplex_2$, there exists a matrix $B\in\{A_{0},A_{\Delta},A_{-\Delta}\}$ such that the following holds:
\begin{equation*}
    |V_B^*-\langle x', By' \rangle|>\varepsilon
\end{equation*}
\end{proof}

\subsection{Proof of Theorem \ref{thm:lower2}}
Let $\nu_{i,j}^A = \mathcal{N}(A_{ij},1)$ be the distribution of an observation when playing pair $(i,j)$ with matrix $A$. 
Let $\P_A$ denote the probability law of the internal randomness of the algorithm and random observations.
If an algorithm is $(\varepsilon,\delta)$-PAC-good and outputs a solution $(\widehat{x},\widehat{y})$ then $\min_A \P_A(|V_A^*-\langle \widehat{x}, A \widehat{y} \rangle|\leq \varepsilon) \geq 1-\delta$.
We will show that if an algorithm is $(\varepsilon,\delta)$-PAC-good then it can also accomplish a particular hypothesis.
We will conclude by noting that any procedure that can accomplish the hypothesis test must take the claimed sample complexity. 

For any pair of mixed strategies $(\widehat{x},\widehat{y})$ output by the procedure at the stopping time $\tau$, define 
\begin{align*}
    \phi = \{A_{-\Delta},A_{0},A_{\Delta}\} \setminus \arg\max_{B \in \{A_{-\Delta},A_{0},A_{\Delta}\}} |V_{B}^*-\langle \widehat{x}, B \widehat{y} \rangle|,
\end{align*}
breaking ties arbitrarily in the maximum so that $\phi \in  \{A_{-\Delta},A_{0}\}\cup \{A_{-\Delta},A_{\Delta}\}\cup \{A_{0},A_{\Delta}\}$.
Note that
\begin{align}\label{eqn:correctness_delta:2}
    \P_{A_0}( A_0 \in \phi ) \geq \P_{A_0}( A_0 \in \phi,  |V_{A_0}^*-\langle \widehat{x},A_0 \widehat{y} \rangle| \leq \varepsilon) = \P_{A_0}(   |V_{A_0}^*-\langle \widehat{x}, A_0 \widehat{y} \rangle| \leq \varepsilon) \geq 1-\delta
\end{align}
where the equality follows from the Lemma~\ref{low2:lem1}: at least one of the three matrices must have a loss of more than $\varepsilon$, but $A_0$ has a loss of at most $\varepsilon$, thus $A_0 \in \phi$.
Now because
\begin{align*}
    2 \max\{ \P_{A_0}( \phi = \{A_0,A_{-\Delta}\} ) , \P_{A_0}( \phi = \{A_0,A_{\Delta}\} ) \} &\geq \P_{A_0}( \phi = \{A_0,A_{-\Delta}\} ) + \P_{A_0}( \phi = \{A_0,A_{\Delta}\} ) \\
    &= \P_{A_0}( A_0 \in \phi ) \geq 1-\delta
\end{align*}
we have that $\P_{A_0}( \phi = \{A_0,A_{-\Delta}\} ) \geq \frac{1-\delta}{2}$ or $\P_{A_0}( \phi = \{A_0,A_{\Delta} \}) \geq \frac{1-\delta}{2}$.
Let's assume the former (the latter case is handled identically).
By the same argument as \eqref{eqn:correctness_delta:2} we have that $\P_{A_\Delta}( \phi = \{A_0,A_{-\Delta}\} ) \leq \delta$.

For a stopping time $\tau$, let $N_{i,j}(\tau)$ denote the number of times $(i,j)$ is sampled. Recalling that $\nu_{i,j}^A = \mathcal{N}(A_{ij},1)$, we have by Lemma 1 of \citet{kaufmann2016complexity} that
\begin{align*}
\sum_{i=1}^2\sum_{j=1}^2\mathbb{E}_{A_0}[ N_{i,j}(\tau) ] KL( \nu_{i,j}^{A_0}, \nu_{i,j}^{A_\Delta} ) \geq d( \P_{A_0}( \phi = \{A_0,A_{-\Delta}\} ), \P_{A_\Delta}( \phi = \{A_0,A_{-\Delta}\} ) )
\end{align*}
where for any $i,j\in\{1,2\}$, $KL( \nu_{i,j}^{A_0}, \nu_{i,j}^{A_\Delta} ) = \Delta^2/2$ and $d(p,q) = p \log(\frac{p}{q}) + (1-p) \log(\frac{1-p}{1-q})$.
Since 
\begin{align*}
    d( \P_{A_0}( \phi = \{A_0,A_{-\Delta}\} ), \P_{A_\Delta}( \phi = \{A_0,A_{-\Delta}\} ) ) &\geq d( \tfrac{1-\delta}{2},\delta) \\
    &= \tfrac{1-\delta}{2} \log( \tfrac{1-\delta}{2 \delta} ) + \tfrac{1+\delta}{2} \log( \tfrac{1+\delta}{2(1-\delta)} ) \\
    &= \tfrac{1}{2} \log( \tfrac{1+\delta}{4 \delta} ) -  \tfrac{\delta}{2} \log( \tfrac{(1-\delta)^2}{ \delta (1+\delta)} ) \\
    &\geq \tfrac{1}{2} \log( \tfrac{1+\delta}{4 \delta} ) -  1/8 > \tfrac{1}{2} \log(1/30\delta)
\end{align*} 
and $\tau = N_{1,1}(\tau)+N_{1,2}(\tau)+N_{2,1}(\tau)+N_{2,2}(\tau)$ we conclude that
\begin{align*}
    \mathbb{E}_{A_0}[ \tau ] \geq \frac{ \log(1/30 \delta) }{\Delta^2} = \min\left\{\frac{\log(1/30 \delta) }{36\varepsilon^2 },\frac{\log(1/30 \delta) }{36\Delta_{\min}^2 }\right\}
\end{align*}
as claimed.

\subsection{Calculations for Lemma \ref{low2:lem1}}\label{apendix:low2:lem1}
Recall that $x'=(\frac{d-c}{D}+\alpha,\frac{a-b}{D}-\alpha)$ and $y'=(\frac{d-b}{D}+\beta,\frac{a-c}{D}-\beta)$. Let $(A_\square)^r_i$ denote the $i$-th row of $A_\square$ and $(A_\square)^c_j$ denote the $j$-th column of $A_\square$. 

First, observe that $\langle x', (A_\square)^c_1\rangle= \frac{d-c}{D}\cdot a + \frac{d-c}{D}\cdot \square+a\alpha+\square \alpha+\frac{a-b}{D}\cdot c + \frac{a-b}{D}\cdot \square-c\alpha-\square \alpha=\frac{ad-bc}{D}+\square+(a-c)\alpha$. Similarly, we have  $\langle x', (A_\square)^c_2\rangle= \frac{d-c}{D}\cdot b - \frac{d-c}{D}\cdot \square+b\alpha-\square \alpha+\frac{a-b}{D}\cdot d - \frac{a-b}{D}\cdot \square-d\alpha+\square \alpha=\frac{ad-bc}{D}-\square+(b-d)\alpha$.

Now we present the following proposition.
\begin{proposition}\label{low2:prop1}
$\langle x',A_\square y'\rangle=\frac{ad-bc}{D}+\frac{(d-b)-(a-c)}{D}\square+2\square \beta + D\alpha\beta$
\end{proposition}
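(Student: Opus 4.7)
The plan is to compute $\langle x', A_\square y'\rangle$ by decomposing it as a convex-like combination along the columns of $A_\square$ and then invoking the two column-inner-product identities that were already established immediately above the proposition statement. Specifically, writing $y' = (y'_1, y'_2)$ with $y'_1 = \tfrac{d-b}{D}+\beta$ and $y'_2 = \tfrac{a-c}{D}-\beta$, I would start from
\[
\langle x', A_\square y'\rangle = y'_1\,\langle x',(A_\square)^c_1\rangle + y'_2\,\langle x',(A_\square)^c_2\rangle
\]
and substitute $\langle x',(A_\square)^c_1\rangle = \tfrac{ad-bc}{D}+\square+(a-c)\alpha$ together with $\langle x',(A_\square)^c_2\rangle = \tfrac{ad-bc}{D}-\square+(b-d)\alpha$.

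Next I would expand the product and sort the resulting terms into three groups according to which quantity among $\tfrac{ad-bc}{D}$, $\square$, or $\alpha$ they carry. The constant group contributes $(y'_1+y'_2)\tfrac{ad-bc}{D} = \tfrac{ad-bc}{D}$ since $y'_1+y'_2=1$. The $\square$ group contributes $y'_1\square - y'_2\square = \tfrac{(d-b)-(a-c)}{D}\square + 2\beta\square$. The $\alpha$ group contributes
\[
y'_1(a-c)\alpha + y'_2(b-d)\alpha = \tfrac{(a-c)[(d-b)+(b-d)]}{D}\alpha + \beta\alpha\bigl[(a-c)-(b-d)\bigr],
\]
where the first summand vanishes and the bracket in the second summand equals $a-b-c+d = D$, yielding $D\alpha\beta$. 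Summing the three groups gives exactly the claimed expression.

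There is no serious obstacle here — the argument is a direct, mechanical expansion. The only place to be mildly careful is the sign bookkeeping in the $\square$-group (because $y'_2$ multiplies $-\square$, the two $\beta\square$ contributions add rather than cancel, giving $+2\beta\square$) and the observation that the $\alpha/D$ cross-terms telescope because $(d-b)+(b-d)=0$, which is precisely what makes the $\alpha$-only term absent and leaves only the bilinear $D\alpha\beta$ term.
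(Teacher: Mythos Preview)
Your proposal is correct and follows essentially the same approach as the paper: both compute $\langle x', A_\square y'\rangle$ by writing it as $y'_1\langle x',(A_\square)^c_1\rangle + y'_2\langle x',(A_\square)^c_2\rangle$, substituting the precomputed column inner products, and then simplifying. Your grouping into constant, $\square$, and $\alpha$ contributions is a clean way to organize the same algebra the paper carries out line by line.
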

\begin{proof}
Let $V_1=\langle x', (A_\square)^c_1\rangle$ and $V_2=\langle x', (A_\square)^c_2\rangle$. Now observe that $\langle x',A_\square y'\rangle=\langle y',(V_1,V_2)\rangle$. Now we have the following:
\begin{align}
    \langle y',(V_1,V_2)\rangle&=\left\langle y', \left(\frac{ad-bc}{D},\frac{ad-bc}{D}\right)\right\rangle+\left(\frac{d-b}{D}+\beta\right)\cdot\square-\left(\frac{a-c}{D}-\beta\right)\cdot\square \nonumber\\
    &\quad+\left(\frac{d-b}{D}+\beta\right)\cdot(a-c)\alpha-\left(\frac{a-c}{D}-\beta\right)\cdot(d-b)\alpha\nonumber\\
    &=\frac{ad-bc}{D}+\frac{(d-b)-(a-c)}{D}\square+2\square \beta \nonumber \\
    &\quad+\frac{(d-b)(a-c)-(a-c)(d-b)}{D}\cdot \alpha +(a-b-c+d)\alpha\beta \nonumber\\
    &= \frac{ad-bc}{D}+\frac{(d-b)-(a-c)}{D}\square+2\square \beta + D\alpha\beta\tag{as $D=a-b-c+d$}
\end{align}
\end{proof}

\section{Proof of $\varepsilon$-good lower bound for games with multiple Nash Equilibria}\label{appendix:thm:multiple}
%\textcolor{red}{Incomplete. Fix the references too.}
Before finishing the proof of the Theorem \ref{thm:multiple}, we begin with the proof of Lemma~\ref{lem1:multiple}
\begin{proof}
Let us first consider the case when $D=d-c> 2\Delta$. Observe that $V_{A_0}^*=a$, $V_{A_\Delta}^*=a+\frac{(d-a)-(a-c)}{D}\cdot\Delta$ and $V_{A_{-\Delta}}^*=a-\Delta$. For any $\alpha\in [-1,0]$ and $\beta\in[\frac{a-d}{D},\frac{a-c}{D}]$, let $x'=(1+\alpha,-\alpha)$ and $y'=(\frac{d-a}{D}+\beta,\frac{a-c}{D}-\beta)$. Note that this parameterization ensures the range of $x',y'$ is equal to $\simplex_2$. It can be shown that $\langle x',A_\square y'\rangle=a+\frac{(d-a)-(a-c)}{D}\square+2\square \beta + D\alpha\beta$. We refer the reader to the Appendix \ref{appendix:lem:multiple} for the detailed calculations.

We will now show that regardless of what values $(\alpha,\beta)$ take (equivalently, regardless of what values $(x',y')$ take), there is at least one of the three alternative matrices has error of more than $\varepsilon$. If $|D\alpha\beta| > \varepsilon$, then $|V_{A_{0}}^*-\langle x', A_{0}y' \rangle|=|D\alpha\beta|> \varepsilon$. Let $f(\Delta)=\frac{(d-a)-(a-c)}{D}\Delta+2\Delta \beta$. If $|D\alpha\beta| \leq \varepsilon$ and $f(\Delta)\geq \frac{\Delta}{2}$, then $\langle x',A_{\Delta} y'\rangle-V_{A_{\Delta}}^*=D\alpha\beta+f(\Delta)-\frac{(d-a)-(a-c)}{D}\cdot\Delta\geq-\varepsilon+\frac{\Delta}{2}>\varepsilon $. Similarly, if $|D\alpha\beta| \leq \varepsilon$ and $f(\Delta)< \frac{\Delta}{2}$, then $V_{A_{-\Delta}}^*-\langle x',A_{-\Delta} y'\rangle=-\Delta-D\alpha\beta+f(\Delta)<-\Delta+\varepsilon+\frac{\Delta}{2}\leq -\varepsilon$. Hence, we proved that for any $(x',y') \in \simplex_2 \times \simplex_2$, there exists a matrix $B\in\{A_{0},A_{\Delta},A_{2\Delta}\}$ such that the following holds:
\begin{equation*}
    |V_B^*-\langle x', By' \rangle|>\varepsilon
\end{equation*}

Next we consider the case when $d-c\leq 2\Delta$. Observe that $V_{A_0}^*=a$, $V_{A_\Delta}^*=d-\Delta$ and $V_{A_{-\Delta}}^*=a-\Delta$. For any $\alpha\in [-1,0]$ and $\beta\in[\frac{a-d}{D},\frac{a-c}{D}]$, let $x'=(1+\alpha,-\alpha)$ and $y'=(\frac{d-a}{D}+\beta,\frac{a-c}{D}-\beta)$. Note that this parameterization ensures the range of $x',y'$ is equal to $\simplex_2$. Recall that $\langle x',A_\square y'\rangle=a+\frac{(d-a)-(a-c)}{D}\square+2\square \beta + D\alpha\beta$.

We will now show that regardless of what values $(\alpha,\beta)$ take (equivalently, regardless of what values $(x',y')$ take), there is at least one of the three alternative matrices has error of more than $\varepsilon$. If $|D\alpha\beta| > \varepsilon$, then $|V_{A_{0}}^*-\langle x', A_{0}y' \rangle|=|D\alpha\beta|> \varepsilon$. Let $f(\Delta)=\frac{(d-a)-(a-c)}{D}\Delta+2\Delta \beta$. If $|D\alpha\beta| \leq \varepsilon$ and $f(\Delta)\geq \frac{\Delta}{2}$, then we have the following:
\begin{align*}
    \langle x',A_{\Delta} y'\rangle-V_{A_{\Delta}}^*&=D\alpha\beta+f(\Delta)-\frac{(d-a)(d-c)}{D}+\Delta\\
    &\geq-\varepsilon+\frac{\Delta}{2}-\frac{d-c}{2}+\Delta \tag{as $d-a\leq D/2$}\\
    &\geq -\varepsilon+\frac{\Delta}{2}\tag{as $d-c\leq 2\Delta$}\\
    &>\varepsilon 
\end{align*}
Similarly, if $|D\alpha\beta| \leq \varepsilon$ and $f(\Delta)< \frac{\Delta}{2}$, then $V_{A_{-\Delta}}^*-\langle x',A_{-\Delta} y'\rangle=-\Delta-D\alpha\beta+f(\Delta)<-\Delta+\varepsilon+\frac{\Delta}{2}\leq -\varepsilon$. Hence, we proved that for any $(x',y') \in \simplex_2 \times \simplex_2$, there exists a matrix $B\in\{A_{0},A_{\Delta},A_{-\Delta}\}$ such that the following holds:
\begin{equation*}
    |V_B^*-\langle x', By' \rangle|>\varepsilon
\end{equation*}
\end{proof}

\subsection{Proof of Theorem \ref{thm:lower2}}
Let $\nu_{i,j}^A = \mathcal{N}(A_{ij},1)$ be the distribution of an observation when playing pair $(i,j)$ with matrix $A$. 
Let $\P_A$ denote the probability law of the internal randomness of the algorithm and random observations.
If an algorithm is $(\varepsilon,\delta)$-PAC-good and outputs a solution $(\widehat{x},\widehat{y})$ then $\min_A \P_A(|V_A^*-\langle \widehat{x}, A \widehat{y} \rangle|\leq \varepsilon) \geq 1-\delta$.
We will show that if an algorithm is $(\varepsilon,\delta)$-PAC-good then it can also accomplish a particular hypothesis.
We will conclude by noting that any procedure that can accomplish the hypothesis test must take the claimed sample complexity. 

For any pair of mixed strategies $(\widehat{x},\widehat{y})$ output by the procedure at the stopping time $\tau$, define 
\begin{align*}
    \phi = \{A_{-\Delta},A_{0},A_{\Delta}\} \setminus \arg\max_{B \in \{A_{-\Delta},A_{0},A_{\Delta}\}} |V_{B}^*-\langle \widehat{x}, B \widehat{y} \rangle|,
\end{align*}
breaking ties arbitrarily in the maximum so that $\phi \in  \{A_{-\Delta},A_{0}\}\cup \{A_{-\Delta},A_{\Delta}\}\cup \{A_{0},A_{\Delta}\}$.
Note that
\begin{align}\label{eqn:correctness_delta:multiple}
    \P_{A_0}( A_0 \in \phi ) \geq \P_{A_0}( A_0 \in \phi,  |V_{A_0}^*-\langle \widehat{x},A_0 \widehat{y} \rangle| \leq \varepsilon) = \P_{A_0}(   |V_{A_0}^*-\langle \widehat{x}, A_0 \widehat{y} \rangle| \leq \varepsilon) \geq 1-\delta
\end{align}
where the equality follows from the Lemma~\ref{lem1:multiple}: at least one of the three matrices must have a loss of more than $\varepsilon$, but $A_0$ has a loss of at most $\varepsilon$, thus $A_0 \in \phi$.
Now because
\begin{align*}
    2 \max\{ \P_{A_0}( \phi = \{A_0,A_{-\Delta}\} ) , \P_{A_0}( \phi = \{A_0,A_{\Delta}\} ) \} &\geq \P_{A_0}( \phi = \{A_0,A_{-\Delta}\} ) + \P_{A_0}( \phi = \{A_0,A_{\Delta}\} ) \\
    &= \P_{A_0}( A_0 \in \phi ) \geq 1-\delta
\end{align*}
we have that $\P_{A_0}( \phi = \{A_0,A_{-\Delta}\} ) \geq \frac{1-\delta}{2}$ or $\P_{A_0}( \phi = \{A_0,A_{\Delta} \}) \geq \frac{1-\delta}{2}$.
Let's assume the former (the latter case is handled identically).
By the same argument as \eqref{eqn:correctness_delta:multiple} we have that $\P_{A_\Delta}( \phi = \{A_0,A_{-\Delta}\} ) \leq \delta$.

For a stopping time $\tau$, let $N_{i,j}(\tau)$ denote the number of times $(i,j)$ is sampled. Recalling that $\nu_{i,j}^A = \mathcal{N}(A_{ij},1)$, we have by Lemma 1 of \citet{kaufmann2016complexity} that
\begin{align*}
\sum_{i=1}^2\sum_{j=1}^2\mathbb{E}_{A_0}[ N_{i,j}(\tau) ] KL( \nu_{i,j}^{A_0}, \nu_{i,j}^{A_\Delta} ) \geq d( \P_{A_0}( \phi = \{A_0,A_{-\Delta}\} ), \P_{A_\Delta}( \phi = \{A_0,A_{-\Delta}\} ) )
\end{align*}
where for any $i,j\in\{1,2\}$, $KL( \nu_{i,j}^{A_0}, \nu_{i,j}^{A_\Delta} ) = \Delta^2/2$ and $d(p,q) = p \log(\frac{p}{q}) + (1-p) \log(\frac{1-p}{1-q})$.
Since 
\begin{align*}
    d( \P_{A_0}( \phi = \{A_0,A_{-\Delta}\} ), \P_{A_\Delta}( \phi = \{A_0,A_{-\Delta}\} ) ) &\geq d( \tfrac{1-\delta}{2},\delta) \\
    &= \tfrac{1-\delta}{2} \log( \tfrac{1-\delta}{2 \delta} ) + \tfrac{1+\delta}{2} \log( \tfrac{1+\delta}{2(1-\delta)} ) \\
    &= \tfrac{1}{2} \log( \tfrac{1+\delta}{4 \delta} ) -  \tfrac{\delta}{2} \log( \tfrac{(1-\delta)^2}{ \delta (1+\delta)} ) \\
    &\geq \tfrac{1}{2} \log( \tfrac{1+\delta}{4 \delta} ) -  1/8 > \tfrac{1}{2} \log(1/30\delta)
\end{align*} 
and $\tau = N_{1,1}(\tau)+N_{1,2}(\tau)+N_{2,1}(\tau)+N_{2,2}(\tau)$ we conclude that
\begin{align*}
    \mathbb{E}_{A_0}[ \tau ] \geq \frac{ \log(1/30 \delta) }{\Delta^2} = \frac{\log(1/30 \delta) }{36\varepsilon^2 }
\end{align*}
as claimed.

\subsection{Calculations for Lemma \ref{lem1:multiple}}\label{appendix:lem:multiple}
Recall that $x'=(1+\alpha,-\alpha)$ and $y'=(\frac{d-a}{D}+\beta,\frac{a-c}{D}-\beta)$. Let $(A_\square)^r_i$ denote the $i$-th row of $A_\square$ and $(A_\square)^c_j$ denote the $j$-th column of $A_\square$. 

First, observe that $\langle x', (A_\square)^c_1\rangle= 1\cdot a + 1\cdot \square+a\alpha+\square \alpha-c\alpha-\square \alpha=a+\square+(a-c)\alpha$. Similarly, we have  $\langle x', (A_\square)^c_2\rangle= 1\cdot a - 1\cdot \square+a\alpha-\square \alpha-d\alpha+\square \alpha=a-\square+(a-d)\alpha$.

Now we present the following proposition.
\begin{proposition}\label{low2:prop1}
$\langle x',A_\square y'\rangle=a+\frac{(d-a)-(a-c)}{D}\square+2\square \beta + D\alpha\beta$
\end{proposition}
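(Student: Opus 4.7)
The plan is a direct computation using the column inner products $V_1 = \langle x', (A_\square)^c_1\rangle$ and $V_2 = \langle x', (A_\square)^c_2\rangle$ that were already computed immediately before the proposition. Specifically, we have $V_1 = a + \square + (a-c)\alpha$ and $V_2 = a - \square + (a-d)\alpha$, and by definition $\langle x', A_\square y'\rangle = \langle y', (V_1, V_2)\rangle$. So the whole task reduces to expanding a two-term inner product and grouping terms.

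First I would write
\[
\langle y', (V_1,V_2)\rangle = \left(\tfrac{d-a}{D}+\beta\right)V_1 + \left(\tfrac{a-c}{D}-\beta\right)V_2,
\]
substitute the expressions for $V_1$ and $V_2$, and split the resulting sum into three pieces: (i) the constant-in-$\square,\alpha$ piece, which uses $\tfrac{d-a}{D}+\tfrac{a-c}{D}=1$ (since $D=d-c$) to yield $a$; (ii) the $\square$-linear piece, which splits into a part proportional to $\tfrac{d-a}{D}-\tfrac{a-c}{D}$ (from the $a\pm\square$ contributions to $V_1,V_2$) giving $\tfrac{(d-a)-(a-c)}{D}\square$, together with a cross piece $2\square\beta$ coming from $\beta V_1 - \beta V_2$; and (iii) the $\alpha$-piece, which contains an $\alpha$-only term $\tfrac{(d-a)(a-c)-(a-c)(a-d)}{D}\alpha$ that vanishes, plus the $\alpha\beta$ cross term $\beta(a-c)\alpha - (-\beta)(a-d)\alpha = [(a-c)+(d-a)]\alpha\beta = (d-c)\alpha\beta = D\alpha\beta$.

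Collecting these three contributions yields exactly
\[
\langle x', A_\square y'\rangle = a + \tfrac{(d-a)-(a-c)}{D}\square + 2\square\beta + D\alpha\beta,
\]
as claimed. There is no real obstacle here — the argument is just careful bookkeeping — but the one spot to be attentive at is the cancellation $\tfrac{(d-a)(a-c)}{D} + \tfrac{(a-c)(d-a)}{D}\cdot(-1) \cdot(-1)$-type terms in the $\alpha$-only part, where one must use $D=d-c$ to see both that the constant piece sums to $a$ and that the pure $\alpha$ piece collapses to zero so that only the $\alpha\beta$ cross term survives. This mirrors the analogous computation in Proposition~\ref{low2:prop1}, so the same template applies with the substitution $b\mapsto a$ in the row player's action, which is what makes the constant term here equal to $a$ rather than $\tfrac{ad-bc}{D}$.
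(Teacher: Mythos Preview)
Your approach is correct and essentially identical to the paper's proof: both compute $\langle y',(V_1,V_2)\rangle$ from the precomputed column inner products $V_1,V_2$ and group into constant, $\square$, and $\alpha$ pieces, using $D=d-c$ for the final simplifications. Two minor sign slips in your intermediate bookkeeping (which do not affect the method or the conclusion): the $\alpha$-only term should read $\tfrac{(d-a)(a-c)+(a-c)(a-d)}{D}\alpha$ (with a plus, vanishing because $(a-d)=-(d-a)$), and the $\alpha\beta$ cross term is $\beta(a-c)\alpha+(-\beta)(a-d)\alpha=[(a-c)-(a-d)]\alpha\beta=(d-c)\alpha\beta=D\alpha\beta$.
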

\begin{proof}
Let $V_1=\langle x', (A_\square)^c_1\rangle$ and $V_2=\langle x', (A_\square)^c_2\rangle$. Now observe that $\langle x',A_\square y'\rangle=\langle y',(V_1,V_2)\rangle$. Now we have the following:
\begin{align}
    \langle y',(V_1,V_2)\rangle&=\langle y', (a,a)\rangle+\left(\frac{d-a}{D}+\beta\right)\cdot\square-\left(\frac{a-c}{D}-\beta\right)\cdot\square \nonumber\\
    &\quad+\left(\frac{d-a}{D}+\beta\right)\cdot(a-c)\alpha-\left(\frac{a-c}{D}-\beta\right)\cdot(d-a)\alpha\nonumber\\
    &=a+\frac{(d-a)-(a-c)}{D}\square+2\square \beta \nonumber \\
    &\quad+\frac{(d-a)(a-c)-(a-c)(d-a)}{D}\cdot \alpha +(a-c+d-a)\alpha\beta \nonumber\\
    &= a+\frac{(d-a)-(a-c)}{D}\square+2\square \beta + D\alpha\beta\tag{as $D=d-c$}
\end{align}
\end{proof}

\section{Proof of $\varepsilon$-Nash equilibrium Upper Bound}\label{appendix:thm5}

We establish the sample complexity and the correctness of the Algorithm \ref{alg-ucb-2} by proving the Theorem \ref{thm:alg2}.

\begin{proof}[Proof of Theorem \ref{thm:alg2}]
Let $\bar A_{ij,t}$ denote the empirical mean of $A_{ij}$ at time step $t$. Let us begin by defining two events:
\begin{align*}
    G:=&\bigcap_{t=1}^T\bigcap_{i=1}^2 \bigcap_{j=1}^2 \{ |A_{ij}-\bar A_{ij,t}|\leq \sqrt{\tfrac{2\log({16T}/{\delta})}{t}} \} \\
    E:=&\bigcap_{i=1}^2 \bigcap_{j=1}^2 \{ |A_{ij}-\bar A_{ij,T}|\leq \sqrt{\tfrac{2\log({16}/{\delta})}{T}} \} 
\end{align*}
A union bound and sub-Gaussian-tail bound demonstrates that $\P( G^c \cup E^c ) \leq \P(G^c) + \P(E^c) \leq \delta$.
Consequently, events $E$ and $G$ hold simultaneously with probability at least $1-\delta$, so in what follows, assume they hold. 

If $A$ has a PSNE and if the condition in the line \ref{alg2:con1} of the algorithm \ref{alg-ucb-2} is satisfied, then we identify an $\varepsilon$-Nash equilibrium in $\frac{800\log (\frac{16T}{ \delta})}{ \Delta_{\min}^2}$ time steps due to Lemma \ref{lem:alg2:tmin} and Corollary \ref{cor:alg2:saddle}. On the other hand, if $A$ has a PSNE but the for loop completes after $t=T$ iterations,  then we identify an $\varepsilon$-good solution in $T=\frac{8 \log (16/\delta) }{\varepsilon^2}$ time steps due to Lemma \ref{lem:trivial:NE}. Note that in this case, $T<\frac{800\log (\frac{16T}{ \delta})}{ \Delta_{\min}^2}$ due to Lemma \ref{lem:alg2:tmin}. Hence, if $A$ has a PSNE, we identify an $\varepsilon$-Nash equilibrium in $O\left(\min\left\{\frac{\log(1/\delta)}{\varepsilon^2},\frac{\log(T/\delta)}{\Delta_{\min}^2}\right\}\right)$ time steps.

Let us assume for the rest of the proof that $A$ has a unique Nash equilibrium which is not a PSNE. If the condition in the line \ref{alg2:con2} of the algorithm \ref{alg-ucb-2} is satisfied, then we identify an $\varepsilon$-Nash equilibrium in $T=\frac{8 \log (16/\delta) }{\varepsilon^2}$ time steps due to Lemma \ref{lem:trivial:NE}. Now observe that in this case $T=O\left(\min\left\{\frac{\log(1/\delta)}{\varepsilon^2},\frac{\Delta_{m_2}^2\log(T/\delta)}{\varepsilon^2 D^2}\right\}\right)$ due to Lemma \ref{lem:alg2:dm2/D:one}. On the other hand, if the for loop completes after $t=T$ iterations,  then we identify an $\varepsilon$-good solution in $T=\frac{8 \log (16/\delta) }{\varepsilon^2}$ time steps due to Lemma \ref{lem:trivial:NE}. Note that in this case, $T<\frac{800\log (\frac{16T}{ \delta})}{ \Delta_{\min}^2}$ due to Lemma \ref{lem:alg2:tmin}.

Now let us  assume for the rest of the proof that the condition in the line \ref{alg2:con3} is satisfied.  Then due to Lemma \ref{lem:alg2:dm2/D:both}, we have $\frac{800\Delta^2_{m_2}\log (\frac{16T}{\delta})}{9\varepsilon^2 |D|^2} \leq N \leq \frac{450\Delta^2_{m_2}\log (\frac{16T}{\delta})}{\varepsilon^2 |D|^2} $. If the condition in the line \ref{alg2:con4} is satisfied, then we identify an $\varepsilon$-Nash equilibrium in $T=\frac{8 \log (16/\delta) }{\varepsilon^2}$ time steps due to Lemma \ref{lem:trivial:NE}. Now observe that in this case $T=O\left(\min\left\{\frac{\log(1/\delta)}{\varepsilon^2},\max\left\{\frac{\log(T/\delta)}{\Delta_{\min}^2},\frac{\Delta_{m_2}^2\log(T/\delta)}{\varepsilon^2 D^2}\right\}\right\}\right)$ as $T< \frac{800\log (\frac{16T}{ \delta})}{ \Delta_{\min}^2}+N$. If the condition in the line \ref{alg2:con4} is not satisfied, then we identify an $\varepsilon$-Nash equilibrium due to Lemma \ref{lem:alg2:main}. In this case, let the number of times we are required to sample each element be $n_0$. Then $n_0\leq T$ and $n_0\leq \frac{800\log (\frac{16T}{ \delta})}{ \Delta_{\min}^2}+ \frac{450\Delta^2_{m_2}\log (\frac{16T}{\delta})}{\varepsilon^2 |D|^2} $. Hence, $n_0= O\left(\min\left\{\frac{\log(1/\delta)}{\varepsilon^2},\max\left\{\frac{\log(T/\delta)}{\Delta_{\min}^2},\frac{\Delta_{m_2}^2\log(T/\delta)}{\varepsilon^2 D^2}\right\}\right\}\right)$.
\end{proof}

\subsection{Consequential lemmas of Algorithm~\ref{alg-ucb-2}'s conditional statements}
Recall the definitions of events $E$ and $G$.
We first present a few lemmas that deal with empirical estimates and instance dependent parameters like $\tilde \Delta_{\min},\tilde D,\tilde\Delta_{m_2},\Delta_{m_2}, \Delta_{\min}$ and $|D|$. Whenever we fix a time step $t\leq T$ and discuss the parameters like $\tilde  \Delta_{\min}, \tilde D, \tilde\Delta_{m_2}$ and $\Delta$, we consider those values that have been assigned to these parameters during the time step $t$.

We begin with upper bounding $ |\Delta_{\min}-\tilde \Delta_{\min}|$ and $|\Delta_{m_2}-\tilde \Delta_{m_2}|$ in the following lemma.
\begin{lemma}\label{lem:alg2:delta}
Fix a time step $t\leq T$. If the event $G$ holds, then we have the following:
\begin{itemize}
    \item $|\Delta_{\min}-\tilde \Delta_{\min}|\leq 2\Delta$
    \item $|\Delta_{m_2}-\tilde \Delta_{m_2}|\leq 2\Delta$
\end{itemize}
\end{lemma}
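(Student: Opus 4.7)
The plan is to reduce both inequalities to routine applications of Lemma~\ref{lem:deviation} after first establishing a single elementwise comparison bound on absolute differences, exactly as was done for the analogous $\Delta_{\min}$ estimate that appeared in the analysis of Algorithm~\ref{alg-ucb-1}.

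First, I would record the base bound: under event $G$, for any two index pairs $(i,j)$ and $(i',j')$, the triangle inequality yields
\[
\bigl||A_{ij}-A_{i'j'}| - |\bar A_{ij}-\bar A_{i'j'}|\bigr| \leq |A_{ij}-\bar A_{ij}| + |A_{i'j'}-\bar A_{i'j'}| \leq 2\Delta.
\]
Applied to the four pairs $(1,1)\text{-}(1,2)$, $(2,1)\text{-}(2,2)$, $(1,1)\text{-}(2,1)$, $(1,2)\text{-}(2,2)$, this provides four ``base'' comparisons of the form $||a-b|-|\bar A_{11}-\bar A_{12}||\leq 2\Delta$ and so on. With this in hand, the first bullet $|\Delta_{\min}-\tilde\Delta_{\min}|\leq 2\Delta$ follows from three successive applications of the min-preservation clause of Lemma~\ref{lem:deviation} (taking $\Delta'=2\Delta$) to combine the four populations and their empirical counterparts into a single minimum.

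For the second bullet, I would exploit the max-of-mins structure of $\Delta_{m_2}$ in two stages. First apply the min-preservation clause of Lemma~\ref{lem:deviation} to each of the two inner minima, yielding
\[
\bigl|\min\{|a-b|,|d-c|\} - \min\{|\bar A_{11}-\bar A_{12}|,|\bar A_{21}-\bar A_{22}|\}\bigr| \leq 2\Delta
\]
and the analogous bound for $\min\{|a-c|,|d-b|\}$ versus its empirical counterpart. Then a single application of the max-preservation clause of Lemma~\ref{lem:deviation} combines those two into $|\Delta_{m_2}-\tilde\Delta_{m_2}|\leq 2\Delta$.

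The key observation making this work without any loss is that each invocation of Lemma~\ref{lem:deviation} \emph{preserves} rather than inflates the deviation, so the $2\Delta$ constant from the base step propagates unchanged through all the successive min/max compositions. Consequently there is no real obstacle in the argument; the only thing to verify carefully is that the parenthesization of minima and maxima matches between $\Delta_{m_2}$ and $\tilde\Delta_{m_2}$ so that Lemma~\ref{lem:deviation} may be applied termwise.
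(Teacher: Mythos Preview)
Your proposal is correct and follows essentially the same approach as the paper's proof, which also first records the base bound $\bigl||A_{ij}-A_{i'j'}|-|\bar A_{ij}-\bar A_{i'j'}|\bigr|\leq 2\Delta$ and then appeals to repeated applications of Lemma~\ref{lem:deviation}. The only difference is that you spell out explicitly which clauses of Lemma~\ref{lem:deviation} are applied and in what order, whereas the paper simply says ``by repeatedly applying Lemma~\ref{lem:deviation}''.
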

\begin{proof}
As the event $G$ holds true, we have $\left||A_{ij}-A_{i'j'}|-|\bar A_{ij}-\bar A_{i'j'}|\right|\leq 2\Delta$ for any $i,j,i',j'$. By repeatedly apply Lemma \ref{lem:deviation}, we get $|\Delta_{\min}-\tilde \Delta_{\min}|\leq 2\Delta$ and $|\Delta_{m_2}-\tilde \Delta_{m_2}|\leq 2\Delta$.
\end{proof}
The following lemma upper bounds the number of time steps required to satisfy the condition $1\leq \frac{\tilde \Delta_{\min}+2\Delta}{\tilde \Delta_{\min}-2\Delta}\leq \frac{3}{2}$.
\begin{lemma}\label{lem:alg2:tmin}
Let $t$ be the time step when the condition $1\leq \frac{\tilde \Delta_{\min}+2\Delta}{\tilde \Delta_{\min}-2\Delta}\leq \frac{3}{2}$ holds true for the first time. If the event $G$ holds, then $t\leq \frac{800\log (\frac{16T}{ \delta})}{ \Delta_{\min}^2}$.
\end{lemma}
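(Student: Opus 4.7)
The plan is to mimic the strategy used for the analogous Lemma \ref{lem:alg1:tmin} from Algorithm~\ref{alg-ucb-1}, since the only difference here is that Algorithm~\ref{alg-ucb-2} also maintains $\tilde\Delta_{m_2}$, which is irrelevant to the condition under consideration. I will exhibit an explicit upper bound $t_0 = \frac{800\log(16T/\delta)}{\Delta_{\min}^2}$ on when the condition must be satisfied, and conclude that the first hitting time $t$ is no larger.

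First, I would plug $t = t_0$ into the definition of $\Delta$ used by the algorithm, obtaining $\Delta = \sqrt{2\log(16T/\delta)/t_0} = \Delta_{\min}/20$. Conditioning on the event $G$, Lemma~\ref{lem:alg2:delta} (a consequence of the $G$-bound $|A_{ij}-\bar A_{ij}|\le \Delta$) immediately yields $|\Delta_{\min}-\tilde\Delta_{\min}|\le 2\Delta = \Delta_{\min}/10$.

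From this one-line estimate, I would derive the two needed inequalities directly:
\begin{align*}
\tilde\Delta_{\min}+2\Delta &\le \Delta_{\min}+4\Delta = \tfrac{6}{5}\Delta_{\min},\\
\tilde\Delta_{\min}-2\Delta &\ge \Delta_{\min}-4\Delta = \tfrac{4}{5}\Delta_{\min}>0,
\end{align*}
so that $\frac{\tilde\Delta_{\min}+2\Delta}{\tilde\Delta_{\min}-2\Delta}\le \frac{6/5}{4/5}=\frac{3}{2}$, while the lower bound $1\le \frac{\tilde\Delta_{\min}+2\Delta}{\tilde\Delta_{\min}-2\Delta}$ is trivial from $\Delta>0$ combined with positivity of the denominator. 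Hence the stated condition is satisfied by time $t_0$, which forces $t\le t_0$ as claimed.

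There is no real obstacle here; the argument is purely algebraic once the $G$-based concentration for $\tilde\Delta_{\min}$ is in hand. The only subtlety worth flagging in the write-up is to verify $\tilde\Delta_{\min}>2\Delta$ at $t=t_0$ so that the ratio and the inequality $1\le \frac{\tilde\Delta_{\min}+2\Delta}{\tilde\Delta_{\min}-2\Delta}$ are well-defined, which the computation above takes care of.
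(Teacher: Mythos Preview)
Your proposal is correct and follows essentially the same approach as the paper: plug in $t_0=\tfrac{800\log(16T/\delta)}{\Delta_{\min}^2}$ to get $\Delta=\Delta_{\min}/20$, use the event-$G$ concentration $|\Delta_{\min}-\tilde\Delta_{\min}|\le 2\Delta$, and deduce $\tilde\Delta_{\min}+2\Delta\le \tfrac{6}{5}\Delta_{\min}$ and $\tilde\Delta_{\min}-2\Delta\ge \tfrac{4}{5}\Delta_{\min}$. Your added remark about positivity of the denominator is a nice touch that the paper leaves implicit.
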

\begin{proof}
Consider the time step $t= \frac{800\log (\frac{16T}{ \delta})}{ \Delta_{\min}^2}$. Let us assume that the event $G$ holds. Then for every element $(i,j)$, we have  $|A_{ij}-\bar A_{ij}|\leq \Delta=\sqrt{\frac{2\log(\frac{16T}{\delta})}{t}}=\frac{\Delta_{\min}}{20}$. Now observe that $\tilde \Delta_{\min}+2\Delta\leq \Delta_{\min}+4\Delta = \frac{6\Delta_{\min}}{5}$. Similarly, we have $\tilde \Delta_{\min}-2\Delta\geq \Delta_{\min}-4\Delta \geq \frac{4\Delta_{\min}}{5}$. Hence, we have $1\leq \frac{\tilde \Delta_{\min}+2\Delta}{\tilde \Delta_{\min}-2\Delta}\leq \frac{3}{2}$. 

%Also observe that $\tilde D\leq |D|+4\Delta \leq \frac{7|D|}{6}$. Similarly, we have $\tilde D\geq |D|-4\Delta \geq \frac{5|D|}{6}$. Hence, we have $\frac{5}{6}\leq \frac{\tilde D}{|D|} \leq \frac{7}{6}$. 
\end{proof}
The following two lemmas bound the ratios $\frac{\tilde \Delta_{\min}}{\Delta_{\min}}$ and $\frac{\tilde \Delta_{m_2}}{\Delta_{m_2}}$.
\begin{lemma}\label{lem:alg2:dmin:ratio}
Let $t$ be the time step when the condition $1\leq \frac{\tilde \Delta_{\min}+2\Delta}{\tilde \Delta_{\min}-2\Delta}\leq \frac{3}{2}$ holds true for the first time. If the event $G$ holds, then $\frac{5}{6}\leq \frac{\tilde \Delta_{\min}}{\Delta_{\min}} \leq \frac{5}{4}$ at the time step $t$.
\end{lemma}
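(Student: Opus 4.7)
The plan is to mirror exactly the argument used to prove the analogous Lemma~\ref{lem:alg1:dmin:ratio} in the analysis of Algorithm~\ref{alg-ucb-1}, since the definitions of $\tilde\Delta_{\min}$ and of the trigger condition in Algorithm~\ref{alg-ucb-2} are identical to those in Algorithm~\ref{alg-ucb-1}. The key inputs are (i) the upper bound $|\Delta_{\min}-\tilde\Delta_{\min}|\leq 2\Delta$ on event $G$, which we already have from Lemma~\ref{lem:alg2:delta}, and (ii) a bound relating $\Delta$ to $\tilde\Delta_{\min}$ extracted from the stopping condition.

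First I would unpack the triggering condition. From $\frac{\tilde\Delta_{\min}+2\Delta}{\tilde\Delta_{\min}-2\Delta}\leq \tfrac{3}{2}$ one gets $2(\tilde\Delta_{\min}+2\Delta)\leq 3(\tilde\Delta_{\min}-2\Delta)$, which rearranges to $\Delta \leq \tilde\Delta_{\min}/10$. (The left inequality $1\leq \frac{\tilde\Delta_{\min}+2\Delta}{\tilde\Delta_{\min}-2\Delta}$ automatically forces $\tilde\Delta_{\min}-2\Delta>0$, so all divisions are legitimate.)

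Next, on event $G$, Lemma~\ref{lem:alg2:delta} gives $\tilde\Delta_{\min}-2\Delta\leq \Delta_{\min}\leq \tilde\Delta_{\min}+2\Delta$. Combining with $\Delta\leq \tilde\Delta_{\min}/10$, this yields $\tfrac{4}{5}\tilde\Delta_{\min}\leq \Delta_{\min}\leq \tfrac{6}{5}\tilde\Delta_{\min}$. Taking reciprocals and multiplying by $\tilde\Delta_{\min}$ then delivers the desired chain
\[
\tfrac{5}{6} \;\leq\; \frac{\tilde\Delta_{\min}}{\tilde\Delta_{\min}+2\Delta} \;\leq\; \frac{\tilde\Delta_{\min}}{\Delta_{\min}} \;\leq\; \frac{\tilde\Delta_{\min}}{\tilde\Delta_{\min}-2\Delta} \;\leq\; \tfrac{5}{4}.
\]

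There is no genuine obstacle here: the statement is a straightforward consequence of the stopping condition together with the concentration estimate on event $G$. The only thing to be careful about is to invoke Lemma~\ref{lem:alg2:delta} (rather than Lemma~\ref{lem:alg1:gap}) so that the proof is self-contained within the Algorithm~\ref{alg-ucb-2} analysis, and to note that the condition $\tilde\Delta_{\min}-2\Delta>0$ is implicit in the hypothesis so that the algebraic manipulations are well defined.
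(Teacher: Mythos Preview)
Your proposal is correct and follows essentially the same argument as the paper's proof: both extract $\Delta\leq \tilde\Delta_{\min}/10$ from the stopping condition, combine it with the concentration bound $|\Delta_{\min}-\tilde\Delta_{\min}|\leq 2\Delta$ on event $G$, and then sandwich $\tilde\Delta_{\min}/\Delta_{\min}$ between $\tilde\Delta_{\min}/(\tilde\Delta_{\min}+2\Delta)\geq 5/6$ and $\tilde\Delta_{\min}/(\tilde\Delta_{\min}-2\Delta)\leq 5/4$. Your added remark that the left inequality in the trigger condition guarantees $\tilde\Delta_{\min}-2\Delta>0$ is a useful clarification the paper leaves implicit.
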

\begin{proof}
Let us assume that the event $G$ holds. Then for every element $(i,j)$, we have  $|A_{ij}-\bar A_{ij}|\leq \Delta$. As $\frac{\tilde \Delta_{\min}+2\Delta}{\tilde \Delta_{\min}-2\Delta}\leq \frac{3}{2}$, we have $\Delta\leq \frac{\tilde \Delta_{\min}}{10}$.
Now observe that $\frac{\tilde \Delta_{\min}}{\Delta_{\min}}\leq \frac{\tilde\Delta_{\min}}{\tilde\Delta_{\min}-2\Delta}\leq \frac{\tilde\Delta_{\min}}{4\tilde\Delta_{\min}/5}= \frac{5}{4}$. Next observe that $\frac{\tilde \Delta_{\min}}{\Delta_{\min}}\geq \frac{\tilde \Delta_{\min}}{\tilde \Delta_{\min}+2\Delta}\geq\frac{\tilde \Delta_{\min}}{6\tilde \Delta_{\min}/5}=\frac{5}{6}$.
\end{proof}

\begin{lemma}\label{lem:alg2:dm2:ratio}
Let $t$ be the time step when the condition $1\leq \frac{\tilde \Delta_{\min}+2\Delta}{\tilde \Delta_{\min}-2\Delta}\leq \frac{3}{2}$ holds true for the first time. If the event $G$ holds, then $\frac{5}{6}\leq \frac{\tilde \Delta_{m_2}}{\Delta_{m_2}} \leq \frac{5}{4}$ at the time step $t$.
\end{lemma}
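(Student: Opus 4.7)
The plan is to mimic the argument used in Lemma \ref{lem:alg2:dmin:ratio} almost verbatim, substituting $\tilde\Delta_{m_2}$ for $\tilde\Delta_{\min}$ wherever possible. The only subtlety is that the triggering condition is stated in terms of $\tilde\Delta_{\min}$ rather than $\tilde\Delta_{m_2}$, so I need a short intermediate observation to transfer the bound on $\Delta$ from the former to the latter.

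First, I would invoke event $G$ to obtain $|A_{ij}-\bar A_{ij}|\leq \Delta$ for every $(i,j)$, and then apply Lemma \ref{lem:alg2:delta} to conclude $|\Delta_{m_2}-\tilde\Delta_{m_2}|\leq 2\Delta$. Next, from the hypothesis $\frac{\tilde\Delta_{\min}+2\Delta}{\tilde\Delta_{\min}-2\Delta}\leq \frac{3}{2}$ one extracts the inequality $\Delta \leq \tilde\Delta_{\min}/10$, exactly as in the previous lemma. The key intermediate observation is that $\tilde\Delta_{m_2}\geq \tilde\Delta_{\min}$: indeed, $\tilde\Delta_{m_2}$ is defined as the maximum of two terms, each of which is a minimum of two absolute differences, and $\tilde\Delta_{\min}$ is the minimum over all four absolute differences, hence $\tilde\Delta_{m_2}\geq \tilde\Delta_{\min}$. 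Combining gives $\Delta\leq \tilde\Delta_{m_2}/10$.

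With this in hand, the two-sided ratio bound is routine: using $\Delta_{m_2}\geq \tilde\Delta_{m_2}-2\Delta\geq \frac{4}{5}\tilde\Delta_{m_2}$ yields
\[
\frac{\tilde\Delta_{m_2}}{\Delta_{m_2}} \;\leq\; \frac{\tilde\Delta_{m_2}}{\tilde\Delta_{m_2}-2\Delta} \;\leq\; \frac{\tilde\Delta_{m_2}}{4\tilde\Delta_{m_2}/5} \;=\; \frac{5}{4},
\]
while $\Delta_{m_2}\leq \tilde\Delta_{m_2}+2\Delta\leq \frac{6}{5}\tilde\Delta_{m_2}$ yields
\[
\frac{\tilde\Delta_{m_2}}{\Delta_{m_2}} \;\geq\; \frac{\tilde\Delta_{m_2}}{\tilde\Delta_{m_2}+2\Delta} \;\geq\; \frac{\tilde\Delta_{m_2}}{6\tilde\Delta_{m_2}/5} \;=\; \frac{5}{6}.
\]

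There is no real obstacle here; the only thing one has to be careful about is the ordering $\tilde\Delta_{m_2}\geq \tilde\Delta_{\min}$, which is needed to convert the $\Delta\leq \tilde\Delta_{\min}/10$ bound (which is what the hypothesis directly provides) into $\Delta\leq \tilde\Delta_{m_2}/10$ (which is what drives the rest of the estimate). Once that is in place the rest is an exact copy of the argument for $\tilde\Delta_{\min}$.
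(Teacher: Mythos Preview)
Your proposal is correct and follows essentially the same approach as the paper's own proof: the paper also derives $\Delta\leq \tfrac{\tilde\Delta_{\min}}{10}\leq \tfrac{\tilde\Delta_{m_2}}{10}$ from the triggering condition and then bounds the ratio via $\tilde\Delta_{m_2}-2\Delta\leq \Delta_{m_2}\leq \tilde\Delta_{m_2}+2\Delta$. Your version is slightly more explicit in justifying the step $\tilde\Delta_{m_2}\geq \tilde\Delta_{\min}$, which the paper uses without comment.
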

\begin{proof}
Let us assume that the event $G$ holds true. Then for every element $(i,j)$, we have  $|A_{ij}-\bar A_{ij}|\leq \Delta$. As $\frac{\tilde \Delta_{\min}+2\Delta}{\tilde \Delta_{\min}-2\Delta}\leq \frac{3}{2}$, we have $\Delta\leq \frac{\tilde \Delta_{\min}}{10}\leq \frac{\tilde \Delta_{m_2}}{10}$.
Now observe that $\frac{\tilde \Delta_{m_2}}{\Delta_{m_2}}\leq \frac{\tilde\Delta_{m_2}}{\tilde\Delta_{m_2}-2\Delta}\leq \frac{\tilde\Delta_{m_2}}{4\tilde\Delta_{m_2}/5}= \frac{5}{4}$. Next observe that $\frac{\tilde \Delta_{m_2}}{\Delta_{m_2}}\geq \frac{\tilde \Delta_{m_2}}{\tilde \Delta_{m_2}+2\Delta}\geq\frac{\tilde \Delta_{m_2}}{6\tilde \Delta_{m_2}/5}= \frac{5}{6}$.
\end{proof}
The following lemma and the subsequent corollary relates the empirical matrix $\bar A$ to the input matrix $A$.
\begin{lemma}\label{lem:alg2:gap}
Let $t$ be the time step when the condition $1\leq \frac{\tilde \Delta_{\min}+2\Delta}{\tilde \Delta_{\min}-2\Delta}\leq \frac{3}{2}$ holds true for the first time. If the event $G$ holds, then at any time step $t_0$ such that  $t\leq t_0\leq T$, we have the following:
\begin{itemize}
    \item If $A_{ij_1}>A_{ij_2}$, then $\bar A_{ij_1}>\bar A_{ij_2}$ 
    \item If $A_{i_1j}>A_{i_2j}$, then $\bar A_{i_1j}>\bar A_{i_2j}$
    \item If $\bar A_{ij_1}>\bar A_{ij_2}$, then $ A_{ij_1}> A_{ij_2}$ 
    \item If $\bar A_{i_1j}>\bar A_{i_2j}$, then $ A_{i_1j}> A_{i_2j}$
\end{itemize}
\end{lemma}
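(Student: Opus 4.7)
The plan is to mirror the proof of Lemma \ref{lem:alg1:gap} essentially verbatim, since the triggering condition on $\tilde\Delta_{\min}$ and the good event $G$ have the same form in both algorithms. The only difference here is that Algorithm \ref{alg-ucb-2} also tracks $\tilde\Delta_{m_2}$, but that quantity plays no role in these four inclusions; only $\tilde\Delta_{\min}$ and $\Delta_{\min}$ enter.

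First I would extract from the condition $\tfrac{\tilde\Delta_{\min}+2\Delta}{\tilde\Delta_{\min}-2\Delta}\le \tfrac{3}{2}$ the inequality $\Delta \le \tilde\Delta_{\min}/10$. Combining this with Lemma \ref{lem:alg2:dmin:ratio} (which gives $\tilde\Delta_{\min}\le \tfrac{5}{4}\Delta_{\min}$) yields $\Delta \le \Delta_{\min}/8$. Since $\Delta=\sqrt{2\log(16T/\delta)/t}$ is non-increasing in $t$, the corresponding quantity at any later time $t_0\ge t$ is also at most $\Delta$, and event $G$ then guarantees $|A_{ij}-\bar A_{ij}|\le \Delta$ for every $(i,j)$ at time $t_0$.

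For the two forward implications (from $A$ to $\bar A$), I would use the $\Delta_{\min}$-gap: if $A_{ij_1}>A_{ij_2}$ then $A_{ij_1}\ge A_{ij_2}+\Delta_{\min}$, hence
\begin{align*}
\bar A_{ij_1} \;\ge\; A_{ij_1}-\Delta \;\ge\; A_{ij_2}+\Delta_{\min}-\Delta \;>\; A_{ij_2}+\Delta \;\ge\; \bar A_{ij_2},
\end{align*}
where the strict inequality uses $\Delta\le \Delta_{\min}/8$. The column case is identical with indices rearranged. For the two reverse implications (from $\bar A$ to $A$), I would instead invoke the empirical $\tilde\Delta_{\min}$-gap: if $\bar A_{ij_1}>\bar A_{ij_2}$ then $\bar A_{ij_1}\ge \bar A_{ij_2}+\tilde\Delta_{\min}$, and using $\Delta\le \tilde\Delta_{\min}/10$ gives
\begin{align*}
A_{ij_1} \;\ge\; \bar A_{ij_1}-\Delta \;\ge\; \bar A_{ij_2}+\tilde\Delta_{\min}-\Delta \;>\; \bar A_{ij_2}+\Delta \;\ge\; A_{ij_2}.
\end{align*}

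There is no genuine obstacle: once $\Delta\le \min\{\Delta_{\min}/8,\tilde\Delta_{\min}/10\}$ is secured at the triggering time $t$ and propagated to $t_0\ge t$ by monotonicity, the four claims reduce to one-line triangle-inequality arguments. The proof is essentially copied from Appendix \ref{appendix:thm3}, substituting Lemmas \ref{lem:alg2:tmin} and \ref{lem:alg2:dmin:ratio} for their Algorithm \ref{alg-ucb-1} counterparts.
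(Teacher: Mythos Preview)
Your proposal is correct and follows the paper's own proof essentially line for line: extract $\Delta\le\tilde\Delta_{\min}/10$ from the triggering condition, upgrade to $\Delta\le\Delta_{\min}/8$ via Lemma~\ref{lem:alg2:dmin:ratio}, propagate the bound to $t_0\ge t$ by monotonicity of $\Delta$ under event $G$, and then run the four chain-of-inequalities arguments exactly as in the proof of Lemma~\ref{lem:alg1:gap}. The only nit is that Lemma~\ref{lem:alg2:tmin} is not actually needed here---only Lemma~\ref{lem:alg2:dmin:ratio} is invoked---but this does not affect the argument.
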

\begin{proof}
As $\frac{\tilde \Delta_{\min}+2\Delta}{\tilde \Delta_{\min}-2\Delta}\leq \frac{3}{2}$, we have $\Delta\leq \frac{\tilde \Delta_{\min}}{10}$. Due to Lemma \ref{lem:alg2:dmin:ratio}, we have $\Delta\leq \frac{\Delta_{\min}}{8}$. As event $G$ holds, for any element $(i,j)$, we have $|A_{ij}-\bar A_{ij}|\leq \sqrt{\frac{2\log(\frac{16T}{\delta})}{t_0}}\leq \Delta $.

If $A_{ij_1}>A_{ij_2}$, we have the following:
\begin{align*}
    \bar A_{ij_1} & \geq A_{ij_1}-\Delta \\
    & \geq A_{ij_2}+\Delta_{\min}-\Delta \tag{as $A_{ij_1}- A_{ij_2}\geq \Delta_{\min}$}\\
    & > A_{ij_2} + \Delta \tag{as $\Delta\leq \frac{ \Delta_{\min}}{8}$}\\
    & \geq \bar A_{ij_2} \tag{as event $G$ holds}\\
\end{align*}

If $A_{i_1j}>A_{i_2j}$, we have the following:
\begin{align*}
    \bar A_{i_1j} & \geq A_{i_1j}-\Delta \\
    & \geq A_{i_2j}+\Delta_{\min}-\Delta \tag{as $A_{i_1j}- A_{i_2j}\geq \Delta_{\min}$}\\
    & > A_{i_2j} + \Delta \tag{as $\Delta\leq \frac{ \Delta_{\min}}{8}$}\\
    & \geq \bar A_{i_2j} \tag{as event $G$ holds}\\
\end{align*}

If $\bar A_{ij_1}>\bar A_{ij_2}$, we have the following:
\begin{align*}
    A_{ij_1} & \geq \bar A_{ij_1}-\Delta \\
    & \geq \bar A_{ij_2}+\tilde\Delta_{\min}-\Delta \tag{as $\bar A_{ij_1}-\bar A_{ij_2}\geq \tilde\Delta_{\min}$}\\
    & >\bar A_{ij_2} + \Delta \tag{as $\Delta\leq \frac{\tilde \Delta_{\min}}{10}$}\\
    & \geq A_{ij_2} \tag{as event $G$ holds}\\
\end{align*}

If $\bar A_{i_1j}>\bar A_{i_1j}$, we have the following:
\begin{align*}
    A_{i_1j} & \geq \bar A_{i_1j}-\Delta \\
    & \geq \bar A_{i_2j}+\tilde\Delta_{\min}-\Delta \tag{as $\bar A_{i_1j}-\bar A_{i_2j}\geq \tilde\Delta_{\min}$}\\
    & >\bar A_{i_2j} + \Delta \tag{as $\Delta\leq \frac{\tilde \Delta_{\min}}{10}$}\\
    & \geq A_{i_2j} \tag{as event $G$ holds}\\
\end{align*}
\end{proof}
\begin{corollary}\label{cor:alg2:saddle}
Let $t$ be the time step when the condition $1\leq \frac{\tilde \Delta_{\min}+2\Delta}{\tilde \Delta_{\min}-2\Delta}\leq \frac{3}{2}$ holds true for the first time. If the event $G$ holds, then at any time step $t_0$ such that  $t\leq t_0\leq T$, we have the following:
\begin{itemize}
    \item $(i,j)$ is PSNE of $A$ if and only if $(i,j)$ is a PSNE of $\bar A$.    
    \item $A$ does not have a PSNE if and only if $\bar A$ does not have a PSNE.
\end{itemize}
\end{corollary}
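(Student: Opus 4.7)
The plan is to reduce both equivalences to the sign-preservation statements provided by Lemma \ref{lem:alg2:gap}. Recall that $(i,j)$ is a PSNE of a matrix $M$ if and only if $M_{ij} \geq M_{i'j}$ for every $i' \in \{1,2\}$ and $M_{ij} \leq M_{ij'}$ for every $j' \in \{1,2\}$. That is, membership in the PSNE set is determined entirely by the signs of same-row and same-column pairwise differences; no cross-comparisons between entries in different rows and different columns are needed.

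First I would prove the forward direction of the first bullet. Fix any $t_0$ with $t \leq t_0 \leq T$ and suppose $(i,j)$ is a PSNE of $A$. Since $\Delta_{\min} > 0$ (by the assumption in the triggering condition together with Lemma \ref{lem:alg2:dmin:ratio}, which implies $\tilde\Delta_{\min} > 0$ and hence $\Delta_{\min} > 0$ via Lemma \ref{lem:alg2:delta}), all same-row and same-column inequalities defining the PSNE are strict. Thus for each $i' \neq i$, $A_{ij} > A_{i'j}$, and by Lemma \ref{lem:alg2:gap} this implies $\bar A_{ij} > \bar A_{i'j}$; similarly for each $j' \neq j$, $A_{ij} < A_{ij'}$ implies $\bar A_{ij} < \bar A_{ij'}$. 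Hence $(i,j)$ is a PSNE of $\bar A$. The converse follows by applying the other two bullets of Lemma \ref{lem:alg2:gap} in exactly the same manner: any strict inequality in $\bar A$ within a row or a column lifts to the same inequality in $A$, which is enough to conclude that $(i,j)$ is a PSNE of $A$.

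The second bullet is then immediate: $A$ has no PSNE if and only if there is no $(i,j)$ that is a PSNE of $A$, and by the first bullet this is equivalent to the absence of any PSNE in $\bar A$. The only mild obstacle is handling the edge case of equalities, but this is ruled out because $\Delta_{\min}>0$ after time $t$, so no same-row or same-column entries of $A$ can coincide; and likewise $\tilde\Delta_{\min}>0$ for $\bar A$, so Lemma \ref{lem:alg2:gap}'s strict-inequality hypotheses are always applicable in both directions.
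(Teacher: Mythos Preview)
Your proposal is correct and matches the paper's approach: the paper states this result as an immediate corollary of Lemma \ref{lem:alg2:gap} without further proof, and you have simply spelled out the routine deduction that PSNE status is determined entirely by same-row and same-column strict inequalities, all of which are preserved by that lemma. The only point worth tightening is your justification that $\bar A$ has no same-row or same-column ties at time $t_0$: rather than invoking $\tilde\Delta_{\min}>0$ directly (which is only established at time $t$), note that $\Delta_{\min}>0$ together with the forward direction of Lemma \ref{lem:alg2:gap} forces every strict inequality in $A$ to persist in $\bar A$, so $\bar A$ inherits the absence of ties.
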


The following lemma bounds the ratio $\frac{\tilde D}{|D|}$.
\begin{lemma}\label{lem:alg2:D:ratio}
Let $t$ be the time step when the condition $1\leq \frac{\tilde \Delta_{\min}+2\Delta}{\tilde \Delta_{\min}-2\Delta}\leq \frac{3}{2}$ holds true for the first time. If the event $G$ holds and $A$ has a unique Equilibrium which is not a PSNE, then $\frac{5}{6}\leq \frac{\tilde D}{|D|} \leq \frac{5}{4}$ at the time step $t$.
\end{lemma}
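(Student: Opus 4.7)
The plan is to follow essentially the same argument as Lemma \ref{lem:alg1:D:ratio} (which proves the same statement for Algorithm \ref{alg-ucb-1}), since the events $G$, the definition of $\tilde D$, and the threshold condition on $\tilde\Delta_{\min}$ are identical here.

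First, I would observe that under event $G$ we have $|\bar A_{ij} - A_{ij}| \le \Delta$ for every entry $(i,j)$, so a four-term triangle inequality applied to $\tilde D = |\bar A_{11} - \bar A_{12} - \bar A_{21} + \bar A_{22}|$ and $|D| = |A_{11} - A_{12} - A_{21} + A_{22}|$ gives $\bigl|\tilde D - |D|\bigr| \le 4\Delta$. Next, from the hypothesis $\frac{\tilde\Delta_{\min} + 2\Delta}{\tilde\Delta_{\min} - 2\Delta} \le \tfrac{3}{2}$, algebra yields $\Delta \le \tilde\Delta_{\min}/10$.

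The key additional input (and the place where the unique-non-PSNE assumption on $A$ is used) is the bound $2\tilde\Delta_{\min} \le \tilde D$. For this I would invoke Corollary \ref{cor:alg2:saddle}: since $A$ has a unique Nash equilibrium which is not a PSNE, $\bar A$ likewise has no PSNE at this time step. By Proposition \ref{prop:matrix:nosaddle} applied to $\bar A$, either $\bar A_{11},\bar A_{22}$ strictly dominate $\bar A_{12},\bar A_{21}$ in their rows/columns or the reverse; in either case the four differences $\bar A_{11}-\bar A_{12}$, $\bar A_{22}-\bar A_{21}$, $\bar A_{11}-\bar A_{21}$, $\bar A_{22}-\bar A_{12}$ all share a common sign, so $\tilde D = |\bar A_{11}-\bar A_{12}| + |\bar A_{22}-\bar A_{21}|$ (and equals $|\bar A_{11}-\bar A_{21}| + |\bar A_{22}-\bar A_{12}|$), each summand being at least $\tilde\Delta_{\min}$. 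Thus $\tilde D \ge 2\tilde\Delta_{\min}$, and consequently $\Delta \le \tilde D/20$.

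Finally, combining $\bigl|\tilde D - |D|\bigr| \le 4\Delta$ with $4\Delta \le \tilde D/5$ yields the two-sided bound: $\tilde D/|D| \le \tilde D/(\tilde D - 4\Delta) \le \tilde D/(4\tilde D/5) = 5/4$ on one side, and $\tilde D/|D| \ge \tilde D/(\tilde D + 4\Delta) \ge \tilde D/(6\tilde D/5) = 5/6$ on the other. The only nonroutine step is the $2\tilde\Delta_{\min} \le \tilde D$ bound, which is the reason the hypothesis ``$A$ has a unique Nash equilibrium which is not a PSNE'' enters the statement; everything else is triangle inequalities combined with the prior lemma that governs the threshold on $\tilde\Delta_{\min}$.
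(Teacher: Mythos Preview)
Your proposal is correct and follows essentially the same approach as the paper's proof: derive $\Delta\le\tilde\Delta_{\min}/10$ from the threshold condition, combine with $2\tilde\Delta_{\min}\le\tilde D$ to get $\Delta\le\tilde D/20$, then sandwich $|D|$ between $\tilde D\pm 4\Delta$. The paper simply asserts the inequality $2\tilde\Delta_{\min}\le\tilde D$ without justification, whereas you supply the reason (via Corollary~\ref{cor:alg2:saddle} and Proposition~\ref{prop:matrix:nosaddle} applied to $\bar A$); this is a welcome clarification, not a different route.
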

\begin{proof}
Let us assume that the event $G$ holds. Then for every element $(i,j)$, we have  $|A_{ij}-\bar A_{ij}|\leq \Delta$. As $\frac{\tilde \Delta_{\min}+2\Delta}{\tilde \Delta_{\min}-2\Delta}\leq \frac{3}{2}$ and $2\tilde\Delta_{\min}\leq \tilde D$, we have $\Delta\leq \frac{\tilde\Delta_{\min}}{10}\leq \frac{\tilde D}{20}$.
Now observe that $\frac{\tilde D}{|D|}\leq \frac{\tilde D}{\tilde D-4\Delta}\leq \frac{\tilde D}{4\tilde D/5}= \frac{5}{4}$. Next observe that $\frac{\tilde D}{|D|}\geq \frac{\tilde D}{\tilde D+4\Delta}\geq\frac{\tilde D}{6\tilde D/5}= \frac{5}{6}$.
\end{proof}

The following two lemmas bound the ratio $\frac{\Delta_{m_2}}{|D|}$ when certain conditions in the algorithm \ref{alg-ucb-2} hold true.
\begin{lemma}\label{lem:alg2:dm2/D:one}
If the condition in the line \ref{alg2:con2} of the algorithm \ref{alg-ucb-2} holds true and event $G$ holds, then $\frac{\Delta_{m_2}}{|D|}\geq\frac{1}{12}$
\end{lemma}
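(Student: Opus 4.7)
The plan is to assemble the result directly from the two ratio-control lemmas already proved (Lemmas \ref{lem:alg2:dm2:ratio} and \ref{lem:alg2:D:ratio}) combined with the active clause of the conditional on line \ref{alg2:con2}, namely $\tilde\Delta_{m_2}\geq \tilde D/8$. The strategy is first to ensure that both prior ratio bounds are applicable in this branch, then to transfer the empirical inequality $\tilde\Delta_{m_2}/\tilde D\geq 1/8$ back to the true quantities $\Delta_{m_2}$ and $|D|$.

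First I would check applicability. Line \ref{alg2:con2} is reached only after line \ref{alg2:con1} fails, so $\bar A$ has no PSNE; by Corollary \ref{cor:alg2:saddle}, $A$ has no PSNE either, and since the condition $1\leq \tfrac{\tilde\Delta_{\min}+2\Delta}{\tilde\Delta_{\min}-2\Delta}\leq \tfrac{3}{2}$ combined with event $G$ forces $\Delta_{\min}>0$, Proposition \ref{prop:matrix:main} guarantees $A$ has a unique Nash equilibrium which is not a PSNE. Hence Lemma \ref{lem:alg2:D:ratio} applies. The same ratio condition is also the hypothesis of Lemma \ref{lem:alg2:dm2:ratio}, so both bounds $\tfrac{5}{6}\leq \tilde\Delta_{m_2}/\Delta_{m_2}\leq \tfrac{5}{4}$ and $\tfrac{5}{6}\leq \tilde D/|D|\leq \tfrac{5}{4}$ are in force.

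From these I extract $\Delta_{m_2}\geq \tfrac{4}{5}\tilde\Delta_{m_2}$ and $|D|\leq \tfrac{6}{5}\tilde D$, and combine with $\tilde\Delta_{m_2}\geq \tilde D/8$ to compute
\[
\frac{\Delta_{m_2}}{|D|} \;\geq\; \frac{(4/5)\tilde\Delta_{m_2}}{(6/5)\tilde D} \;=\; \frac{2}{3}\cdot\frac{\tilde\Delta_{m_2}}{\tilde D} \;\geq\; \frac{2}{3}\cdot\frac{1}{8} \;=\; \frac{1}{12},
\]
which is the claim. I do not foresee a substantive obstacle: the analytic content lives in the earlier ratio lemmas, and what remains here is a one-line chain of inequalities once applicability is verified.
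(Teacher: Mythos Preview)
Your proof is correct and follows essentially the same approach as the paper: both invoke Lemmas \ref{lem:alg2:dm2:ratio} and \ref{lem:alg2:D:ratio} to get $\Delta_{m_2}\geq \tfrac{4}{5}\tilde\Delta_{m_2}$ and $|D|\leq \tfrac{6}{5}\tilde D$, then combine with $\tilde\Delta_{m_2}\geq \tilde D/8$ to conclude $\Delta_{m_2}/|D|\geq \tfrac{2}{3}\cdot\tfrac{1}{8}=\tfrac{1}{12}$. Your added verification that $A$ has a unique non-PSNE equilibrium (so that Lemma \ref{lem:alg2:D:ratio} applies) is a point the paper leaves implicit.
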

\begin{proof}
Due to Lemma \ref{lem:alg2:dm2:ratio}, we have $\Delta_{m_2}\geq \frac{4\tilde \Delta_{m_2}}{5}$. Due to Lemma \ref{lem:alg2:D:ratio}, we have $|D|\leq \frac{6\tilde D}{5}$. Hence, we have $\frac{\Delta_{m_2}}{D}\geq \frac{2\tilde\Delta_{m_2}}{3\tilde D}\geq \frac{1}{12}$. We get the latter inequality as the condition in the line \ref{alg2:con2} holds true.
\end{proof}

\begin{lemma}\label{lem:alg2:dm2/D:both}
If the condition in the line \ref{alg2:con3} of the algorithm \ref{alg-ucb-2} holds true and event $G$ holds, then $\frac{2\Delta_{m_2}}{3|D|}\leq \frac{\tilde\Delta_{m_2}}{\tilde D}\leq\frac{3\Delta_{m_2}}{2|D|}$.
\end{lemma}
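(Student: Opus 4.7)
The plan is to obtain this lemma as a direct consequence of the two ratio bounds already established in Lemmas~\ref{lem:alg2:dm2:ratio} and \ref{lem:alg2:D:ratio}. Both of those lemmas are stated under the hypothesis that the condition $1 \leq \frac{\tilde\Delta_{\min}+2\Delta}{\tilde\Delta_{\min}-2\Delta} \leq \frac{3}{2}$ holds (and event $G$), which is precisely part of the condition in line~\ref{alg2:con3} of Algorithm~\ref{alg-ucb-2}. Hence both ratio bounds are available here.

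First I would invoke Lemma~\ref{lem:alg2:dm2:ratio} to obtain $\frac{5}{6} \leq \frac{\tilde\Delta_{m_2}}{\Delta_{m_2}} \leq \frac{5}{4}$, and Lemma~\ref{lem:alg2:D:ratio} to obtain $\frac{5}{6} \leq \frac{\tilde D}{|D|} \leq \frac{5}{4}$. The latter applies because the condition in line~\ref{alg2:con3} only triggers when $\bar A$ does not have a PSNE, and by Corollary~\ref{cor:alg2:saddle} this is equivalent to $A$ having a unique Nash equilibrium which is not a PSNE, exactly the hypothesis required by Lemma~\ref{lem:alg2:D:ratio}.

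Next I would write the identity
\[
\frac{\tilde\Delta_{m_2}}{\tilde D} \;=\; \frac{\tilde\Delta_{m_2}}{\Delta_{m_2}} \cdot \frac{|D|}{\tilde D} \cdot \frac{\Delta_{m_2}}{|D|},
\]
and substitute the extreme values of the two factor-ratios. For the upper bound this gives $\frac{\tilde\Delta_{m_2}}{\tilde D} \leq \frac{5}{4} \cdot \frac{6}{5} \cdot \frac{\Delta_{m_2}}{|D|} = \frac{3}{2}\,\frac{\Delta_{m_2}}{|D|}$, and for the lower bound $\frac{\tilde\Delta_{m_2}}{\tilde D} \geq \frac{5}{6} \cdot \frac{4}{5} \cdot \frac{\Delta_{m_2}}{|D|} = \frac{2}{3}\,\frac{\Delta_{m_2}}{|D|}$, which is exactly the claim.

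There is no real obstacle: this lemma is essentially a bookkeeping step that packages the two previously established one-sided deviation bounds into a joint statement about the ratio $\tilde\Delta_{m_2}/\tilde D$, which is the quantity actually appearing in the definition of $N$ in line~\ref{alg2:con3} and therefore the one that governs the sample-complexity analysis in the proof of Theorem~\ref{thm:alg2}.
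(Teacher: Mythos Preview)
The proposal is correct and matches the paper's own proof essentially verbatim: the paper invokes Lemma~\ref{lem:alg2:dm2:ratio} and Lemma~\ref{lem:alg2:D:ratio} to get $\frac{5}{6}\leq \frac{\tilde\Delta_{m_2}}{\Delta_{m_2}}\leq \frac{5}{4}$ and $\frac{5}{6}\leq \frac{\tilde D}{|D|}\leq \frac{5}{4}$, and then combines the extremes to obtain the claimed two-sided bound on $\tilde\Delta_{m_2}/\tilde D$. Your extra remark that line~\ref{alg2:con3} can only be reached when $\bar A$ (and hence $A$, via Corollary~\ref{cor:alg2:saddle}) has no PSNE is a helpful justification for applying Lemma~\ref{lem:alg2:D:ratio}, though the paper leaves this implicit.
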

\begin{proof}
Due to Lemma \ref{lem:alg2:dm2:ratio}, we have $\frac{5\Delta_{m_2}}{6}\leq \tilde \Delta_{m_2}\leq \frac{5\Delta_{m_2}}{4}$. Due to Lemma \ref{lem:alg2:D:ratio}, we have $\frac{5|D|}{6}\leq \tilde D\leq \frac{5|D|}{4}$. Hence, we have $\frac{2\Delta_{m_2}}{3|D|}\leq \frac{\tilde\Delta_{m_2}}{\tilde D}\leq\frac{3\Delta_{m_2}}{2|D|}$.
\end{proof}
We now present the main lemma that establishes the correctness of the algorithm \ref{alg-ucb-2} when the input matrix $A$ does not have a PSNE.
\begin{lemma}\label{lem:alg2:main}
If the condition in the line \ref{alg2:con3} of the algorithm \ref{alg-ucb-2} holds true and event $G$ holds, then Nash equilibrium of the matrix $B$ is also an $\varepsilon$-Nash equilibrium of $A$
\end{lemma}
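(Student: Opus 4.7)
The plan is to show that the Nash equilibrium $(\hat x, \hat y)$ of $B$ is an $\varepsilon$--Nash equilibrium of $A$ by exploiting a key structural property: the asymmetric perturbation $B-\bar A$ is designed so that one direction of deviation in $A$ is \emph{suppressed} (has the right sign to be non--improving), while the other direction is bounded by a product of two small quantities, namely $\Delta_1$ and the smaller NE component.

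\emph{Step 1 (setup and structure of $B$).} Since the condition on line~\ref{alg2:con3} has triggered and we sample $N$ more times, each entry has been sampled $N+t$ times, so on event $G$ we have $|\bar A_{ij} - A_{ij}|\le \Delta_1$ for all $(i,j)$. The definition of $B$ then gives $|B_{ij}-A_{ij}|\le 3\Delta_1$, with the crucial sign property $B_{i_1j_2}\le A_{i_1j_2}$ and $B_{i_2j_1}\ge A_{i_2j_1}$. Using Lemmas~\ref{lem:alg2:dmin:ratio}, \ref{lem:alg2:dm2:ratio}, \ref{lem:alg2:D:ratio} together with the definition of $N$, I can show $\Delta_1\le \tfrac{\varepsilon \tilde D}{10\tilde\Delta_{m_2}}\le \tfrac{3\varepsilon|D|}{20\Delta_{m_2}}$. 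Because $\Delta_1$ is small relative to all row/column gaps in $\bar A$ (by Lemma~\ref{lem:alg2:gap}), $B$ still has a unique non--PSNE mixed Nash equilibrium $(\hat x,\hat y)$ with full support, so Proposition~\ref{prop:matrix:main} applies.

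\emph{Step 2 (approximate saddle--point identity).} Full support of $(\hat x,\hat y)$ in the game $B$ gives $e_i^\top B\hat y = V_B^\star = \hat x^\top B e_j$ for all $i,j$. Subtracting these identities from the corresponding quantities for $A$, a direct computation yields, for $i\neq i'$,
\begin{align*}
e_i^\top A\hat y - \hat x^\top A\hat y = \hat x_{i'} \sum_{j}\hat y_j\bigl[(B_{i'j}-A_{i'j})-(B_{ij}-A_{ij})\bigr],
\end{align*}
and symmetrically for the column deviations. Now the sign structure of $B-A$ identified in Step~1 forces the bracketed quantity to lie in $[0,4\Delta_1]$ when $(i,i')=(i_1,i_2)$, and in $[-4\Delta_1,0]$ when $(i,i')=(i_2,i_1)$. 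Hence deviation to row $i_2$ is non--profitable under $A$, while deviation to row $i_1$ is profitable by at most $4\Delta_1\hat x_{i_2}$. The same argument gives $4\Delta_1\hat y_{j_2}$ on the column side.

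\emph{Step 3 (small minority mass).} Here is where the choice of $i_1,j_1$ as the minimum--gap indices pays off. Using Proposition~\ref{prop:matrix:main} to write $\hat x_{i_2}=|B_{i_1j_1}-B_{i_1j_2}|/(|B_{i_1j_1}-B_{i_1j_2}|+|B_{i_2j_1}-B_{i_2j_2}|)$, I will bound the numerator by $|\bar A_{i_1j_1}-\bar A_{i_1j_2}|+2\Delta_1\le \tilde\Delta_{m_2}+2\Delta_1$ and the denominator below by $\tilde D-O(\Delta_1)$. Combined with Step~1 this yields $4\Delta_1\hat x_{i_2}\lesssim \Delta_1\cdot \tilde\Delta_{m_2}/\tilde D$, and plugging in the bound $\Delta_1\le \varepsilon\tilde D/(10\tilde\Delta_{m_2})$ gives $4\Delta_1\hat x_{i_2}\le \varepsilon$. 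An analogous argument controls $4\Delta_1\hat y_{j_2}\le \varepsilon$ via $\tilde\Delta_{m_2}/\tilde D$ on the column side. Combining Steps~2 and 3 yields $\max_i e_i^\top A\hat y - \hat x^\top A\hat y \le \varepsilon$ and $\hat x^\top A\hat y - \min_j \hat x^\top A e_j \le \varepsilon$, which is exactly the $\varepsilon$--Nash condition.

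\emph{Main obstacle.} The delicate part is Step~2: the sign pattern of $B-A$ must line up correctly with which row/column $(i_1,j_1)$ is ``dominant'' in the unique NE structure of $A$, and this requires a short case analysis based on which of the two sign patterns (from Proposition~\ref{prop:matrix:nosaddle}) holds for $A$, together with the preservation of orderings given by Lemma~\ref{lem:alg2:gap}. Also, in Step~3 one must verify that the denominator $|B_{i_1j_1}-B_{i_1j_2}|+|B_{i_2j_1}-B_{i_2j_2}|$ really is comparable to $\tilde D$ after the perturbation (the $\pm 2\Delta_1$ shifts could in principle cancel structure, but the assumption $\tilde\Delta_{m_2}<\tilde D/8$ combined with $\Delta_1\le \tilde\Delta_{m_2}/\text{const}$ rules this out).
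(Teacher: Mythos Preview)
Your proposal is correct and follows essentially the same route as the paper: bound $\Delta_1$ via the definition of $N$, exploit the sign structure of $B-A$ created by the $\pm 2\Delta_1$ shifts, and control the nontrivial deviations by the small minority masses $\hat x_{i_2},\hat y_{j_2}\lesssim\tilde\Delta_{m_2}/\tilde D$. The paper packages the perturbation step as a separate Lemma~\ref{lem:improved:NE} (and therefore first establishes $i_1=\arg\max_i\hat x_i$, $j_1=\arg\max_j\hat y_j$), whereas your inline computation uses the full sign pattern of $B-A$ to make two of the four deviations $\le 0$ outright---a mild streamlining, but not a different idea.
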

\begin{proof}
First observe that $\Delta_1\leq \sqrt{2\log(\frac{16T}{\delta}) /(\frac{200\tilde\Delta^2_{m_2}\log (\frac{16T}{\delta})}{\varepsilon^2 \tilde D^2})}=\frac{\varepsilon\tilde D}{10\tilde \Delta_{m_2}}$. Due to Lemma \ref{lem:alg2:dm2/D:both}, we then have $\Delta_1\leq \frac{3\varepsilon |D|}{20 \Delta_{m_2}}$. Let $\Delta_{ij}:=A_{ij}-B_{ij}$ for all $i,j$. As event $G$ holds true and due to the construction of the matrix $B$, we have $|\Delta_{ij}|\leq 3\Delta_1<\frac{\varepsilon |D|}{2\Delta_{m_2}}$. 

As $\frac{\tilde \Delta_{\min}+2\Delta}{\tilde \Delta_{\min}-2\Delta}\leq \frac{3}{2}$ and $\Delta_1\leq\Delta$, we have $\Delta_1\leq \frac{\tilde \Delta_{\min}}{10}$. Due to Lemma \ref{lem:alg2:dmin:ratio}, we then have $\Delta_1\leq \frac{\Delta_{\min}}{8}$. As the condition in the line \ref{alg2:con3} holds true and due to Lemma \ref{lem:alg2:dm2/D:both}, we have $\frac{\Delta_{m_2}}{|D|}\leq \frac{3\tilde\Delta_{m_2}}{2\tilde D}<\frac{3}{16}$.

Now we show that $B$ does not have PSNE. Recall that $B_{i_1j_1}= \bar A_{i_1j_1}$, $B_{i_2j_2}= \bar A_{i_2j_2}$, $B_{i_1j_2}= \bar A_{i_1j_2}-2\Delta_1$ and $B_{i_2j_1}= \bar A_{i_2j_1}+2\Delta_1$. Due to Corollary \ref{cor:alg2:saddle}, $\bar A$ does not have a PSNE. Hence, it suffices to show that $2\Delta_1<\min\{|\bar A_{11}-\bar A_{12}|,|\bar A_{21}-\bar A_{22}|\}|\bar A_{11}-\bar A_{21}|,|\bar A_{12}-\bar A_{22}|\}$. As event $G$ holds and due to Lemma \ref{lem:alg2:delta},  we have $\min\{|\bar A_{11}-\bar A_{12}|,|\bar A_{21}-\bar A_{22}|\}|\bar A_{11}-\bar A_{21}|,|\bar A_{12}-\bar A_{22}|\}\geq\Delta_{\min}-2\Delta_1$. As $\Delta_1\leq \frac{\Delta_{\min}}{8}$, we have $\Delta_{\min}-2\Delta_1>2\Delta_1$. Hence, $B$ does not have a PSNE.

Let $(x^*,y^*):=((x_1^*,x_2^*),(y_1^*,y_2^*))$ be the Nash equilibrium of $B$. Let $D_B=|B_{11}-B_{12}-B_{21}+B_{22}|$. Observe that $D_B=|\bar A_{11}-\bar A_{12}-\bar A_{21}+\bar A_{22}|\geq |D|-4\Delta_1$ as event $G$ holds true. Now we have the following:
\begin{align*}
    x_{i_2}^*&=\frac{|B_{i_11}-B_{i_12}|}{D_B}\\
    & \leq \frac{|\bar A_{i_11}-\bar A_{i_12}|+2\Delta_1}{D_B} \tag{Due to the construction of $B$}\\
    & \leq \frac{\tilde\Delta_{m_2}+2\Delta_1}{D_B} \tag{Due to the choice of $i_1$ in Algorithm \ref{alg-ucb-2}}\\
    & \leq \frac{\Delta_{m_2}+4\Delta_1}{|D|-4\Delta_1} \tag{as event $G$ holds}\\
    & \leq \frac{\Delta_{m_2}+\Delta_{m_2}/2}{|D|-|D|/4} \tag{as $\Delta_1\leq \frac{\Delta_{\min}}{8}\leq |D|/4$}\\
    & = \frac{2\Delta_{m_2}}{|D|}\\
    & <\frac{3}{8} \tag{as $\frac{\Delta_{m_2}}{|D|}<\frac{3}{16}$}
\end{align*}

Similarly we have the following:
\begin{align*}
    y_{j_2}^*&=\frac{|B_{1j_1}-B_{2j_1}|}{D_B}\\
    & \leq \frac{|\bar A_{1j_1}-\bar A_{2j_1}|+2\Delta_1}{D_B} \tag{Due to the construction of $B$}\\
    & \leq \frac{\tilde\Delta_{m_2}+2\Delta_1}{D_B} \tag{Due to the choice of $i_1$ in Algorithm \ref{alg-ucb-2}}\\
    & \leq \frac{\Delta_{m_2}+4\Delta_1}{|D|-4\Delta_1} \tag{as event $G$ holds}\\
    & \leq \frac{\Delta_{m_2}+\Delta_{m_2}/2}{|D|-|D|/4} \tag{as $\Delta_1\leq \frac{\Delta_{\min}}{8}\leq |D|/4$}\\
    & = \frac{2\Delta_{m_2}}{|D|}\\
    & <\frac{3}{8} \tag{as $\frac{\Delta_{m_2}}{|D|}<\frac{3}{16}$}
\end{align*}

Hence, we have shown that $i_1=\arg\max_ix_i^*$ and $j_1=\arg\max_i y_i^*$. 

Now we have $B_{i_1j_1}=\bar A_{i_1j_1} \leq A_{i_1j_1}+\Delta_1=B_{i_1j_1}+\Delta_{i_1j_1}+\Delta_1$. We also have $B_{i_2j_1}=\bar A_{i_2j_1}+2\Delta_1\geq A_{i_2j_1}+\Delta_1=B_{i_2j_1}+\Delta_{i_2j_1}+\Delta_1$. Hence we have $\Delta_{i_1j_1}\geq -\Delta_1 \geq \Delta_{i_2j_1}$.

Similarly, we have $B_{i_1j_1}=\bar A_{i_1j_1} \geq A_{i_1j_1}-\Delta_1=B_{i_1j_1}+\Delta_{i_1j_1}-\Delta_1$. We also have $B_{i_1j_2}=\bar A_{i_1j_2}-2\Delta_1\leq A_{i_1j_2}-\Delta_1=B_{i_1j_2}+\Delta_{i_1j_2}-\Delta_1$. Hence we have $\Delta_{i_1j_1}\leq \Delta_1 \leq \Delta_{i_1j_2}$.

Hence, all the conditions of the Lemma \ref{lem:improved:NE} is satisfied by the matrices $A$ and $B$. Hence, we can apply Lemma \ref{lem:improved:NE} and conclude that $(x^*,y^*)$ is an $\varepsilon$-Nash equilibrium of $A$.
\end{proof}

\subsection{Technical Lemma for Upper Bound}
In this section, we present an important technical lemma that is used to establish the upper bound on the sample complexity of finding $\varepsilon$-Nash equilibrium.

Let us first define two matrices $A_1$ and $A_2$ as follows:
\[
A_1 = \begin{bmatrix} 
   a & b \\
   c & d \\
    \end{bmatrix}
A_2 = \begin{bmatrix} 
   a+\Delta_{11} & b+\Delta_{12} \\
   c+\Delta_{21} & d+\Delta_{22} \\
    \end{bmatrix}\]
Let $(x^*,y^*):=((x_1^*,x_2^*),(y_1^*,y_2^*))$ be the unique Nash equilibrium of $A_1$. Let $\supp(x^*)=\supp(y^*)=\{1,2\}$. Let $i^*:=\arg\max_{i}x_i^*$ and $j^*:=\arg\max_jy_j^*$. Recall that $\Delta_{m_2}:=\max\{\min\{|a-b|, |d-c|\},\min\{|a-c|,|d-b|\}\}$ and  $D:=a-b-c+d$. Now we present the technical lemma.
\begin{lemma}\label{lem:improved:NE}
Let $|\Delta_{ij}|\leq\frac{\varepsilon |D|}{2\Delta_{m_2}}$ for all $i,j$. If $\Delta_{i^*j^*}\geq \Delta_{ij^*}$ for all $i$ and $\Delta_{i^*j^*}\leq \Delta_{i^*j}$ for all $j$, then $(x^*,y^*)$ is an $\varepsilon$-Nash equilibrium of matrix $A_2$.
\end{lemma}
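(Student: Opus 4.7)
The plan is to verify the two defining inequalities of an $\varepsilon$-Nash equilibrium separately: (a) $\max_i (A_2 y^*)_i - \langle x^*, A_2 y^*\rangle\leq \varepsilon$ and (b) $\langle x^*, A_2 y^*\rangle - \min_j (x^{*\top}A_2)_j\leq \varepsilon$. Writing $r_i:=\sum_j\Delta_{ij}y_j^*$ and $c_j:=\sum_i x_i^*\Delta_{ij}$, and using that $(x^*,y^*)$ is a full-support Nash equilibrium of $A_1$ so that both rows of $A_1 y^*$ and both columns of $x^{*\top}A_1$ equal $V_{A_1}^*$, inequality (a) reduces to showing $\max_i r_i - \sum_i x_i^* r_i \leq \varepsilon$ and (b) reduces to $\sum_j y_j^* c_j - \min_j c_j \leq \varepsilon$. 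In the $2\times 2$ case, each of these left-hand sides collapses to a single product of the form ``mass on one action times $|r_1-r_2|$'' (respectively $|c_1-c_2|$), where which mass appears depends on which row (respectively column) attains the max (respectively min).

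The key geometric observation is that, letting $i^c$ and $j^c$ denote the indices complementary to $i^*$ and $j^*$, the closed-form expression for the unique mixed Nash equilibrium of $A_1$ together with the definition of $\Delta_{m_2}$ yields
\begin{equation*}
 x^*_{i^c}\;\leq\;\tfrac{\Delta_{m_2}}{|D|},\qquad y^*_{j^c}\;\leq\;\tfrac{\Delta_{m_2}}{|D|},
\end{equation*}
since $x^*_{i^c}=\min(x_1^*,x_2^*)=\min(|a-b|,|d-c|)/|D|$, which is at most $\Delta_{m_2}/|D|$ by definition of $\Delta_{m_2}$ (symmetrically for $y^*_{j^c}$). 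Abbreviating $M:=\varepsilon|D|/(2\Delta_{m_2})$, the hypothesis gives $|\Delta_{ij}-\Delta_{k\ell}|\leq 2M$. Thus in the ``favorable'' sign case for (a), namely $r_{i^*}\geq r_{i^c}$, the left-hand side equals $x^*_{i^c}(r_{i^*}-r_{i^c})\leq (\Delta_{m_2}/|D|)\cdot 2M=\varepsilon$; analogously, the favorable case of (b), namely $c_{j^*}\leq c_{j^c}$, gives $y^*_{j^c}(c_{j^c}-c_{j^*})\leq (\Delta_{m_2}/|D|)\cdot 2M=\varepsilon$.

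The main obstacle is the ``unfavorable'' sign case $r_{i^c}>r_{i^*}$, in which the left-hand side of (a) instead equals $x^*_{i^*}(r_{i^c}-r_{i^*})$, and $x^*_{i^*}$ can be close to $1$. Here the hypothesis $\Delta_{i^*j^*}\geq \Delta_{ij^*}$ becomes essential: expanding
\begin{equation*}
r_{i^c}-r_{i^*}=(\Delta_{i^c j^*}-\Delta_{i^*j^*})\,y_{j^*}^*+(\Delta_{i^c j^c}-\Delta_{i^* j^c})\,y_{j^c}^*,
\end{equation*}
the first summand is non-positive by hypothesis, so $r_{i^c}-r_{i^*}\leq 2M\cdot y_{j^c}^*$ and therefore $x_{i^*}^*(r_{i^c}-r_{i^*})\leq 1\cdot 2M\cdot(\Delta_{m_2}/|D|)=\varepsilon$. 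The unfavorable case $c_{j^c}<c_{j^*}$ of (b) is handled symmetrically: the hypothesis $\Delta_{i^*j^*}\leq \Delta_{i^*j}$ kills the $x_{i^*}^*$ summand in the analogous expansion of $c_{j^*}-c_{j^c}$, leaving a bound of $y_{j^*}^*\cdot 2M\cdot x_{i^c}^*\leq \varepsilon$. Combining the four cases yields both (a) and (b), which is exactly the claim.
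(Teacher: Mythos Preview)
Your proposal is correct and follows essentially the same argument as the paper: after reducing both $\varepsilon$-Nash inequalities to statements about the perturbation differences $r_i-r_{i'}$ and $c_j-c_{j'}$, you bound the ``favorable'' cases using $x^*_{i^c},y^*_{j^c}\le \Delta_{m_2}/|D|$ and the ``unfavorable'' cases by invoking the sign hypotheses on $\Delta_{i^*j^*}$ to drop the large-mass summand, leaving only the small-mass factor---exactly the four estimates the paper writes out explicitly (there with the WLOG choice $i^*=j^*=1$ and without splitting into cases on which row/column attains the extremum).
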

\begin{proof}
First, observe that $\langle x^*,A_2y^* \rangle=\frac{ad-bc}{D}+\sum_{i,j}x_i^*y_j^*\Delta_{ij}$. Let $(A_2)^r_i$ denote the $i$-th row of $A_2$ and $(A_2)^c_j$ denote the $j$-th column of $A_2$. Now observe that $\langle (A_2)^r_1,y^*\rangle = \frac{ad-bc}{D}+y_1^*\Delta_{11}+y_2^*\Delta_{12}$ and $\langle (A_2)^r_2,y^*\rangle = \frac{ad-bc}{D}+y_1^*\Delta_{21}+y_2^*\Delta_{22}$. Finally, observe that $\langle (A_2)^c_1,x^*\rangle = \frac{ad-bc}{D}+x_1^*\Delta_{11}+x_2^*\Delta_{21}$ and $\langle (A_2)^c_2,x^*\rangle = \frac{ad-bc}{D}+x_1^*\Delta_{12}+x_2^*\Delta_{22}$.

W.l.o.g let us assume that $i^*=1$ and $j^*=1$. Now we have the following:
\begin{align*}
\langle (A_2)^r_1,y^*\rangle - \langle x^*,A_2y^* \rangle &= x_2^*y_1^*(\Delta_{11}-\Delta_{21})+x_2^*y_2^*(\Delta_{12}-\Delta_{22})\\
& \leq x_2^*y_1^*(|\Delta_{11}|+|\Delta_{21}|)+x_2^*y_2^*(|\Delta_{12}|+|\Delta_{22}|)\\
& \leq \frac{\varepsilon |D|}{\Delta_{m_2}}(x_2^*y_1^*+x_2^*y_2^*)\\
&= x_2^*\frac{\varepsilon |D|}{\Delta_{m_2}}\\
&= \frac{|a-b|}{|D|}\frac{\varepsilon |D|}{\Delta_{m_2}}\\
&\leq \varepsilon \tag{as $|a-b|\leq \Delta_{m_2}$}
\end{align*}
\begin{align*}
\langle (A_2)^r_2,y^*\rangle - \langle x^*,A_2y^* \rangle &= x_1^*y_1^*(\Delta_{21}-\Delta_{11})+x_1^*y_2^*(\Delta_{22}-\Delta_{12})\\
&\leq x_1^*y_2^*(\Delta_{22}-\Delta_{12}) \tag{as $\Delta_{11}\geq \Delta_{21}$}\\
& \leq x_1^*y_2^*(|\Delta_{22}|+|\Delta_{12}|)\\
& \leq \frac{\varepsilon |D|}{\Delta_{m_2}}x_1^*y_2^*\\
& \leq \frac{|a-c|}{|D|}\frac{\varepsilon |D|}{\Delta_{m_2}} \tag{as $x_1^*\leq 1$ and $y_2^*=\frac{|a-c|}{|D|}$}\\
& \leq \varepsilon \tag{as $|a-c|\leq \Delta_{m_2}$}
\end{align*}

\begin{align*}
\langle x^*,A_2y^* \rangle-\langle (A_2)^c_1,x^*\rangle &= x_1^*y_2^*(\Delta_{12}-\Delta_{11})+x_2^*y_2^*(\Delta_{22}-\Delta_{21})\\
& \leq x_1^*y_2^*(|\Delta_{12}|+|\Delta_{11}|)+x_2^*y_2^*(|\Delta_{22}|+|\Delta_{21}|)\\
& \leq \frac{\varepsilon |D|}{\Delta_{m_2}}(x_1^*y_2^*+x_2^*y_2^*)\\
&= y_2^*\frac{\varepsilon |D|}{\Delta_{m_2}}\\
&= \frac{|a-c|}{|D|}\frac{\varepsilon |D|}{\Delta_{m_2}}\\
&\leq \varepsilon \tag{as $|a-c|\leq \Delta_{m_2}$}
\end{align*}
\begin{align*}
\langle x^*,A_2y^* \rangle-\langle (A_2)^c_2,x^*\rangle &= x_1^*y_1^*(\Delta_{11}-\Delta_{12})+x_2^*y_1^*(\Delta_{21}-\Delta_{22})\\
&\leq x_2^*y_1^*(\Delta_{21}-\Delta_{22}) \tag{as $\Delta_{11}\leq \Delta_{12}$}\\
& \leq x_2^*y_1^*(|\Delta_{21}|+|\Delta_{22}|)\\
& \leq \frac{\varepsilon |D|}{\Delta_{m_2}}x_2^*y_1^*\\
& \leq \frac{|a-b|}{|D|}\frac{\varepsilon |D|}{\Delta_{m_2}} \tag{as $y_1^*\leq 1$ and $x_2^*=\frac{|a-b|}{|D|}$}\\
& \leq \varepsilon \tag{as $|a-b|\leq \Delta_{m_2}$}
\end{align*}
\end{proof}
\section{Proof of $\varepsilon$-Nash equilibrium Lower Bound}\label{appendix:thm4}
Before finishing the proof of the theorem \ref{thm:lower3}, we begin with the proof of Lemma~\ref{low3:lem1}
\begin{proof}
For any $\alpha\in [\frac{c-d}{D},\frac{a-b}{D}]$ and $\beta\in[\frac{b-d}{D},\frac{a-c}{D}]$, let $x'=(\frac{d-c}{D}+\alpha,\frac{a-b}{D}-\alpha)$ and $y'=(\frac{d-b}{D}+\beta,\frac{a-c}{D}-\beta)$. Note that this parameterization ensures the range of $x',y'$ is equal to $\simplex_2$. It can be shown that $\langle x',A_\square y'\rangle=\frac{ad-bc}{D}+\frac{(d-c)-(a-b)}{D}\square+2\square \alpha + D\alpha\beta$. Let $(A_\square)^r_i$ denote the $i$-th row of $A_\square$ and $(A_\square)^c_j$ denote the $j$-th column of $A_\square$. It can be shown that $\langle x', (A_\square)^c_1\rangle= \frac{ad-bc}{D}+\frac{(d-c)-(a-b)}{D}\square+2\square \alpha+(a-c)\alpha$ and $\langle x', (A_\square)^c_2\rangle= \frac{ad-bc}{D}+\frac{(d-c)-(a-b)}{D}\square+2\square \alpha+(b-d)\alpha$. Similarly, it can be shown that $\langle y', (A_\square)^r_1\rangle= \frac{ad-bc}{D}+\square+(a-b)\beta$ and $\langle y', (A_\square)^r_2\rangle= \frac{ad-bc}{D}-\square-(d-c)\beta$. We refer the reader to the Appendix \ref{apendix:low3:lem1} for the detailed calculations.

Now we have the following:
\begin{align*}
    \langle y',(A_\square)^r_1\rangle - \langle x', A_\square y' \rangle &= \frac{2(a-b)}{D}\square+(a-b)\beta-2\square \alpha-D\alpha\beta\\
    \langle y',(A_\square)^r_2\rangle - \langle x', A_\square y' \rangle &= -\frac{2(d-c)}{D}\square-(d-c)\beta-2\square \alpha-D\alpha\beta\\
    \langle x', A_\square y' \rangle -  \langle x',(A_\square)^c_1\rangle &= D\alpha\beta-(a-c)\alpha\\
    \langle x', A_\square y' \rangle -  \langle x',(A_\square)^c_2\rangle &= D\alpha\beta+(d-b)\alpha\\
\end{align*}

Observe that if $(x',y')$ is an $\varepsilon$-Nash equilibrium of $A_\square$, then $\langle y',(A_\square)^r_1\rangle - \langle x', A_\square y' \rangle\leq \varepsilon$, $\langle y',(A_\square)^r_2\rangle - \langle x', A_\square y' \rangle\leq \varepsilon$, $\langle x', A_\square y' \rangle -  \langle x',(A_\square)^c_1\rangle\leq \varepsilon$ and $ \langle x', A_\square y' \rangle -  \langle x',(A_\square)^c_2\rangle\leq \varepsilon$.

We will now show that regardless of what values $(\alpha,\beta)$ take (equivalently, regardless of what values $(x',y')$ take), there is at least one of the three alternative matrices for which $(x',y')$ is not an $\varepsilon$-Nash equilibrium. Let us assume that $(x',y')$ is an $\varepsilon$-Nash equilibrium of $A$, otherwise $A_0$ is the matrix for which $(x',y')$ is not an $\varepsilon$-Nash equilibrium. Now we have the following:
\begin{align*}
    (a-b)\beta-D\alpha\beta&\leq \varepsilon \\
    -(d-c)\beta-D\alpha\beta&\leq \varepsilon \\
    -(a-c)\alpha+D\alpha\beta&\leq \varepsilon \\
    (d-b)\alpha+D\alpha\beta&\leq \varepsilon \\
\end{align*}

Using the above equations, we get the following:
\begin{align*}
    -\varepsilon \leq D\alpha\beta \leq \varepsilon  \\
    \beta\geq -\frac{2\varepsilon}{d-c}\\
    \beta\leq \frac{2\varepsilon}{a-b}
\end{align*}

If $\Delta \alpha < \varepsilon$, we have the following:
\begin{align*}
\langle y',(A_\Delta)^r_1\rangle - \langle x', A_\Delta y' \rangle&=\frac{2(a-b)}{D}\Delta+(a-b)\beta-2\Delta \alpha-D\alpha\beta\\
&\geq \frac{2(a-b)}{D}\Delta+(a-b)\beta-2\Delta \alpha-\varepsilon \tag{as $-D\alpha\beta\geq -\varepsilon$}\\
&\geq \frac{2(a-b)}{D}\Delta-2\Delta \alpha-3\varepsilon \tag{as $\beta\geq -\frac{2\varepsilon}{d-c}$}\\
& > \frac{2(a-b)}{D}\Delta - 5\varepsilon \tag{as $-\Delta \alpha> -\varepsilon$} \\
&= \varepsilon \tag{as $\Delta= \frac{3\varepsilon D}{a-b}$}
\end{align*}

If $\Delta \alpha \geq \varepsilon$, we have the following:
\begin{align*}
\langle y',(A_{-\Delta})^r_2\rangle - \langle x', A_{-\Delta} y' \rangle&=\frac{2(d-c)}{D}\Delta-(d-c)\beta+2\Delta \alpha-D\alpha\beta\\
&\geq \frac{2(d-c)}{D}\Delta-(d-c)\beta+2\Delta \alpha-\varepsilon \tag{as $-D\alpha\beta\geq -\varepsilon$}\\
& \geq \frac{2(d-c)}{D}\Delta-(d-c)\beta+\varepsilon \tag{as $\Delta \alpha\geq\varepsilon$} \\
&\geq \frac{2(d-c)}{D}\Delta-\frac{2(d-c)}{a-b}\varepsilon+\varepsilon \tag{as $-\beta\geq -\frac{2\varepsilon}{a-b}$}\\
& = \frac{6(d-c)}{a-b}\varepsilon-\frac{2(d-c)}{a-b}\varepsilon+\varepsilon \tag{as $\Delta= \frac{3\varepsilon D}{a-b}$}\\
&> \varepsilon 
\end{align*}

Hence, we proved that for any $(x',y') \in \simplex_2 \times \simplex_2$, there exists a matrix $B\in\{A_0,A_{\Delta},A_{-\Delta}\}$ such that $(x',y')$ is not an $\varepsilon$-Nash equilibrium of $B$.
\end{proof}

% Now we present the following lemma, where we show that any $(x,y)\in \Delta_2 \times \Delta_2$ cannot be an $\varepsilon$-good solution for all the matrices in the set $\{ A_\Delta : \Delta \in \{-\sqrt{3\varepsilon |D|},0,\sqrt{3\varepsilon |D|}\} \}$. 

\subsection{Proof of Theorem \ref{thm:lower3}}
Let $\nu_{i,j}^A = \mathcal{N}(A_{ij},1)$ be the distribution of an observation when playing pair $(i,j)$ with matrix $A$. 
Let $\P_A$ denote the probability law of the internal randomness of the algorithm and random observations. Let $f_A(x,y)=\max\{\max_{x'\in \simplex_2}\langle x',Ay\rangle-\langle x,Ay\rangle,\langle x,Ay\rangle-\min_{y'\in\simplex_2}\langle x,Ay'\rangle\}$.
If an algorithm is $(\varepsilon,\delta)$-PAC-Nash and outputs a solution $(\widehat{x},\widehat{y})$ then $\min_A \P_A(f_A(\widehat{x},\widehat{y})\leq \varepsilon) \geq 1-\delta$.
We will show that if an algorithm is $(\varepsilon,\delta)$-PAC-Nash then it can also accomplish a particular hypothesis.
We will conclude by noting that any procedure that can accomplish the hypothesis test must take the claimed sample complexity. 

For any pair of mixed strategies $(\widehat{x},\widehat{y})$ output by the procedure at the stopping time $\tau$, define 
\begin{align*}
    \phi = \{A_{-\Delta},A_{0},A_{\Delta}\} \setminus \arg\max_{B \in \{A_{-\Delta},A_{0},A_{\Delta}\}} f_B(\widehat{x},\widehat{y}),
\end{align*}
breaking ties arbitrarily in the maximum so that $\phi \in  \{A_{-\Delta},A_{0}\}\cup \{A_{-\Delta},A_{\Delta}\}\cup \{A_{0},A_{\Delta}\}$.
Note that
\begin{align}\label{eqn:correctness_delta:3}
    \P_{A_0}( A_0 \in \phi ) \geq \P_{A_0}( A_0 \in \phi,  f_{A_0}(\widehat{x},\widehat{y}) \leq \varepsilon) = \P_{A_0}(   f_{A_0}(\widehat{x},\widehat{y}) \leq \varepsilon) \geq 1-\delta
\end{align}
where the equality follows from the Lemma~\ref{low3:lem1}: at least one of the three matrices must have a loss of more than $\varepsilon$, but $A_0$ has a loss of at most $\varepsilon$, thus $A_0 \in \phi$.
Now because
\begin{align*}
    2 \max\{ \P_{A_0}( \phi = \{A_0,A_{-\Delta}\} ) , \P_{A_0}( \phi = \{A_0,A_{\Delta}\} ) \} &\geq \P_{A_0}( \phi = \{A_0,A_{-\Delta}\} ) + \P_{A_0}( \phi = \{A_0,A_{\Delta}\} ) \\
    &= \P_{A_0}( A_0 \in \phi ) \geq 1-\delta
\end{align*}
we have that $\P_{A_0}( \phi = \{A_0,A_{-\Delta}\} ) \geq \frac{1-\delta}{2}$ or $\P_{A_0}( \phi = \{A_0,A_{\Delta} \}) \geq \frac{1-\delta}{2}$.
Let's assume the former (the latter case is handled identically).
By the same argument as \eqref{eqn:correctness_delta:3} we have that $\P_{A_\Delta}( \phi = \{A_0,A_{-\Delta}\} ) \leq \delta$.

For a stopping time $\tau$, let $N_{i,j}(\tau)$ denote the number of times $(i,j)$ is sampled. Recalling that $\nu_{i,j}^A = \mathcal{N}(A_{ij},1)$, we have by Lemma 1 of \citet{kaufmann2016complexity} that
\begin{align*}
\sum_{i=1}^2\sum_{j=1}^2\mathbb{E}_{A_0}[ N_{i,j}(\tau) ] KL( \nu_{i,j}^{A_0}, \nu_{i,j}^{A_{2\Delta}} ) \geq d( \P_{A_0}( \phi = \{A_0,A_{-\Delta}\} ), \P_{A_\Delta}( \phi = \{A_0,A_{-\Delta}\} ) )
\end{align*}
where for any $i,j\in\{1,2\}$, $KL( \nu_{i,j}^{A_0}, \nu_{i,j}^{A_\Delta} ) = \Delta^2/2$ and $d(p,q) = p \log(\frac{p}{q}) + (1-p) \log(\frac{1-p}{1-q})$.
Since 
\begin{align*}
    d( \P_{A_0}( \phi = \{A_0,A_{-\Delta}\} ), \P_{A_\Delta}( \phi = \{A_0,A_{-\Delta}\} ) ) &\geq d( \tfrac{1-\delta}{2},\delta) \\
    &= \tfrac{1-\delta}{2} \log( \tfrac{1-\delta}{2 \delta} ) + \tfrac{1+\delta}{2} \log( \tfrac{1+\delta}{2(1-\delta)} ) \\
    &= \tfrac{1}{2} \log( \tfrac{1+\delta}{4 \delta} ) -  \tfrac{\delta}{2} \log( \tfrac{(1-\delta)^2}{ \delta (1+\delta)} ) \\
    &\geq \tfrac{1}{2} \log( \tfrac{1+\delta}{4 \delta} ) -  1/8 > \tfrac{1}{2} \log(1/30\delta)
\end{align*} 
and $\tau = N_{1,1}(\tau)+N_{1,2}(\tau)+N_{2,1}(\tau)+N_{2,2}(\tau)$ we conclude that
\begin{align*}
    \mathbb{E}_{A_0}[ \tau ] \geq \frac{ \log(1/30 \delta) }{\Delta^2} = \frac{\Delta_{m_2}^2 \log(1/30 \delta) }{9\varepsilon^2 D^2}
\end{align*}
as claimed.

\subsection{Calculations for Lemma \ref{low3:lem1}}\label{apendix:low3:lem1}
Recall that $x'=(\frac{d-c}{D}+\alpha,\frac{a-b}{D}-\alpha)$ and $y'=(\frac{d-b}{D}+\beta,\frac{a-c}{D}-\beta)$. Let $(A_\square)^r_i$ denote the $i$-th row of $A_\square$ and $(A_\square)^c_j$ denote the $j$-th column of $A_\square$. 

First, observe that $\langle y', (A_\square)^r_1\rangle= \frac{d-b}{D}\cdot a + \frac{d-b}{D}\cdot \square+a\beta+\square \beta+\frac{a-c}{D}\cdot b + \frac{a-c}{D}\cdot \square-b\beta-\square \beta=\frac{ad-bc}{D}+\square+(a-b)\beta$. Similarly, we have  $\langle y', (A_\square)^r_2\rangle= \frac{d-b}{D}\cdot c - \frac{d-b}{D}\cdot \square+c\beta-\square \beta+\frac{a-c}{D}\cdot d - \frac{a-c}{D}\cdot \square-d\beta+\square \beta=\frac{ad-bc}{D}-\square+(c-d)\beta$.

Next, observe that $\langle x', (A_\square)^c_1\rangle=\frac{d-c}{D}\cdot a + \frac{d-c}{D}\cdot \square+a\alpha+\square \alpha+\frac{a-b}{D}\cdot c - \frac{a-b}{D}\cdot \square-c\alpha+\square \alpha=\frac{ad-bc}{D}+\frac{(d-c)-(a-b)}{D}\square+(a-c)\alpha+2\square \alpha$. Similarly, we have $\langle x', (A_\square)^c_2\rangle=\frac{d-c}{D}\cdot b + \frac{d-c}{D}\cdot \square+b\alpha+\square \alpha+\frac{a-b}{D}\cdot d - \frac{a-b}{D}\cdot \square-d\alpha+\square \alpha=\frac{ad-bc}{D}+\frac{(d-c)-(a-b)}{D}\square+(b-d)\alpha+2\square \alpha$.

Now we present the following proposition.
\begin{proposition}\label{low3:prop1}
$\langle x',A_\square y'\rangle=\frac{ad-bc}{D}+\frac{(d-c)-(a-b)}{D}\square+2\square \alpha + D\alpha\beta$
\end{proposition}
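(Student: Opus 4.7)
The plan is to evaluate $\langle x', A_\square y' \rangle$ by reusing the column inner products $\langle x', (A_\square)^c_1\rangle$ and $\langle x', (A_\square)^c_2\rangle$ that have already been computed in the preceding paragraph. Concretely, I would write
\[
\langle x', A_\square y'\rangle = y'_1 \langle x', (A_\square)^c_1\rangle + y'_2 \langle x', (A_\square)^c_2\rangle,
\]
and then substitute $y'_1 = \tfrac{d-b}{D}+\beta$, $y'_2 = \tfrac{a-c}{D}-\beta$, together with the two formulas recorded just before the proposition.

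The computation then splits naturally into two parts. First, the terms $\tfrac{ad-bc}{D}+\tfrac{(d-c)-(a-b)}{D}\square+2\square\alpha$ appear as a common summand in both $\langle x',(A_\square)^c_1\rangle$ and $\langle x',(A_\square)^c_2\rangle$, so multiplying by $y'_1+y'_2=1$ reproduces them unchanged in the final expression. Second, the $\alpha$-dependent coefficients $(a-c)\alpha$ and $(b-d)\alpha$ get weighted by $y'_1$ and $y'_2$ respectively, producing the cross term
\[
\alpha\bigl[y'_1(a-c) + y'_2(b-d)\bigr].
\]

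To finish, I would expand this cross term: the $1/D$ contributions $(d-b)(a-c)/D + (a-c)(b-d)/D$ cancel, and the $\beta$ contributions combine into $\beta\bigl[(a-c)-(b-d)\bigr] = \beta(a-b-c+d) = D\beta$. This yields exactly $D\alpha\beta$, completing the identity. The main ``obstacle'' is purely bookkeeping — tracking signs when the $(a-c)$ and $(b-d)$ factors are combined — but there is no substantive difficulty, since the result follows from the affine parameterization of $x',y'$ and the fact that they sum to $1$.
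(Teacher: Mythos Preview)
Your proposal is correct and essentially the same as the paper's proof: both decompose the bilinear form $\langle x', A_\square y'\rangle$ along one axis using the precomputed inner products, then collapse the common summand via $y'_1+y'_2=1$ (respectively $x'_1+x'_2=1$) and simplify the remaining cross term to $D\alpha\beta$. The only cosmetic difference is that the paper expands along rows (writing $\langle x',(V_1,V_2)\rangle$ with $V_i=\langle y',(A_\square)^r_i\rangle$) whereas you expand along columns; the algebra and the cancellation $(d-b)(a-c)+(a-c)(b-d)=0$ are identical.
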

\begin{proof}
Let $V_1=\langle y', (A_\square)^r_1\rangle$ and $V_2=\langle y', (A_\square)^r_2\rangle$. Now observe that $\langle x',A_\square y'\rangle=\langle x',(V_1,V_2)\rangle$. Now we have the following:
\begin{align}
    \langle x',(V_1,V_2)\rangle&=\left\langle x', \left(\frac{ad-bc}{D},\frac{ad-bc}{D}\right)\right\rangle+\left(\frac{d-c}{D}+\alpha\right)\cdot\square-\left(\frac{a-b}{D}-\alpha\right)\cdot\square \nonumber\\
    &\quad+\left(\frac{d-c}{D}+\alpha\right)\cdot(a-b)\beta-\left(\frac{a-b}{D}-\alpha\right)\cdot(d-c)\beta\nonumber\\
    &=\frac{ad-bc}{D}+\frac{(d-c)-(a-b)}{D}\square+2\square \alpha \nonumber \\
    &\quad+\frac{(d-b)(a-b)-(a-b)(d-c)}{D}\cdot \beta +(a-b-c+d)\alpha\beta \nonumber\\
    &= \frac{ad-bc}{D}+\frac{(d-c)-(a-b)}{D}\square+2\square \alpha + D\alpha\beta\tag{as $D=a-b-c+d$}
\end{align}
\end{proof}

\section{Proof of $n\times 2$ Matrix Upper Bound}\label{appendix:thm7}
We now establish the sample complexity and the correctness of the Algorithm \ref{alg-ucb-3} by proving the Theorem \ref{thm:alg3}.
\begin{proof}[Proof of Theorem \ref{thm:alg3}]
Let $\bar A_{ij,t}$ denote the empirical mean of $A_{ij}$ at time step $t$. Let us begin by defining two events:
\begin{align*}
    G:=&\bigcap_{t=1}^T\bigcap_{i=1}^n \bigcap_{j=1}^2 \{ |A_{ij}-\bar A_{ij,t}|\leq \sqrt{\tfrac{2\log({8nT}/{\delta})}{t}} \} \\
    E:=&\bigcap_{i=1}^n \bigcap_{j=1}^2 \{ |A_{ij}-\bar A_{ij,T}|\leq \sqrt{\tfrac{2\log({8n}/{\delta})}{T}} \} 
\end{align*}
A union bound and sub-Gaussian-tail bound demonstrates that $\P( G^c \cup E^c ) \leq \P(G^c) + \P(E^c) \leq \delta$.
Consequently, events $E$ and $G$ hold simultaneously with probability at least $1-\delta$, so in what follows, assume they hold. 

If $A$ has a PSNE and if the condition in the line \ref{alg3:con1} of the algorithm \ref{alg-ucb-3} is satisfied, then we identify a PSNE in $\frac{800\log (\frac{8nT}{ \delta})}{ \Delta_{\min}^2}$ time steps due to Lemma \ref{lem:alg3:tmin} and Corollary \ref{cor:alg3:saddle}. On the other hand, if $A$ has a PSNE but the outer for loop completes after $t=T$ iterations,  then we identify an $\varepsilon$-good solution in $T=\frac{8 \log (8n/\delta) }{\varepsilon^2}$ time steps due to Lemma \ref{lem:trivial:NE}. Note that in this case, $T<\frac{800\log (\frac{8nT}{ \delta})}{ \Delta_{\min}^2}$ due to Lemma \ref{lem:alg3:tmin}. 

Let us assume for the rest of the proof that $A$ does not have a PSNE. If the outer for loop completes after $t=T$ iterations,  then we identify an $\varepsilon$-good solution in $T=\frac{8 \log (8n/\delta) }{\varepsilon^2}$ time steps due to Lemma \ref{lem:trivial:NE}. Note that in this case, $T<\frac{800\log (\frac{8nT}{ \delta})}{ \Delta_{\min}^2}$ due to Lemma \ref{lem:alg3:tmin}.

Now let us  assume for the rest of the proof that the condition in the line \ref{alg3:con2} is satisfied. If the condition in the line \ref{alg3:con3} is satisfied, then we identify an $\varepsilon$-Nash equilibrium in $T=\frac{8 \log (8nT/\delta) }{\varepsilon^2}$ time steps due to Lemma \ref{lem:trivial:NE}. Now observe that in this case $T\leq \max\left\{\frac{800\log (\frac{8nT}{ \delta})}{ \Delta_{\min}^2},\frac{722\log (\frac{8nT}{ \delta})}{ \Delta_{g}^2}\right\}$ due to Lemma \ref{lem:alg3:tmin} and Lemma \ref{lem:alg3:dg}. If the condition in the line \ref{alg3:con5} is satisfied, then we identify $\supp(x^*)$ and $\supp(y^*)$ due to Lemma \ref{lem:alg3:main}. In this case, let the number of times we are required to sample each element be $n_0$. Then $n_0\leq \max\left\{\frac{800\log (\frac{8nT}{ \delta})}{ \Delta_{\min}^2},\frac{722\log (\frac{8nT}{ \delta})}{ \Delta_{g}^2}\right\}+1$ due to Lemma \ref{lem:alg3:tmin} and Lemma \ref{lem:alg3:dg}.
\end{proof}

\subsection{Consequential lemmas of Algorithm~\ref{alg-ucb-3}'s conditional statements}
Recall the definitions of events $E$ and $G$. We first present few lemmas which deal with empirical estimates and instance dependent parameters like $\tilde \Delta_{\min},\tilde \Delta_g, \Delta_{\min}$ and $\Delta_g$ . Whenever we fix a time step $t\leq T$ and discuss the parameters like $\tilde  \Delta_{\min}, \tilde \Delta_g,\Delta'$ and $\Delta$, we consider those values that have been assigned to these parameters during the time step $t$. We begin with upper bounding $ |\Delta_{\min}-\tilde \Delta_{\min}|$ in the following lemma.
\begin{lemma}
Fix a time step $t\leq T$. If the event $G$ holds, then we have the following:
\begin{equation*}
    |\Delta_{\min}-\tilde \Delta_{\min}|\leq 2\Delta
\end{equation*}
\end{lemma}
\begin{proof}
Let us assume that the event $G$ holds. Then for every element $(i,j)$, we have  $|A_{ij}-\bar A_{ij}|\leq \Delta$. Then we have $\left||A_{ij}-A_{i'j'}|-|\bar A_{ij}-\bar A_{i'j'}|\right|\leq 2\Delta$ for any $i,j,i',j'$. By repeatedly applying the Lemma \ref{lem:deviation}, we get $|\Delta_{\min}-\tilde \Delta_{\min}|\leq 2\Delta$.
\end{proof}
The following lemma upper bounds the number of time steps required to satisfy the condition $1\leq \frac{\tilde \Delta_{\min}+2\Delta}{\tilde \Delta_{\min}-2\Delta}\leq \frac{3}{2}$.
\begin{lemma}\label{lem:alg3:tmin}
Let $t$ be the time step when the condition $1\leq \frac{\tilde \Delta_{\min}+2\Delta}{\tilde \Delta_{\min}-2\Delta}\leq \frac{3}{2}$ holds true for the first time. If the event $G$ holds, then $t\leq \frac{800\log (\frac{8nT}{ \delta})}{ \Delta_{\min}^2}$.
\end{lemma}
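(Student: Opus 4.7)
The plan is to mimic the proof strategies used for the analogous Lemmas \ref{lem:alg1:tmin} and \ref{lem:alg2:tmin} in the $2\times 2$ setting, with the only change being the confidence-radius log factor $\log(8nT/\delta)$ inherited from the union bound over $n\cdot 2$ entries instead of $2\cdot 2$. The goal is purely to verify that by the candidate time $t^\star = \frac{800\log(8nT/\delta)}{\Delta_{\min}^2}$, the stopping predicate $1\leq \frac{\tilde\Delta_{\min}+2\Delta}{\tilde\Delta_{\min}-2\Delta}\leq \frac{3}{2}$ must already have been triggered on event $G$, so the first such time $t$ is no later than $t^\star$.

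First I would plug $t=t^\star$ into the definition of $\Delta = \sqrt{2\log(8nT/\delta)/t}$, which by construction gives $\Delta = \Delta_{\min}/20$. Next, invoking event $G$ and the entrywise deviation bound already established in the lemma immediately preceding (the $n\times 2$ analogue that $|\Delta_{\min}-\tilde\Delta_{\min}|\leq 2\Delta$, proved via repeated application of Lemma~\ref{lem:deviation}), I obtain $\tilde\Delta_{\min}\in[\Delta_{\min}-2\Delta,\,\Delta_{\min}+2\Delta] = [19\Delta_{\min}/20,\,21\Delta_{\min}/20]$.

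Then I would bound the numerator and denominator of the test ratio separately: $\tilde\Delta_{\min}+2\Delta \leq \Delta_{\min}+4\Delta = \frac{6\Delta_{\min}}{5}$, and $\tilde\Delta_{\min}-2\Delta \geq \Delta_{\min}-4\Delta = \frac{4\Delta_{\min}}{5}$. In particular, the denominator is strictly positive so the ratio is well-defined, it is at least $1$ (since the numerator exceeds the denominator by $4\Delta>0$), and it is at most $\frac{6/5}{4/5}=\frac{3}{2}$. Therefore the predicate is satisfied at time $t^\star$, forcing the first satisfying time $t$ to satisfy $t\leq t^\star = \frac{800\log(8nT/\delta)}{\Delta_{\min}^2}$.

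There is no real obstacle here; this lemma is a verbatim translation of the $2\times 2$ arguments with the union-bound constant updated. The only mild point to be careful about is that the companion deviation bound $|\Delta_{\min}-\tilde\Delta_{\min}|\leq 2\Delta$ in the $n\times 2$ case must hold for the new definition of $\tilde\Delta_{\min}$ (which now ranges over all row/column pairs, not just $2\times 2$ entries), but this is exactly what the preceding lemma provides via iterated applications of Lemma~\ref{lem:deviation}, so no additional work is required.
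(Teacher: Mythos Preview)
Your proposal is correct and follows essentially the same approach as the paper's own proof: plug in $t^\star = \frac{800\log(8nT/\delta)}{\Delta_{\min}^2}$ to get $\Delta = \Delta_{\min}/20$, use the preceding deviation lemma to sandwich $\tilde\Delta_{\min}$, and then bound the test ratio by $\frac{6/5}{4/5}=\frac{3}{2}$. The paper's argument is identical, including the intermediate constants.
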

\begin{proof}
Consider the time step $t= \frac{800\log (\frac{8nT}{ \delta})}{ \Delta_{\min}^2}$. Let us assume that the event $G$ holds. Then for every element $(i,j)$, we have  $|A_{ij}-\bar A_{ij}|\leq \Delta=\sqrt{\frac{2\log(\frac{8nT}{\delta})}{t}}=\frac{\Delta_{\min}}{20}$. Now observe that $\tilde \Delta_{\min}+2\Delta\leq \Delta_{\min}+4\Delta = \frac{6\Delta_{\min}}{5}$. Similarly, we have $\tilde \Delta_{\min}-2\Delta\geq \Delta_{\min}-4\Delta \geq \frac{4\Delta_{\min}}{5}$. Hence, we have $1\leq \frac{\tilde \Delta_{\min}+2\Delta}{\tilde \Delta_{\min}-2\Delta}\leq \frac{3}{2}$. 

%Also observe that $\tilde D\leq |D|+4\Delta \leq \frac{7|D|}{6}$. Similarly, we have $\tilde D\geq |D|-4\Delta \geq \frac{5|D|}{6}$. Hence, we have $\frac{5}{6}\leq \frac{\tilde D}{|D|} \leq \frac{7}{6}$. 
\end{proof}
The following lemma bounds the ratio $\frac{\tilde \Delta_{\min}}{\Delta_{\min}}$.
\begin{lemma}\label{lem:alg3:dmin:ratio}
Let $t$ be the time step when the condition $1\leq \frac{\tilde \Delta_{\min}+2\Delta}{\tilde \Delta_{\min}-2\Delta}\leq \frac{3}{2}$ holds true for the first time. If the event $G$ holds, then $\frac{5}{6}\leq \frac{\tilde \Delta_{\min}}{\Delta_{\min}} \leq \frac{5}{4}$ at the time step $t$.
\end{lemma}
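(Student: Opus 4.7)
The plan is to mirror the proofs of Lemmas~\ref{lem:alg1:dmin:ratio} and~\ref{lem:alg2:dmin:ratio}, since the statement has exactly the same structure: we translate the ratio condition into a bound $\Delta \le \tilde\Delta_{\min}/10$, and then plug this into the previously established deviation bound $|\Delta_{\min}-\tilde\Delta_{\min}|\le 2\Delta$ (which holds under event $G$).

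First I would unpack the hypothesis $\tfrac{\tilde\Delta_{\min}+2\Delta}{\tilde\Delta_{\min}-2\Delta}\le \tfrac{3}{2}$. Cross-multiplying (the denominator is positive since the ratio is at least $1$) gives $2(\tilde\Delta_{\min}+2\Delta)\le 3(\tilde\Delta_{\min}-2\Delta)$, which simplifies to $10\Delta\le \tilde\Delta_{\min}$, i.e.\ $\Delta\le \tilde\Delta_{\min}/10$. This is the only piece of information we extract from the stopping condition.

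Next, since event $G$ holds, the immediately preceding lemma yields $|\Delta_{\min}-\tilde\Delta_{\min}|\le 2\Delta$, hence $\tilde\Delta_{\min}-2\Delta \le \Delta_{\min}\le \tilde\Delta_{\min}+2\Delta$. Combining with $\Delta\le \tilde\Delta_{\min}/10$ gives
\[
\tfrac{4}{5}\tilde\Delta_{\min}\;\le\;\Delta_{\min}\;\le\;\tfrac{6}{5}\tilde\Delta_{\min}.
\]
Taking reciprocals and multiplying by $\tilde\Delta_{\min}$ produces the two claimed bounds:
\[
\tfrac{\tilde\Delta_{\min}}{\Delta_{\min}}\;\le\;\tfrac{\tilde\Delta_{\min}}{\tilde\Delta_{\min}-2\Delta}\;\le\;\tfrac{\tilde\Delta_{\min}}{(4/5)\tilde\Delta_{\min}}\;=\;\tfrac{5}{4},
\qquad
\tfrac{\tilde\Delta_{\min}}{\Delta_{\min}}\;\ge\;\tfrac{\tilde\Delta_{\min}}{\tilde\Delta_{\min}+2\Delta}\;\ge\;\tfrac{\tilde\Delta_{\min}}{(6/5)\tilde\Delta_{\min}}\;=\;\tfrac{5}{6}.
\]

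There is essentially no obstacle here; the argument is a direct transcription of the $2\times 2$ version to the $n\times 2$ setting, since the definition of $\tilde\Delta_{\min}$ changes but the only properties used are (i) the deviation bound $|\Delta_{\min}-\tilde\Delta_{\min}|\le 2\Delta$ (whose proof under $G$ is the same coordinate-wise union bound, just over $2n$ entries rather than $4$) and (ii) the algebraic consequence of the ratio condition. The confidence radius $\Delta=\sqrt{2\log(8nT/\delta)/t}$ is only used implicitly through event $G$, so the $n$-dependence inside the logarithm plays no role in the final constants.
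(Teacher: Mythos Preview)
Your proof is correct and follows essentially the same approach as the paper: extract $\Delta\le\tilde\Delta_{\min}/10$ from the ratio condition, combine with the deviation bound $|\Delta_{\min}-\tilde\Delta_{\min}|\le 2\Delta$ under event $G$, and read off the two inequalities. The paper's argument is line-for-line the same.
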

\begin{proof}
Let us assume that the event $G$ holds. Then for every element $(i,j)$, we have  $|A_{ij}-\bar A_{ij}|\leq \Delta$. As $\frac{\tilde \Delta_{\min}+2\Delta}{\tilde \Delta_{\min}-2\Delta}\leq \frac{3}{2}$, we have $\Delta\leq \frac{\tilde \Delta_{\min}}{10}$.
Now observe that $\frac{\tilde \Delta_{\min}}{\Delta_{\min}}\leq \frac{\tilde\Delta_{\min}}{\tilde\Delta_{\min}-2\Delta}\leq \frac{\tilde\Delta_{\min}}{4\tilde\Delta_{\min}/5}= \frac{5}{4}$. Next observe that $\frac{\tilde \Delta_{\min}}{\Delta_{\min}}\geq \frac{\tilde \Delta_{\min}}{\tilde \Delta_{\min}+2\Delta}\geq\frac{\tilde \Delta_{\min}}{6\tilde \Delta_{\min}/5}= \frac{5}{6}$.
\end{proof}
\noindent
Now let us define the notion of strong dominance.
\begin{definition}[Strongly dominate]
We say that a row $i$ of a matrix $A$ strongly dominates a row $j$ of $A$ if $A_{i1}>A_{j1}$ and $A_{i2}>A_{j2}$
\end{definition}

\noindent
The following lemma and the subsequent corollary relates the empirical matrix $\bar A$ to the input matrix $A$.
\begin{lemma}\label{lem:alg3:gap}
Let $t$ be the time step when the condition $1\leq \frac{\tilde \Delta_{\min}+2\Delta}{\tilde \Delta_{\min}-2\Delta}\leq \frac{3}{2}$ holds true for the first time. If the event $G$ holds, then at any time step $t_0$ such that  $t\leq t_0\leq T$, we have the following:
\begin{itemize}
    \item If $A_{ij_1}>A_{ij_2}$, then $\bar A_{ij_1}>\bar A_{ij_2}$ 
    \item If $A_{i_1j}>A_{i_1j}$, then $\bar A_{i_1j}>\bar A_{i_1j}$
    \item If $\bar A_{ij_1}>\bar A_{ij_2}$, then $ A_{ij_1}> A_{ij_2}$ 
    \item If $\bar A_{i_1j}>\bar A_{i_1j}$, then $ A_{i_1j}> A_{i_1j}$
\end{itemize}
\end{lemma}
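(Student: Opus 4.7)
The proof will mirror the structure of the two earlier gap lemmas (Lemma~\ref{lem:alg1:gap} for Algorithm~\ref{alg-ucb-1} and Lemma~\ref{lem:alg2:gap} for Algorithm~\ref{alg-ucb-2}) essentially verbatim, with the only change being the confidence level $\sqrt{2\log(8nT/\delta)/t}$ used throughout Algorithm~\ref{alg-ucb-3}. The idea is that the stopping condition $1\leq \frac{\tilde\Delta_{\min}+2\Delta}{\tilde\Delta_{\min}-2\Delta}\leq \frac{3}{2}$ forces $\Delta$ to be simultaneously much smaller than $\tilde\Delta_{\min}$ and (via the ratio lemma) much smaller than the true $\Delta_{\min}$, and this ordering is preserved for every later time step $t_0\ge t$ because the confidence radius is monotone nonincreasing in $t_0$.

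First I would extract the two quantitative consequences needed. The inequality $\frac{\tilde\Delta_{\min}+2\Delta}{\tilde\Delta_{\min}-2\Delta}\leq \frac{3}{2}$ rearranges to $\Delta \leq \tilde\Delta_{\min}/10$, and then Lemma~\ref{lem:alg3:dmin:ratio} upgrades this to $\Delta \leq \Delta_{\min}/8$. For the propagation to later time steps, note that for any $t_0$ with $t \leq t_0 \leq T$, event $G$ gives $|A_{ij}-\bar A_{ij,t_0}| \leq \sqrt{2\log(8nT/\delta)/t_0} \leq \sqrt{2\log(8nT/\delta)/t} = \Delta$, so the same deviation bound $\Delta$ continues to control all entries at time $t_0$.

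Next I would verify each of the four implications by a three-line triangle-inequality chain, exactly as in Lemma~\ref{lem:alg1:gap}. For example, if $A_{ij_1}>A_{ij_2}$, then by definition of $\Delta_{\min}$ we have $A_{ij_1}-A_{ij_2}\geq \Delta_{\min}$, so
\[
\bar A_{ij_1,t_0} \geq A_{ij_1}-\Delta \geq A_{ij_2}+\Delta_{\min}-\Delta > A_{ij_2}+\Delta \geq \bar A_{ij_2,t_0},
\]
where the strict step uses $\Delta \leq \Delta_{\min}/8$. The column implication $A_{i_1 j}>A_{i_2 j}\Rightarrow \bar A_{i_1 j}>\bar A_{i_2 j}$ is identical with rows and columns swapped. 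For the converse directions, if $\bar A_{ij_1,t_0}>\bar A_{ij_2,t_0}$ then $\bar A_{ij_1,t_0}-\bar A_{ij_2,t_0}\geq \tilde\Delta_{\min}$ (by the definition of $\tilde\Delta_{\min}$, which is a min over all entry pairs including row pairs), so
\[
A_{ij_1} \geq \bar A_{ij_1,t_0}-\Delta \geq \bar A_{ij_2,t_0}+\tilde\Delta_{\min}-\Delta > \bar A_{ij_2,t_0}+\Delta \geq A_{ij_2},
\]
where the strict step now uses $\Delta \leq \tilde\Delta_{\min}/10$. The column version is analogous.

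There is no real obstacle here beyond careful bookkeeping; the only things to double-check are that $\tilde\Delta_{\min}$ in Algorithm~\ref{alg-ucb-3} is indeed defined as a minimum taken over all within-column and within-row entry pairs (so that $|\bar A_{ij_1,t_0}-\bar A_{ij_2,t_0}|\geq \tilde\Delta_{\min}$ whenever $\bar A_{ij_1,t_0}\neq \bar A_{ij_2,t_0}$), and that the confidence parameter $8nT/\delta$ used in Algorithm~\ref{alg-ucb-3} is consistent with the value of $\Delta$ appearing in the stopping condition. Both hold by inspection of the algorithm, so the proof reduces to the four displayed inequalities above.
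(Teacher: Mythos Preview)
Your proposal is correct and follows the paper's proof essentially verbatim: derive $\Delta\le\tilde\Delta_{\min}/10$ from the stopping condition, upgrade to $\Delta\le\Delta_{\min}/8$ via Lemma~\ref{lem:alg3:dmin:ratio}, use monotonicity of the confidence radius in $t_0$ to propagate the bound, and then run the same four triangle-inequality chains. The only bookkeeping point worth flagging (present in the paper's own argument as well) is that in the converse chains the quantity $\tilde\Delta_{\min}$ is the value at time $t$, whereas the gap $|\bar A_{ij_1,t_0}-\bar A_{ij_2,t_0}|$ is at time $t_0$; this is harmless because under $G$ one has $\tilde\Delta_{\min,t_0}\ge\Delta_{\min}-2\Delta\ge 6\Delta$, which is more than enough for the strict inequality.
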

\begin{proof}
As $\frac{\tilde \Delta_{\min}+2\Delta}{\tilde \Delta_{\min}-2\Delta}\leq \frac{3}{2}$, we have $\Delta\leq \frac{\tilde \Delta_{\min}}{10}$. Due to Lemma \ref{lem:alg3:dmin:ratio}, we have $\Delta\leq \frac{\Delta_{\min}}{8}$. As event $G$ holds, for any element $(i,j)$, we have $|A_{ij}-\bar A_{ij}|\leq \sqrt{\frac{2\log(\frac{8nT}{\delta})}{t_0}}\leq \Delta $.

If $A_{ij_1}>A_{ij_2}$, we have the following:
\begin{align*}
    \bar A_{ij_1} & \geq A_{ij_1}-\Delta \\
    & \geq A_{ij_2}+\Delta_{\min}-\Delta \tag{as $A_{ij_1}- A_{ij_2}\geq \Delta_{\min}$}\\
    & > A_{ij_2} + \Delta \tag{as $\Delta\leq \frac{ \Delta_{\min}}{8}$}\\
    & \geq \bar A_{ij_2} \tag{as event $G$ holds}\\
\end{align*}

If $A_{i_1j}>A_{i_2j}$, we have the following:
\begin{align*}
    \bar A_{i_1j} & \geq A_{i_1j}-\Delta \\
    & \geq A_{i_2j}+\Delta_{\min}-\Delta \tag{as $A_{i_1j}- A_{i_2j}\geq \Delta_{\min}$}\\
    & > A_{i_2j} + \Delta \tag{as $\Delta\leq \frac{ \Delta_{\min}}{8}$}\\
    & \geq \bar A_{i_2j} \tag{as event $G$ holds}\\
\end{align*}

If $\bar A_{ij_1}>\bar A_{ij_2}$, we have the following:
\begin{align*}
    A_{ij_1} & \geq \bar A_{ij_1}-\Delta \\
    & \geq \bar A_{ij_2}+\tilde\Delta_{\min}-\Delta \tag{as $\bar A_{ij_1}-\bar A_{ij_2}\geq \tilde\Delta_{\min}$}\\
    & >\bar A_{ij_2} + \Delta \tag{as $\Delta\leq \frac{\tilde \Delta_{\min}}{10}$}\\
    & \geq A_{ij_2} \tag{as event $G$ holds}\\
\end{align*}

If $\bar A_{i_1j}>\bar A_{i_1j}$, we have the following:
\begin{align*}
    A_{i_1j} & \geq \bar A_{i_1j}-\Delta \\
    & \geq \bar A_{i_2j}+\tilde\Delta_{\min}-\Delta \tag{as $\bar A_{i_1j}-\bar A_{i_2j}\geq \tilde\Delta_{\min}$}\\
    & >\bar A_{i_2j} + \Delta \tag{as $\Delta\leq \frac{\tilde \Delta_{\min}}{10}$}\\
    & \geq A_{i_2j} \tag{as event $G$ holds}\\
\end{align*}
\end{proof}
\begin{corollary}\label{cor:alg3:saddle}
Let $t$ be the time step when the condition $1\leq \frac{\tilde \Delta_{\min}+2\Delta}{\tilde \Delta_{\min}-2\Delta}\leq \frac{3}{2}$ holds true for the first time. If the event $G$ holds, then at any time step $t_0$ such that  $t\leq t_0\leq T$, we have the following:
\begin{itemize}
    \item $(i,j)$ is PSNE of $A$ if and only if $(i,j)$ is a PSNE of $\bar A$.    
    \item $A$ does not have a PSNE if and only if $\bar A$ does not have a PSNE.
    \item The row $i$ of $A$ strongly dominates the row $j$ of $A$ if and only if the row $i$ of $A$ strongly dominates the row $j$ of $\bar A$.
\end{itemize}
\end{corollary}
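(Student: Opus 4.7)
The plan is to derive all three equivalences as essentially immediate consequences of Lemma~\ref{lem:alg3:gap}, which already transfers strict inequalities between entries sharing a row or column from $A$ to $\bar A$ and vice versa. The only preliminary observation needed is that the hypothesis $1 \leq \frac{\tilde\Delta_{\min}+2\Delta}{\tilde\Delta_{\min}-2\Delta} \leq \frac{3}{2}$ forces $\tilde\Delta_{\min} > 2\Delta$, so by Lemma~\ref{lem:alg3:dmin:ratio} we have $\Delta_{\min} \geq \tfrac{4}{5}\tilde\Delta_{\min} > 0$. Consequently, in the matrix $A$, every two entries that share a row or column are strictly distinct: in particular the two entries of each row differ, and within each column all $n$ entries differ pairwise. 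The same holds for $\bar A$ by the reverse direction of Lemma~\ref{lem:alg3:gap}.

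For the first bullet, I would use the characterization that $(i,j)$ is a PSNE of $A$ iff $A_{ij} \geq A_{kj}$ for every $k \in [n]$ and $A_{ij} \leq A_{il}$ for every $l \in \{1,2\}$. Because $\Delta_{\min} > 0$, these weak inequalities are equivalent to the strict ones $A_{ij} > A_{kj}$ for all $k \neq i$ and $A_{ij} < A_{il}$ for all $l \neq j$, and the same equivalence holds for $\bar A$. Applying Lemma~\ref{lem:alg3:gap} to each such strict inequality in either direction yields the desired iff statement. The second bullet follows immediately by negation: $A$ lacks a PSNE iff no $(i,j)$ is a PSNE of $A$ iff no $(i,j)$ is a PSNE of $\bar A$ iff $\bar A$ lacks a PSNE.

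For the third bullet (where the statement appears to contain a typographical slip and should read ``row $i$ of $\bar A$ strongly dominates row $j$ of $\bar A$''), strong dominance of row $i$ over row $j$ is by definition the conjunction $A_{i1} > A_{j1}$ and $A_{i2} > A_{j2}$. These are two strict inequalities between entries sharing a column, and Lemma~\ref{lem:alg3:gap} gives each one iff the corresponding strict inequality holds in $\bar A$; conjoining the two equivalences yields the claim.

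There is no real obstacle here — the corollary is a packaging of Lemma~\ref{lem:alg3:gap}. The only point worth stating explicitly in the write-up is why weak inequalities can be upgraded to strict ones, namely $\Delta_{\min} > 0$ from the triggering condition via Lemma~\ref{lem:alg3:dmin:ratio}; without this, one could not convert the PSNE definition (which allows ties) into purely strict conditions amenable to Lemma~\ref{lem:alg3:gap}.
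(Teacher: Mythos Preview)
Your proposal is correct and matches the paper's intent: the corollary is stated immediately after Lemma~\ref{lem:alg3:gap} with no separate proof, so the paper treats it as a direct repackaging of that lemma, exactly as you do. Your explicit remark that the triggering condition forces $\Delta_{\min}>0$ (so that all relevant weak inequalities become strict and Lemma~\ref{lem:alg3:gap} applies) is the only point worth spelling out, and you handle it correctly.
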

We now present the main lemma that establishes the correctness of the algorithm \ref{alg-ucb-1} when the input matrix $A$ does not have a PSNE.
\begin{lemma}\label{lem:alg3:main}
Let $(x^*,y^*)$ be the Nash equilibrium of $A$. If the condition in the line \ref{alg3:con5} of the algorithm \ref{alg-ucb-3} holds true and event $G$ holds, then $\{i_1,i_2\}=\supp(x^*)$ and $\{1,2\}=\supp(y^*)$.
\end{lemma}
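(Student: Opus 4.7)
My plan is to show that the Nash equilibrium of the $2 \times 2$ sub-game $A|_{\{i_1, i_2\}}$, extended with zero weight on all rows outside $\{i_1, i_2\}$, is a Nash equilibrium of the full game $A$; uniqueness of $(x^*, y^*)$ then forces $\supp(x^*) = \{i_1, i_2\}$, while $\supp(y^*) = \{1, 2\}$ follows from the standard equality $|\supp(x^*)| = |\supp(y^*)|$. Under event $G$ at step $t'$, each entry of $\bar A$ is within $\Delta'$ of $A$. Since we entered the branch of line~\ref{alg3:con2}, Corollary~\ref{cor:alg3:saddle} tells me $A$ has no PSNE and that the strongly-dominated rows discarded in line~9 are exactly the strongly-dominated rows of $A$, so no row of $\supp(x^*)$ has been removed. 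A standard argument (if $y^*$ were pure, player~1 would have a pure best response, forcing a PSNE) gives $|\supp(x^*)| = |\supp(y^*)| = 2$. Moreover, by Lemma~\ref{lem:alg3:gap} the pairwise ordering of entries in $\bar A|_{\{i_1, i_2\}}$ carries over to $A|_{\{i_1, i_2\}}$, so the latter inherits the no-PSNE property and admits a unique fully-mixed Nash $(x^{**}, y^{**})$.

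The extended $(x^{**}, y^{**})$ is a Nash of $A$ iff for every $i \notin \{i_1, i_2\}$,
\[
g_i(A) := V_{A|_{\{i_1, i_2\}}}^{*} - \langle y^{**}, (A_{i1}, A_{i2}) \rangle \ \geq\ 0,
\]
since the column-player optimality over $\{1, 2\}$ already holds inside the sub-game. Define the analogous quantity $g_i(\bar A)$ using the Nash of $\bar A|_{\{i_1, i_2\}}$, which by assumption is $(x', y')$; the line~\ref{alg3:con5} hypothesis reads $\tilde r_i \cdot g_i(\bar A) \geq 4\Delta'$ for every such $i$. A direct computation (treating the case $a>b, a>c, d>b, d>c$; the other sign case is symmetric) yields the closed-form identity
\[
g_i(M) = \frac{(m_{11}-M_{i2})(m_{22}-M_{i1}) - (M_{i1}-m_{12})(M_{i2}-m_{21})}{D_M},
\]
where $D_M = m_{11}-m_{12}-m_{21}+m_{22}$ and $m_{jk}$ is the $(j,k)$-entry of $M|_{\{i_1,i_2\}}$. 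Crucially, the no-PSNE assumption gives $|D_M| = |m_{11}-m_{12}|+|m_{21}-m_{22}|$, so the weight $r_i(M) = (|m_{11}-m_{12}|+|m_{21}-m_{22}|)/(|m_{11}-m_{12}|+|m_{21}-m_{22}|+|M_{i1}-M_{i2}|)$ exactly cancels the (possibly small) denominator $D_M$, leaving $r_i(M)g_i(M)$ as a polynomial in the six entries divided by a denominator bounded below in terms of $\tilde \Delta_{\min}$ (hence, by Lemma~\ref{lem:alg3:dmin:ratio}, by a constant multiple of $\Delta_{\min}$).

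The main technical step is then to establish a perturbation bound of the form $|r_i(A)g_i(A) - \tilde r_i g_i(\bar A)| \leq 4\Delta'$ from the entrywise bound $|A_{jk}-\bar A_{jk}| \leq \Delta'$; combined with the hypothesis, this yields $r_i(A)g_i(A) \geq 0$, hence $g_i(A) \geq 0$, which completes the proof. The hard part will be this perturbation bound: one must show that the Lipschitz constant of the rational function $r_i(M)g_i(M)$ in the entries of $M$ is at most an absolute constant. This is exactly the purpose of introducing $r_i$ (as emphasized in the paper's discussion of the $n\times 2$ case), and the bookkeeping will mirror, but be somewhat more intricate than, the perturbation analysis carried out in Lemma~\ref{2matrix-1}, with careful case analysis across the signs of pairwise differences and use of the $\tilde \Delta_{\min}$-lower bound to absorb the denominators that appear when differentiating.
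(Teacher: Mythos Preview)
Your overall architecture is sound and matches the paper's in spirit: reduce to showing that for every $i\notin\{i_1,i_2\}$ the row-gap $g_i(A)=V^*_{A|_{\{i_1,i_2\}}}-\langle y^{**},(A_{i1},A_{i2})\rangle$ is nonnegative, then invoke uniqueness (the paper packages this step as Lemma~\ref{lem:n*2:unique}). The gap is in the perturbation constant. You propose to prove a two-sided Lipschitz bound $|r_i(A)g_i(A)-\tilde r_i\,g_i(\bar A)|\le 4\Delta'$ and combine it with the algorithm's hypothesis $\tilde r_i\,g_i(\bar A)\ge 4\Delta'$ to conclude $r_i(A)g_i(A)\ge 0$. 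But a uniform Lipschitz constant of~$4$ is not obtainable here. Writing $r_i(M)g_i(M)=N(M)/S(M)$ with $S(M)=|m_{11}-m_{12}|+|m_{21}-m_{22}|+|M_{i1}-M_{i2}|$, one has $|N(A)-N(\bar A)|\le 3\Delta'\cdot S(A)$ (Proposition~\ref{prop:n*2:<3}) and $|N|/S^2\le 2$ (Proposition~\ref{prop:n*2:<2}); the resulting one-sided bound is $r_i(A)g_i(A)\ge \tilde r_i\,g_i(\bar A)-15\Delta'$ (Lemma~\ref{lem:n*2:deviation}, first inequality), which leaves you with $r_i(A)g_i(A)\ge -11\Delta'$ and no conclusion. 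The $|N|/S^2$ term is what forces the constant above~$4$, and it cannot be made to vanish without an additional sign hypothesis.

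The paper obtains exactly that sign hypothesis by arguing \emph{by contradiction} rather than directly. Assuming $\{i_1,i_2\}\neq\supp(x^*)$, Lemma~\ref{lem:n*2:suboptimal} supplies a row $i\in\supp(x^*)\setminus\{i_1,i_2\}$ with $g_i(A)<0$, hence $r_i(A)g_i(A)<0$. In this regime the second inequality of Lemma~\ref{lem:n*2:deviation} applies (the term $(N/M)\cdot\frac{6\Delta/M}{1-6\Delta/M}$ is now $\le 0$ and can be dropped), giving the tight bound $\tilde r_i\,g_i(\bar A)\le r_i(A)g_i(A)+4\Delta'<4\Delta'$, which contradicts the line~\ref{alg3:con5} hypothesis. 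In short: the constant~$4$ in the algorithm was chosen to match the perturbation bound available \emph{only} in the nonpositive regime, and the contradiction route is precisely what lands you there. Your direct route would succeed if the threshold in line~\ref{alg3:con5} were $15\Delta'$ (or larger), but as stated the lemma requires the contrapositive argument together with Lemma~\ref{lem:n*2:suboptimal}.
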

\begin{proof}
As condition in the line \ref{alg3:con5} of the algorithm \ref{alg-ucb-3} holds true and event $G$ holds, $\bar A$ does not have a PSNE. Due to Corollary \ref{cor:alg3:saddle}, $A$ does not have a PSNE. Hence $\{1,2\}=\supp(y^*)$. Moreover, $|\supp(x^*)|=2$ as $A$ has a unique Nash equilibrium.

As $\bar A$ has no row that strongly dominates any other row, $A$ does not have a row that strongly dominates any other row due to Corollary \ref{cor:alg3:saddle}. As $\frac{\tilde \Delta_{\min}+2\Delta}{\tilde \Delta_{\min}-2\Delta}\leq \frac{3}{2}$, we have $\Delta\leq \frac{\tilde \Delta_{\min}}{10}$. Due to Lemma \ref{lem:alg3:dmin:ratio}, we have $\Delta'\leq \Delta\leq \frac{\Delta_{\min}}{8}$. Therefore, for any $i,j,k\in[n]$, $\frac{6\Delta'}{|A_{i1}-A_{i2}|+|A_{j1}-A_{j2}|+|A_{k1}-A_{k2}|}\leq \frac{6\Delta'}{3\Delta_{\min}}\leq \frac{1}{4}$. Hence, we can apply the lemmas in the Section \ref{tools:n*2}.  

Now let us assume that $\{i_1,i_2\}\neq \supp(x^*)$. Let $B=[A_{i_11},A_{i_12};A_{i_21},A_{i_22}]$ and $(x_B,y_B)$ is the Nash equilibrium of $B$. Then due to Lemma \ref{lem:n*2:suboptimal}, $\exists i\in \supp(x^*)\setminus\{i_1,i_2\}$ such that $V_{B}^*-\langle y_B, (A_{i1},A_{i2})\rangle<0$. Now we have the following:
\begin{align*}
    \tilde \Delta_g &\leq \frac{(|\bar A_{i_11}-\bar A_{i_12}|+|\bar A_{i_21}-\bar A_{i_22}|)(V^*_{\bar A}-\langle y',(\bar A_{i1},\bar A_{i2})\rangle)}{|\bar A_{i_11}-\bar A_{i_12}|+|\bar A_{i_21}-\bar A_{i_22}|+|\bar A_{i1}-\bar A_{i2}|}\\
    & \leq \frac{(| A_{i_11}- A_{i_12}|+| A_{i_21}- A_{i_22}|)(V^*_{B}-\langle y_B,( A_{i1}, A_{i2})\rangle)}{| A_{i_11}- A_{i_12}|+| A_{i_21}- A_{i_22}|+| A_{i1}- A_{i2}|} + 4\Delta' \tag{due to Lemma \ref{lem:n*2:deviation}}\\
    & < 4\Delta' \tag{as $V_{B}^*-\langle y_B, (A_{i1},A_{i2})\rangle<0$}
\end{align*}
This contradicts the fact that the condition in the line \ref{alg3:con5} of the algorithm \ref{alg-ucb-3} holds true. Hence $\{i_1,i_2\}=\supp(x^*)$.
\end{proof}
The following lemma upper bounds the number of time steps required to return the support of the Nash equilibrium when the input matrix $A$ does not have a PSNE.
\begin{lemma}\label{lem:alg3:dg}
Let the condition in the line \ref{alg3:con2} of the algorithm \ref{alg-ucb-3} hold true. Let $t'$ be the time step when both the conditions $|\supp(x')|=2$ and $\tilde\Delta_{g}\geq 4\Delta'$ hold true simultaneously for the first time. If the event $G$ holds, then $t'\leq \frac{722\log (\frac{8nT}{ \delta})}{ \Delta_{g}^2}$.
\end{lemma}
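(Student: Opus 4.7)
The plan is to verify that at the specific time step $t_0 := \lceil 722\log(8nT/\delta)/\Delta_g^2 \rceil$ both conditions inside the inner \texttt{if} statement hold, which then implies $t' \leq t_0$. At this time step we have
\[
\Delta' = \sqrt{2\log(8nT/\delta)/t_0} \leq \Delta_g/19,
\]
and the outer numerology is chosen precisely so that $19 = 4 + 15$ splits the budget into the ``threshold'' piece and the ``deviation'' piece.

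First I would verify $|\supp(x')|=2$. Since the condition in line \ref{alg3:con2} holds, $\bar A$ has no PSNE, and by Corollary~\ref{cor:alg3:saddle} neither does $A$. Uniqueness of the Nash equilibrium $(x^*,y^*)$ of $A$ together with the Bohnenblust--Karlin--Shapley fact $|\supp(x^*)|=|\supp(y^*)|$ forces $|\supp(x^*)|=2$. Let $\supp(x^*) = \{i_1^*, i_2^*\}$. These rows cannot be strongly dominated in $A$ (otherwise they could not be in the Nash support), and by Corollary~\ref{cor:alg3:saddle} they are also not strongly dominated in $\bar A$, so they survive the row-removal step. Then I would invoke the empirical-matrix counterpart of Lemma~\ref{lem:n*2:suboptimal} (as already used inside the proof of Lemma~\ref{lem:alg3:main}) to conclude that the unique Nash equilibrium of the reduced $\bar A$ must have support $\{i_1^*, i_2^*\}$, and in particular $|\supp(x')|=2$.

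Second I would show $\tilde\Delta_g \geq 4\Delta'$. Using Lemma~\ref{lem:n*2:deviation} (the deviation bound that also underlies the proof of Lemma~\ref{lem:alg3:main}), the empirical quantities built from $\bar A$ satisfy
\[
|\tilde\Delta_g - \Delta_g| \leq 15\,\Delta',
\]
where the constant $15$ absorbs the additive errors in the empirical $r_i$ weights together with the errors in the empirical value $V^*_{\bar A}-\langle y', (\bar A_{i1},\bar A_{i2})\rangle$. Combining this with $\Delta_g \geq 19\Delta'$ at time $t_0$ yields $\tilde\Delta_g \geq 19\Delta'-15\Delta' = 4\Delta'$, verifying the second condition.

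The main obstacle is the bookkeeping in the second step: propagating the per-entry deviation $\Delta'$ through the nonlinear expression defining $\tilde\Delta_g$---in particular through the rescaling factor $\tilde r_i$ and through the computed Nash value $V^*_{\bar A}$ and strategy $y'$---to obtain the right numerical constant so that the stated factor $722$ (rather than merely $O(\log(8nT/\delta)/\Delta_g^2)$) is actually achieved. The support identification in the first step is the conceptually subtle part, since it requires ruling out the possibility that small perturbations of $A$ cause the Nash equilibrium of the reduced empirical game to collapse to an apparent pure strategy; this is exactly why the algorithm first waits for the PSNE check of line \ref{alg3:con2} (with its $\Delta' \leq \Delta_{\min}/8$ guarantee via Lemmas~\ref{lem:alg3:tmin} and \ref{lem:alg3:dmin:ratio}) before entering the inner loop.
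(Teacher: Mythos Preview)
Your proposal has the right ingredients and the right numerology ($19=15+4$), but the logical order is inverted, and this creates a genuine gap in Step~1.

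You try to first establish $\supp(x')=\{i_1^*,i_2^*\}$ by invoking ``the empirical-matrix counterpart of Lemma~\ref{lem:n*2:suboptimal} (as already used inside the proof of Lemma~\ref{lem:alg3:main})''. But Lemma~\ref{lem:n*2:suboptimal} goes the other way: it says that \emph{if} one already knows the Nash support of a matrix, \emph{then} every other candidate $2\times2$ submatrix has a negative gap. In Lemma~\ref{lem:alg3:main} this is used contrapositively, \emph{starting from the hypothesis} $\tilde\Delta_g\geq 4\Delta'$, to rule out a wrong empirical support. Here you are trying to \emph{prove} that $\tilde\Delta_g\geq 4\Delta'$, so you cannot assume it; the argument you cite is circular in this context, and Lemma~\ref{lem:n*2:suboptimal} alone does not let you conclude that the Nash of $\bar A$ sits on $\{i_1^*,i_2^*\}$.

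The paper's proof reverses your two steps and thereby avoids the circularity. It \emph{first} applies Lemma~\ref{lem:n*2:deviation} with the \emph{true} support rows $\{i^*,j^*\}$: for every $i\notin\{i^*,j^*\}$ the empirical rescaled gap built from $\bar A_*=[\bar A_{i^*1},\bar A_{i^*2};\bar A_{j^*1},\bar A_{j^*2}]$ is at least $\Delta_g-15\Delta'=4\Delta'>0$. \emph{Then} it invokes Lemma~\ref{lem:n*2:unique} (not Lemma~\ref{lem:n*2:suboptimal}) to conclude that $\bar A$ has a unique Nash equilibrium with support exactly $\{i^*,j^*\}$. This single chain establishes both $|\supp(x')|=2$ and $\tilde\Delta_g\geq 4\Delta'$ at once, since after the support is pinned down the algorithm's $\tilde\Delta_g$ is literally the minimum of the quantities just bounded below by $4\Delta'$.

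A smaller point: Lemma~\ref{lem:n*2:deviation} only gives the one-sided bound $\Delta_{A_2}\geq \Delta_{A_1}-15\Delta$ (the reverse inequality there requires $\Delta_{A_1}\leq 0$ and has a different constant), so your stated two-sided bound $|\tilde\Delta_g-\Delta_g|\leq 15\Delta'$ is not what the lemma delivers. Only the lower bound is needed, and only after the support has been identified.
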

\begin{proof}
Consider the time step $t'= \frac{722\log (\frac{8nT}{ \delta})}{ \Delta_{g}^2}$. Let us assume that the event $G$ holds. Then $\Delta'=\sqrt{2\log(\frac{8nT}{\delta}) / (t')}=\frac{\Delta_g}{19}$. Let $(x^*,y^*)$ be the Nash equilibrium of $A$. Let $\supp(x^*)=(i^*,j^*)$. Let $A_*=[A_{i^*1},A_{i^*2};A_{j^*1},A_{j^*2}]$ and $\bar A_*=[\bar A_{i^*1},\bar A_{i^*2}; \bar A_{j^*1},\bar A_{j^*2}]$. Let $(x'',y'')$ be the Nash equilibrium of $\bar A_*$. Then for any $i\notin \supp(x^*)$, we have the following due to Lemma \ref{lem:n*2:deviation}.
\begin{align*}
&\frac{(|\bar A_{i^*1}-\bar A_{i^*2}|+|\bar A_{j^*1}-\bar A_{j^*2}|)(V^*_{\bar A_*}-\langle y'',(\bar A_{i1},\bar A_{i2})\rangle)}{|\bar A_{i^*1}-\bar A_{i^*2}|+|\bar A_{j^*1}-\bar A_{j^*2}|+|\bar A_{i1}-\bar A_{i2}|}\\ 
&\geq \frac{(| A_{i^*1}- A_{i^*2}|+| A_{j^*1}- A_{j^*2}|)(V^*_{A_*}-\langle y^*,( A_{i1}, A_{i2})\rangle)}{| A_{i^*1}- A_{i^*2}|+| A_{j^*1}- A_{j^*2}|+| A_{i1}- A_{i2}|}-15\Delta'\\
&\geq \Delta_g-15\Delta'\\
&=19\Delta'-15\Delta'\\
&=4\Delta'>0     
\end{align*}

Hence, due to Lemma \ref{lem:n*2:unique}, $(x',y')$ is the unique Nash equilibrium of $\bar A$ and $\supp(x')=\{i^*,j^*\}$. This implies that $\tilde \Delta_{g}\geq  \Delta_{g}-15\Delta'=19\Delta'-15\Delta'=4\Delta'$. 

%Also observe that $\tilde D\leq |D|+4\Delta \leq \frac{7|D|}{6}$. Similarly, we have $\tilde D\geq |D|-4\Delta \geq \frac{5|D|}{6}$. Hence, we have $\frac{5}{6}\leq \frac{\tilde D}{|D|} \leq \frac{7}{6}$. 
\end{proof}

\subsection{Technical Lemmas for upper bound}\label{tools:n*2}
In this section, we present few technical lemmas that are used to establish the upper bound on the sample complexity of finding the support of the Nash equilibrium in $n\times 2$ matrix games.

We first define matrices $A_1$, $A_2$, $A_3$ and $A_4$ as follows:
\[
A_1 = \begin{bmatrix} 
   a & b \\
   c & d \\
   e & f \\
    \end{bmatrix}
A_2 = \begin{bmatrix} 
   a+\Delta_{11} & b+\Delta_{12} \\
   c+\Delta_{21} & d+\Delta_{22} \\
   e+\Delta_{31} & f+\Delta_{32} \\
    \end{bmatrix}    
A_3 = \begin{bmatrix} 
   a & b \\
   c & d \\
    \end{bmatrix}
A_4 = \begin{bmatrix} 
   a+\Delta_{11} & b+\Delta_{12} \\
   c+\Delta_{21} & d+\Delta_{22} \\
    \end{bmatrix}    
\]
Let us assume that $|\Delta_{ij}|\leq \Delta$ for all $i,j$. Let us assume that $a>b,a>c,d>b,d>c$. Let us also assume that $e\neq f, e\neq a,e\neq c,f\neq b,f\neq d$. Let us also assume that $a+\Delta_{11}>b+\Delta_{12},a+\Delta_{11}>c+\Delta_{21},d+\Delta_{22}>b+\Delta_{12},d+\Delta_{22}>c+\Delta_{21}$. Let us also assume that no row of $A_1$ strongly dominates any other row. Now we present the following lemma that would be useful to compute the parameter $\Delta_g$.
\begin{lemma}\label{lem:n*2:delg}
Let $(x^*,y^*)$ be the Nash equilibrium of $A_3$. Then we have the following:
\begin{equation*}
    V^*_{A_3}-\langle y^*, (e,f)\rangle=\frac{(ad-bc)-(af-be)+(cf-de)}{a-b-c+d}
\end{equation*}
\end{lemma}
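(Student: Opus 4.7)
The plan is to directly compute both quantities on the left-hand side using the closed-form expression for the Nash equilibrium of a $2\times 2$ game with no pure strategy Nash equilibrium, and then simplify.

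First, observe that the hypotheses $a>b,\ a>c,\ d>b,\ d>c$ imply, by Proposition~\ref{prop:matrix:main}, that $A_3$ has a unique Nash equilibrium which is not a PSNE, and moreover
\[
V^*_{A_3}=\frac{ad-bc}{D},\qquad y^*=\left(\frac{d-b}{D},\ \frac{a-c}{D}\right),
\]
where $D=a-b-c+d$. So nothing about the existence or form of $(x^*,y^*)$ needs to be argued separately; I would just invoke Proposition~\ref{prop:matrix:main}.

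Next I would compute $\langle y^*,(e,f)\rangle$ by plugging in the formula for $y^*$:
\[
\langle y^*,(e,f)\rangle=\frac{(d-b)e+(a-c)f}{D}=\frac{de-be+af-cf}{D}.
\]
Then subtracting from $V^*_{A_3}$ and collecting terms gives
\[
V^*_{A_3}-\langle y^*,(e,f)\rangle=\frac{(ad-bc)-(de-be)-(af-cf)}{D}=\frac{(ad-bc)-(af-be)+(cf-de)}{D},
\]
which is exactly the claimed identity.

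There is no genuine obstacle here: the entire lemma is essentially the substitution of the closed-form Nash equilibrium of the $2\times 2$ submatrix into a linear functional, followed by one line of algebra to regroup the four resulting products in the form $(ad-bc)-(af-be)+(cf-de)$. The only thing to be careful about is matching signs when grouping terms, so I would write the grouping step explicitly.
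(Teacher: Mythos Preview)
Your proof is correct and follows essentially the same route as the paper: invoke the closed-form Nash equilibrium of the $2\times 2$ submatrix (the paper just states $V^*_{A_3}$ and $y^*$ directly, you cite Proposition~\ref{prop:matrix:main}), compute $\langle y^*,(e,f)\rangle$, and regroup terms. There is no meaningful difference.
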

\begin{proof}
First observe that $V^*_{A_3}=\frac{ad-bc}{a-b-c+d}$ and $y^*=(\frac{d-b}{a-b-c+d},\frac{a-c}{a-b-c+d})$. Next we have the following:
\begin{align*}
     V^*_{A_3}-\langle y^*, (e,f)\rangle&=\frac{ad-bc}{a-b-c+d}-e\cdot \frac{d-b}{a-b-c+d}-f\cdot \frac{a-c}{a-b-c+d}\\
    &=\frac{(ad-bc)-(af-be)+(cf-de)}{a-b-c+d}
\end{align*}
\end{proof}
Next we present the following two propositions that would be useful to prove other technical lemmas in this section.
\begin{proposition}\label{prop:n*2:<3}
$\frac{|d-f|+|c-e|+|b-f|+|a-e|+|b-d|+|c-a|}{|a-b|+|c-d|+|e-f|}\leq 3$
\end{proposition}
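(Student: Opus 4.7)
The plan is to first rewrite the numerator in a more tractable form and then use the "no row strongly dominates another" hypothesis to pin down its size in terms of the denominator. Note that the six differences in the numerator are precisely all intra-column pairwise differences of $A_1$: three from column 1 ($|a-c|,|a-e|,|c-e|$) and three from column 2 ($|b-d|,|b-f|,|d-f|$). Applying the elementary identity
\[
|x_1-x_2|+|x_1-x_3|+|x_2-x_3| \;=\; 2\bigl(\max_i x_i - \min_i x_i\bigr)
\]
to each column yields
\[
|d-f|+|c-e|+|b-f|+|a-e|+|b-d|+|c-a| \;=\; 2R_1 + 2R_2,
\]
where $R_1$ and $R_2$ denote the ranges (max minus min) of the entries in columns 1 and 2 respectively.

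Next I would use the given assumption that no row of $A_1$ strongly dominates another. The ordering $a>b,a>c,d>b,d>c$ already rules out strong dominance between rows 1 and 2; the remaining restrictions couple the position of $e$ in $\{a,c\}$ with that of $f$ in $\{b,d\}$ (using the nondegeneracy $e\neq a,c$ and $f\neq b,d$). A trichotomy on $e$ gives three cases:
\begin{itemize}
\item If $e<c$, then row 2 strongly dominating row 3 would require $d>f$, so $f>d$; hence $R_1=a-e$ and $R_2=f-b$, and additionally $f>d>c>e$ so $|e-f|=f-e$.
\item If $c<e<a$, then row 3 versus rows 1 and 2 not strongly dominating forces $b<f<d$; hence $R_1=a-c$ and $R_2=d-b$.
\item If $e>a$, then row 3 strongly dominating row 1 would require $f>b$, so $f<b$; hence $R_1=e-c$, $R_2=d-f$, and $e>a>b>f$ gives $|e-f|=e-f$.
\end{itemize}

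In each case the target inequality $2R_1+2R_2 \le 3(|a-b|+|c-d|+|e-f|)$ reduces to a manifestly non-negative linear combination of $a-b$, $d-c$, and $|e-f|$. For example, the first case becomes $0\le (a-b)+3(d-c)+(f-e)$; the second becomes $0\le (a-b)+(d-c)+3|e-f|$; the third becomes $0\le 3(a-b)+(d-c)+(e-f)$. Each of these is immediate from the signs already established.

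The only real obstacle is bookkeeping: one must apply the no-strong-dominance hypothesis in both directions for each of the three row pairs and confirm that the trichotomy on $e$ exhausts all possibilities modulo the $e\neq a,c$ and $f\neq b,d$ nondegeneracy. Once the column ranges are identified in each case the bound is routine algebra, and the constant $3$ is not tight in any case (which suggests the identity-plus-casework approach is the natural one).
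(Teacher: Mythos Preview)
Your argument is correct. The range identity $\sum_{i<j}|x_i-x_j|=2(\max_i x_i-\min_i x_i)$ is valid, and in each of your three cases the no-strong-dominance hypothesis together with the standing inequalities $a>b,\,a>c,\,d>b,\,d>c$ indeed pins down both column extremes as you claim; the resulting linear inequalities are all of the form (positive combination of $a-b$, $d-c$, $|e-f|$) $\ge 0$ and hence immediate.

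The paper takes a slightly different route. Rather than passing through column ranges, it assumes without loss of generality that $e<f$, deduces $a>e$ and $b<f$ from no-dominance, and then bounds the numerator pairwise: the identities $|a-c|+|b-d|=|a-b|+|c-d|$ and $|a-e|+|b-f|=|a-b|+|e-f|$ handle four of the six terms exactly, and a two-subcase split (either $c>e,\,d<f$ or $e>c,\,f<d$) bounds the remaining pair $|c-e|+|d-f|$ by one of the two previous sums. Your range-based reduction is arguably more conceptual and makes the constant $3$ visibly non-sharp in every case; the paper's pairing argument is a bit more ad hoc but avoids the three-way split on $e$ by exploiting the $e\leftrightarrow f$ symmetry upfront. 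Both arguments ultimately hinge on the same no-dominance constraints to localize $(e,f)$ relative to the first two rows.
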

\begin{proof}
W.l.o.g let us assume that $e<f$. As no row of $A_1$ strongly dominates any other row, we have $a>e$ and $b<f$. Now observe that $|a-c|+|b-d|=a-b-c+d=|a-b|+|c-d|$ and $|a-e|+|b-f|=a-b-e+f=|a-b|+|e-f|$. If $c>e$ and $d<f$, then $a>c>e$ and $b<d<f$. Hence, we have $|c-e|+|d-f|<|a-e|+|b-f|=|a-b|+|e-f|$. Similarly, if $e>c$ and $f<d$, then $a>e>c$ and $b<f<d$. Hence, we have $|c-e|+|d-f|<|a-c|+|b-d|=|a-b|+|c-d|$. Therefore, we have $\frac{|d-f|+|c-e|+|b-f|+|a-e|+|b-d|+|c-a|}{|a-b|+|c-d|+|e-f|}\leq 3$.
\end{proof}
\begin{proposition}\label{prop:n*2:<2}
$\frac{|(ad-bc)-(af-be)+(cf-de)|}{(|a-b|+|c-d|+|e-f|)^2}\leq 2$
\end{proposition}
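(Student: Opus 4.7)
The starting point will be a clean factorization of the numerator. A direct expansion shows
\begin{equation*}
(ad-bc)-(af-be)+(cf-de) \;=\; (a-e)(d-f) + (c-e)(f-b),
\end{equation*}
so if $N$ denotes the numerator of the proposition, then $N$ equals the absolute value of the sum of these two products. (Equivalently, $N$ is the absolute value of the $3\times 3$ determinant of the matrix with rows $(1,a,b),(1,c,d),(1,e,f)$, i.e.\ twice the area of the triangle with those three vertices.) I will set $S := |a-b|+|c-d|+|e-f|$, and note that since $a>b$ and $d>c$ one has $S=(a-b)+(d-c)+|e-f|$, from which $|a-c|=a-c \leq (a-b)+(d-c) \leq S$ and $|d-b|=d-b \leq S$ are immediate.

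Next I will run a case analysis on the location of $(e,f)$ forced by the no-strong-domination hypothesis. Since $e \neq a,c$ and $a>c$, while $f\neq b,d$, excluding each of the four possible dominance relations between row~3 and rows~1,2 leaves exactly three possibilities:
\begin{itemize}
\item (A) $e>a>c$ and $f<b<d$;
\item (B1) $e<c<a$ and $f>d>b$;
\item (B2) $c<e<a$ and $b<f<d$.
\end{itemize}
The crucial observation is that in every one of these three cases the two summands $(a-e)(d-f)$ and $(c-e)(f-b)$ have opposite signs. Setting $X := |a-e|\cdot|d-f| \geq 0$ and $Y := |c-e|\cdot|f-b| \geq 0$, this gives $N = |X - Y| \leq \max(X, Y)$, which is what will produce the sharp constant.

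It then remains to bound $X$ and $Y$ by $2 S^2$ in each case using only the triangle inequality. In Case~B2 each of $|a-e|, |c-e|, |b-f|, |d-f|$ is dominated by $a-c$ or $d-b$, hence by $S$, so $X, Y \leq S^2$. In Case~A, the chain $e > a > b > f$ gives $|a-e| \leq |e-f| \leq S$ and $|b-f| \leq |e-f| \leq S$, while $|d-f| \leq (d-b)+(b-f) \leq 2S$ and $|c-e| \leq (a-c)+(e-a) \leq 2S$, so $X, Y \leq 2 S^2$. Case~B1 is handled symmetrically by swapping the roles of the first and third rows. Combining gives $N \leq 2 S^2$ in every case, which is the proposition.

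The main obstacle will really just be bookkeeping: correctly enumerating the three cases coming from the no-dominance hypothesis and then being disciplined enough to use $N = |X - Y| \leq \max(X, Y)$ in place of the naive triangle bound $N \leq X + Y$, which would only yield the weaker $4 S^2$.
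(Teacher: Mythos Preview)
Your argument is correct. The factorization $(ad-bc)-(af-be)+(cf-de)=(a-e)(d-f)+(c-e)(f-b)$ checks, the three cases (A), (B1), (B2) are exactly the configurations left after ruling out all six possible dominance relations, the opposite-sign observation holds in each, and the subsequent estimates on $X$ and $Y$ are all valid. So the bound $N\le\max(X,Y)\le 2S^2$ goes through.

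The paper takes a different and somewhat shorter route. It uses the alternative factorization
\[
(ad-bc)-(af-be)+(cf-de)=(d-b)(a-b-e+f)-(f-b)(a-b-c+d),
\]
and then observes (after a harmless WLOG on the sign of $e-f$) that each of the four factors $|d-b|$, $|f-b|$, $|a-b-e+f|$, $|a-b-c+d|$ is individually at most $S$, via the identities $|a-c|+|b-d|=|a-b|+|c-d|$ and $|a-e|+|b-f|=|a-b|+|e-f|$. The plain triangle inequality then gives $|N|\le S\cdot S+S\cdot S=2S^2$ with no case split and no need for the opposite-sign trick. In your factorization the individual factors can be as large as $2S$, which is why you were forced into the $\max(X,Y)$ argument; the paper's choice of factorization sidesteps that. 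On the other hand, your decomposition has the pleasant interpretation of the numerator as twice a triangle area, and your case analysis makes the geometry of the no-domination hypothesis more explicit.
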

\begin{proof}
W.l.o.g let us assume that $e<f$. As no row of $A_1$ strongly dominates any other row, we have $a>e$ and $b<f$. We have $(ad-bc)-(af-be)+(cf-de)=(d-b)(a-b-e-f)-(f-b)(a-b-c+d)$. Now observe that $|a-c|+|b-d|=|a-b|+|c-d|$ and $|a-e|+|b-f|=|a-b|+|e-f|$. This implies that $|d-b|\leq |a-b|+|c-d|+|e-f|$, $|f-b|\leq |a-b|+|c-d|+|e-f|$, $|a-b-c+d|\leq |a-b|+|c-d|+|e-f|$  and $|a-b-e+f|\leq |a-b|+|c-d|+|e-f|$. Hence, we have $\frac{|(ad-bc)-(af-be)+(cf-de)|}{(|a-b|+|c-d|+|e-f|)^2}\leq \frac{|d-b||a-b-e-f|+|f-b||a-b-c+d|}{(|a-b|+|c-d|+|e-f|)^2}\leq 2$
\end{proof}
\noindent
We present the following lemma that serves as a concentration inequality for the empirical estimate of $\Delta_g$.
\begin{lemma}\label{lem:n*2:deviation}
Let $(x^*,y^*)$ be the Nash equilibrium of $A_3$ and $(x',y')$ be the Nash equilibrium of $A_4$. Let $\Delta_{A_1}= \frac{(a-b-c+d)(V^*_{A_3}-\langle y^*, (e,f)\rangle)}{|a-b|+|c-d|+|e-f|}$ and $\Delta_{A_2}=  \frac{(a'-b'-c'+d')(V^*_{A_4}-\langle y', (e',f')\rangle)}{|a'-b'|+|c'-d'|+|e'-f'|}$ where $a'=a+\Delta_{11},b'=b+\Delta_{12},c'=c+\Delta_{21},d'=d+\Delta_{22},e'=e+\Delta_{31}$ and $f'=f+\Delta_{32}$. Then we have the following:
\begin{equation*}
    \Delta_{A_2}\geq \Delta_{A_1}-15\Delta
\end{equation*}
Moreover, if $\frac{6\Delta}{|a-b|+|c-d|+|e-f|}\leq \frac{1}{4}$ and $\Delta_{A_1}\leq 0$, then we have the following:
\begin{equation*}
    \Delta_{A_2}\leq \Delta_{A_1}+4\Delta
\end{equation*}
\end{lemma}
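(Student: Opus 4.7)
The plan is to reduce both $\Delta_{A_1}$ and $\Delta_{A_2}$ to clean rational expressions via Lemma~\ref{lem:n*2:delg}, then exhibit an algebraic identity that writes the numerator as a bilinear form in differences each individually controlled by the denominator, and finally bound the perturbation term by term. Applying Lemma~\ref{lem:n*2:delg} gives $V^*_{A_3}-\langle y^*,(e,f)\rangle = N_1/(a-b-c+d)$ with $N_1:=(ad-bc)-(af-be)+(cf-de)$; since $a-b-c+d=|a-b|+|c-d|$ under the sign assumptions, that factor cancels against the identical factor in the numerator of $\Delta_{A_1}$, leaving $\Delta_{A_1}=N_1/D_1$ with $D_1:=|a-b|+|c-d|+|e-f|$, and analogously $\Delta_{A_2}=N_2/D_2$.

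The crucial algebraic identity---verified by direct expansion---is $N_1=(a-e)(d-b)-(a-c)(f-b)$. Assuming WLOG $e<f$ (so that the absence of strong dominance forces $a>e$ and $b<f$), the paired identities $|a-c|+|d-b|=|a-b|+|c-d|\le D_1$ and $|a-e|+|f-b|=|a-b|+|e-f|\le D_1$ show that each of the four factors is at most $D_1$, in particular reproving $|N_1|\le 2D_1^2$ from Proposition~\ref{prop:n*2:<2}. Applying the same identity to $N_2$ and expanding, each factor shifts by at most $2\Delta$, producing $N_2-N_1=(a-e)\delta_2+(d-b)\delta_1-(a-c)\delta_4-(f-b)\delta_3+\delta_1\delta_2-\delta_3\delta_4$ with $|\delta_k|\le 2\Delta$; the paired bounds yield $|N_2-N_1|\le C_1\Delta D_1+C_2\Delta^2$ for absolute $C_1,C_2$, while the triangle inequality gives $|D_2-D_1|\le 6\Delta$.

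I would then prove the first inequality by a two-regime argument. When $D_1$ is small relative to $\Delta$, the absolute bound $|\Delta_{A_i}|\le 2D_i$ combined with $D_2\le D_1+6\Delta$ gives $|\Delta_{A_2}-\Delta_{A_1}|\le 4D_1+12\Delta$ directly. When $D_1$ is large, the decomposition
$$\Delta_{A_2}-\Delta_{A_1}=\frac{N_2-N_1}{D_2}+\frac{N_1(D_1-D_2)}{D_1D_2}$$
together with the perturbation estimates of the previous paragraph and $|N_1|/D_1\le 2D_1$ produces an $O(\Delta)$ bound. Choosing the threshold between the regimes to balance the constants delivers the $15\Delta$ bound.

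For the second inequality, the hypothesis $6\Delta/D_1\le 1/4$ puts us entirely in the large-$D_1$ regime, with $D_2\in[3D_1/4,5D_1/4]$, so the $\Delta^2/D_2$ contributions become negligible. The extra sign hypothesis $\Delta_{A_1}\le 0$ (equivalently $N_1\le 0$) makes the correction $N_1(D_1-D_2)/(D_1D_2)$ exploitable: after re-grouping in the decomposition above, this term has a sign that can be dropped when upper-bounding $\Delta_{A_2}-\Delta_{A_1}$, leaving only the clean $(N_2-N_1)/D_2$ piece and tightening the constant from $15$ to $4$. The main obstacle is threading a $\Delta$-linear bound uniformly across all scales of $D_1$: naively dividing by $D_1$ blows up when $D_1\ll\Delta$, so both the two-regime split and the nontrivial factorization $N_1=(a-e)(d-b)-(a-c)(f-b)$---which is what allows each factor to be paired with a piece of $D_1$---are essential.
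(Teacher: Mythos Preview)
Your outline shares the paper's skeleton---reduce $\Delta_{A_i}$ to $N_i/D_i$ via Lemma~\ref{lem:n*2:delg}, control $|N_1|\le 2D_1^2$, then exploit $N_1\le 0$ for the tighter second constant---but it diverges from the paper in two places that actually matter for the stated constants.

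First, the paper does \emph{not} split into regimes. It expands $N_2-N_1$ \emph{linearly} in the six perturbations $\Delta_{ij}$ (coefficients $d-f,\,e-c,\,f-b,\,a-e,\,b-d,\,c-a$), invokes Proposition~\ref{prop:n*2:<3} to bound the sum of their absolute values by $3D_1$, and then absorbs the denominator shift via the one-line identity $(1+x)^{-1}=1-\tfrac{x}{1+x}$, valid for all $x>-1$. That yields $-3\Delta-12\Delta=-15\Delta$ in a single chain, using structural facts (sign pattern, no domination) only about $A_1$. Your bilinear route instead gives $|N_2-N_1|\le 4D_1\Delta+8\Delta^2$ (four factors, each shifted by $\le 2\Delta$), so even in the favorable regime you would land at $4+12=16$, not $15$; ``balancing the regimes'' cannot recover the missing unit. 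Separately, your small-$D_1$ branch uses $|\Delta_{A_2}|\le 2D_2$, which needs the no-strong-domination hypothesis for the \emph{perturbed} matrix $A_2$---not assumed in the lemma. The paper's route never touches the structure of $A_2$.

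Second, for the $+4\Delta$ bound your plan to ``drop'' the correction $N_1(D_1-D_2)/(D_1D_2)$ does not go through as stated: its sign depends on $D_1-D_2$, which is not controlled. The paper instead writes $N_2/D_2\le (N_1+3D_1\Delta)/(D_1-6\Delta)$, expands $(1-x)^{-1}=1+\tfrac{x}{1-x}$, and uses both hypotheses together---$N_1/D_1\le 0$ kills one term and $6\Delta/D_1\le \tfrac14$ caps the other by $\Delta$---to arrive cleanly at $+4\Delta$. Your factorization $N_1=(a-e)(d-b)-(a-c)(f-b)$ is correct and is essentially the identity behind Proposition~\ref{prop:n*2:<2}; it is just not the tool the paper uses to control $N_2-N_1$.
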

\begin{proof}
Due to Lemma \ref{lem:n*2:delg}, we have $\Delta_{A_1}= \frac{(ad-bc)-(af-be)+(cf-de)}{|a-b|+|c-d|+|e-f|}$ and $\Delta_{A_2}=  \frac{(a'd'-b'c')-(a'f'-b'e')+(c'f'-d'e')}{|a'-b'|+|c'-d'|+|e'-f'|}$ where $a'=a+\Delta_{11},b'=b+\Delta_{12},c'=c+\Delta_{21},d'=d+\Delta_{22},e'=e+\Delta_{31}$ and $f'=f+\Delta_{32}$. Let $N=(ad-bc)-(af-be)+(cf-de)$ and $M=|a-b|+|c-d|+|e-f|$. Let $N'=(a'd'-b'c')-(a'f'-b'e')+(c'f'-d'e')$ and $M'=|a'-b'|+|c'-d'|+|e'-f'|$. Observe that $N'=N+(d-f)\Delta_{11}+(e-c)\Delta_{12}+(f-b)\Delta_{21}+(a-e)\Delta_{22}+(b-d)\Delta_{31}+(c-a)\Delta_{32}$. Now we have the following:
\begin{align*}
    \frac{N'}{M'}&\geq \frac{N-(|d-f|+|e-c|+|f-b|+|a-e|+|b-d|+|c-a|)\Delta}{M(1+\frac{6\Delta}{M})}\\
    &\geq \left(\frac{N}{M}-3\Delta\right)\left(1+\frac{6\Delta}{M}\right)^{-1} \tag{due to Proposition \ref{prop:n*2:<3}}\\
    &= \left(\frac{N}{M}-3\Delta\right)\left(1-\frac{\frac{6\Delta}{M}}{1+\frac{6\Delta}{M}}\right) \tag{as $(1+x)^{-1}=1-\frac{x}{1+x}$}\\
    & = \frac{N}{M}-3\Delta-\frac{N}{M}\cdot \frac{\frac{6\Delta}{M}}{1+\frac{6\Delta}{M}}+3\Delta\cdot\frac{\frac{6\Delta}{M}}{1+\frac{6\Delta}{M}}\\
    & \geq \frac{N}{M}-3\Delta-\frac{|N|}{M^2}\cdot 6\Delta\tag{as $\frac{6\Delta}{M}\geq 0$} \\
    & \geq \frac{N}{M}-3\Delta-2\cdot 6\Delta  \tag{due to Proposition \ref{prop:n*2:<2}}\\
    & = \frac{N}{M}-15\Delta
\end{align*}
If $\frac{N}{M}\leq 0$, then we have the following:
\begin{align*}
    \frac{N'}{M'}&\leq \frac{N+(|d-f|+|e-c|+|f-b|+|a-e|+|b-d|+|c-a|)\Delta}{M(1-\frac{6\Delta}{M})}\\
    &\leq \left(\frac{N}{M}+3\Delta\right)\left(1-\frac{6\Delta}{M}\right)^{-1} \tag{due to Proposition \ref{prop:n*2:<3}}\\
    &= \left(\frac{N}{M}+3\Delta\right)\left(1+\frac{\frac{6\Delta}{M}}{1-\frac{6\Delta}{M}}\right) \tag{as $(1-x)^{-1}=1+\frac{x}{1-x}$}\\
    & = \frac{N}{M}+3\Delta+\frac{N}{M}\cdot \frac{\frac{6\Delta}{M}}{1-\frac{6\Delta}{M}}+3\Delta\cdot\frac{\frac{6\Delta}{M}}{1-\frac{6\Delta}{M}}\\
    & \leq \frac{N}{M}+3\Delta+3\Delta\cdot\frac{1}{3} \tag{as $\frac{6\Delta}{M}\leq \frac{1}{4}$ and $\frac{N}{M}\leq 0$}\\
    & = \frac{N}{M}+4\Delta
\end{align*}
\end{proof}
Next we define matrix $B$ as follows:
\[
B = \begin{bmatrix} 
   a & b \\
   c & d \\
   e & f \\
   g & h
    \end{bmatrix}
\]
Let us assume that $a>b,d>c,e>f,h>g$. Let us also assume that $a>c$ and $d>c$. Now we present the following lemma where we establish important properties of the optimal rows (rows that are in the support of the Nash equilibrium). 
\begin{lemma}\label{lem:n*2:suboptimal}
Let $(x_1,y_1)$ be the Nash equilibrium of $B_1=[a,b;g,h]$, $(x_2,y_2)$ be the Nash equilibrium of $B_2=[e,f;c,d]$ and $(x_3,y_3)$ be the Nash equilibrium of $B_3=[e,f;g,h]$. Let us assume that $B_1,B_2$ and $B_3$ have unique Nash Equilibria which are not PSNE. If $B$ has a unique Nash equilibrium $(x^*,y^*)$ such that $|\supp(x^*)|=|\supp(y^*)|=\{1,2\}$, then we have the following:
\begin{itemize}
    \item $V_{B_1}^*-\langle y_1,(c,d) \rangle <0$
    \item $V_{B_2}^*-\langle y_2,(a,b) \rangle <0$
    \item $V_{B_3}^*-\max\{\langle y_3,(a,b) \rangle,\langle y_3,(c,d) \rangle \}<0$
\end{itemize}
\end{lemma}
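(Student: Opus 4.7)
The plan is to exploit the hypothesis that $(x^*,y^*)$ is the unique Nash equilibrium of $B$ with $\supp(x^*)=\{1,2\}$. By the standard NE characterization (Appendix~\ref{appendix:properties:matrix}, property~3), letting $B'=[a,b;c,d]$, we have $V^*_B = V^*_{B'}$, and the two strict inequalities
\[
\langle y^*,(e,f)\rangle < V^*_{B'}, \qquad \langle y^*,(g,h)\rangle < V^*_{B'}
\]
hold. These are the engine driving all three claims.

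For claim~(i), since $B_1$ has no PSNE and satisfies $a>b$, $h>g$, Proposition~\ref{prop:matrix:main} forces $a>g$, $h>b$, $D_1 := a-b-g+h > 0$, together with
\[
y_1 = \bigl(\tfrac{h-b}{D_1},\tfrac{a-g}{D_1}\bigr), \qquad V^*_{B_1} = \tfrac{ah-bg}{D_1}.
\]
A direct computation yields the key identity
\begin{align*}
D_1\cdot[\langle y_1,(c,d)\rangle - V^*_{B_1}]
&= ad - bc - gd + gb - ah + ch \\
&= D'\cdot[V^*_{B'} - \langle y^*,(g,h)\rangle],
\end{align*}
where $D' = a-b-c+d>0$. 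Since all three of $D_1$, $D'$, and $V^*_{B'} - \langle y^*,(g,h)\rangle$ are positive, claim~(i) follows. Claim~(ii) is entirely symmetric: applied to $B_2 = [e,f;c,d]$, the same algebra expresses $\langle y_2,(a,b)\rangle - V^*_{B_2}$ as a positive multiple of $V^*_{B'} - \langle y^*,(e,f)\rangle > 0$.

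For claim~(iii), argue by contradiction: suppose both $\langle y_3,(a,b)\rangle \leq V^*_{B_3}$ and $\langle y_3,(c,d)\rangle \leq V^*_{B_3}$. Since $B_3$ has a unique NE that is not a PSNE, both of its rows lie in $\supp(x_3)$, so $\langle y_3,(e,f)\rangle = \langle y_3,(g,h)\rangle = V^*_{B_3}$. Hence every row of $B$ attains value at most $V^*_{B_3}$ against $y_3$, giving $V^*_B \leq V^*_{B_3}$. Combined with the trivial inequality $V^*_{B_3}\leq V^*_B$ (deleting rows can only hurt player~1), we obtain $V^*_B = V^*_{B_3}$. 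Letting $\tilde x_3\in\simplex_4$ denote $x_3$ padded with zeros on coordinates $1,2$, the pair $(\tilde x_3, y_3)$ is then a Nash equilibrium of $B$: no row of $B$ exceeds $V^*_{B_3} = V^*_B$ against $y_3$, and $\min_y \tilde x_3^\top B y = \min_y x_3^\top B_3 y = V^*_{B_3}$ since $(x_3,y_3)$ is an NE of $B_3$. This contradicts uniqueness of $(x^*,y^*)$ because $\supp(\tilde x_3)\subseteq\{3,4\}$ is disjoint from $\supp(x^*)=\{1,2\}$.

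The main obstacle is verifying the algebraic identity underlying claims (i)--(ii): the expansion of $\langle y_i,(\cdot)\rangle - V^*_{B_i}$ must be rearranged so that its numerator coincides (up to the scalar $D'$) with the numerator of $V^*_{B'} - \langle y^*,\cdot\rangle$. Once this coincidence is in hand the sign analysis is immediate, and the contradiction step for (iii) is routine assembly of min-max inequalities.
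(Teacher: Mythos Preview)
Your proof is correct. For claims~(i) and~(ii) you take essentially the same route as the paper: both arguments rest on the identity that the numerator of $\langle y_1,(c,d)\rangle - V^*_{B_1}$ (respectively $\langle y_2,(a,b)\rangle - V^*_{B_2}$) equals the numerator of $V^*_{B'} - \langle y^*,(g,h)\rangle$ (respectively $V^*_{B'} - \langle y^*,(e,f)\rangle$), with all denominators positive. The paper packages this via Lemma~\ref{lem:n*2:delg}, while you expand directly, but the content is identical.

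For claim~(iii) your argument is genuinely different from the paper's. The paper compares the first coordinate of $y_3$ to that of $y^*$ and uses monotonicity: if $y_{3,1}\le y_1^*$ then, since $e>f$ and $d>c$, one gets $V^*_{B_3}=\langle y_3,(e,f)\rangle\le\langle y^*,(e,f)\rangle<V^*_B=\langle y^*,(c,d)\rangle\le\langle y_3,(c,d)\rangle$; the case $y_{3,1}>y_1^*$ is handled symmetrically with rows $(g,h)$ and $(a,b)$. Your contradiction argument instead shows that if neither $(a,b)$ nor $(c,d)$ beats $y_3$, then padding $x_3$ to $\simplex_4$ yields a second Nash equilibrium of $B$ supported on $\{3,4\}$, violating uniqueness. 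Your route is a bit more conceptual and does not rely on the specific sign pattern of the rows, whereas the paper's case split is shorter once those signs are in hand. Both are valid.
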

\begin{proof}
Due to Lemma \ref{lem:n*2:delg}, we have $V_{B_1}^*-\langle y_1,(c,d)\rangle=\frac{(ah-bg)-(ad-bc)+(gd-hc)}{a-b-g+h}$. Due to Lemma \ref{lem:n*2:delg} and the fact that $\supp(x^*)=\{1,2\}$, we have $V_B^*-\langle y^*,(g,h)\rangle = \frac{(ad-bc)-(ah-bg)+(ch-dg)}{a-b-c+d} > 0$. Hence, we have $V_{B_1}^*-\langle y_1,(c,d)\rangle<0$.

Due to Lemma \ref{lem:n*2:delg}, we have $V_{B_2}^*-\langle y_2,(a,b)\rangle=\frac{(ed-fc)-(eb-fa)+(cb-da)}{e-f-c+d}$. Due to Lemma \ref{lem:n*2:delg} and the fact that $\supp(x^*)=\{1,2\}$, we have $V_B^*-\langle y^*,(e,f)\rangle = \frac{(ad-bc)-(af-be)+(cf-de)}{a-b-c+d} > 0$. Hence, we have $V_{B_2}^*-\langle y_2,(a,b)\rangle<0$.

 Let $y^*=(y_1^*,y_2^*)$ and $y_3=(y_{3,1},y_{3,2})$. As $\supp(x^*)=\{1,2\}$, we have $\langle y^*,(e,f) \rangle<V_B^*$ and $\langle y^*,(g,h)\rangle<V_B^*$. If $y_{3,1}\leq y_1^*$, we have $V_{B_3}^*=\langle y_{3,1},(e,f)\rangle\leq \langle y^*,(e,f) \rangle<V_B^*\leq\langle y_{3,1},(c,d)\rangle$. Similarly, if $y_{3,1}> y_1^*$, we have $V_{B_3}^*=\langle y_{3,1},(g,h)\rangle< \langle y^*,(g,h) \rangle<V_B^*<\langle y_{3,1},(a,b)\rangle$. Hence, $V_{B_3}^*-\max\{\langle y_3,(a,b) \rangle,\langle y_3,(c,d) \rangle \}<0$.
\end{proof}
Let $A$ be a $n\times 2$ matrix with no PSNE. Let us assume that $\forall i\in[n]$, $A_{i1}\neq A_{i2}$. Now we present the following lemma that relates the nash equilibrium of a submatrix of $A$ to the nash equilibrium of $A$. 
\begin{lemma}\label{lem:n*2:unique}
Consider two distinct row indices $i_1,i_2$ such that $A_{i_11}>A_{i_12},A_{i_21}<A_{i_22}$. Let $(x^*,y^*)$ be the Nash equilibrium of $C=[A_{i_11},A_{i_12};A_{i_21},A_{i_22}]$. If $A_{i_11}>A_{i_21},\; A_{i_12}<A_{i_22} $ and $V_C^*-\langle y^*,(A_{j1},A_{j2})\rangle>0$ for all $j\in [n]\setminus \{i_1,i_2\}$, then $A$ has a unique Nash equilibrium $(x',y')$ such that $\supp(x')=\{i_1,i_2\}$.
\end{lemma}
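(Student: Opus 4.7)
\textbf{Proof plan for Lemma~\ref{lem:n*2:unique}.} The strategy is to lift the unique NE of the $2 \times 2$ submatrix $C$ to a NE of $A$ by padding with zeros, and then use the standard interchange (rectangularity) property of zero--sum equilibria plus the strict suboptimality hypothesis to force uniqueness.

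First I would observe that the four inequalities $A_{i_11} > A_{i_12}$, $A_{i_21} < A_{i_22}$, $A_{i_11} > A_{i_21}$, $A_{i_12} < A_{i_22}$ place $C = \bigl[A_{i_11},A_{i_12};\, A_{i_21},A_{i_22}\bigr]$ in the no--PSNE regime, so by Proposition~\ref{prop:matrix:main} the NE $(x^*,y^*)$ of $C$ is the unique NE of $C$ and has $\mathrm{supp}(x^*) = \mathrm{supp}(y^*) = \{1,2\}$. Define $x' \in \simplex_n$ by $x'_{i_1} = x^*_1$, $x'_{i_2} = x^*_2$, and $x'_j = 0$ otherwise; set $y' = y^*$. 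I would verify $(x',y')$ is a NE of $A$: for every $j \in \{1,2\}$, $\langle x', A e_j\rangle = x^*_1 A_{i_1 j} + x^*_2 A_{i_2 j} = V_C^*$ because $y^*$ has full support in $C$; for $i \in \{i_1,i_2\}$, $\langle e_i, A y'\rangle = V_C^*$ by the NE property of $(x^*,y^*)$; and for $i \notin \{i_1,i_2\}$, the hypothesis $V_C^* - \langle y^*, (A_{i1},A_{i2})\rangle > 0$ gives $\langle e_i, A y'\rangle < V_C^*$. Thus $(x',y')$ is a NE of $A$ and $V_A^* = V_C^*$.

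For uniqueness, let $(\hat x,\hat y)$ be any NE of $A$. I would invoke the standard interchange property for zero--sum games (a direct consequence of the minimax theorem): if $(x_1,y_1)$ and $(x_2,y_2)$ are NE then $(x_1,y_2)$ and $(x_2,y_1)$ are NE as well. Applied to $(x',y')$ and $(\hat x,\hat y)$, this gives that $(\hat x, y^*)$ is a NE of $A$. Therefore every $i \in \mathrm{supp}(\hat x)$ satisfies $\langle e_i, A y^*\rangle = V_A^* = V_C^*$, which by the strict inequality hypothesis forces $\mathrm{supp}(\hat x) \subseteq \{i_1,i_2\}$.

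Finally I would restrict $\hat x$ to the two coordinates $\{i_1,i_2\}$, obtaining $\hat x|_C \in \simplex_2$. A short check shows $(\hat x|_C, \hat y)$ is a NE of $C$: the best--response inequalities for both players on $C$ follow from those for $(\hat x,\hat y)$ on $A$ because $\hat x$ has no mass outside $\{i_1,i_2\}$ and $\hat y \in \simplex_2$ already. Since $C$ has a unique NE, $(\hat x|_C, \hat y) = (x^*, y^*)$, i.e.\ $\hat x = x'$ and $\hat y = y^*$. The main obstacle I expect is purely bookkeeping in the NE verification and making sure the interchange step is cleanly justified; everything else reduces to the $2\times 2$ uniqueness already established in Proposition~\ref{prop:matrix:main}.
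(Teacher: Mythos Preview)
Your proof is correct, and it takes a different route from the paper's.

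The paper does not first build the candidate NE and then argue uniqueness. Instead it picks an arbitrary NE $(x,y)$ of $A$ and proves $y=y^*$ directly by a monotonicity argument: if $y_1<y_1^*$, then row $i_2$ already pays strictly more than $V_C^*$ under $y$, while any row $j\in\supp(x)$ with $A_{j1}>A_{j2}$ (such a $j$ exists because $A$ has no PSNE and $A_{i1}\neq A_{i2}$ for all $i$, so the two columns must be equalized by $x$) pays at most $\langle y^*,(A_{j1},A_{j2})\rangle\le V_C^*$, contradicting best response. The case $y_1>y_1^*$ is symmetric. Once $y=y^*$, the strict-gap hypothesis forces $\supp(x)\subseteq\{i_1,i_2\}$ and the $2\times2$ uniqueness finishes.

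Your argument replaces that hands-on monotonicity step with the interchange (exchangeability) property of zero-sum equilibria: from one constructed NE $(x',y^*)$ and any other NE $(\hat x,\hat y)$ you get that $(\hat x,y^*)$ is also a NE, which immediately pins down $\supp(\hat x)$ via the strict-gap hypothesis. This is cleaner and more conceptual, and as a bonus it does not explicitly lean on the standing assumptions that $A$ has no PSNE and $A_{i1}\neq A_{i2}$; those are recovered as consequences (any pure NE of $A$ would restrict to a pure NE of $C$, which you have ruled out). The paper's version is more elementary in that it avoids invoking interchangeability, at the cost of the slightly fiddly case split on $y_1$ versus $y_1^*$ and the implicit use of the no-PSNE assumption to guarantee both ``increasing'' and ``decreasing'' rows live in $\supp(x)$.
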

\begin{proof}
Let $(x,y)$ be a Nash equilibrium of $A$. Let $x=(x_1,x_2,\ldots,x_n)$, $y=(y_1,y_2)$ and $y^*=(y^*_1,y^*_2)$. 

If $y_1<y_1^*$, then $\langle y, (A_{i_21},A_{i_22})\rangle>V_C^*$. Consider $j\in \supp(x)$ such that $A_{j1}>A_{j2}$. In this case, we have $\langle y,(A_{j1},A_{j2})\rangle < \langle y^*,(A_{j1},A_{j2})\rangle \leq V_C^*< \langle y, (A_{i_21},A_{i_22})\rangle$. This contradicts the fact that $j\in \supp(x)$. Hence $y_1\geq y^*_1$.

If $y_1>y_1^*$, then $\langle y, (A_{i_11},A_{i_12})\rangle>V_C^*$. Consider $j\in \supp(x)$ such that $A_{j1}<A_{j2}$. In this case, we have $\langle y,(A_{j1},A_{j2})\rangle < \langle y^*,(A_{j1},A_{j2})\rangle \leq V_C^*< \langle y, (A_{i_11},A_{i_12})\rangle$. This contradicts the fact that $j\in \supp(x)$. Hence $y_1\leq y^*_1$. 

As $y_1\geq y^*_1$ and $y_1\leq y^*_1$, we have $y_1=y_1^*$. This implies that $V_A^*=\max_{j\in[n]}\langle y^*,(A_{j1},A_{j2})\rangle= V_C^*$. 

As $V_A^*-\langle y^*,(A_{j1},A_{j2})\rangle>0$ for all $j\in [n]\setminus \{i_1,i_2\}$ and $V_A^*=\langle y^*,(A_{j1},A_{j2})\rangle$ for all $j\in\{i_1,i_2\}$, we have $\supp(x)=\{i_1,i_2\}$. Now observe that $C$ has a unique Nash Equilibirum which is not a PSNE and $((x_{i_1},x_{i_2}),y)$ is also a Nash Equilibrium of $C$. This implies that $(x,y)$ is the unique Nash equilibrium of $A$ such that $\supp(x)=\{i_1,i_2\}$ and $((x_{i_1},x_{i_2}),y)=(x^*,y^*)$.

\end{proof}
\section{Proof of Lower Bound with respect to $\Delta_g$}\label{appendix:thm6}
Before finishing the proof of the theorem \ref{thm:lower4}, we begin with the proof of Lemma~\ref{low4:lem1}
\begin{proof}
Let $c'=c-\Delta$, $d'=d-\Delta$, $e'=e+\Delta$ and $f'=f+\Delta$. Let $V^*=\frac{ad'-bc'}{D_1}$. It can be shown that $V_{A_0}^*=V^*+\frac{(a-b)\Delta}{D_1}$, $V_{A_\Delta}^*=V^*$ and $V_{A_{2\Delta}}^*\geq V^*+\frac{(a-b)\Delta}{D_2}$. For any $\alpha,\beta\in[0,1]$ such that $\alpha+\beta\leq 1$ and $\gamma\in[\frac{b-d'}{D_1},\frac{a-c'}{D_1}]$, let $x'=(1-\alpha-\beta,\alpha,\beta)$ and $y'=(\frac{d'-b}{D_1}+\gamma,\frac{a-c'}{D_1}-\gamma)$. Note that this parameterization ensures the range of $x'$ is equal to $\simplex_3$ and the range of $y'$ is equal to $\simplex_2$. Let $k=(1-\alpha-\beta)(a-b)\gamma+\alpha(c'-d')\gamma+\beta(e'-f')\gamma$. It can be shown that $\langle x',A_{\Delta}y'\rangle=V^*+k$. It can also be shown that $\langle x',A_0 y'\rangle=V^*+k+(\alpha-\beta)\Delta$ and $\langle x',A_{2\Delta} y'\rangle=V^*+k-(\alpha-\beta)\Delta$. We refer the reader to the Appendix \ref{apendix:low4:lem1} for the detailed calculations.

If $|k|>\lambda/4$, then $|V_{A_{\Delta}}^*-\langle x', A_{\Delta}y' \rangle|=|k|> \lambda/4$. If $|k| \leq \lambda/4$ and $(\alpha-\beta)\Delta\geq 0$, then $V_{A_{2\Delta}}^*-\langle x',A_{2\Delta} y'\rangle\geq\frac{(a-b)\Delta}{D_2}-k+(\alpha-\beta)\Delta\geq \frac{3\lambda}{4}$. Similarly, If $|k| \leq \lambda/4$ and $(\alpha-\beta)\Delta< 0$, then $V_{A_{0}}^*-\langle x',A_{0} y'\rangle=\frac{(a-b)\Delta}{D_1}-k-(\alpha-\beta)\Delta\geq \frac{3\lambda}{4}$. Hence, we proved that there exists a matrix $B\in\{A_{0},A_{\Delta},A_{2\Delta}\}$ such that the following holds:
\begin{equation*}
    |V_B^*-\langle x', By' \rangle|> \frac{\lambda}{4}>\varepsilon
\end{equation*}
\end{proof}

\subsection{Proof of Theorem \ref{thm:lower4}}
Let $\nu_{i,j}^A = \mathcal{N}(A_{ij},1)$ be the distribution of an observation when playing pair $(i,j)$ with matrix $A$. 
Let $\P_A$ denote the probability law of the internal randomness of the algorithm and random observations.
If an algorithm is $(\varepsilon,\delta)$-PAC-good and outputs a solution $(\widehat{x},\widehat{y})$ then $\min_A \P_A(|V_A^*-\langle \widehat{x}, A \widehat{y} \rangle|\leq \varepsilon) \geq 1-\delta$.
We will show that if an algorithm is $(\varepsilon,\delta)$-PAC-good then it can also accomplish a particular hypothesis.
We will conclude by noting that any procedure that can accomplish the hypothesis test must take the claimed sample complexity. 

For any pair of mixed strategies $(\widehat{x},\widehat{y})$ output by the procedure at the stopping time $\tau$, define 
\begin{align*}
    \phi = \{A_{0},A_{\Delta},A_{2\Delta}\} \setminus \arg\max_{B \in \{A_{0},A_{\Delta},A_{2\Delta}\}} |V_{B}^*-\langle \widehat{x}, B \widehat{y} \rangle|,
\end{align*}
breaking ties arbitrarily in the maximum so that $\phi \in  \{A_{2\Delta},A_{0}\}\cup \{A_{2\Delta},A_{\Delta}\}\cup \{A_{0},A_{\Delta}\}$.
Note that
\begin{align}\label{eqn:correctness_delta:4}
    \P_{A_0}( A_0 \in \phi ) \geq \P_{A_0}( A_0 \in \phi,  |V_{A_0}^*-\langle \widehat{x},A_0 \widehat{y} \rangle| \leq \varepsilon) = \P_{A_0}(   |V_{A_0}^*-\langle \widehat{x}, A_0 \widehat{y} \rangle| \leq \varepsilon) \geq 1-\delta
\end{align}
where the equality follows from the Lemma~\ref{low4:lem1}: at least one of the three matrices must have a loss of more than $\varepsilon $, but $A_0$ has a loss of at most $\varepsilon$, thus $A_0 \in \phi$.
Now because
\begin{align*}
    2 \max\{ \P_{A_0}( \phi = \{A_0,A_{2\Delta}\} ) , \P_{A_0}( \phi = \{A_0,A_{\Delta}\} ) \} &\geq \P_{A_0}( \phi = \{A_0,A_{2\Delta}\} ) + \P_{A_0}( \phi = \{A_0,A_{\Delta}\} ) \\
    &= \P_{A_0}( A_0 \in \phi ) \geq 1-\delta
\end{align*}
we have that $\P_{A_0}( \phi = \{A_0,A_{\Delta}\} ) \geq \frac{1-\delta}{2}$ or $\P_{A_0}( \phi = \{A_0,A_{2\Delta} \}) \geq \frac{1-\delta}{2}$.
Let's assume the former (the latter case is handled identically).
By the same argument as \eqref{eqn:correctness_delta:4} we have that $\P_{A_{2\Delta}}( \phi = \{A_0,A_{\Delta}\} ) \leq \delta$.

For a stopping time $\tau$, let $N_{i,j}(\tau)$ denote the number of times $(i,j)$ is sampled. Recalling that $\nu_{i,j}^A = \mathcal{N}(A_{ij},1)$, we have by Lemma 1 of \citet{kaufmann2016complexity} that
\begin{align*}
\sum_{i=2}^3\sum_{j=1}^2\mathbb{E}_{A_0}[ N_{i,j}(\tau) ] KL( \nu_{i,j}^{A_0}, \nu_{i,j}^{A_{2\Delta}} ) \geq d( \P_{A_0}( \phi = \{A_0,A_{\Delta}\} ), \P_{A_{2\Delta}}( \phi = \{A_0,A_{\Delta}\} ) )
\end{align*}
where $KL( \nu_{2,1}^{A_0}, \nu_{1,1}^{A_{2\Delta}} ) = KL( \nu_{2,2}^{A_0}, \nu_{2,2}^{A_{2\Delta}} )= KL( \nu_{3,1}^{A_0}, \nu_{3,1}^{A_{2\Delta}} )= KL( \nu_{3,2}^{A_0}, \nu_{3,2}^{A_{2\Delta}} ) = 2\Delta^2$ and $d(p,q) = p \log(\frac{p}{q}) + (1-p) \log(\frac{1-p}{1-q})$.
Since 
\begin{align*}
    d( \P_{A_0}( \phi = \{A_0,A_{\Delta}\} ), \P_{A_{2\Delta}}( \phi = \{A_0,A_{\Delta}\} ) ) &\geq d( \tfrac{1-\delta}{2},\delta) \\
    &= \tfrac{1-\delta}{2} \log( \tfrac{1-\delta}{2 \delta} ) + \tfrac{1+\delta}{2} \log( \tfrac{1+\delta}{2(1-\delta)} ) \\
    &= \tfrac{1}{2} \log( \tfrac{1+\delta}{4 \delta} ) -  \tfrac{\delta}{2} \log( \tfrac{(1-\delta)^2}{ \delta (1+\delta)} ) \\
    &\geq \tfrac{1}{2} \log( \tfrac{1+\delta}{4 \delta} ) -  1/8 > \tfrac{1}{2} \log(1/30\delta)
\end{align*} 
and $\tau = \sum_{i=1}^3\sum_{j=1}^2N_{i,j}(\tau)$ we conclude that
\begin{align*}
    \mathbb{E}_{A_0}[ \tau ] \geq \frac{ \log(1/30 \delta) }{4\Delta^2} > \frac{ \log(1/30 \delta) }{4\Delta_{g}^2}
\end{align*}
as claimed. We get the last inequality as $0<\Delta<\Delta_g$ (see Appendix \ref{apendix:low4:lem1} for more details).

\subsection{Calculations for Lemma \ref{low4:lem1}}\label{apendix:low4:lem1}
Recall that $\Delta:=\frac{(d-b)D_2-(f-b)D_1}{D_1+D_2}$. Let $c'=c-\Delta$, $d'=d-\Delta$, $e'=e+\Delta$ and $f'=f+\Delta$. Observe that $D_1=a-b-c+d=a-b-c'+d'$ and $D_2=a-b-e+f=a-b-e'+f'$. Let $V^*=\frac{ad'-bc'}{D_1}$. 

First, observe that $\square=\frac{(d-b)D_2-(f-b)D_1}{D_1+D_2}$ satisfies the equality $\frac{d-b-\square}{D_1}=\frac{f-b+\square}{D_2}$. Next, observe that 
$0<\frac{d-b-\Delta}{D_1}<1$ as $\frac{d-b-\Delta}{D_1}=\frac{(d-b)+(f-b)}{D_1+D_2}$.

%$d-b>\Delta$ as $(d-b)(D_1+D_2)>(d-b)D_2-(f-b)D_1$. 

%Next, observe that $\frac{(d-b)D_2-(f-b)D_1}{D_1+D_2}=\frac{(ad-bc)-(af-be)+(cf-de)}{D_1+D_2}=\frac{(a-e)D_1-(a-c)D_2}{D_1+D_2}$. This implies that $a-e>\Delta$ as $(a-e)(D_1+D_2)>(a-e)D_1-(a-c)D_2$.

Hence, we have $(\frac{d'-b}{D_1},\frac{a-c'}{D_1})=(\frac{f'-b}{D_2},\frac{a-e'}{D_2})$ and $(\frac{d'-b}{D_1},\frac{a-c'}{D_1})\in \simplex_2$. Therefore, we have $V_{A_\Delta}^*=\frac{d'-b}{D_1}\cdot a+ \frac{a-c'}{D_1}\cdot b =\frac{ad'-bc'}{D_1}=V^*$. Next, observe that $V_{A_0}^*=\frac{d'-b+\Delta}{D_1}\cdot a+\frac{a-c'-\Delta}{D_1}\cdot b=\frac{ad'-bc'}{D_1}+\frac{(a-b)\Delta}{D_1}=V^*+\frac{(a-b)\Delta}{D_1}$. 

If $a-e'>\Delta$, then $V_{A_{2\Delta}}^*=\frac{f'-b+\Delta}{D_2}\cdot a+\frac{a-e'-\Delta}{D_2}\cdot b=\frac{ad'-bc'}{D_1}+\frac{(a-b)\Delta}{D_2}=V^*+\frac{(a-b)\Delta}{D_2}$. We get $\frac{f'-b}{D_2}\cdot a+\frac{a-e'}{D_2}\cdot b=\frac{ad'-bc'}{D_1}$ as $(\frac{d'-b}{D_1},\frac{a-c'}{D_1})=(\frac{f'-b}{D_2},\frac{a-e'}{D_2})$. Note that the second row is sub-optimal for $A_{2\Delta}$ as $\frac{f'-b+\Delta}{D_2}\cdot (c'-\Delta)+\frac{a-e'-\Delta}{D_2}\cdot (d'-\Delta)<V^*$.

If $a-e'\leq \Delta$, then the element in the third row and the first column of $A_{2\Delta}$ is a PSNE. Hence, $V_{A_{2\Delta}}^*=e'+\Delta$. Let $g=e'+\Delta-a$. Observe that $a-b<D_2$. Now we have $V^*+\frac{(a-b)\Delta}{D_2}= \frac{f'-b+\Delta}{D_2}\cdot a+\frac{a-e'-\Delta}{D_2}\cdot b=(1+\frac{g}{D_2})a-\frac{gb}{D_2}=a+\frac{(a-b)g}{D_2}\leq a+g=e'+\Delta$. Hence $V_{A_{2\Delta}}^*\geq V^*+\frac{(a-b)\Delta}{D_2}$.

Now we define a class of matrices $B_\square$ as follows:
\[
B_\square = \begin{bmatrix} 
   a & b \\
   c'+\square & d'+\square \\
   e'-\square & f'-\square \\
    \end{bmatrix}\]
Now observe that $B_{\Delta}=A_{0}$, $B_0=A_{\Delta}$ and $B_{-\Delta}=A_{2\Delta}$.    

Recall that $x'=(1-\alpha-\beta,\alpha,\beta)$ and $y'=(\frac{d'-b}{D_1}+\gamma,\frac{a-c'}{D_1}+\gamma)$. Let $V_1=\langle y',(a,b) \rangle$, $V_2=\langle y',(c'+\square , d'+\square) \rangle$ and $V_3=\langle y',(e'-\square , f'-\square) \rangle$. First, we have $V_1= (\frac{d'-b}{D_1}+\gamma)\cdot a+ (\frac{a-c'}{D_1}-\gamma)\cdot b= \frac{ad'-bc'}{D_1}+(a-b)\gamma= V^*+(a-b)\gamma$. Next, we have $V_2=(\frac{d'-b}{D_1}+\gamma)\cdot (c'+\square)+ (\frac{a-c'}{D_1}-\gamma)\cdot (d'+\square)=\frac{ad'-bc'}{D_1}+(c'-d')\gamma+(\frac{d'-b}{D_1}+\gamma+\frac{a-c'}{D_1}-\gamma)\square=V^*+\square+(c'-d')\gamma$. Similarly, we have $V_3=(\frac{d'-b}{D_1}+\gamma)\cdot (e'-\square)+ (\frac{a-c'}{D_1}-\gamma)\cdot (f'-\square)=\frac{ad'-bc'}{D_1}+(e'-f')\gamma-(\frac{d'-b}{D_1}+\gamma+\frac{a-c'}{D_1}-\gamma)\square=V^*-\square+(e'-f')\gamma$.

Let $k=(1-\alpha-\beta)(a-b)\gamma+\alpha(c'-d')\gamma+\beta(e'-f')\gamma$. Now observe that $\langle x',B_\square y'\rangle=\langle x' ,(V_1,V_2,V_3)\rangle=V^*+k+(\alpha-\beta)\square$. 

Now we present the following proposition.
\begin{proposition}
$\Delta<\Delta_g$
\end{proposition}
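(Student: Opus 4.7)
The plan is to reduce both $\Delta$ and $\Delta_g$ to ratios with the same (positive) numerator and then compare denominators. The key observation is that the quantity $(ad-bc)-(af-be)+(cf-de)$ appears naturally in both expressions.

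First I would unpack $\Delta_g$ using Lemma~\ref{lem:n*2:delg}. Under the stated sign assumptions ($a>b,c,e$; $d>b,c$; $f>b,e$), we have $|a-b|=a-b$, $|c-d|=d-c$, and $|e-f|=f-e$, so $|a-b|+|c-d|=D_1$ and Lemma~\ref{lem:n*2:delg} gives
\[
V^*_A-\langle y^*,(e,f)\rangle=\frac{(ad-bc)-(af-be)+(cf-de)}{D_1}.
\]
Plugging this back into the definition of $\Delta_g$, the $D_1$ in the numerator cancels with the $D_1$ factor outside, leaving
\[
\Delta_g=\frac{(ad-bc)-(af-be)+(cf-de)}{D_1+(f-e)}.
\]

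Next I would expand the numerator of $\Delta$ directly. A routine expansion of $(d-b)D_2-(f-b)D_1$, keeping track of the cross terms, telescopes to $(ad-bc)-(af-be)+(cf-de)$ --- the same numerator as above. Hence
\[
\Delta=\frac{(ad-bc)-(af-be)+(cf-de)}{D_1+D_2}.
\]
This is the main (albeit purely algebraic) calculation; I expect this to be the only place one has to be careful with signs.

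Finally I would compare denominators. Observe that $D_2=(a-b)+(f-e)$, so
\[
D_1+D_2=D_1+(f-e)+(a-b),
\]
which strictly exceeds $D_1+(f-e)$ since $a>b$. The common numerator $(ad-bc)-(af-be)+(cf-de)$ is strictly positive because the third row is not in the support of the equilibrium of $A$, i.e.\ $V_A^*-\langle y^*,(e,f)\rangle>0$ (which is the same expression divided by $D_1>0$). Dividing the same positive numerator by a larger positive denominator gives a smaller value, so $\Delta<\Delta_g$, as claimed.
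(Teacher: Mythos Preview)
Your proof is correct and follows essentially the same approach as the paper: both rewrite $\Delta$ and $\Delta_g$ as the common positive numerator $(ad-bc)-(af-be)+(cf-de)$ over $D_1+D_2$ and $|a-b|+|c-d|+|e-f|$ respectively, then conclude by comparing denominators. Your version spells out the denominator comparison (via $D_2=(a-b)+(f-e)$) a bit more explicitly than the paper does, but the argument is the same.
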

\begin{proof}
Observe that $\Delta=\frac{(d-b)D_2-(f-b)D_1}{D_1+D_2}=\frac{(ad-bc)-(af-be)+(cf-de)}{D_1+D_2}$. Due to Lemma \ref{lem:n*2:delg}, we have $\Delta_g=\frac{(ad-bc)-(af-be)+(cf-de)}{|a-b|+|c-d|+|e-f|}>0$. Hence, $0<\Delta<\Delta_g$. Note that $\Delta_g>0$ as the third row of $A$ is sub-optimal.
\end{proof}
%\section{Introduction}

%\bibliographystyle{plain}
%\bibliography{bibliography.bib}

\end{document}